\theoremstyle{plain}
\newtheorem{thm}{Theorem}[section]
\newtheorem{prop}[thm]{Proposition}
\newtheorem{propph}[thm]{Proposition\textsuperscript{ph}}
\newtheorem{cor}[thm]{Corollary}
\newtheorem{lem}[thm]{Lemma}
\newtheorem{conj}[thm]{Conjecture}
\newtheorem*{thm*}{Theorem}
\newtheorem*{conj*}{Conjecture}
\newtheorem*{claim*}{Claim}
\newtheorem*{prop*}{Proposition}
\theoremstyle{definition}
\newtheorem{defi}[thm]{Definition}
\newtheorem{defiph}[thm]{Definition\textsuperscript{ph}}
\newtheorem*{nota*}{Notation}
\newtheorem{rem}[thm]{Remark}
\newtheorem{remph}[thm]{Remark\textsuperscript{ph}}
\newtheorem{ex}[thm]{Example}
\newtheorem{prob}[thm]{Problem}
\newcommand{\Q}{\mathbb{Q}}
\newcommand{\Z}{\mathbb{Z}}
\newcommand{\Ns}{\mathbb{Z}_{>0}}
\newcommand{\N}{\mathbb{Z}_{\geq0}}
\newcommand{\C}{\mathbb{C}}
\newcommand{\R}{\mathbb{R}}
\newcommand{\tr}{\operatorname{tr}}
\renewcommand{\i}{\mathrm{i}}
\newcommand{\End}{\operatorname{End}}
\newcommand{\Aut}{\operatorname{Aut}}
\newcommand{\Hom}{\operatorname{Hom}}
\newcommand{\Ind}{\operatorname{Ind}}
\newcommand{\rk}{\operatorname{rk}}
\newcommand{\sign}{\operatorname{sign}}
\newcommand{\voa}{vertex operator algebra}
\newcommand{\Voa}{Vertex operator algebra}
\newcommand{\VOA}{Vertex Operator Algebra}
\newcommand{\vosa}{vertex operator subalgebra}
\newcommand{\svoa}{vertex operator superalgebra}
\newcommand{\fpvosa}{fixed-point vertex operator subalgebra}
\newcommand{\vac}{\textbf{1}}
\newcommand{\id}{\operatorname{id}}
\newcommand{\lcm}{\operatorname{lcm}}
\newcommand{\ee}{\mathfrak{e}}
\newcommand{\g}{\mathfrak{g}}
\newcommand{\hh}{\mathfrak{h}}
\newcommand{\Inn}{\operatorname{Inn}}
\renewcommand{\sl}{\mathfrak{sl}}
\newcommand{\Irr}{\operatorname{Irr}}
\newcommand{\strat}{strongly rational}
\newcommand{\Strat}{Strongly rational}
\newcommand{\II}{I\!I}
\renewcommand{\O}{\operatorname{O}}
\newcommand{\Co}{\operatorname{Co}}
\newcommand{\spn}{\operatorname{span}}
\newcommand{\Com}{\operatorname{Com}}
\newcommand{\no}{\,{\raise0.25em\hbox{$\mathop{\hphantom{\cdot}}\limits^{_{\circ}}_{^{\circ}}$}}\,}
\newcommand{\Rep}{\operatorname{Rep}}
\newcommand{\Vect}{\operatorname{Vect}}
\newcommand{\Cat}{\mathcal{C}}
\newcommand{\gen}{\operatorname{gen}}
\newcommand{\hgen}{\operatorname{hgen}}
\newcommand{\bgen}{\operatorname{bgen}}
\newcommand{\mass}{\operatorname{mass}}
\newcolumntype{H}{>{\setbox0=\hbox\bgroup}c<{\egroup}@{}}
\newcommand{\sAA}{1^{24}}
\newcommand{\sBB}{1^82^8}
\newcommand{\sCC}{1^63^6}
\newcommand{\sDD}{2^{12}}
\newcommand{\sEE}{1^42^24^4}
\newcommand{\sFF}{1^45^4}
\newcommand{\sGG}{1^22^23^26^2}
\newcommand{\sHH}{1^37^3}
\newcommand{\sII}{1^22^14^18^2}
\newcommand{\sJJ}{2^36^3}
\newcommand{\sKK}{2^210^2}
\newcommand{\gAA}{\II_{24,0}(1)}
\newcommand{\gBB}{\II_{16,0}(2_{\II}^{+10})}
\newcommand{\gCC}{\II_{12,0}(3^{-8})}
\newcommand{\gDD}{\II_{12,0}(2_{\II}^{-10}4_{\II}^{-2})}
\newcommand{\gEE}{\II_{10,0}(2_{2}^{+2}4_{\II}^{+6})}
\newcommand{\gFF}{\II_{8,0}(5^{+6})}
\newcommand{\gGG}{\II_{8,0}(2_{\II}^{+6}3^{-6})}
\newcommand{\gHH}{\II_{6,0}(7^{-5})}
\newcommand{\gII}{\II_{6,0}(2_{5}^{-1}4_{1}^{+1}8_{\II}^{-4})}
\newcommand{\gJJ}{\II_{6,0}(2_{\II}^{+4}4_{\II}^{-2}3^{+5})}
\newcommand{\gKK}{\II_{4,0}(2_{\II}^{-2}4_{\II}^{-2}5^{+4})}
\newcommand{\gLL}{\II_{0,0}(1)}
\begin{document}

\title[Equivalence Relations on Vertex Operator Algebras, I]{Equivalence Relations on Vertex Operator Algebras, I: Genus}
\author[Sven Möller and Brandon~C. Rayhaun]{Sven Möller\textsuperscript{\lowercase{a}} and Brandon~C. Rayhaun\textsuperscript{\lowercase{b}}}

\begin{abstract}
In this first of a series of two papers, we investigate two different equivalence relations obtained by generalizing the notion of genus of even lattices to the setting of \voa{}s (or two-dimensional chiral algebras). The bulk genus equivalence relation was defined in \cite{Hoe03} and groups (suitably regular) \voa{}s according to their modular tensor category and central charge. Hyperbolic genus \cite{Mor21} tests isomorphy after tensoring with a hyperbolic plane vertex algebra. Physically, two rational chiral algebras are said to belong to the same bulk genus if they live on the boundary of the same 2+1d topological quantum field theory; they belong to the same hyperbolic genus if they can be related by current-current exactly marginal deformations after tensoring a non-chiral compact boson.

As one main result, we prove the conjecture that the  hyperbolic genus defines a finer equivalence relation than the bulk genus. This is based on a new, equivalent characterization of the hyperbolic genus that uses the maximal lattice inside a \voa{} and its commutant (or coset).

We discuss the implications of these constructions for the classification of rational conformal field theory. In particular, we propose a program for (partially) classifying $c=32$, holomorphic \voa{}s (or chiral conformal field theories), and obtain novel lower bounds, via a generalization of the Smith--Minkowski--Siegel mass formula, on the number of \voa{}s at higher central charges. Finally, we conjecture a Siegel--Weil identity which computes the ``average'' torus partition function of an ensemble of chiral conformal field theories defined by any hyperbolic genus, and interpret this formula physically in terms of disorder-averaged holography.
\end{abstract}

\thanks{\textsuperscript{a}{Universität Hamburg, Hamburg, Germany}}
\thanks{\textsuperscript{b}{Yang Institute for Theoretical Physics, State University of New York, Stony Brook, NY, United States of America}}

\thanks{Email: \href{mailto:math@moeller-sven.de}{\nolinkurl{math@moeller-sven.de}}, \href{mailto:brandonrayhaun@gmail.com}{\nolinkurl{brandonrayhaun@gmail.com}}}

\maketitle

\setcounter{tocdepth}{1}
\tableofcontents
\setcounter{tocdepth}{2}


\pagebreak
\section{Introduction}


\subsection{For Mathematicians} (An introduction for physicists follows below.) \Voa{}s and their representation categories axiomatize the notions of chiral algebras and fusion rings of $2$-dimensional conformal field theories in physics. Originally introduced to solve the monstrous moonshine conjecture \cite{Bor86,FLM88,Bor92}, \voa{}s have found numerous applications in mathematics, e.g., in algebraic geometry, group theory, Lie theory and the theory of automorphic forms. The various applications to mathematical physics will be addressed in \autoref{sec:physintro}.

The theory of (suitably regular) \voa{}s is in some ways similar to the theory of even lattices over the integers. This has been leveraged to achieve classification results in small central charges. For instance, it is shown in \cite{Hoe17,ELMS21,MS23,HM23} that holomorphic vertex operator (super) algebras are almost all completely governed by the Leech lattice $\Lambda$ in central charges up to~$24$. Recall that the Leech lattice provides the solution to the densest sphere packing problem in dimension~$24$ \cite{CKMRV17}.

Beyond central charge~$24$, a full classification of \voa{}s \emph{up to isomorphism} becomes hopelessly unfeasible, as is evidenced by the sheer number of lattice \voa{}s alone \cite{Kin03}. However, one is led to ask if there are coarser (and hence perhaps more manageable) equivalence relations on \voa{}s that still capture important aspects of the theory. We shall address this question in a series of two papers, this text furnishing the first part. The second part will appear in \cite{MR24b}.

\medskip

In this text, we shall primarily be concerned with generalizations of the notion of genus for even lattices to \voa{}s, building on earlier work in \cite{Hoe03,Hoe17,Mor21}. Two even lattices $L$ and $M$ are in the same genus if one of the following equivalent conditions is satisfied (see \autoref{defi:latticegenus}):
\begin{enumerate}[label=(\alph*)]
\setcounter{enumi}{1}
\item\label{item:latgen2intro} $L'/L\cong M'/M$ and $\sign(L)=\sign(M)$.
\item\label{item:latgen3intro} $L\oplus\II_{1,1}\cong M\oplus\II_{1,1}$.
\end{enumerate}
Here, $L'/L$ denotes the discriminant form of $L$ and $\II_{1,1}$ is the unique even, unimodular lattice of signature $(1,1)$, sometimes called the hyperbolic plane. In a sense, the discriminant form \cite{Nik80,CS99} captures the local information on a lattice, and the (finite) number of isomorphism classes of even lattices in a fixed genus measures the failure of the local-global principle for quadratic forms over the integers.

Both definitions of the lattice genus admit natural generalizations to \voa{}s, the former being considered in \cite{Hoe03} and the latter in \cite{Mor21} (see also \cite{Hoe17}). On the one hand, two (\strat{}) \voa{}s $V$ and $W$ are in the same \emph{bulk genus} (see \cite{Hoe03} and \autoref{defi:bulkgenus}) if
\begin{equation*}
\Rep(V)\cong\Rep(W)\quad\text{and}\quad c(V)=c(W). 
\end{equation*}
Here, $\Rep(V)$ denotes the representation category of $V$, which is a modular tensor category \cite{Hua08b}, and $c(V)\in\Q$ denotes the central charge of $V$.

Every positive-definite, even lattice $L$ uniquely defines a lattice \voa{} $V_L$. By a generalization, we mean a notion of \voa{} genus that recovers the lattice genus for lattice \voa{}s. Indeed, the representation category of a lattice \voa{} $V_L$ is the pointed (and pseudo-unitary) modular tensor category $\mathcal{C}(L'/L)$ characterized by the discriminant form $L'/L$ of $L$, and the central charge $c(V_L)$ equals the rank $\rk(L)=\sign(L)$ of $L$.

On the other hand, two \voa{}s are in the same \emph{hyperbolic genus} (see \cite{Mor21} and \autoref{defi:hypgen}) if
\begin{equation*}
V\otimes V_{\II_{1,1}}\cong W\otimes V_{\II_{1,1}},
\end{equation*}
respecting choices of Cartan subalgebras. Here $V_{\II_{1,1}}$ denotes the lattice vertex algebra associated with the hyperbolic plane. Once again, it is not difficult to see that this defines a generalization of the lattice genus in the above sense.

One of the main results of this text is:

\smallskip

\noindent\textbf{\autoref{cor:hypbulk}.} \emph{If $V$ and $W$ are \strat{} \voa{}s belonging to the same hyperbolic genus, then they belong to the same bulk genus.}

\smallskip

That is, we show that the hyperbolic genus is a refinement of the bulk genus. This was conjectured in \cite{Mor21}. The converse of \autoref{cor:hypbulk} is false.

\medskip

In order to prove \autoref{cor:hypbulk}, we develop an alternative characterization of the hyperbolic genus, which is inspired by the results of \cite{Hoe17}. Rather than the ``external'' condition in \autoref{defi:hypgen}, we provide an ``internal'' definition.

In any (\strat{}) \voa{} $V$ there is a maximal lattice \voa{} $V_L\subset V$ \cite{Mas14}, which together with the Heisenberg commutant $C\coloneqq\Com_{V}(V_L)$ forms a dual pair in $V$. It follows that $V$ is a simple-current extension
\begin{equation*}
V\cong \bigoplus_{\alpha+L\in A} C^{\tau(\alpha+L)}\otimes V_{\alpha+L}
\end{equation*}
with gluing map $\tau$, also called mirror extension (see \autoref{prop:asslatdecomp} for details).

Now, two \voa{}s $V$ and $W$ are in the same \emph{hyperbolic genus} according to the alternative definition (see \autoref{defi:commgenus}) if
\begin{enumerate}
\item they have isomorphic Heisenberg commutants,
\item their associated lattices are in the same lattice genus,
\item their gluing maps are compatible in a certain sense.
\end{enumerate}

As another main result of this paper, we prove:

\smallskip

\noindent\textbf{\autoref{thm:hypcomm}.} \emph{Two \strat{} \voa{}s $V$ and $W$ belong to the same hyperbolic genus as in \autoref{defi:commgenus} if and only if they are in the same hyperbolic genus according to \autoref{defi:hypgen}.}

\smallskip

We point out that \autoref{defi:commgenus} simplifies considerably if $V$ (and hence $W$) is holomorphic, i.e.\ if $\Rep(V)\cong\Vect$. In that case, $V$ and $W$ are in the same hyperbolic genus if and only if they have isomorphic Heisenberg commutants and $c(V)=c(W)$ (see \autoref{cor:holhyp}). In fact, the same statement holds more generally when $\Rep(V)$ is unpointed (see \autoref{cor:unpointed}).

\medskip

As a main example, we consider in \autoref{sec:holo} holomorphic \voa{}s, i.e.\ \voa{}s with the trivial module category $\Rep(V)\cong\Vect$. In that case, the central charge must be a non-negative multiple of $8$.

The most interesting case is perhaps that of central charge~$24$, i.e.\ the bulk genus with $\Rep(V)\cong\Vect$ and $c(V)=24$. Modulo the moonshine uniqueness conjecture, \autoref{conj:moonshineuniqueness} \cite{FLM88}, these \voa{}s have been classified in \cite{Hoe17,MS23,MS21,HM22,LM22}, rigorously proving \cite{Sch93}. In total, there are $71$ such \voa{}s in this bulk genus, called the Schellekens \voa{}s. Their Heisenberg commutants were determined in \cite{Hoe17}; hence \autoref{cor:hypbulk} immediately provides the decomposition of this bulk genus into $12$ hyperbolic genera (see \autoref{table:12}).

We also consider the bulk genus of \strat{}, holomorphic \voa{}s of central charge~$32$. By \cite{Kin03} it contains over a billion lattice \voa{}s alone, but it may still be feasible to classify these \voa{}s up to hyperbolic equivalence, i.e.\ up to isomorphism of Heisenberg commutants (see \autoref{prop:hypgenhol}). In \autoref{sec:holhyp} and \autoref{subsec:c=32} we make some inroads into this problem.

\medskip

As further application, in \autoref{subsec:holomass}, we derive asymptotic lower bounds on the number of \voa{}s in infinite families of hyperbolic genera, in the special case of holomorphic \voa{}s. This is based on a mass formula (in the sense of Smith--Minkowski--Siegel) for hyperbolic genera proved in \cite{Mor21} (see \autoref{thm:massformula}) and a rewriting of the appearing subgroup index in terms of the Heisenberg commutant, i.e.\ in terms of our alternative characterization of the hyperbolic genus (see \autoref{cor:massconstant}).

Finally, in \autoref{subsec:SiegelWeil}, as a generalization of the mass formula for the \voa{}s in a hyperbolic genus, we conjecture a Siegel--Weil identity (see \autoref{conj:generalizedSiegelWeil}). We prove this conjecture for one hyperbolic genus of holomorphic \voa{}s of central charge~$24$, leaving the general case for future work. The conjecture implies an expression for the ``average'' character of a \voa{} in a hyperbolic genus.

\medskip

In view of the second part of our treatise \cite{MR24b}, we emphasize the general picture that emerges. Each equivalence relation on even lattices (like the genus) splits up into two notions of equivalence relations for (\strat{}) \voa{}s, one that (like the hyperbolic genus) is a more ``classical'' analog, while the other (like the bulk genus) is a more honest ``quantum'' analog. The former will always be a refinement of the latter. We depict this in \autoref{fig:summary}.

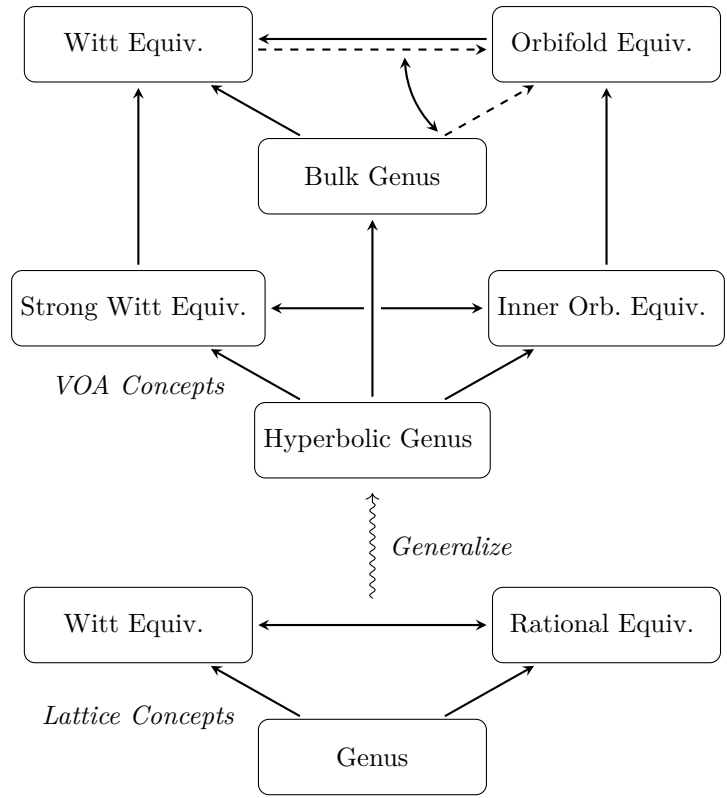
\begin{figure}[ht]
\begin{tikzpicture}[decoration={snake,post length=3pt,amplitude=1pt,segment length=4pt},scale=.7]
\def \tt {-2};
\tikzstyle{equiv}=[draw,rectangle, rounded corners, minimum width=3cm, minimum height=1cm]
\tikzstyle{a}=[thick,->,>=stealth,shorten >=2pt,shorten <=2pt]
\tikzstyle{t}=[thick,<->,>=stealth,shorten >=2pt,shorten <=2pt]
\tikzstyle{b}=[thick,<->,>=stealth,shorten >=2pt,shorten <=2pt]
\tikzstyle{d}=[thick,dashed,->,>=stealth,shorten >=2pt,shorten <=2pt]
\tikzstyle{dd}=[thick,double equal sign distance, -Implies]
\tikzstyle{s}=[draw,decorate,->]
\node[equiv,align=center] (witt) at (0,0) {Witt Equiv.\ };
\node[equiv,align=center] (orbifold) at (8.8,0) {Orbifold Equiv.\ };
\node[equiv,align=center] (rational) at (0,-5) {Strong Witt Equiv.\ };
\node[equiv,align=center] (inner) at (8.8,-5) {Inner Orb.\ Equiv.\ };
\node[equiv,align=center] (bulk) at (4.4,-2.5) {Bulk Genus};
\node[equiv,align=center] (hyperbolic) at (4.4,-7.5) {Hyperbolic Genus };
\node[] at (0,-6.5) {\emph{VOA Concepts}};
\draw[b] (5.7,-1.73) to [bend left=20] (5,-.15);
\draw[d] ([shift={(0,-.1)}]witt.east) to node[anchor=north,shift={(-1,0)}] {} ([shift={(0,-.1)}]orbifold.west);
\draw[a] ([shift={(0,.1)}]orbifold.west) to node[anchor=south] {}  ([shift={(0,.1)}]witt.east);
\draw[d] (bulk) -- node[anchor=west,shift={(-.5,-.4)}] {} (orbifold);
\draw[a] (hyperbolic) -- node[anchor=west,shift={(-.5,-.4)}] {} (inner);
\draw[a] (rational) -- node[anchor=west] {} (witt);
\draw[a] (inner) -- node[anchor=east] {} (orbifold);
\draw[a] (hyperbolic) -- node[anchor=west,shift={(-.3,.35)}] {} (rational);
\draw[a] (bulk) -- node[anchor=east,shift={(.32,-.35)}] {} (witt);
\draw[t] (rational) -- node[anchor=south,shift={(1.4,0)}] {} (inner);
\draw[draw=white,fill=white] (4.25,-5.15) rectangle ++(0.3,0.3);
\draw[a] (hyperbolic) -- node[anchor=east,shift={(0,.85)}] {} (bulk);
\path[s] (4.4,-10.5) to node[anchor=west,shift={(0.1,0)}] {\emph{Generalize}} (4.4,-8.5);
\node[] at (0,-10.75+\tt) {\emph{Lattice Concepts}};
\node[equiv,align=center] (lattice_witt) at (0,-9+\tt) {Witt Equiv.\ };
\node[equiv,align=center] (lattice_rat) at (8.8,-9+\tt) {Rational\ Equiv.\ };
\node[equiv,align=center] (lattice_gen) at (4.4,-11.5+\tt) {Genus};
\draw[a] (lattice_gen) -- (lattice_witt);
\draw[a] (lattice_gen) -- (lattice_rat);
\draw[t] (lattice_witt) -- (lattice_rat);
\end{tikzpicture}
\caption{Interrelations between various equivalence relations on \strat{} vertex operator algebras studied in our work. The solid arrows are rigorously established. The dashed arrows are conjectural. We restrict ourselves in this paper to the notions of genus, hyperbolic genus and bulk genus, and address the remaining concepts in the second part \cite{MR24b}.}
\label{fig:summary}
\end{figure}


\subsection{For Physicists}\label{sec:physintro}

What are the physically and mathematically useful ways of organizing the space of quantum field theories? One motivation for asking this question is the old and venerable problem of classifying rational conformal field theories. This problem is far too difficult to tackle all at once, and in fact difficult enough that even identifying what the ``right'' tractable subproblems are is worthy of study in its own right. For example, one direction is to define natural equivalence relations on quantum field theories which allow one to systematically explore theory space in manageable chunks. Indeed, with a judiciously chosen equivalence relation (or, equivalence relations) in hand, one could attempt to 1) classify theories within an equivalence class, one class at a time, 2) classify theories \emph{modulo} the equivalence relation, or 3) some combination of both.

Our approach  is to take inspiration from the well-developed theory of integral lattices. In that setting, there are several mathematically interesting ways to organize the space of lattices into equivalence classes. Furthermore, there is a useful analogy between lattices and chiral algebras, which goes through the fact that any integral, positive-definite lattice $L$ defines a rational chiral algebra $V_L$ describing $\rk(L)$-many chiral free bosons $\varphi^i$. (In this construction, $L$ determines which vertex operators $e^{i\lambda\cdot \varphi}$ are included in the algebra.) It is then natural to attempt to translate various lattice-theoretic ideas and constructions across this analogy, and ask whether they can be generalized beyond chiral free bosons, to arbitrary chiral algebras. 

The lattice-theoretic concept we focus on in this work is the notion of a \emph{genus} of lattices, which we describe below and review in more detail in \autoref{sec:lat}. In a companion paper \cite{MR24b}, we study Witt equivalence, rational equivalence, and neighborhood, and investigate their relationship to orbifolding in two-dimensional quantum field theory. We offer \autoref{fig:summary} as a glossary of the interrelations between the various concepts arising in this paper and in \cite{MR24b}. The general picture we uncover is that each lattice equivalence relation splits into two equivalence relations on chiral algebras: one which is more ``classical'' in the sense that it is closer to the lattice definition, and one which is more honestly ``quantum''. The former always defines a finer equivalence relation than the latter. 

\medskip


We start by reminding the reader what it means for two  lattices to belong to the same genus. For simplicity, we restrict our attention here to lattices which are even and positive-definite. There are at least two equivalent definitions we could take as our starting point. As we will see, these definitions, when extrapolated, lead to two genuinely different equivalence relations on chiral algebras.

To state the first definition, recall that, to any even, positive-definite  lattice $L$, one can associate a metric group $D$, i.e.\ a finite abelian group equipped with a function $q\colon D\to\C^\times$ such that $b(x,y) \coloneqq q(x+y)q(x)^{-1}q(y)^{-1}$ is a non-degenerate bilinear form and $q(x)=q(-x)$. This group is called the discriminant form of $L$, and is defined as $D=L'/L$, where $L'$ is the dual lattice.   One then declares that two lattices belong to the same genus if they have the same rank and their discriminant forms are isometric. 

We would like to interpret this equivalence relation physically, in terms of free chiral boson operator algebras $V_L$. Rational chiral algebras are relative quantum field theories, and hence they should be thought of as living on the boundary of a topological field theory in one dimension higher \cite{Wit89,EMSS89} (see \autoref{fig:3d2d}). On the other hand, it is known that a metric group (and the choice of an integer, which we can take to be the rank of the lattice, $r\coloneqq\rk(L)$) is precisely the data needed to define a 2+1d abelian topological field theory.\footnote{A 2+1d topological field theory is said to be abelian if the fusion rules of its anyons are described by a finite abelian group.} In fact, the abelian topological field theory defined by the pair $(D,r)=(L'/L,\rk(L))$ \emph{is} the bulk theory which supports $V_L$ on its boundary. Thus, we learn that two lattices $L$ and $K$ belong to the same genus if and only if their corresponding free chiral boson theories $V_L$ and $V_K$ arise on the boundary of the same bulk topological field theory. These words can be naturally extended to apply to arbitrary chiral algebras.

\smallskip

\noindent\textbf{Definition}.  \emph{Two rational chiral algebras belong to the same bulk genus if they arise on the boundary of the same 2+1d bulk topological field theory.}

\smallskip

Mathematically, it is equivalent to say that two rational chiral algebras belong to the same bulk genus if their central charges agree, and if their representation categories are equivalent as modular tensor categories. This definition was first explicitly considered in \cite{Hoe03}. We affix the word ``bulk'' onto genus here to differentiate this notion from another, inequivalent definition of genus which we introduce shortly. 

The most immediate virtue of partitioning the space of rational chiral algebras into bulk genera is that there are conjecturally only finitely many theories in a fixed bulk genus (cf.\ Conjecture 3.5 of \cite{Hoe03} and \autoref{conj:bulkfiniteness} below).  In particular, it becomes a conceivably tractable problem to attempt to classify all of the rational chiral algebras within a fixed bulk genus. This problem is quite natural when it is rephrased in the language of condensed matter physics: in that context, one is fixing a 2+1d topological phase of matter, and asking for the complete list of its gapless chiral edge modes. As a partial result in the direction of \autoref{conj:bulkfiniteness}, we establish in \autoref{prop:partialfiniteness} that, under the assumption that there are only finitely many chiral conformal field theories of a fixed central charge, the orientation reversal of any Chern--Simons theory with semisimple gauge algebra supports only finitely many gapless chiral boundary conditions.

Now, 2+1d topological theories, which can be thought of as labeling different bulk genera, can themselves be organized in terms of the number of anyons $p$ they possess and their chiral central charge $c$. Here too, one encounters only finitely many topological quantum field theories for each tuple $(p,c)$ \cite{BNRW16}, and much is  known about the classification of topological field theories which have a small numbers of anyons \cite{RSW09,NRWW23,NRW23}. Thus, one could imagine systematically walking through the space of rational chiral algebras, one bulk genus at a time, in order of increasing central charge $c$ and number of anyons $p$. In the language of conformal field theory, the number of anyons $p$ of a topological field theory is interpreted as the number of irreducible representations (or primaries) of the chiral algebra(s) that the topological field theory supports on its boundary. Thus, loosely speaking, this way of thinking organizes the space of chiral algebras into chunks of finite size based on complexity: starting with theories with fewer degrees of freedom (lower central charge $c$) and simpler operator content (smaller~$p$). 

There are several bulk genera for which the classification problem has been solved. For example, consider the rational chiral algebras whose only primary operator is the identity, so that $p=1$; we refer to such theories as \emph{chiral conformal field theories}, or alternatively as \emph{holomorphic vertex operator algebras}. The absence of any non-trivial primaries implies that a chiral conformal field theory lives at the boundary of a topological quantum field theory without any non-trivial anyons; the latter is often called an invertible topological field theory. In 2+1d, the invertible topological field theories are simply powers of the $E_8$ phase, and hence are labeled by an integer $n\in\Z$ which specifies the chiral central charge $c=8n$. Thus, the collection of chiral conformal field theories with a fixed central charge $c=8n$ defines a bulk genus which is labeled by the unique invertible 2+1d topological field theory of chiral central charge $c$. In his seminal work, Schellekens \cite{Sch93} predicted that there are $74$ non-trivial chiral conformal field theories with central charge $c\leq 24$, a result which was  turned into rigorous mathematics in a series of papers \cite{Hoe17,MS23,MS21,HM22,LM22}, modulo a still open conjecture about the uniqueness of the monster conformal field theory (cf.\ \autoref{conj:moonshineuniqueness}). 

More recently, analogous results were obtained for rational chiral algebras in most of the bulk genera labeled by topological field theories with $p\leq 4$ anyons and central charge $c\leq 24$ \cite{SR23,Ray23,HM23}, though the methods used apply in much greater levels of generality.\footnote{See also, e.g., \cite{MMS88,GHM16,KLP21,TW17,CM19,GLTY20} for prior related work.} Very roughly, the approach taken in \cite{SR23,Ray23} is to argue (under mild assumptions) that any rational chiral algebra with $p>1$ occurs as a subalgebra of a chiral conformal field theory (i.e.\ a rational chiral algebra with $p=1$), and can be recovered as a coset. These works operated under the assumption of unitarity,  or more modestly, that the conformal dimensions of all non-trivial primaries are strictly positive. We strengthen the results of op.\ cit.\ by demonstrating, with the help of our \autoref{prop:essentiallypositive}, that they continue to hold even once one drops the assumption of unitary. That is, the unitary topological field theories with at most 4 anyons do not admit any unitarity-violating gapless chiral boundary conditions. (See \autoref{subsec:glue} and \autoref{subsec:bulkex} for a more detailed discussion.) 

Thus, in some sense, the entire classification of rational chiral algebras with non-trivial representation theory is controlled just by the  chiral conformal field theories, which have trivial representation categories. It therefore behooves us to understand the latter class as best we can.

\medskip

One bottleneck in classifying chiral conformal field theories beyond $c=24$ is that the number of theories very quickly explodes. For example, King has used a refinement \cite{Kin03} of the Smith--Minkowski--Siegel mass formula to lower bound the number of even, unimodular lattices $L$ with $\rk(L)=32$ (each of which corresponds to a $c=32$ chiral conformal field theory built out of free bosons), and found that there are more than a billion. Though the rapid proliferation of lattices is a practical obstacle to classifying chiral conformal field theories, in principle, lattices are very well-understood objects. For example, there are even algorithms for enumerating all of the unimodular lattices in a given dimension \cite{Kne57}. One could therefore imagine an approach to the classification of chiral conformal field theories wherein one morally treats lattices, or free boson theories, as an infinitely abundant resource: what would be left to do if a powerful oracle could hand you any lattice you wanted?

This is where the second vertex-algebraic generalization of the notion of lattice genus enters. To explain this, we take as our starting point the following alternative characterization of a genus of lattices: two lattices $L$ and $K$ belong to the same genus if and only if $L\oplus \II_{1,1}$ is isometric to $K\oplus \II_{1,1}$, where $\II_{1,1}$ is the unique, even unimodular lattice of signature $(1,1)$. Just as the previous formulation in terms of  discriminant forms could be directly translated to the setting of rational chiral algebras, this definition also admits a natural extension: one declares that $V$ and $W$ belong to the same \emph{hyperbolic genus} if $V\otimes V_{\II_{1,1}}\cong W\otimes V_{\II_{1,1}}$ as conformal vertex algebras, where we have added the prefix ``hyperbolic'' to differentiate this notion of genus from bulk genus.\footnote{The name comes from the fact that the lattice $\II_{1,1}$ has hyperbolic signature, and is often called the hyperbolic plane.} This definition was first introduced in \cite{Mor21}.

For a physicist, the lattice vertex algebra $V_{\II_{1,1}}$ is unnatural to work with because $\II_{1,1}$ is of indefinite signature, so we offer a more physical interpretation of hyperbolic genus.\footnote{See also \cite{Mor23} and \autoref{remph:currentcurrent} for an equivalent characterization of hyperbolic equivalence in terms of current-current deformations.} In order to do this, we must take a brief digression and explain a certain canonical decomposition that any rational chiral algebra possesses into a free and interacting sector. 

Consider a rational chiral algebra $V$ which has a continuous global symmetry group $G$ of rank $r$. After choosing a Cartan subalgebra, one obtains, by Noether's theorem, spin-1 conserved currents $J^i(z)$ which generate a maximal torus of $G$, where $i=1,\dots, r$. These currents can in turn be bosonized, $J^i(z) = \partial \varphi^i(z)$, so that $V$ admits an operator subalgebra $H$ which is generated by $r$ chiral bosons. For certain values of $\lambda \in \mathbb{R}^r$, the vertex operator $e^{i\lambda \cdot \varphi(z)}$ built out of these chiral bosons will appear as an operator inside $V$; the collection of all such $\lambda$ will actually form a lattice $L$ whose dual $L'$ contains the charge lattice of $V$. In particular, the algebra of the corresponding vertex operators closes onto a lattice chiral algebra $V_L$ which contains $H$ as a conformal subalgebra. We refer to $L$ as the associated lattice of $V$, and $V_L$ as the \emph{free sector} of $V$. 

On the other hand, we can define an interacting part of $V$ as ``the complement of the free sector.'' More precisely, we define the \emph{interacting sector} $C$ of $V$ as the space of all operators in $V$ which commute with the operators in the subalgebra $V_L$, i.e.\ $C$ is the coset of $V$ by its free sector. By construction, $C$ will be a rational chiral algebra which does not possess any continuous global symmetries, i.e.\ it does not have spin-1 conserved currents. Rather than factorizing into a trivially decoupled tensor product $C\otimes V_L$ of its free sector and its interacting part, $V$ will generally decompose into a direct sum of finitely many irreducible representations of $C\otimes V_L$, 
\begin{equation*}
    V\cong \bigoplus_{\alpha+L\in A} C^{\tau(\alpha+L)}\otimes V_{\alpha+L}.
\end{equation*}
Here, the irreducible modules of $V_L$ are denoted $V_{\alpha+L}$, where $\alpha+L$ takes values in the discriminant form $L'/L$, and $A<L'/L$ is the subgroup of irreducible $V_L$-modules which appear in $V$. The irreducible modules of $C$ are written $C^\beta$, and $\tau$ describes how the representations of $V_L$ are glued to those of $C$; that is, $\tau$ is a map which associates to each $\alpha+L\in A$ an irreducible $C$-module $C^{\tau(\alpha+L)}$ which appears next to $V_{\alpha+L}$ in the decomposition of $V$. 

In \autoref{subsec:equivchar}, we prove the following alternative characterization of hyperbolic genus.

\smallskip

\noindent \textbf{\autoref{thm:hypcomm}}. \emph{Two rational chiral algebras belong to the same hyperbolic genus if and only if a) their interacting parts are isomorphic, b) their associated lattices belong to the same lattice genus, and c) their free and interacting sectors are glued together ``in a compatible manner''.}

\smallskip

One immediate corollary of this characterization (\autoref{cor:hypbulk}) is that if two rational chiral algebras belong to the same hyperbolic genus, then they belong to the same bulk genus. Thus, hyperbolic equivalence is a finer equivalence relation than bulk equivalence, and we may contemplate partitioning each bulk genus into different hyperbolic genera. 

It turns out that, in the case of chiral conformal field theories, one can determine whether two theories are hyperbolically equivalent just by checking that their interacting sectors are the same (see \autoref{cor:holhyp}, and also \autoref{cor:unpointed} for a more general statement). In particular, a hyperbolic genus of chiral conformal field theories precisely consists of theories which differ only in their free sectors, but whose interacting parts are identical; for example, the more than one billion free boson theories predicted by King all collapse into a single hyperbolic equivalence class. 

Thus, classifying chiral conformal field theories modulo hyperbolic equivalence formalizes the idea of operating in a world where lattices are cheap. This suggests to us a program of classifying $c=32$ chiral conformal field theories wherein one attempts to enumerate all possibilities for the interacting sector $C$, but ignores the issue of producing a complete list of  free sectors $V_L$ to which $C$ can be glued.  We do not attempt to completely solve this problem, but we offer several sources of hyperbolic genera of $c=32$ chiral conformal field theories in \autoref{subsec:c=32}.

\medskip

We conclude this physics introduction by highlighting one more application of these constructions: it turns out that the hyperbolic genus of a chiral conformal field theory  defines an ensemble of conformal field theories whose disorder average admits a ``holographic'' dual in the spirit of \cite{MW20,ACHT21,DHJ21}.

Let us first describe a natural measure on a hyperbolic genus, following \cite{Mor21}. Once again, let $V$ be a chiral algebra and $\{J^i(z)\}$ a linearly independent collection of $r$ commuting currents which generate a maximal torus of the continuous symmetry group of $V$. Recall that $V$ decomposes into an interacting part $C$ and a free sector $V_L$. Any symmetry $\phi\colon V\to V$ which preserves the currents $J^i(z)$, in the sense that it sends $\spn_{\C}\{J^i(z)\}$ to itself, will shuffle vertex operators $e^{i\lambda\cdot \varphi}$ in $V_L$ among themselves, and hence induce an isometry $\phi^\ast$ of the lattice $L$. We call $G_V$ the subgroup of $\Aut(L)$ which is induced by symmetries of $V$, and define a probability measure which assigns $V$ the weight $1/|G_V|$.

It turns out that, for $W$ some fixed chiral algebra with associated lattice $K$, one has the following formula (Theorem 4.16 of \cite{Mor21}):
\begin{equation*}
   \mathrm{mass}(W)\coloneqq\sum_{V\in\mathrm{hgen}(W)} \frac{1}{|G_V|} \propto \sum_{L\in \mathrm{gen}(K)} \frac{1}{|\Aut(L)|} \eqqcolon ~\mathrm{mass}(K)
\end{equation*}
where $\mathrm{hgen}(W)$ is the hyperbolic genus of $W$ and $\mathrm{gen}(K)$ is the lattice genus of $K$. The sum over lattices on the right-hand side is known as the \emph{mass} of the genus of $K$, and is exactly computable by a widely-celebrated formula of Smith--Minkowski--Siegel \cite{CS88}.  Furthermore, the constant of proportionality between $\mathrm{mass}(W)$ and $\mathrm{mass}(K)$  is described in \autoref{subsec:mass}, and its explicit computation is facilitated by several technical lemmas that we prove. Because $\mathrm{mass}(W)$ straightforwardly furnishes a lower bound on the number of chiral algebras in $\mathrm{hgen}(W)$, one may use this formula to estimate the growth of the number of theories as a function of central charge, which we carry out in an example involving chiral conformal field theories in \autoref{subsec:holomass}.

In \autoref{conj:generalizedSiegelWeil}, we write down a generalization of this mass formula to a Siegel--Weil identity in the case that $W$ is a chiral conformal field theory of arbitrary central charge $c$. Our formula, which we check in examples, says that, when $r\geq 4$,
\begin{equation*}
    \sum_{V\in\mathrm{hgen}(W)} \frac{Z_V(\tau)}{|G_V|} = \mathrm{mass}(W)\sum_{\gamma\in \Gamma_\infty\backslash \mathrm{SL}_2(\Z)} \epsilon(\gamma)^c \frac{\mathrm{ch}_C(\gamma\tau)}{\eta(\gamma\tau)^r}
\end{equation*}
where $Z_V(\tau)$ is the genus-1 partition function of $V$, i.e.
\begin{equation*}
    Z_V(\tau) = \tr_Vq^{L_0-c/24}, \ \ q=e^{2\pi\i \tau}
\end{equation*}
and similarly $\mathrm{ch}_C(\tau)$ is the vacuum character of the interacting sector $C$ of $W$. Here, $\Gamma_\infty=\{\pm\left(\begin{smallmatrix} 1 & n \\ 0 & 1 \end{smallmatrix}\right)\mid n\in\Z\}$ and $\epsilon$ is the ``multiplier system'' for the Dedekind eta function $\eta$. This can be thought of as a formula for the ``average'' partition function of a chiral conformal field theory in the hyperbolic genus of $W$. In \autoref{subsec:SiegelWeil}, we explain a procedure for evaluating the right-hand side of this equation in terms of Eisenstein series for congruence subgroups, described in \autoref{app:eisenstein}. Moreover, we provide a kind of holographic interpretation of the right-hand side as a sum of a particular bulk topological field theory over a family of $\mathrm{SL}_2(\Z)$ black holes. This formula furnishes a kind of chiral generalization of the results of \cite{MW20,ACHT21,DHJ21} to arbitrary chiral conformal field theories, not just those of Narain type. 
%


\subsection*{Notation}

All Lie algebras and vertex (operator) algebras will be over the base field $\C$. All categories will be enriched over $\Vect=\Vect_\C$.

This is a mathematical manuscript. However, from time to time we use the notation Theorem$^{\mathrm{ph}}$, Proposition$^{\mathrm{ph}}$, etc.\ to denote statements that are established or well-defined only at a physics level of rigor.


\subsection*{Acknowledgments}
We thank Yichul Choi, Gerald Höhn, Hannes Knötzele, Zohar Komargodski, Robert McRae, Yuto Moriwaki, Sunil Mukhi, Nils Scheithauer, Sahand Seifnashri, Shu-Heng Shao and Xiao-Gang Wen for helpful discussions.

Sven Möller acknowledges support from the DFG through the Emmy Noether Program and the CRC 1624 \emph{Higher Structures, Moduli Spaces and Integrability}, project numbers 460925688 and 506632645. Brandon Rayhaun gratefully acknowledges NSF grant PHY-2210533.


\section{Preliminaries}\label{sec:prelim}

In this section, we introduce the central mathematical notions of this text, namely even lattices, modular tensor categories and \voa{}s. We also introduce the notion of a Cartan subalgebra of a \voa{} and describe the corresponding root-space decomposition. Finally, we comment on the various connections of these objects to physics.


\subsection{Lattices}\label{sec:lat}

We begin by reviewing various lattice theoretic notions, including genus and mass. We shall restrict our attention to even lattices. This discussion serves as motivation for studying related concepts for (suitably regular) \voa{}s in subsequent sections. 

\medskip

Let $L=(L,\langle\cdot,\cdot\rangle)$ be a \emph{lattice}, i.e.\ a free $\Z$-module $L$ of rank $\rk(L)\in\N$ equipped with a non-degenerate symmetric bilinear form $\langle\cdot,\cdot\rangle\colon L\times L\to\Q$. The lattice $L$ is called \emph{even} if $\langle\alpha,\alpha\rangle/2\in\Z$ for all $\alpha\in L$.
For an even lattice $L$, let
\begin{equation*}
L'=\{v\in L\otimes_\Z\Q\mid\langle v,\alpha\rangle\in\Z\text{ for all }\alpha\in L\}\supseteq L
\end{equation*}
denote the dual lattice, viewed as a lattice in the ambient space $L\otimes_\Z\Q$ of $L$ via the bilinear form. Moreover, let $L'/L$ be the \emph{discriminant form} of $L$, a finite abelian group equipped with a non-degenerate quadratic form $Q\colon L'/L\to\Q/\Z$ induced by $\langle\cdot,\cdot\rangle$ (see, e.g., \cite{Nik80,CS99}). A finite abelian group $D$ equipped with a non-degenerate quadratic form $Q$ is sometimes referred to as a \emph{metric group}. Lastly, let
$\sign(L)=(r_+,r_-)$ with $r_++r_-=\rk(L)$ denote the signature of $L$ over $\R$, i.e.\ the signature of $L\otimes_\Z\R$. The lattice $L$ is called \emph{unimodular} if $L'/L$ is trivial. We shall typically only be interested in lattices up to isomorphism, i.e.\ up to integral equivalence.

Given a metric group $(D,Q)$, and a pair of non-negative integers $(t_+,t_-)$, one may ask when there exists an even lattice $L$ with $L'/L\cong D$ and $\mathrm{sign}(L)=(t_+,t_-)$.  A necessary and sufficient condition for the existence of such a lattice is given in Theorem~1.10.1 of \cite{Nik80} (see also \cite{CS99}). Here, we cite only its Corollary~1.10.2, which states that it is sufficient that $t_+-t_-\equiv \mathrm{sign}(Q)\pmod 8 $ and that $t_++t_->l(D)$, where $l(D)$ is the minimum number of generators of $D$ as an abelian group and $\mathrm{sign}(Q)\in\Z_8$ is the ``signature (mod $8$)'' of $Q$ (see Theorem 1.3.3 of \cite{Nik80}). This is a kind of \emph{reconstruction} result, which one hopes to recover for modular tensor categories and \voa{}s (see \autoref{subsec:reconstruction}).

\medskip

Given an even lattice $L$, we can consider its group of isometries $\Aut(L)=\O(L)$, which is finite when $L$ is positive-definite. For an isometry $\nu\in\Aut(L)$ of order~$m$, we say that $\nu$ has Frame shape $\prod_{t\mid m}t^{b_t}$ with $b_t\in\Z$ if the extension of $\nu$ to $L\otimes_\Z\C$ has the characteristic polynomial $\prod_{t\mid m}(x^t-1)^{b_t}$.

An in many ways special lattice is the Leech lattice $\Lambda$. It is the unique positive-definite, even, unimodular lattice of rank~$24$ without roots. We remark that the elements of $\Aut(\Lambda)$ are uniquely specified by their Frame shapes. 

\medskip

For two even lattices $L$ and $M$ (or two isomorphism classes of such lattices) it is well-known that the following are equivalent \cite{Nik80,CS99}.

\begin{defi}[Lattice Genus]\label{defi:latticegenus}
Two even lattices belong to the same \emph{genus} if any one of the following equivalent conditions hold:
\begin{enumerate}[label=(\alph*)]
\item\label{item:latgen1} $L\otimes_\Z\R\cong M\otimes_\Z\R$ and  $L\otimes_\Z\Z_p\cong M\otimes_\Z\Z_p$ for all primes $p$.
\item\label{item:latgen2} $L'/L\cong M'/M$ and $\sign(L)=\sign(M)$.
\item\label{item:latgen3} $L\oplus\II_{1,1}\cong M\oplus\II_{1,1}$.
\end{enumerate}
\end{defi}
Here (and only here), $\Z_p$ denotes the ring of $p$-adic integers, and $\II_{1,1}$ is the unique even, unimodular lattice of signature $(1,1)$, sometimes called the \emph{hyperbolic plane}. This defines an equivalence relation.
We denote the genus (or equivalence class) of a lattice $L$ by $\gen(L)$. Each genus contains only finitely many isomorphism classes of even lattices.

\medskip

There are also tools, specifically the Smith--Minkowski--Siegel mass formula (see, e.g., \cite{CS88,CS99}), for computing the \emph{mass} of a genus of positive-definite, even
lattices, which is defined as
\begin{equation*}
    \mass(L) = \sum_{M\in\mathrm{gen}(L)} \frac{1}{|\Aut(M)|},
\end{equation*}
where $|\Aut(M)|$ is the order of the automorphism group of $M$, which is finite since $M$ is positive-definite, and the sum runs over the finitely many isomorphism classes of lattices in the genus of $L$. For example, if $L=(E_8)^{n}$ for $n\in\N$, where $E_8$ is the (unimodular) root lattice of type $E_8$, then the mass of $L$ is a weighted sum over the positive-definite, even, unimodular lattices of dimension $8n$, and the result is known to take the form
\begin{equation}\label{eqn:unimodularmass}
\mass((E_8)^n) = \frac{|B_{4n}|}{8n}\prod_{1\leq j<4n}\frac{|B_{2j}|}{4j}
\end{equation}
where the $B_k$ are Bernoulli numbers. 

Mass formulae have played an important role in lattice theory. For example, if one purports to have classified all lattices in a given genus, say via Kneser's neighborhood method \cite{Kne57}, then the mass formula furnishes an explicit check of correctness. Furthermore, $2\cdot\mass(L)$ is straightforwardly seen to be a lower bound on the number of lattices in $\mathrm{gen}(L)$, albeit a relatively weak one. One may therefore estimate, e.g., how the number of positive-definite, even lattices $L$ with a given discriminant form $D$ grows with $\rk(L)$.


\subsection{Fusion Categories}\label{sec:cat}

In the following, we briefly and telegraphically review some categorical notions that are relevant to this treatise. The central objects are spherical (pivotal) fusion categories and modular tensor categories. We refer to \cite{JS93,BK01,Tur10,EGNO15} for any details which we have omitted. 

\medskip

A fusion category (over $\C$) is a $\C$-linear, semisimple, rigid monoidal category with finitely many simple objects up to isomorphism, finite-dimensional hom-spaces and with its monoidal unit being simple. We index the finite set $\Irr(\Cat)$ of isomorphism classes of simple objects of a fusion category $\Cat$ by $0,\dots,\rk(\Cat)-1$ with $0$ representing the monoidal unit, where $\rk(\Cat)=|\Irr(\Cat)|$ is the rank of $\Cat$. For an object $X\in\Cat$, we denote by $X^*$ the dual object. We use $\Cat\boxtimes\mathcal{D}$ to denote the Deligne product of two fusion categories.

One says that a fusion category is \emph{pseudo-unitary} if the global Frobenius--Perron dimension equals the global categorical dimension. Equivalently, such a fusion category admits a (unique) spherical pivotal structure with respect to which the categorical dimensions $d_X$ of all simple objects are positive, and hence coincide with the Frobenius-Perron dimensions (which are always positive real numbers). See \cite{ENO05} for a thorough discussion. By a slight abuse of notation, when given a spherical fusion category that is pseudo-unitary, we shall mean that the spherical structure is actually the one with the mentioned property.

A fusion category is called \emph{pointed} if all of its simple objects are invertible with respect to the monoidal tensor product. It is known that every pointed fusion category is of the form $\Vect_G^\omega$ for $G$ a finite group; this is the category of $G$-graded vector spaces, with the associativity isomorphisms twisted by a 3-cocycle $\omega\in H^3(G,\C^\times)$.

\medskip

A braiding on a fusion category is a natural isomorphism $c_{X,Y}\colon X\otimes Y\to Y\otimes X$, $X,Y\in\Cat$, that satisfies the hexagon axioms. Given a fusion subcategory $\mathcal{D}\subset\Cat$, we define the Müger centralizer $\mathcal{D}'$ to be the full subcategory of objects $X$ of $\Cat$ such that $c_{Y,X}\circ c_{X,Y}=\id_{X\otimes Y}$ for all $Y$ in $\Cat$. A spherical, braided fusion category is said to be a \emph{modular tensor category} if the Müger center $\Cat'=\Vect$ is trivial.

We note that spherical pivotal structures on braided fusion categories are in bijection with ribbon structures (given by the ribbon twists $\theta_X\colon X\to X$). We consider two modular tensor categories to be the same if they are ribbon (and in particular braided and tensor) equivalent. Given a modular tensor category $\Cat$, we define the \emph{ribbon-reversed} category $\overline{\Cat}$ to be the modular tensor category $\Cat$ with its twist inverted and its braiding reversed, i.e.\ $\overline{c}_{X,Y}\coloneqq c_{Y,X}^{-1}$. A ribbon-reversed equivalence of $\Cat$ and $\mathcal{D}$ is a ribbon equivalence $\Cat\cong\overline{\mathcal{D}}$. Given a spherical fusion category $\Cat$, we use $Z(\Cat)$ to denote its Drinfeld center, which is guaranteed to be a modular tensor category.

Recall that the un-normalized $S$-matrix of a modular tensor category $\mathcal{C}$ is defined as $\tilde{s}_{X,Y}=\tr(c_{Y,X^\ast}\circ c_{X^\ast,Y})$. Further, we encode the twists $\theta_X$ of $\mathcal{C}$ into a diagonal $T$-matrix $\tilde{t}_{X,Y}=\delta_{X,Y}\theta_Y$, where $X,Y\in\mathrm{Irr}(\mathcal{C})$. These matrices satisfy \cite{BK01}
\begin{equation}\label{eqn:modularrelations}
    (\tilde{s}\tilde{t})^3=p^+\tilde{s}^2,\quad\tilde{s}^2 = p^+p^- C,\quad C\tilde{t}=\tilde{t}C,\quad C^2=\mathrm{id}
\end{equation}
where $p^\pm = \sum_{X\in\mathrm{Irr}(\mathcal{C})} d_X^2 \theta_X^{\pm 1}$ are the Gauss sums of $\mathcal{C}$ and $C_{X,Y} = \delta_{X,Y^\ast}$ is the charge conjugation matrix of $\mathcal{C}$. The \emph{chiral central charge} $c(\mathcal{C})\in\Q/8\Z$ of $\mathcal{C}$ is a rational number defined modulo 8 by the equation 
\begin{equation}\label{eqn:chiralcentralcharge}
e^{2\pi\i c(\mathcal{C})/8} = p^+/D,\quad D=\sqrt{\sum_{X\in\mathrm{Irr}(\mathcal{C})} d_X^2}.
\end{equation}
We define the \emph{normalized} $S$-matrix of $\mathcal{C}$ by $s=\tilde{s}/D$.

\medskip

The pointed braided fusion categories are, up to braided monoidal equivalence, in bijection with pre-metric groups, i.e.\ finite abelian groups $D$ together with a quadratic form $Q\colon D\to\Q/\Z$ \cite{JS93} (see also \cite{EGNO15}). Let $\mathcal{C}(D)$ denote the pointed braided fusion category corresponding to $D=(D,Q)$. Any $\mathcal{C}(D)$ is pseudo-unitary. Indeed, on each $\mathcal{C}(D)$, there are $\Hom(D,\{\pm1\})$ many choices of ribbon (or equivalently spherical pivotal) structures, each making $\mathcal{C}(D)$ into a ribbon fusion category. The pseudo-unitarity means that there is a unique choice of such structure such that all categorical dimensions are positive (i.e.\ equal to $1$). In the following, let us by $\mathcal{C}(D)$ denote the ribbon fusion category corresponding to this choice. Finally, if $D=(D,Q)$ is a metric group, i.e.\ if the bilinear form $B\colon D\times D\to\Q/\Z$ associated with $Q$ is non-degenerate, then $\mathcal{C}(D)$ is a modular tensor category.
(We refer the reader to the discussion in Section~2.4 of \cite{HM23} and the references cited therein for a more detailed discussion.)


\subsection{Vertex Algebras}

We refer readers to, e.g., \cite{Bor86,FLM88,Kac98,FBZ04} for any background on vertex (operator) algebras we omit here. A \emph{vertex algebra} is a vector space $V$ together with a linear map
\begin{equation*}
Y(\cdot,x)\colon V\to\End_\C(V)[[x^{\pm1}]],\quad a\mapsto Y(a,x)=\sum_{n\in\Z}a_nx^{-n-1}
\end{equation*}
and a vacuum vector $0\neq\vac\in V$ subject to a number of well-motivated axioms.

A \emph{conformal vertex algebra} is a $\Z$-graded vertex algebra $V=\bigoplus_{n\in\Z}V_n$ equipped with a distinguished vector $\omega\in V_2$ called conformal vector, which must satisfy the following conditions. First, its modes $L_n=\omega_{n+1}$, $n\in\Z$, should generate the Virasoro algebra, i.e.\
\begin{equation*}
    [L_n,L_m] = (n-m)L_{n+m}+\frac{1}{12}\delta_{n+m,0}(n^3-n)c,
\end{equation*}
for some $c\in\C$ called the central charge. Second, the conformal vector should be compatible with the grading in the sense that $L_0 a = n a$ when $a\in V_n$. Lastly, we require that $L_{-1}a = a_{-2}\vac$.

A conformal vertex algebra is called a \emph{vertex operator algebra} if $\dim(V_n)<\infty$ for all $n\in\Z$ and $V_n=\{0\}$ for $n$ sufficiently small.

\medskip

We shall typically demand a number of regularity conditions on the \voa{}s we consider. A vertex operator algebra is said to be of CFT-type if $V=\bigoplus_{n\geq 0}V_n$ and $V_0 = \C \vac$. It is said to be $C_2$-cofinite (or lisse) if $V/C_2(V)$ is finite-dimensional, where $C_2(V) = \spn_{\C}\{a_{-2}b\mid a,b\in V\}$. A vertex operator algebra is self-contragredient if $V\cong V^\ast$ as $V$-modules, where $V^\ast$ is the contragredient dual of $V$. Finally, the following defines the class of vertex operator algebras to which we mainly restrict ourselves in this work.

\begin{defi}[Strong Rationality]\label{defi:strongrationality}
A vertex operator algebra is \emph{strongly rational} if it is simple, rational (see \cite{DLM97} for the definition), $C_2$-cofinite, self-contragredient and of CFT-type.
\end{defi}

A strongly rational vertex operator algebra has finitely many irreducible modules $M\in\Irr(V)$, which we typically label by $V^0=V,V^1,V^2,\dots$, and every module is a direct sum of irreducible ones. We use $\rho(M)$ to denote the smallest eigenvalue of $L_0$ when acting on $M$. We refer to $V$ itself as the vacuum module, which has $\rho(V)=0$. Strong rationality also implies that the central charge $c$ and all $\rho(M)$ for $M\in\Irr(V)$ are rational numbers \cite{AM88,DLM00}.

Crucially, for a \strat{} \voa{} $V$, the category of representations $\Rep(V)$ carries the structure of a modular tensor category \cite{Hua08b}, with monoidal unit $V$ and the contragredient duals $M^*$ being the rigid duals.

\medskip

We consider a further regularity condition on \voa{}s:
\begin{defi}[Positivity]\label{defi:positivity}
A \strat{} vertex operator algebra $V$ is said to be \emph{positive} if $\rho(M)>0$ for every irreducible module $M\in\Irr(V)$ other than the vacuum module $V$.
\end{defi}
This entails that the categorical dimensions of all irreducible modules are positive (and coincide with the quantum dimensions), i.e.\ that the modular tensor category $\Rep(V)$ is pseudo-unitary \cite{DJX13,DLN15}. Moreover, this implies that the central charge $c$ is non-negative (and positive as long as the \voa{} is not finite-dimensional) \cite{DM04b}.

\medskip

Let $(U,\omega')$ be a vertex operator subalgebra of $(V,\omega)$, where the conformal vector $\omega'$ of $U$ does not have to coincide with the conformal vector $\omega$ of $V$. The \emph{commutant} of $U$ inside $V$ is defined as
\begin{equation*}
V/U\coloneqq\Com_V(U)=\{a\in V \mid a_nU=0, \text{ for all }n\geq 0\}
\end{equation*}
\cite{FZ92,LL04}. We call a vertex operator subalgebra $U$ of $V$ \emph{primitive} if it is its own double commutant, $U\cong \Com_V(\Com_V(U))$.

A vertex operator subalgebra $(U,\omega')$ is called a conformal subalgebra of $(V,\omega)$ if $\omega=\omega'$. In that case, the commutant $\Com_V(U)$ is independent of $U$ and given by the center of $V$ (which is the trivial \voa{} $\C\vac$ if $V$ is simple).

\medskip

Finally, we introduce some examples of \strat{} \voa{}s, which shall play a role in this text.

\begin{ex}[Lattice \VOA{}s]\label{ex:lattice}
Given a positive-definite, even lattice $L$, we can consider the corresponding \emph{lattice \voa{}} $V_L$ of central charge $c=\rk(L)$ \cite{Bor86,FLM88}.

$V_L$ is \strat{} and its representation category is the (pseudo-unitary and pointed) modular tensor category $\Rep(V_L)=\mathcal{C}(L'/L)$ associated with the discriminant form $L'/L$ \cite{Don93,DL93,DLM00}.
\end{ex}

We remark that because all metric groups $D=(D,Q)$ can be realized as discriminant forms $L'/L$ of positive-definite, even lattices $L$ with large enough rank \cite{Nik80} (see \autoref{sec:lat}), it follows that all pointed, pseudo-unitary modular tensor categories $\Cat(D)$ can be realized as representation categories of \strat{} \voa{}s, namely of lattice \voa{}s. This is the easiest instance of reconstruction, which we shall discuss in \autoref{subsec:reconstruction}.

The above example can be generalized to not necessarily positive-definite, even lattices $L$, in which case the same construction only yields a conformal vertex algebra $V_L$. However, it still follows from the results in \cite{Don93,DLM97,DL93} that the representation category of a lattice vertex algebra $V_L$ is the pointed modular tensor category $\mathcal{C}(L'/L)$ associated with the discriminant form $L'/L$. Note that it is in particular shown that all weak $V_L$-modules are direct sums of the finitely many irreducible modules, which are indexed by the cosets in $L'/L$.

\begin{ex}[Simple Affine \VOA{}s]\label{ex:affine}
Given a simple Lie algebra $\g$ of type $X_n$ and $k\in\C$, let $V_\g(k,0)$ denote the corresponding universal affine vertex algebra at level $k$ (\cite{FZ92}, see also \cite{LL04}). If $k\neq -h^\vee$, where $h^\vee$ is the dual Coxeter number of $\g$, then $V_\g(k,0)$ can be endowed with the Sugawara conformal structure and is a \voa{} of central charge $c=k\dim(\g)/(k+h^\vee)$. For $k\in\Ns$, $V_\g(k,0)$ has a maximal proper ideal and the corresponding simple quotient is the \emph{simple affine \voa{}} (or \emph{current algebra} or \emph{Wess--Zumino--Witten model}) $L_\g(k,0)$. We shall denote $L_\g(k,0)$ by $\mathsf{X}_{n,k}$ in the following.

The \voa{} $\mathsf{X}_{n,k}$ is \strat{} and its representation category is the (pseudo-unitary) modular tensor category that is often denoted by $\Rep(\mathsf{X}_{n,k})=(X_n,k)$.

We note that for $X_n$ simply-laced and level $k=1$, $\mathsf{X}_{n,1}$ is isomorphic to the lattice \voa{} $V_{X_n}$ where $X_n$ now denotes the root lattice of $\g$.
\end{ex}

\begin{ex}[Parafermion \VOA{}s]\label{ex:para}
For each simple Lie algebra $\g$ and positive integer $k\in\Ns$, we may obtain a \voa{}
\begin{equation*}
    K(\g,k) \coloneqq \Com_{L_{\g}(k,0)}(\langle\hh\rangle),
\end{equation*}
where $\langle\hh\rangle\cong M_{\hat\hh}(k,0)$ is the (non-full) vertex operator subalgebra of $L_{\g}(k,0)$ generated by a choice of Cartan subalgebra $\mathfrak{h}$ of $\g$. The $K(\g,k)$ are known to be \strat{} \voa{}s \cite{ALY14,DR17} and are referred to as \emph{parafermion \voa{}s}.

In the special case that $\g=\sl_2$, we use the notation $\mathcal{P}(k)\coloneqq K(\mathfrak{sl}_2,k)$ and refer to the algebras as $\Z_k$-parafermion \voa{}s because $\Rep(\mathcal{P}(k))$ possesses an order-$k$ simple current. Following \cite{MP01}, we label the irreducible modules $\mathcal{P}(k,[\ell,m])$ of $\mathcal{P}(k)=\mathcal{P}(k,[0,0])$ by pairs $(\ell,m)$ of integers satisfying $0\leq \ell\leq k$, $0\leq |m|\leq \ell$ and $\ell-m\in 2\Z$. The $\Z_k$-parafermion vertex operator algebra has central charge $c=\frac{2(k-1)}{k+2}$ and the irreducible module $\mathcal{P}(k,[\ell,m])$ has conformal weight
\begin{equation*}
   h_{\ell,m}=\frac{\ell(\ell+2)}{4(k+2)}-\frac{m^2}{4k}.
\end{equation*}
\end{ex}


\subsubsection{Inner Automorphisms}\label{sec:innerauts}

Let $V$ be a conformal vertex algebra (with conformal vector $\omega$). Then an automorphism of $V$ is a vector-space automorphism $g\in\Aut_\C(V)$ satisfying
\begin{equation*}
gY(\cdot,x)g^{-1}=Y(g\cdot,x),\quad g\vac=\vac\quad\text{and}\quad g\omega=\omega,
\end{equation*}
i.e.\ intertwining the vertex operators and fixing the vacuum and conformal vectors. Removing the last condition yields a vertex algebra automorphism.

It follows from the commutator formula that the zero-mode $a_0$ of any vector $a\in V$ acts as a derivation on the vertex operators, i.e.
\begin{equation*}
a_0Y(b,x)c=Y(a_0b,x)c+Y(b,x)a_0c
\end{equation*}
for all $b,c\in V$. This entails that if the exponential $e^{a_0}=\sum_{n=0}^\infty\frac{a_0^n}{n!}$ is well-defined, it defines an automorphism of $V$ (see, e.g., \cite{MN99}). Automorphisms generated by such exponentials are called \emph{inner automorphisms}. If $V$ has a conformal structure, we need to additionally demand that $a_0\omega=0$ so that $e^{a_0}$ fixes the conformal vector $\omega$ in order for $e^{a_0}$ to be an automorphism of the conformal vertex algebra.

\medskip

To avoid convergence issues, let us from now on assume that $V$ is a \voa{}, and in particular has finite-dimensional weight spaces. Also assume that $V$ is of CFT-type. Then, any vector $a\in V_1$ (which, we recall, forms a finite-dimensional Lie algebra under the zero mode) defines a \voa{} automorphism of $V$. Indeed, $a$ having $L_0$-weight $1$ and the skew-symmetry formula imply $a_0\omega=0$. Then, the \emph{inner automorphism group} is the normal subgroup of $\Aut(V)$ defined as
\begin{equation*}
\Inn(V)\coloneqq\langle\{e^{a_0}\mid a\in V_1\}\rangle<\Aut(V).
\end{equation*}
It is clear from the definition that every inner automorphism of the Lie algebra $V_1$ can be extended to an inner automorphism of $V$, and conversely that every inner automorphism of $V$ is an inner automorphism when restricted to $V_1$.


\subsubsection{Cartan Subalgebras}\label{sec:cartan}

In this section, in analogy to Lie algebras, we shall introduce the notion of a Cartan subalgebra of a conformal vertex algebra. Recall that over an algebraically closed field of characteristic zero, a Cartan subalgebra of a Lie algebra can be equivalently defined as a maximal toral subalgebra.

\medskip

Recall that we fix the base field $\C$. A toral subalgebra of a Lie algebra is a subalgebra such that all its elements act semisimply on the Lie algebra in the adjoint action. Now, let $V$ be a vertex algebra. Of a toral subalgebra of $V$ we shall demand that all its elements $a$ act semisimply on $V$ via the zero-mode $a_0\in\End_\C(V)$. In the Lie algebra setting, the skew-symmetry property implies that a toral subalgebra is automatically abelian. In order to replicate this important feature for a vertex algebra $V$, we shall want to ensure that the zero-mode is skew-symmetric. Finally, for a \voa{} or more generally a conformal vertex algebra, it is natural to demand some compatibility with the conformal structure. This leads to the following definition:
\begin{defi}[Toral and Cartan Subalgebra]\label{defi:va_cartan}
Let $V$ be a conformal vertex algebra. A subspace $\hh$ of $V$ is called a \emph{toral subalgebra} of $V$ if
\begin{enumerate}
\item $h_0h'\in\hh$, $h_1h'\in\C\vac$ and $h_nh'=0$ for all $n\geq2$,
\item $h_0$ acts semisimply on $V$,
\item $L_0h=h$ and $L_nh=0$ for all $n\geq 1$
\end{enumerate}
for all $h,h'\in\hh$.

A maximal toral subalgebra $\hh$ of $V$ is called \emph{Cartan subalgebra} of $V$.
\end{defi}

Some remarks are warranted:
\begin{rem}\label{rem:cartan}
\begin{enumerate}[wide]
\item The (vertex algebraic) skew-symmetry formula for the zero-mode
\begin{equation*}
b_0a=-\sum_{i=0}^\infty(-1)^i\frac{L_{-1}^i}{i!}a_ib
\end{equation*}
for $a,b\in V$ as well as $h_1h'\in\C\vac$ and $h_nh'=0$ for all $n\geq2$ imply that the zero-mode is (properly) skew-symmetric when restricted to $\hh$, i.e.\ $h_0h'=-h'_0h$ for all $h,h'\in\hh$. (Hence, the zero-mode defines a Lie bracket on $\hh$ since in any vertex algebra the zero-mode is a derivation, which with skew-symmetry implies the Jacobi identity.)
\item But then, the same argument as for toral subalgebras of Lie algebras shows that $h_0h'=0$ for all $h,h'\in\hh$, i.e.\ that $H$ is actually an abelian Lie algebra under the zero mode.
\item The skew-symmetry formula also implies that the one-mode $\langle h,h'\rangle\vac\coloneqq h_1h'$ for $h,h'\in\hh$ defines a symmetric bilinear form on $\hh$.

We remark that this bilinear form may be degenerate. We shall see in \autoref{sec:asslat} that if $V$ is a \strat{} \voa{} and $\hh$ is a maximal toral subalgebra, then this bilinear form is always non-degenerate (and moreover is the restriction of the unique symmetric, invariant bilinear form on $V$ \cite{Li94}).

\item The two previous observations together with item~(1) of the definition then imply that $\hh$ generates a vertex subalgebra of $V$ that is isomorphic to a Heisenberg vertex algebra associated with the abelian Lie algebra $\hh$.

If the bilinear form $\langle\cdot,\cdot\rangle$ on $\hh$ is non-degenerate, then $\hh$ even generates a Heisenberg \voa{}, i.e.\ it carries a conformal structure given by the standard Heisenberg conformal vector $\omega^\hh$,
which is compatible with the conformal structure on $V$ in the sense that $\omega_2\omega^\hh=0$ and $\omega_1\omega^\hh=2\omega^\hh$. (These are essentially the conditions that allow us to define a conformal structure on the commutant of $\hh$ in $V$, see \cite{FZ92,LL04}.)
\item The third property of the definition implies that $h_0\omega=0$ by the skew-symmetry formula and hence $[h_0,L_n]=0$ for all $n\in\Z$ by the commutator formula for all $h\in\hh$.
\end{enumerate}
\end{rem}

Let $V$ and $V'$ be conformal vertex algebras. It is not difficult to see that if $\hh$ is a Cartan subalgebra of $V$ and $\hh'$ is a Cartan subalgebra of $V'$, then $\hh\oplus\hh'$, which is short for $\hh\otimes\C\vac\oplus\C\vac\otimes\hh'$, is a Cartan subalgebra of $V\otimes V'$.

\medskip

Suppose that $V$ is a \voa{} of CFT-type. Then the weight-$1$ space $V_1$ forms a Lie algebra under the zero mode. (Again, the zero-mode acts as a derivation and the CFT-type assumption forces the zero-mode to be skew-symmetric, which implies the Jacobi identity). In that situation, clearly, any toral subalgebra of $V$ (according to \autoref{defi:va_cartan}) is also a toral subalgebra of the Lie algebra $V_1$.

In the next section, we shall see that under suitable regularity assumptions on $V$ the converse is true.

More generally, one can ask:
\begin{prob}
Let $V$ be a conformal vertex algebra. Suppose that the zero-mode equips $V_1$ with the structure of a Lie algebra. Under which conditions are the Cartan subalgebras of $V_1$ exactly the Cartan subalgebras of $V$? 
\end{prob}
Of course, the point of our definition of toral subalgebra is that it is applicable even when $V_1$ is not a Lie algebra.

\medskip

We now discuss an example where the conformal vertex algebra is not necessarily a \voa{} of CFT-type (and $V_1$ not necessarily a Lie algebra).
\begin{ex}[Lattice Vertex Algebras]\label{ex:latticecartan}
Consider the lattice vertex algebra
\begin{equation*}
V_L=\bigoplus_{\alpha\in L}M_{\hat\hh}(1,\alpha)
\end{equation*}
associated with a (not necessarily positive-definite) even lattice $L$, where $\hh=L\otimes_\Z\C$. That is, $V_L$ is a conformal vertex algebra, but not necessarily a \voa{}. Then $\{k(-1)\ee_0\mid k\in L\otimes_\Z\C\}\cong\hh$ is a maximal toral subalgebra of $V_L$.
\end{ex}
\begin{proof}
It is clear that $\hh$ is a toral subalgebra. Let $x\in V$ be an element such that the linear span of $\hh$ and $x$ is also a toral subalgebra. Then $x$ must have eigenvalue~$1$ for $L_0$ and $0$ for any $h_0$, $h\in\hh$. By the definition of $V_L$, this implies that $x\in\hh$. This proves the maximality of $\hh$.
\end{proof}
We refer to the Cartan subalgebra $\hh$ in the above example as the standard Cartan subalgebra of a lattice vertex algebra.

\medskip

In analogy to Lie algebras, one can ask about the uniqueness of Cartan subalgebras of conformal vertex algebras:
\begin{prob}\label{prob:cartanconj}
Let $V$ be a conformal vertex algebra. Are all Cartan subalgebras of $V$ conjugate under (inner) automorphisms of $V$?
\end{prob}
In the next section, we shall give a positive answer to this question for suitably regular \voa{}s by reducing the problem to the corresponding question for the weight-$1$ Lie algebra $V_1$. In general, this problem is much harder. For an arbitrary conformal vertex algebra, even the mere definition of an inner automorphisms is problematic as convergence issues may arise (see, e.g., \cite{MN99}).


\subsubsection{Associated Lattice and Heisenberg Commutant}\label{sec:asslat}

In this section, we describe the Cartan subalgebras and the associated root-space decomposition of suitably regular \voa{}s. This corresponds to identifying a certain dual pair in the \voa{}, which then decomposes as a simple-current extension over it.

\medskip

The following considerations are explained in detail in, e.g., \cite{HM23,Hoe17} and are based on \cite{Mas14}. From now on, let $V$ be a \strat{} \voa{} (in particular of CFT-type). Let $V_1$ be the (complex, finite-dimensional) weight-$1$ Lie algebra, which is reductive by \cite{DM04b}.
\begin{defi}[Heisenberg Commutant, Associated Lattice]\label{defi:asslat}
Let $V$ be a \strat{} \voa{}. Let $\hh$ be a choice of Cartan subalgebra of $V_1$, which generates a Heisenberg vertex operator subalgebra of $V$. We then define the \emph{Heisenberg commutant} $C\coloneqq\Com_V(\hh)$ and consider the double commutant $\Com_V(\Com_V(\hh))\cong V_L$, which is isomorphic to a lattice \voa{} $V_L$ for some positive-definite, even lattice $L$, which we call the \emph{associated lattice} of $V$.
\end{defi}
We shall see in \autoref{prop:cartancartan} below that $\hh$ is already a Cartan subalgebra of $V$. Importantly, it is shown in \cite{Mas14} that the bilinear form (as defined in \autoref{rem:cartan}) on the Cartan subalgebra $\hh$ is non-degenerate.
This allows us to define the (standard) conformal structure on the Heisenberg vertex algebra generated by~$\hh$, which is also the (standard) conformal structure of $V_L$. Moreover, $V_L$ is \strat{} as it is a lattice \voa{}, and so is $C$ by \cite{CKLR19} (see also \cite{CGN21}).
\begin{rem}
$C$ and $L$ in the above decomposition are unique up to isomorphism. Indeed, $C$ is the commutant and $V_L$ the double commutant of a choice of Cartan subalgebra $\hh$ of the finite-dimensional Lie algebra $V_1$. But all Cartan subalgebras are conjugate under inner automorphisms of $V_1$, which then lift to inner automorphisms of $V$, as discussed in \autoref{sec:innerauts}. See also \autoref{cor:cartan_unique}.
\end{rem}

We conclude that $V$ is a (simple-current) conformal extension of the dual pair $C\otimes V_L$ in it. More precisely, using mirror extensions \cite{CKM22,Lin17}:
\begin{prop}\label{prop:asslatdecomp}
Let $V$ be a \strat{} \voa{}. Let $L$ denote the associated lattice and $C$ the Heisenberg commutant of $V$. Then $V$ is a simple-current extension of the \strat{} \voa{} $C\otimes V_L$,
\begin{equation*}
V=\bigoplus_{\alpha+L\in A}C^{\tau(\alpha+L)}\otimes V_{\alpha+L}
\end{equation*}
for some subgroup $A<L'/L$, with $\Rep(V_L\vert V)\cong\mathcal{C}(A)$ the corresponding full subcategory of $\Rep(V_L)\cong\mathcal{C}(L'/L)$, some pointed full subcategory $\Rep(C\vert V)$ of $\Rep(C)$ and some ribbon-reversing equivalence $\tau\colon\Rep(V_L\vert V)\to\Rep(C\vert V)$.
\end{prop}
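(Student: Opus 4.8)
The plan is to establish the decomposition in three stages: first identify the dual pair structure, then invoke rationality of the pieces, and finally apply the mirror-extension machinery to obtain the claimed form of the gluing.

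First I would fix a Cartan subalgebra $\hh$ of $V_1$ and form the Heisenberg commutant $C=\Com_V(\hh)$ together with the double commutant $V_L=\Com_V(C)$, which by \autoref{defi:asslat} is a lattice \voa{} for a positive-definite, even lattice $L$ (this uses Mas14 to guarantee non-degeneracy of the bilinear form, so that $V_L$ carries the standard conformal structure agreeing with that of $V$). The key structural input is that $(C,V_L)$ form a \emph{dual pair} in $V$: by construction $C=\Com_V(V_L)$ and $V_L=\Com_V(C)$, so $C\otimes V_L$ embeds conformally into $V$ (the conformal vectors add up to $\omega$ precisely because $\omega^{\hh}$ is the Heisenberg conformal vector and $C$ is its commutant). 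By the cited regularity results—$V_L$ is \strat{} as a lattice \voa{}, and $C$ is \strat{} by CKLR19/CGN21—the tensor product $C\otimes V_L$ is itself \strat{}, and hence $\Rep(C\otimes V_L)\cong\Rep(C)\boxtimes\Rep(V_L)$ is a modular tensor category with $\Rep(V_L)\cong\mathcal{C}(L'/L)$ pointed.

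Next I would decompose $V$ as a module over the conformal subalgebra $C\otimes V_L$. Since $V$ is rational and $C\otimes V_L$ is a \strat{} conformal subalgebra, $V$ is a finite direct sum of irreducible $C\otimes V_L$-modules, each of the form $C^\beta\otimes M$ with $M\in\Irr(V_L)$. Because $V_L$ is pointed, every irreducible $V_L$-module is one of the simple currents $V_{\alpha+L}$ indexed by $\alpha+L\in L'/L$. The crucial point is that $V$, being a simple \voa{} containing $V_L$ as the full lattice part, contains each $V_{\alpha+L}$ with multiplicity at most one and with a \emph{single} companion $C$-module; this multiplicity-one statement follows from the fact that $V_L=\Com_V(C)$ is maximal, i.e.\ no further Heisenberg currents hide inside the $C$-factors. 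Collecting the $\alpha+L$ that occur into a subset, one checks it is a subgroup $A<L'/L$ (closure under fusion follows from $V$ being a \voa{}, i.e.\ closed under the vertex operation), and that the assignment $\alpha+L\mapsto\tau(\alpha+L)$ picking out the companion $C$-module is well-defined and injective.

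Finally, the mirror-extension results of CKM22 and Lin17 identify exactly when such a diagonal-type extension $\bigoplus_{\alpha+L\in A}C^{\tau(\alpha+L)}\otimes V_{\alpha+L}$ is again a \voa{}: the map $\tau$ must be a braided-tensor equivalence between the full subcategory $\Rep(V_L\mid V)\cong\mathcal{C}(A)$ of $\Rep(V_L)$ generated by the occurring modules and a pointed full subcategory $\Rep(C\mid V)$ of $\Rep(C)$, and the extension being a genuine (commutative, associative) \voa{} with the correct $T$-matrix forces $\tau$ to reverse the ribbon twist (the conformal weights on the two sides must cancel modulo $\Z$, so that the glued modules have integral weight). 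I expect the main obstacle to be justifying that $\tau$ is an honest ribbon-\emph{reversing} braided equivalence rather than merely a bijection on simple objects: one must track the braiding and twist data through the Huang--Lepowsky tensor structure and verify that the commutativity of $V$ translates precisely into $c_{Y,X}\circ c_{X,Y}$ on the $C$-side being inverse to that on the $V_L$-side, and that the integrality of conformal weights of $V$ pins down $\theta^{C}=(\theta^{V_L})^{-1}$ under $\tau$. This is exactly the content that the mirror-extension theorems are designed to supply, so the argument reduces to checking their hypotheses (modularity of $\Rep(C\otimes V_L)$, the condensable/étale algebra coming from $V$) rather than reproving them.
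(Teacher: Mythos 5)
Your proposal is correct and follows essentially the same route as the paper, which gives no self-contained proof of this proposition but assembles exactly the ingredients you name: \cite{Mas14} for the associated lattice and the non-degeneracy of the bilinear form on $\hh$, \cite{CKLR19} (see also \cite{CGN21}) for the strong rationality of $C$, and the mirror-extension results of \cite{CKM22,Lin17} to produce the subgroup $A$ and the ribbon-reversing equivalence $\tau$. One small attribution quibble: the multiplicity-one statement is a consequence of the simplicity of $V$ together with the simple-current property of the modules $V_{\alpha+L}$ (dual-pair duality), not of the maximality of $V_L$ as a double commutant per se, but this content is in any case supplied by the cited mirror-extension machinery rather than something you need to reprove.
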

Here, $\Irr(C)=\{C^i\}$ denotes the finitely many irreducible $C$-modules up to isomorphism (with $C^0\cong C$). Moreover, $\mathcal{C}(A)$ is the pointed ribbon fusion category (with positive categorical dimensions) in the notation of \cite{EGNO15,JS93} (see also \autoref{sec:cat}) associated with the abelian group $A$ equipped with the (possibly degenerate) quadratic form inherited from $L'/L$.

As a word of warning, we point out that not all simple-current extensions of $C\otimes V_L$ will have $C\otimes V_L$ as a dual pair or $C$ as its Heisenberg commutant.
We shall discuss a particular counterexample in \autoref{rem:wrongext}.

We also remark that in the above decomposition, the standard Cartan subalgebra of the lattice \voa{} $V_L$ (see \autoref{ex:latticecartan}) can be identified with the original choice of Cartan subalgebra in \autoref{defi:asslat}.

\begin{ex}
We mention two extreme cases: $V_1=\{0\}$ if and only if the associated lattice is trivial if and only if $V=C$.

On the other hand, $V=V_L$ is a lattice \voa{} (with associated lattice $L$) if and only if the Heisenberg commutant $C$ is trivial.
\end{ex}

\medskip

We return with our discussion to Cartan subalgebras of $V$. Based on the above decomposition into modules for the dual pair $C\otimes V_L$, it is easy to see that any Cartan subalgebra of $V_1$ is actually a Cartan subalgebra of $V$:
\begin{prop}\label{prop:cartancartan}
Let $V$ be a \strat{} \voa{}. Then Cartan subalgebras of $V_1$ are exactly the Cartan subalgebras of $V$.
\end{prop}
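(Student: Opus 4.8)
The plan is to prove both inclusions by reducing the only non-trivial axiom in \autoref{defi:va_cartan} to the root-space decomposition provided by \autoref{prop:asslatdecomp}. Recall from the discussion preceding the statement that, since $V$ is of CFT-type, every toral subalgebra of $V$ automatically sits inside $V_1$ (condition~(3) forces $L_0$-weight~$1$) and is toral in the Lie algebra $V_1$; this hands us one direction of each containment for free. The content of the proposition is therefore the converse torality together with the matching of maximality, and I would handle these in turn.

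First I would fix a Cartan subalgebra $\hh$ of $V_1$ and verify that it is a toral subalgebra of $V$ in the sense of \autoref{defi:va_cartan}. For any $h,h'\in V_1$, the graded pieces $h_nh'\in V_{1-n}$ vanish for $n\geq 2$ and lie in $V_0=\C\vac$ for $n=1$ by CFT-type, while $h_0h'=0\in\hh$ since a toral subalgebra of a Lie algebra is abelian; this gives condition~(1). Condition~(3) holds likewise: $L_nh\in V_{1-n}$ vanishes for $n\geq 2$, and $L_1h=0$ because the self-contragredience of a \strat{} $V$ forces the weight-one space to be quasi-primary (the space of invariant bilinear forms is $(V_0/L_1V_1)^\ast$, which is non-zero, so $L_1V_1=0$). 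The crux is condition~(2), the semisimplicity of $h_0$ on all of $V$ rather than merely on $V_1$.

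To establish condition~(2), I would invoke \autoref{prop:asslatdecomp} applied to this very choice of $\hh$: by \cite{Mas14} the induced bilinear form on $\hh$ is non-degenerate, so $\hh$ generates a Heisenberg vertex operator subalgebra whose double commutant is the associated lattice \voa{} $V_L$ with $\hh$ its standard Cartan (as in \autoref{ex:latticecartan}). Refining the decomposition $V=\bigoplus_{\alpha+L\in A}C^{\tau(\alpha+L)}\otimes V_{\alpha+L}$ into Heisenberg Fock modules, $V=\bigoplus_{\alpha+L\in A}\bigoplus_{\beta\in\alpha+L}C^{\tau(\alpha+L)}\otimes M_{\hat\hh}(1,\beta)$, each summand is a simultaneous eigenspace on which $h_0$ acts by the scalar $\langle h,\beta\rangle$ for every $h\in\hh$. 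Hence all $h_0$, $h\in\hh$, are simultaneously diagonalizable on $V$, which is precisely condition~(2). Thus $\hh$ is a toral subalgebra of $V$.

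It remains to match maximality, which then follows formally. If $\tilde\hh\supseteq\hh$ is any toral subalgebra of $V$, then $\tilde\hh\subseteq V_1$ is toral in $V_1$, and maximality of $\hh$ in $V_1$ forces $\tilde\hh=\hh$; so $\hh$ is maximal toral in $V$, i.e.\ a Cartan subalgebra of $V$. Conversely, given a Cartan subalgebra $\hh$ of $V$, it is toral in $V_1$ and may be enlarged to a maximal toral $\hh'\supseteq\hh$ in $V_1$; by the part just proved $\hh'$ is toral in $V$, and maximality of $\hh$ in $V$ gives $\hh'=\hh$, so $\hh$ is already a Cartan subalgebra of $V_1$. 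I expect the main obstacle to be genuinely condition~(2): semisimplicity of the zero-mode on $V_1$ says nothing a priori about its action on higher-weight spaces, and it is exactly the dual-pair/lattice decomposition that upgrades it to semisimplicity on all of $V$.
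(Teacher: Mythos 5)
Your proof is correct and follows essentially the same route as the paper's: both directions reduce to applying \autoref{prop:asslatdecomp} to the chosen Cartan subalgebra of $V_1$, reading off semisimplicity of the zero-modes from the eigenvalues $\langle h,\beta\rangle$ on the Fock-module decomposition, and then matching maximality formally in each direction. The only difference is that you spell out conditions~(1) and~(3) (including the quasi-primarity of $V_1$ via self-contragredience), which the paper dismisses as ``not difficult to see''.
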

The statement also holds for (not necessarily maximal) toral subalgebras.
\begin{proof}
First, let $\hh$ be a Cartan subalgebra of $V_1$. We consider the statement of \autoref{prop:asslatdecomp}. By the decomposition stated there, $h_0$ acts semisimply on $V$ for all $h\in\hh$. Indeed, the vectors in $C^{\tau(\alpha+L)}\otimes M_{\hat\hh}(1,\alpha)\subset C^{\tau(\alpha+L)}\otimes V_{\alpha+L}$ have eigenvalue $\langle h,\alpha\rangle$ for $h_0,h\in\hh$. Overall, it is not difficult to see that $\hh$ is a toral subalgebra of $V$.

Now, $\hh$ is by definition a Cartan subalgebra of the Lie algebra $V_1$. Hence, $\hh$ is also a maximal toral subalgebra of $V$, as any further elements of the toral subalgebra would have to be in $V_1$, which contradicts $\hh$ being a Cartan subalgebra of $V_1$.

In summary, Cartan subalgebras of the Lie algebra $V_1$ are always Cartan subalgebras of the \voa{} $V$.

\smallskip

Conversely, let $\hh$ be a Cartan subalgebra of $V$. Then $\hh$ is a toral subalgebra of $V_1$. Suppose for the sake of contradiction that $\hh$ is not a maximal toral subalgebra of~$V_1$. That is, $\hh$ can be properly extended to a Cartan subalgebra $\hh'$ of $V_1$. But then, by \autoref{prop:asslatdecomp}, $\hh'$ is also a toral subalgebra of $V$, contradicting the maximality of $\hh$ as a toral subalgebra of $V$. Hence, $\hh$ was already a maximal toral subalgebra of $V_1$.

In total, for a \strat{} \voa{}, Cartan subalgebras of the Lie algebra $V_1$ are exactly the same as Cartan subalgebras of the \voa{} $V$.
\end{proof}

As already alluded to above, the above proposition implies:
\begin{cor}\label{cor:cartan_unique}
Let $V$ be a \strat{} \voa{}. Then all Cartan subalgebras of $V$ are conjugate under inner automorphisms of $V$.
\end{cor}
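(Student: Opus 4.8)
The plan is to reduce the statement about Cartan subalgebras of the \voa{} $V$ to the already-classical statement about Cartan subalgebras of the finite-dimensional reductive Lie algebra $V_1$. By \autoref{prop:cartancartan}, the Cartan subalgebras of $V$ are exactly the Cartan subalgebras of $V_1$, so it suffices to show that any two Cartan subalgebras of $V$ are related by an inner automorphism of $V$.

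First I would invoke the classical conjugacy theorem for Lie algebras: since $V_1$ is a finite-dimensional reductive Lie algebra over $\C$ (by \cite{DM04b}), any two of its Cartan subalgebras (equivalently, maximal toral subalgebras) are conjugate under the inner automorphism group $\Inn(V_1)$. Concretely, given Cartan subalgebras $\hh$ and $\hh'$ of $V$, by \autoref{prop:cartancartan} these are Cartan subalgebras of $V_1$, so there exists $g\in\Inn(V_1)$ with $g\hh=\hh'$, where $g$ is a finite product of exponentials $e^{\ad(a)}$ for $a\in V_1$.

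Next I would lift this Lie-algebraic conjugation to the full \voa{}. The mechanism is exactly the one recorded in \autoref{sec:innerauts}: for each $a\in V_1$, the exponential $e^{a_0}$ of the zero-mode is a well-defined inner automorphism of $V$ (the CFT-type hypothesis guarantees $a_0\omega=0$, and finite-dimensionality of the weight spaces prevents convergence problems). Moreover, by the discussion there, the restriction of $e^{a_0}$ to $V_1$ is precisely $e^{\ad(a)}$, the corresponding inner automorphism of the Lie algebra $V_1$. Thus, writing the Lie-algebra conjugator $g$ as a product $\prod_i e^{\ad(a_i)}$ with $a_i\in V_1$, the product $\tilde g\coloneqq\prod_i e^{(a_i)_0}\in\Inn(V)$ restricts to $g$ on $V_1$, and since $\hh,\hh'\subseteq V_1$ we obtain $\tilde g\,\hh=g\hh=\hh'$. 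Hence $\hh$ and $\hh'$ are conjugate under an inner automorphism of $V$, as desired.

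The main obstacle, and the only place where any real content enters beyond the classical Lie theory, is ensuring the compatibility between inner automorphisms of $V_1$ and inner automorphisms of $V$ — that is, that the map $a\mapsto e^{a_0}$ sends inner automorphisms of the Lie algebra to genuine \voa{} automorphisms and that this assignment is compatible with composition so that it carries the whole group $\Inn(V_1)$ into $\Inn(V)$ restricting correctly. This is precisely the content already established in \autoref{sec:innerauts}, where it is noted that every inner automorphism of $V_1$ extends to an inner automorphism of $V$ and conversely. Granting that, the corollary is an immediate consequence of \autoref{prop:cartancartan} together with the classical conjugacy of Cartan subalgebras in reductive Lie algebras, and essentially no further calculation is required.
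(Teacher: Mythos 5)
Your proposal is correct and follows exactly the paper's argument: identify Cartan subalgebras of $V$ with those of $V_1$ via \autoref{prop:cartancartan}, invoke the classical conjugacy theorem for the reductive Lie algebra $V_1$, and lift the conjugating inner automorphism to $V$ via the exponentials $e^{a_0}$ as in \autoref{sec:innerauts}. You have merely spelled out in detail the lifting step that the paper's one-line proof cites implicitly.
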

\begin{proof}
The claim follows since the same statement is true for the Lie algebra $V_1$ and inner automorphisms of $V_1$ lift to inner automorphisms of $V$ (see \autoref{sec:innerauts}).
\end{proof}
An important question is whether \autoref{cor:cartan_unique} also holds in situations where $V_1$ is not a Lie algebra (see \autoref{prob:cartanconj}).

\medskip

Another way to characterize toral and Cartan subalgebras of conformal vertex algebras $V$ is in the language of \cite{Mor21}. Essentially, the definitions there are more restrictive than ours but coincide, as soon as we can apply \autoref{prop:asslatdecomp}.
\begin{prop}
Let $V$ be a \strat{} \voa{} and $\hh$ some choice of Cartan subalgebra of $V$ (or equivalently of $V_1)$.

If the Heisenberg commutant $C$ is also positive, the pair $(V,\hh)$
satisfies Assumption~(A) in the sense of \cite{Mor21}.

If $(V,\hh)$ satisfies Assumption~(A), then $\hh$ is a Cartan subalgebra of $V_1$.
\end{prop}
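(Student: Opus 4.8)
The plan is to establish the two implications separately, relying on the structural decomposition in \autoref{prop:asslatdecomp} together with the comparison between our \autoref{defi:va_cartan} of a toral/Cartan subalgebra and the hypotheses bundled into Assumption~(A) of \cite{Mor21}.

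First I would unravel what Assumption~(A) in \cite{Mor21} actually requires: essentially that $\hh$ is a choice of abelian subalgebra whose modes act semisimply, that it generates a Heisenberg \voa{} with non-degenerate bilinear form, and that the commutant $C = \Com_V(\hh)$ together with $V_L = \Com_V(C)$ forms a dual pair realizing $V$ as a simple-current extension with $C$ positive. For the first implication, I would take $\hh$ to be a Cartan subalgebra of $V$ (equivalently of $V_1$, by \autoref{prop:cartancartan}), with $C$ also assumed positive. The key input is \autoref{prop:asslatdecomp}: it already gives the simple-current decomposition $V = \bigoplus_{\alpha+L \in A} C^{\tau(\alpha+L)} \otimes V_{\alpha+L}$, the dual-pair property $C \otimes V_L$, the non-degeneracy of the bilinear form on $\hh$ (via \cite{Mas14}, as recorded in \autoref{sec:asslat}), and the identification of $\hh$ with the standard Cartan subalgebra of $V_L$. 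The strong rationality and positivity of both $C$ and $V_L$ are exactly the regularity conditions \cite{CKLR19,CGN21} referenced after \autoref{defi:asslat}. I would then simply verify that each clause of Assumption~(A) is matched by one of these established facts — this direction is mostly bookkeeping, checking that our hypotheses are strong enough to supply everything Assumption~(A) demands.

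For the converse, I would assume $(V,\hh)$ satisfies Assumption~(A) and deduce that $\hh$ is a Cartan subalgebra of $V_1$. Assumption~(A) in particular forces $\hh$ to be an abelian Heisenberg-generating subalgebra sitting inside $V_1$ with its modes acting semisimply, so $\hh$ is at least a toral subalgebra of the Lie algebra $V_1$. The content is maximality. Here I would argue that Assumption~(A) forces $V_L = \Com_V(\Com_V(\hh))$ to be the \emph{full} maximal lattice \voa{}, so that the rank of $\hh$ equals the rank of $L$, which in turn equals the dimension of a Cartan subalgebra of the reductive Lie algebra $V_1$ (using \cite{DM04b} and the root-space decomposition inherent in \autoref{prop:asslatdecomp}). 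Since a toral subalgebra of $V_1$ whose rank matches that of a Cartan subalgebra must itself be maximal, $\hh$ is a Cartan subalgebra of $V_1$.

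The main obstacle, I expect, lies in this converse: Assumption~(A) need not a priori name $\hh$ as a maximal torus of $V_1$, so I must extract maximality from the dual-pair and simple-current data rather than assume it. Concretely, I would need to rule out the possibility that $\hh$ is a proper toral subalgebra whose commutant still happens to organize $V$ as a simple-current extension — that is, to show that any Heisenberg subalgebra satisfying the non-degeneracy and dual-pair conditions of Assumption~(A) necessarily saturates the rank of $V_1$. The cleanest route is to invoke that $C$ having no weight-one currents other than the zero space (a consequence of $C$ being the Heisenberg commutant with the correct dual-pair structure) pins down the rank; if Assumption~(A) guarantees $(V_L)_1 = \hh$ and $C_1 \cap \hh^\perp = \{0\}$, then $\dim V_1 = \dim \hh + \dim(\text{non-Cartan part})$ in the expected way and maximality follows. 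I would double-check that Assumption~(A) indeed encodes this rank-saturation rather than merely positing some abelian subalgebra, as that is the precise point where the two definitions are made to coincide.
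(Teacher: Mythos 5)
Your first implication is handled the same way the paper handles it: invoke \autoref{prop:asslatdecomp} (which already supplies the dual pair $C\otimes V_L$, the simple-current decomposition, the non-degeneracy of the form on $\hh$ via \cite{Mas14}, and strong rationality of $C$) and then match the clauses of Assumption~(A) one by one. The paper's proof of this direction is literally that one sentence of bookkeeping, so no issue there.

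For the converse, your instincts about where the difficulty lies are good, but your primary route has a circular step, and the actual resolution is simpler than the one you construct. You propose to show that Assumption~(A) forces $V_L=\Com_V(\Com_V(\hh))$ to be the \emph{full} maximal lattice \voa{}, so that $\rk(\hh)=\rk(L)$ equals the dimension of a Cartan subalgebra of $V_1$. But ``$V_L$ is the full maximal lattice \voa{}'' is defined relative to a Cartan subalgebra of $V_1$; asserting it for $\hh$ is just a restatement of the maximality you are trying to prove, so this rank-counting argument begs the question. Your fallback — that Assumption~(A) should directly encode something like $(V_L)_1=\hh$, equivalently that the centralizer of $\hh$ in $V_1$ (orthogonally complemented against $\hh$) vanishes in weight one — is the correct observation, and it is precisely why the paper does not argue at all: Assumption~(A) in \cite{Mor21} builds in a self-centralizing condition on $\hh$ inside $V_1$, from which toral-plus-self-centralizing immediately gives Cartan, and the paper accordingly just records that ``the second statement follows more directly and is already asserted in \cite{Mor21}.'' So your proposal is salvageable, but only along the branch you flagged as needing verification; the branch you present as the main argument would not stand on its own. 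One small additional check you would want in writing it up: a toral subalgebra of the reductive Lie algebra $V_1$ that is its own centralizer is Cartan without any dimension count, so once the self-centralizing clause of Assumption~(A) is in hand, the rank comparison through $L$ is unnecessary.
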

\begin{proof}
The first assertion follows by applying \autoref{prop:asslatdecomp} and then comparing with the definitions in \cite{Mor21}. The second statement follows more directly and is already asserted in \cite{Mor21}.
\end{proof}
Hence, \autoref{prop:asslatdecomp} establishes the existence of the rather intricate structures in \cite{Mor21} in any \strat{} \voa{}.


\subsection{Physics}

Before moving past the preliminaries, we (telegraphically) describe various ways in which the mathematical concepts reviewed above arise in physics.

\medskip

First, we recall that a 3d topological quantum field theory (TQFT) is completely specified by a tuple $(\mathcal{C},c)$, where $\mathcal{C}$ is a modular tensor category, and $c\in\mathbb{Q}$ is the chiral central charge (see, e.g., \cite{BK01,Tur10}).
In this correspondence, the simple objects of the modular tensor category describe the anyons (or equivalently, topological line operators) of the TQFT, the tensor product describes their fusion, and so on and so forth. For example, TQFTs whose associated modular tensor categories are the pointed categories $\mathcal{C}(D)$ for some metric space $D$ can be realized by abelian Chern--Simons theories \cite{WZ92}. Indeed, letting $L$ be any even lattice with $L'/L\cong D$ (see \autoref{sec:lat}), we may choose a basis $\{\lambda_i\}$ of $L$ and define $K_{ij}=\langle\lambda_i,\lambda_j\rangle$, from which we obtain a Lagrangian description of $\mathcal{C}(D)$ given by
\begin{equation*}
    \mathcal{L}=\sum_{i,j} \frac{K_{ij}}{4\pi} a^ida^j,
\end{equation*}
where the $a^i$ are $\mathrm{U}(1)$-gauge fields. Another example of a 3d topological quantum field theory is non-abelian Chern--Simons theory, specified by the choice of a simple gauge Lie algebra $\g$ and a non-negative integer level $k$.

In this work, by chiral algebra we will mean the algebraic structure formed by the \emph{holomorphic}  local operators of a 2d conformal field theory (CFT) \cite{Zam85,MS89}. Mathematically, we take vertex operator algebras as our axiomatization of chiral algebras \cite{Bor86,FLM88}. We often focus on rational conformal field theories (RCFTs) with a unique vacuum, in which case the chiral algebra is expected to be described by a \strat{} vertex operator algebra, \autoref{defi:strongrationality}. For example, the lattice \voa{}s $V_L$ described in the previous subsection arise as chiral algebras of certain rational points in the conformal manifold of $\rk(L)$ free compact bosons, which is often referred to as a Narain moduli space. The current algebras $\mathsf{X}_{r,k}$ arise as the chiral algebras of Wess--Zumino--Witten models. 

It is often useful to think of a chiral algebra $V$ as forming a theory in its own right. This theory is not, strictly speaking, two dimensional: rather, one should think of the chiral algebra as living on the boundary of a three dimensional bulk TQFT \cite{Wit89,EMSS89}. One often refers to $V$ as a \emph{relative} theory. In the situation that $V$ is a \strat{} vertex operator algebra, its category of representations $\Rep(V)$ admits the structure of a modular tensor category, and hence defines the anyonic content of a 3d TQFT \cite{MS89}: the chiral algebra $V$ can be thought of as living at the boundary of $\mathcal{T}_V\coloneqq (\Rep(V),c(V))$, as depicted in \autoref{fig:3d2d}. The simplest example is that the abelian Chern--Simons theory associated with a lattice $L$ defines the bulk of the lattice chiral algebra $V_L$. Similarly, non-abelian Chern--Simons theory with $\mathfrak{g}=\mathsf{X}_r$ and level $k$ furnishes the bulk of the current algebra $\mathsf{X}_{r,k}$.

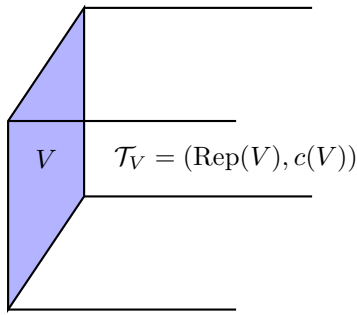
\begin{figure}[ht]
\begin{tikzpicture}
\filldraw[black,fill=blue!30,  thick] (-6,0) -- (-6,-1-1.5) -- (-5,-1) -- (1-6,1.5)    -- cycle;
\draw[black, thick](-2,-1)--(-5,-1);
\draw[black,  thick] (-6,0)--(-3,0);
\draw[black,  thick] (-3,-2.5) -- (-6,-1-1.5);
\draw[black, thick] (-2,1.5)--(1-6,1.5);
\node[] at (-.5-4+1.5,-.5) {$\mathcal{T}_V=(\Rep(V),c(V))$};
\node[] at (-2.5-3,-.5) {$V$};
\end{tikzpicture}
\caption{A boundary chiral algebra $V$ and the corresponding bulk 3d TQFT $\mathcal{T}_V$.}
\label{fig:3d2d}
\end{figure}

The three-dimensional perspective makes clear what the defining data of an RCFT is \cite{KS11}. Let us assume for ease of exposition that the gravitational anomaly (i.e.\ the left-moving minus the right-moving central charge) vanishes, though everything we say can be straightforwardly generalized. An RCFT with vanishing gravitational anomaly may be thought of as the specification of a left-moving chiral algebra $V$, a right-moving chiral algebra $\overline{W}$ with the same central charge, and a Lagrangian algebra object of $\Rep(V)\boxtimes\overline{\Rep(W)}$, which can be thought of more physically as a topological boundary condition of the 3d TQFT $\mathcal{T}_V\otimes\overline{\mathcal{T}_W}$.  Indeed, by placing this TQFT on an interval with the topological boundary $\mathcal{H}$ imposed at one end, and the chiral algebra boundary condition $V\otimes \overline{W}$ imposed at the other, one obtains an RCFT (with vanishing gravitational anomaly) by dimensionally reducing along the interval direction, as shown in \autoref{fig:3dRCFT}. This furnishes a physical interpretation of the description of rational conformal field theories developed starting in \cite{FRS02}.

We have suggestively used the symbol $\mathcal{H}$ to label the topological boundary condition because the different choices of $\mathcal{H}$ precisely label the different modular invariant ways of combining representations $V_i$ of $V$ and $\overline{W}_j$ of $\overline{W}$ into a consistent $S^1$ Hilbert space,
\begin{equation*}
    \mathcal{H} = \bigoplus_{i,j}M_{ij}\,V_i \otimes \overline{W}_{j},
\end{equation*}
where the $M_{ij}$ are non-negative integers with $M_{00}=1$. We represent this RCFT by the triple $(V,W,\mathcal{H})$. Conversely, it is known that any RCFT can be decomposed into a triple $(V,W,\mathcal{H})$, but this decomposition is non-unique. For example, one need not always choose $V$ and $W$ to be the \emph{maximal} left- and right-moving chiral algebras; one could instead choose them to be conformal subalgebras of the maximal left- and right-moving chiral algebras.  When $M_{ij}$ pairs the representations $V_i$ of $V$ with the representations $\overline{W}_j$ of $\overline{W}$ in a one-to-one manner, we say that the RCFT is \emph{diagonal} with respect to the chiral algebras $V$ and $W$.
We note that by ``unfolding'' along the boundary condition, $\mathcal{H}$ may instead be thought of as a topological interface between $\mathcal{T}_V$ and $\mathcal{T}_W$. The RCFT $(V,W,\mathcal{H})$ is diagonal if and only if this topological interface is invertible.

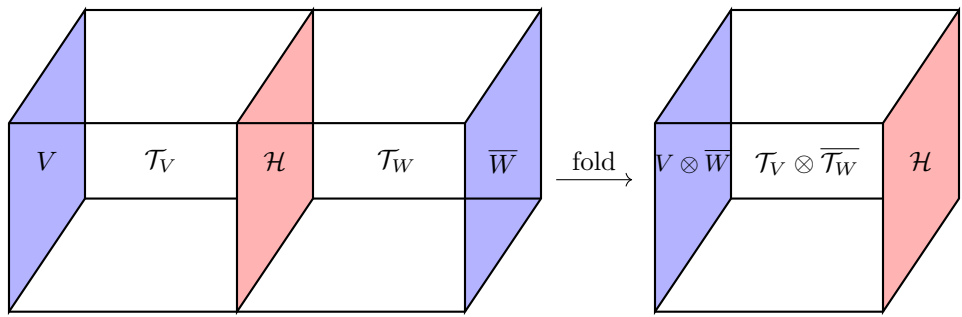
\begin{figure}[ht]
\begin{center}
\begin{tikzpicture}
\filldraw[black,fill=blue!30,  thick] (-6,0) -- (-6,-1-1.5) -- (-5,-1) -- (1-6,1.5) -- cycle;
\filldraw[black,fill=blue!30,  thick] (-6+6,0) -- (-6+6,-1-1.5) -- (-5+6,-1) -- (1-6+6,1.5) -- cycle;
 \draw[black, thick](1,-1)--(-5,-1);
\filldraw[black,fill=red!30,  thick] (-3,0) -- (1-3,1.5) -- (1-3,-1) -- (-3,-1-1.5) -- cycle;
\draw[black,thick] (-6,0) -- (0,0);
\draw[black,thick] (0,-2.5) -- (-6,-1-1.5);
\draw[black,thick] (1,1.5) -- (1-6,1.5);
\node[] at (-.5-.4,-.5) {$\mathcal{T}_{W}$};
\node[] at (-.5-4+.5,-.5) {$\mathcal{T}_V$};
\node[] at (-2.5,-.5) {$\mathcal{\mathcal{H}}$};
\node[] at (-2.5-3,-.5) { $ V$};
\node[] at (-2.5-3+6,-.5) { $\overline{W}$};

\tikzstyle{s}=[draw,decorate,->]
\path[s] (1.2,-.75) to node[anchor=south,shift={(0,0)}] {fold} (2.2,-.75);
\draw[black, thick](1+6-.5,-1)--(-5+6+3-.5,-1);
\filldraw[black,fill=blue!30,  thick] (-3+6-.5,0) -- (1-3+6-.5,1.5) -- (1-3+6-.5,-1) -- (-3+6-.5,-1-1.5) -- cycle;
\draw[black,  thick] (-6+6+3-.5,0)--(+6-.5,0);
\draw[black,  thick] (0+6-.5,-2.5) -- (-6+6+3-.5,-1-1.5);
\draw[black, thick] (1+6-.5,1.5)--(1-6+6+3-.5,1.5);
\node[] at (-.5+6-.5-.5,-.5) {$\mathcal{T}_V\otimes \overline{\mathcal{T}_W}$};
\node[] at (-2.5+6-.5,-.5) {\small $V\otimes \overline{W}$};

\filldraw[black,fill=red!30,  thick] (-3+6-.5+3,0) -- (1-3+6-.5+3,1.5) -- (1-3+6-.5+3,-1) -- (-3+6-.5+3,-1-1.5) -- cycle;
\node[] at (-2.5+6-.5+3,-.5) { $\mathcal{H}$};
\end{tikzpicture}
\end{center}
\caption{Two different 3d representations of the RCFT $(V,W,\mathcal{H})$, which are related by folding.}
\label{fig:3dRCFT}
\end{figure}

Finally, we comment on the physical interpretation of the decomposition in \autoref{prop:asslatdecomp}. Suppose that a \strat{} chiral algebra $V$ has a continuous global symmetry group of rank $r$. One may choose a Cartan subalgebra of the Lie algebra of this symmetry group, and consider the spin-1 Noether currents associated with its generators, which will form a kind of \emph{free sector} of $V$. That is, these Noether currents may be bosonized: their operator algebra is the same as that of $r$ chiral free bosons $\phi^i(z)$. In fact, more is true: for certain vectors ${\lambda}=(\lambda_1,\dots,\lambda_r)$, the vertex operators $V_{{\lambda}}(z)\coloneqq e^{i{\lambda}\cdot \phi(z)}$ will also participate in the chiral algebra $V$. The collection of all such vectors $\lambda$ for which $V_{{\lambda}}(z)$ arises in $V$ define the associated lattice $L$ of $V$, and we learn that $V$ admits a lattice vertex operator subalgebra $V_L\subset V$. The lattice $L$ is part of the charge lattice of $V$, with the full charge lattice being contained inside the dual $L'$.

On the other hand, by considering all of the operators of $V$ which commute with the operators appearing in $V_L$ (i.e.\ the coset of $V$ by its free sector $V_L$), one obtains an \emph{interacting sector} of $V$, which we have been calling the Heisenberg commutant~$C$. The full chiral algebra $V$ is not quite a trivially decoupled tensor product $C\otimes V_L$ of its free and interacting sectors, but rather decomposes into finitely many representations of it, as in \autoref{prop:asslatdecomp}.


\section{Bulk Genus}\label{sec:bulkgenus}

In this section, we explore  a definition of \voa{} genus  due to \cite{Hoe03} which naturally generalizes the description of lattice genera given in part~\ref{item:latgen2} of \autoref{defi:latticegenus}. Mathematically, two \strat{} \voa{}s belong to the same genus if they have the same central charge and their representation categories are ribbon equivalent. Physically, one declares two chiral algebras to belong to the same genus if they arise on the boundary of the same bulk 3d TQFT. We refer to this notion of genus as \emph{bulk genus}. 

After laying out the definition and its basic properties in \autoref{subsec:bulkdefprop}, we describe an effective method for classifying isomorphism classes of strongly rational vertex operator algebras within a fixed bulk genus by means of the gluing principle in \autoref{subsec:glue}. We then discuss and provide evidence in favor of two important conjectures related to bulk genera, namely reconstruction (\autoref{subsec:reconstruction}) and finiteness (\autoref{subsec:finiteness}). We finish with examples in \autoref{subsec:bulkex}.


\subsection{Definition and Properties}\label{subsec:bulkdefprop}

Recall that if $V$ is assumed to be \strat{}, then $\Rep(V)$ can be given the structure of a modular tensor category \cite{Hua08b}. 

\begin{defi}[Bulk Genus]\label{defi:bulkgenus}
Let $V$ and $V'$ be \strat{} \voa{}s. Then $V$ and $V'$ are in the same \emph{bulk genus} if they have the same central charges, $c(V)=c(V')$, and their representation categories are ribbon
equivalent, $\Rep(V)\cong\Rep(V')$.
\end{defi}
\begin{defiph}[Bulk Genus]\label{defiph:bulkgenus}
Two rational chiral algebras $V$ and $V'$ are in the same \emph{bulk genus} if they arise as gapless chiral boundary conditions of the same bulk 3d topological quantum field theory.
\end{defiph}
This defines an equivalence relation, and we denote the bulk genus (or equivalence class) of a \voa{} $V$ by $\bgen(V)$.

Physically, the choice of a ribbon equivalence $\phi\colon\Rep(V)\to\Rep(V')$ corresponds to the choice of an invertible topological interface $\mathcal{I}$ between the 3d TQFTs $(\Rep(V),c(V))$ and $(\Rep(V'),c(V'))$ that support $V$ and $V'$, respectively, on their boundaries.\footnote{A topological interface $\mathcal{I}$ between theories $\mathcal{T}$ and $\mathcal{T}'$ is said to be invertible if there exists another topological interface $\mathcal{I}'$ between $\mathcal{T}'$ and $\mathcal{T}$ such that the fusion of $\mathcal{I}$ with $\mathcal{I}'$ is the trivial surface.} Thus, if $V$ and $V'$ belong to the same bulk genus, then we may form the ``sandwich'' picture in \autoref{fig:bulkboundary}. By squeezing this sandwich, one obtains a \emph{diagonal} 2d RCFT with vanishing gravitational anomaly, and Hilbert space given by
\begin{equation*}
\mathcal{H} = \bigoplus_{M\in\Irr(V)}M \otimes \overline{\phi(M)}.
\end{equation*}
Conversely, it is known \cite{FRS02,KS11} that any diagonal 2d RCFT with vanishing gravitational anomaly can be inflated into a 3d sandwich as in \autoref{fig:bulkboundary}. So, we obtain the following physical characterization of bulk equivalence. 

\begin{figure}[ht]
\begin{center}
\begin{tikzpicture}
\draw[black, thick](2,-1)--(-6,-1);
\draw[black, thick,fill=blue!30] (1,0) -- (2,1.5) -- (2,-1) -- (1,-1-1.5) -- cycle;
\filldraw[black,fill=red!30,  thick] (-3,0) -- (1-3,1.5) -- (1-3,-1) -- (-3,-1-1.5) -- cycle;
\draw[black,  thick,fill=blue!30] (-7,0) -- (1-7,1.5) -- (1-7,-1) -- (-7,-1-1.5) -- cycle;
\draw[black,  thick] (-7,0)--(1,0);
\draw[black,  thick] (1,-2.5) -- (-7,-1-1.5);
\draw[black, thick] (2,1.5)--(1-7,1.5);
\node[] at (1.5,-.5) {$V$};
\node[] at (.5-7,-.5) {$\overline{V'}$};
\node[] at (-.5,-.5) {$(\Rep(V),c(V))$};
\node[] at (-.5-4,-.5) {$(\Rep(V'),c(V'))$};
\node[] at (-2.5,-.5) {$\mathcal{I}$};
\end{tikzpicture}
\end{center}
\caption{Two chiral algebras $V$ and $V'$ belong to the same bulk genus if their bulk TQFTs, $(\Rep(V),c(V))$ and $(\Rep(V'),c(V'))$ respectively,  can be separated by an \emph{invertible} topological interface~$\mathcal{I}$.}
\label{fig:bulkboundary}
\end{figure}
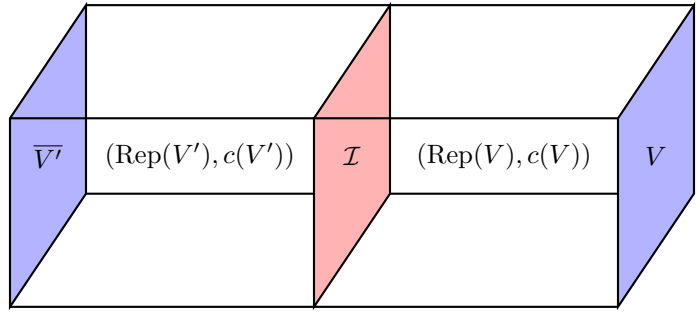

\begin{propph}\label{prop:physicalcharacterizationbulkequivalence}
Two rational chiral algebras $V$ and $V'$ belong to the same bulk genus if and only if there exists a diagonal 2d RCFT with vanishing gravitational anomaly, and with $V$ as its maximal left-moving chiral algebra and $\overline{V'}$ as its maximal right-moving chiral algebra.
\end{propph}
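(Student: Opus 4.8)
The plan is to establish the ``if and only if'' by translating the mathematical content of \autoref{defi:bulkgenus} into the three-dimensional topological field theory (TQFT) picture developed in this section, and then reading off both directions from the sandwich construction in \autoref{fig:bulkboundary}. Since this is a statement at a physics level of rigor (a \textbf{Proposition\textsuperscript{ph}}), the goal is to exhibit the correspondence cleanly rather than to supply categorical proofs; we lean on the dictionary that a \strat{} chiral algebra $V$ sits on the boundary of $\mathcal{T}_V = (\Rep(V), c(V))$, and that (by \cite{FRS02,KS11}) 2d RCFTs with vanishing gravitational anomaly correspond to topological boundary conditions / interfaces of the relevant bulk TQFTs.

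For the forward direction, I would assume $V$ and $V'$ are in the same bulk genus, so by \autoref{defi:bulkgenus} we have $c(V)=c(V')$ and a ribbon equivalence $\phi\colon\Rep(V)\to\Rep(V')$. As already noted in the paragraph preceding the statement, such a $\phi$ is precisely the data of an \emph{invertible} topological interface $\mathcal{I}$ between $\mathcal{T}_V$ and $\mathcal{T}_{V'}$; the equality of central charges guarantees the chiral central charges match so that such an interface can be topological. I would then form the sandwich of \autoref{fig:bulkboundary}: place $\mathcal{T}_V$ and $\overline{\mathcal{T}_{V'}}$ on a slab, impose the chiral algebra boundary conditions $V$ and $\overline{V'}$ at the two outer ends, and insert $\mathcal{I}$ in the interior. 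Collapsing the interval direction yields a 2d theory whose Hilbert space is $\mathcal{H}=\bigoplus_{M\in\Irr(V)}M\otimes\overline{\phi(M)}$, manifestly diagonal (the pairing $M\leftrightarrow\phi(M)$ is one-to-one since $\phi$ is an equivalence) and with vanishing gravitational anomaly (as $c(V)=c(V')$). Because the outer boundary conditions are the chiral algebra boundaries associated with $V$ and $\overline{V'}$, these are realized as the maximal left- and right-moving chiral algebras of the resulting RCFT.

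For the converse, I would start from a diagonal 2d RCFT with vanishing gravitational anomaly whose maximal chiral algebras are $V$ (left) and $\overline{V'}$ (right). By \cite{FRS02,KS11}, such an RCFT can be ``inflated'' into a 3d sandwich: it arises from $\mathcal{T}_V\otimes\overline{\mathcal{T}_{V'}}$ with a topological boundary condition $\mathcal{H}$, equivalently (by unfolding) a topological interface between $\mathcal{T}_V$ and $\mathcal{T}_{V'}$. The diagonality hypothesis is exactly the condition, recorded in the discussion of \autoref{fig:3dRCFT}, that this interface is \emph{invertible}. An invertible topological interface between $\mathcal{T}_V$ and $\mathcal{T}_{V'}$ forces a ribbon equivalence $\Rep(V)\cong\Rep(V')$ and equality of chiral central charges, hence $c(V)=c(V')$; by \autoref{defi:bulkgenus} this says $V$ and $V'$ lie in the same bulk genus.

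The main obstacle, and the place where the physics-level character of the statement is doing real work, is the precise claim that an invertible topological interface between the two bulk TQFTs is \emph{equivalent} data to a ribbon equivalence of the boundary modular tensor categories together with matching chiral central charges. Making this watertight would require the Reshetikhin--Turaev/Witten correspondence between modular tensor categories and 3d TQFTs, the classification of topological interfaces via the Drinfeld center, and the statement that invertibility of the interface corresponds to the interface being the graph of an equivalence. I would therefore present this equivalence of data as the key input, citing \cite{FRS02,KS11} (and, for the categorical backbone, the surrounding TQFT references), and flag that the central-charge matching is needed precisely to ensure the interface can be made topological rather than merely carrying a gravitational anomaly. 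Everything else in both directions is bookkeeping of the sandwich/folding operations already illustrated in \autoref{fig:3d2d} and \autoref{fig:3dRCFT}.
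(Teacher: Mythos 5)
Your proposal matches the paper's own argument: the forward direction is the sandwich construction of \autoref{fig:bulkboundary}, identifying the ribbon equivalence $\phi$ with an invertible topological interface and squeezing to get the diagonal Hilbert space $\mathcal{H}=\bigoplus_{M\in\Irr(V)}M\otimes\overline{\phi(M)}$, and the converse is the inflation of a diagonal RCFT into a 3d sandwich via \cite{FRS02,KS11}, with diagonality equivalent to invertibility of the interface, exactly as the paper records in the discussion around \autoref{fig:3dRCFT}. Your closing remark correctly flags the same physics-level identification (invertible interface $\leftrightarrow$ ribbon equivalence plus matching chiral central charge) that the paper itself treats as the unproven dictionary underlying the Proposition\textsuperscript{ph} label.
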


\medskip

In the following, we describe in what sense the above notion of bulk genus generalizes that of lattice genus (for positive-definite, even lattices). Recall that
\begin{equation*}
i\colon L\mapsto V_L
\end{equation*}
defines an injective map from isometry classes of positive-definite, even lattices  to isomorphism classes of \strat{} \voa{}s \cite{Bor86,FLM88}. The image of the map $i$ is exactly the lattice \voa{}s.

Moreover, there are (surjective) projection maps $\gen$ and $\bgen$ passing from a lattice or \voa{}, respectively, to the corresponding genus.

\begin{prop}\label{prop:bulkcomp}
The map $j$ from lattice genera of positive-definite, even lattices to bulk genera of \strat{} \voa{}s defined by
\begin{equation*}
j\colon \gen(L)\mapsto \bgen(V_L)=\bgen(i(L))
\end{equation*}
is well-defined and injective.
\end{prop}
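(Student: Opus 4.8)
The plan is to reduce both assertions to criterion~\ref{item:latgen2} of \autoref{defi:latticegenus} by translating through the dictionary supplied by \autoref{ex:lattice}, namely that $c(V_L)=\rk(L)$ and that $\Rep(V_L)\cong\mathcal{C}(L'/L)$ as (ribbon) modular tensor categories. Since we work with positive-definite lattices, the signature is determined by the rank, $\sign(L)=(\rk(L),0)$, so the data entering the lattice genus via~\ref{item:latgen2} is exactly the pair $(\,L'/L,\ \rk(L)\,)$, which the dictionary matches term-by-term with the pair $(\Rep(V_L),c(V_L))$ entering the bulk genus. The proof is then an application of the bijection, recalled in \autoref{sec:cat}, between isomorphism classes of (metric) groups and braided-equivalence classes of pointed braided fusion categories.

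For well-definedness, suppose $\gen(L)=\gen(M)$. By~\ref{item:latgen2} we have $L'/L\cong M'/M$ as metric groups and $\sign(L)=\sign(M)$. Positive-definiteness turns the latter into $\rk(L)=\rk(M)$, whence $c(V_L)=\rk(L)=\rk(M)=c(V_M)$. The isomorphism of discriminant forms yields an equivalence $\mathcal{C}(L'/L)\cong\mathcal{C}(M'/M)$ of the associated pointed modular tensor categories, which under \autoref{ex:lattice} is a ribbon equivalence $\Rep(V_L)\cong\Rep(V_M)$. Hence $V_L$ and $V_M$ lie in the same bulk genus by \autoref{defi:bulkgenus}, so $\bgen(V_L)=\bgen(V_M)$ and $j$ is well-defined.

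For injectivity, suppose $\bgen(V_L)=\bgen(V_M)$. Then $c(V_L)=c(V_M)$ forces $\rk(L)=\rk(M)$, and positive-definiteness promotes this to $\sign(L)=\sign(M)$. The ribbon equivalence $\Rep(V_L)\cong\Rep(V_M)$ is in particular a braided equivalence $\mathcal{C}(L'/L)\cong\mathcal{C}(M'/M)$; by the classification of pointed braided fusion categories this forces an isomorphism $L'/L\cong M'/M$ of the underlying (pre-)metric groups, which are non-degenerate since they are discriminant forms. Criterion~\ref{item:latgen2} of \autoref{defi:latticegenus} then gives $\gen(L)=\gen(M)$, as required.

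The one genuinely delicate point, which I would take care to spell out, is the two-way passage between a \emph{ribbon} equivalence of the categories $\mathcal{C}(L'/L)$ and an isomorphism of the metric groups $L'/L$: the cited classification is a priori a statement about the \emph{braided} structure only, while the bulk genus records the full ribbon structure. This causes no trouble because on a pointed category the canonical pseudo-unitary twist is $\theta_x=\mathrm{e}^{2\pi\i Q(x)}$, which already coincides with the self-braiding scalar $c_{x,x}$ (and the Frobenius--Perron dimensions are fixed by any monoidal equivalence). Consequently, on these categories a braided equivalence automatically preserves the twists, hence is a ribbon equivalence, and the braided classification upgrades to the ribbon statement used above. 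Everything else is the routine bookkeeping of unwinding the two genus definitions.
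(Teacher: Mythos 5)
Your proof is correct and follows essentially the same route as the paper's: translate both genus notions through the dictionary $c(V_L)=\rk(L)$, $\Rep(V_L)\cong\mathcal{C}(L'/L)$, and invoke the classification of pointed braided fusion categories by (pre-)metric groups for injectivity. Your extra paragraph upgrading the braided classification to a ribbon statement (via $\theta_x=\e^{2\pi\i Q(x)}=c_{x,x}$ on the pseudo-unitary pointed category) is a point the paper's proof leaves implicit in its conventions from \autoref{sec:cat}, and it is handled correctly.
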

\begin{proof}
To see that the map $j$ is well-defined, suppose that the positive-definite, even lattices $L$ and $M$ are in the same genus so that the discriminant forms $L'/L\cong M'/M$ are isometric and $\rk(L)=\rk(M)$. Then the corresponding lattice \voa{}s $V_L$ and $V_M$ have the same central charges $\rk(L)=\rk(M)$ and their representation categories $\Rep(V_L)\cong\mathcal{C}(L'/L)$ and $\Rep(V_M)\cong\mathcal{C}(M'/M)$ are ribbon equivalent (see \autoref{sec:cat}).

Similarly, two different lattice genera will either have different ranks, resulting under the map $j$ in different central charges, or have different discriminant forms, resulting in different modular tensor categories by the classification of pointed braided fusion categories \cite{JS93,EGNO15}  (see \autoref{sec:cat}). Hence, $j$ is injective.
\end{proof}

By definition, the embeddings $i$ and $j$ are intertwined by the projection maps $\gen$ and $\bgen$ passing from a positive-definite, even lattice or a \strat{} \voa{}, respectively, to the corresponding genus, i.e.\ the diagram
\begin{equation}\label{eq:voagen}
\begin{tikzcd}[ampersand replacement=\&]
\text{lattices}\arrow[twoheadrightarrow]{rr}{\gen}\arrow[hookrightarrow]{dd}[swap]{i} \&\& \text{lattice genera}\arrow[hookrightarrow]{dd}{j}\\\\
\text{VOAs}\arrow[twoheadrightarrow]{rr}{\bgen} \&\& \text{bulk genera}
\end{tikzcd}
\end{equation}
commutes.

\autoref{prop:bulkcomp} and \eqref{eq:voagen} state that each lattice genus defines a unique \voa{} genus and that the lattices in that genus map to \voa{}s in the corresponding \voa{} genus. Crucially, lattices are in the same genus if and only if the corresponding lattice \voa{}s are in the same bulk genus.

These properties, i.e.\ \autoref{prop:bulkcomp} and \eqref{eq:voagen} should be satisfied by any reasonable notion of \voa{} genus that generalizes lattice genera.

\medskip

We can give a different characterization of the map $j$. By definition, the genus of a positive-definite, even lattice can be viewed as a pair $(D,r)$ where $D$ is a metric group (up to isometry) and $r\in\N$. We should note, however, that not all such pairs actually correspond to lattice genera. For instance, we recall from the discussion in \autoref{sec:lat} that it is necessary but not sufficient that $r=\sign(D)\pmod{8}$.

Similarly, the bulk genus of a \strat{} \voa{} can be viewed as a pair $(\mathcal{C},c)$ of a modular tensor category $\mathcal{C}$ (up to ribbon equivalence) and some $c\in\Q$. Again, not all such pairs actually correspond to \voa{} genera. The analog of the signature condition is that it is necessary that $c=c(\mathcal{C})\pmod{4}$, where $c(\mathcal{C})\in\Q/8\Z$ is the chiral central charge of $\Cat$ defined in \autoref{sec:cat} \cite{DLN15}.

It is not difficult to see that the chiral central charge of $\mathcal{C}(D)$ is equal to the signature of the metric group $D$, $c(\mathcal{C}(D))=\sign(D)\in\Z_8$.
It follows that the map $j$ can be written as
\begin{equation*}
j\colon (D,r)\mapsto (\mathcal{C}(D),r)
\end{equation*}
for lattice genera $(D,r)$.

\medskip

It might seem surprising that the ``mod 8'' condition in the case of lattices has become a ``mod 4'' condition for vertex operator algebras. We bridge this gap with the following proposition, which recovers the ``mod 8'' condition in the case of pseudo-unitary representation categories, such as the pointed ones $\Cat(D)$. (Recall that, for us, pseudo-unitarity includes the unique choice of spherical pivotal structure that makes the categorical dimensions all positive. Moreover, recall that for the modular tensor category $\Rep(V)$, the ribbon or pivotal structure is fixed by the construction in \cite{Hua08b}.)

\begin{prop}\label{prop:pseudounitarymod8}
    If $V$ is a strongly rational vertex operator algebra with $\Rep(V)$ a pseudo-unitary modular tensor category, then $c(V)=c(\Rep(V))\pmod{8}$.
\end{prop}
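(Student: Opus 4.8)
The plan is to compare the genuine modular $\SLZ$-representation carried by the characters of $V$ with the categorical modular data of $\Rep(V)$, the point being that pseudo-unitarity pins down the sign of the total quantum dimension $D$ and thereby upgrades the well-known congruence modulo~$4$ to a congruence modulo~$8$.

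First I would recall from the modularity theory of strongly rational \voa{}s (Zhu, and \cite{Hua08b}) that the normalized characters $\chi_M(\tau)=\tr_M q^{L_0-c(V)/24}$, $M\in\Irr(V)$, span a module for an \emph{honest} (not merely projective) representation of $\SLZ$. Writing $S$ and $T$ for the matrices representing $\tau\mapsto-1/\tau$ and $\tau\mapsto\tau+1$, one has $T=e^{-\pi\i c(V)/12}\tilde t$ with $\tilde t=\diag(\theta_M)$ and $\theta_M=e^{2\pi\i\rho(M)}$, while \cite{Hua08b} identifies $S$ with the normalized categorical $S$-matrix $s=\tilde s/D$. The crucial input, which distinguishes this argument from the generic mod-$4$ statement, is that the defining $\SLZ$-relation $(ST)^3=S^2$ holds on the nose, because the $q^{-c(V)/24}$ prefactor has already been absorbed into $T$.

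The role of pseudo-unitarity is to force the $D$ implicit in $S=\tilde s/D$ to be the \emph{positive} square root used in \eqref{eqn:chiralcentralcharge}. Indeed, the quantum dimensions $d_M=S_{0M}/S_{00}$ read off from the VOA $S$-matrix are the categorical dimensions, and pseudo-unitarity makes them positive with $S_{00}>0$ (see \cite{DJX13,DLN15}); combined with $\sum_M S_{0M}^2=(S^2)_{00}=C_{00}=1$ and $S_{0M}=d_M S_{00}$, this yields $S_{00}=1/D$ with $D=+\sqrt{\sum_M d_M^2}>0$. Hence the $D$ appearing in $S=\tilde s/D$ is exactly the one entering $p^+/D=e^{2\pi\i c(\Rep(V))/8}$. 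I would then carry out the computation using $(\tilde s\tilde t)^3=p^+\tilde s^2$ together with $\tilde s^2=p^+p^- C=D^2 C$ from \eqref{eqn:modularrelations} (so that $S^2=C$):
\begin{equation*}
(ST)^3=e^{-2\pi\i c(V)/8}\,\frac{(\tilde s\tilde t)^3}{D^3}=e^{-2\pi\i c(V)/8}\,\frac{p^+}{D}\,C=e^{2\pi\i\bigl(c(\Rep(V))-c(V)\bigr)/8}\,C.
\end{equation*}
Comparing with $S^2=C$ and invoking $(ST)^3=S^2$ gives $e^{2\pi\i(c(\Rep(V))-c(V))/8}=1$, since $C$ is invertible, i.e.\ $c(V)\equiv c(\Rep(V))\pmod 8$.

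The hard part, and the only place the hypothesis is really used, is the middle step: establishing that the sign of $D$ built into the VOA's modular transformation agrees with the positive convention of \eqref{eqn:chiralcentralcharge}. Without pseudo-unitarity, the $S$-matrix of $V$ is still of the form $\tilde s/D$ for \emph{some} square root $D$, but its sign need not match; flipping the sign of $D$ multiplies $p^+/D$ by $-1$ and hence shifts $c(\Rep(V))$ by $4$, which is precisely why one only obtains a congruence modulo~$4$ in general (cf.\ \cite{DLN15}). Pseudo-unitarity removes this ambiguity by forcing $S_{00}>0$, and thus is exactly what promotes the statement to hold modulo~$8$.
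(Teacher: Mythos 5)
Your skeleton is the same as the paper's, and the $(ST)^3$ computation in your third paragraph is literally the paper's proof of \autoref{lem:signSmatrix}. But the step you yourself flag as "the hard part" — that pseudo-unitarity forces $S_{00}>0$ — is a genuine gap, not a citation. What is actually known from \cite{DLN15} (and recorded in the paper just before the lemma) is $S=\pm s$ with $s_{00}=1/D>0$, so $S_{00}=\epsilon/D$ and the assertion "$S_{00}>0$" \emph{is} the assertion $\epsilon=+1$, which by your own computation is equivalent to the mod-$8$ congruence being proved. The references you invoke do not close this loop under the stated hypothesis: \cite{DJX13} identifies quantum dimensions with $S_{0M}/S_{00}$ and obtains $S_{00}>0$ under \emph{positivity} of $V$ (i.e.\ $\rho(M)>0$ for all non-vacuum $M\in\Irr(V)$), a condition on the vertex operator algebra that implies pseudo-unitarity of $\Rep(V)$ but is not implied by it, and which is precisely what the proposition does not assume. (The soft asymptotic argument behind that result also fails here: without positivity the vacuum need not dominate the Cardy limit, so $S_{00}$ cannot be isolated.) As written, the middle step of your proof is circular. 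A minor attribution point in the same direction: \cite{Hua08b} does not identify $S$ with $s$ on the nose; the sign ambiguity $S=\pm s$ is exactly the content of \cite{DLN15} and cannot be waved away.

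The paper closes this gap with a short analytic argument that your write-up is missing. Suppose $S=-s$. Then the vacuum character satisfies
\begin{equation*}
\ch_V(-1/\tau)=-\sum_{j}s_{0j}\,\ch_{V^j}(\tau),
\end{equation*}
and evaluating at $\tau=\i$, where $q=\e^{-2\pi}$ is a positive real number, every $\ch_{V^j}(\i)$ is a strictly positive convergent series, while pseudo-unitarity gives $s_{0j}=d_j/D>0$ for all $j$; the left-hand side is positive and the right-hand side negative, a contradiction. Hence $S=+s$ (equivalently $S_{00}>0$), and your $(ST)^3$ computation then yields $c(V)=c(\Rep(V))\pmod 8$. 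With this evaluation at $\tau=\i$ substituted for the citation, your proof coincides with the paper's; without it, the sign determination — the only place pseudo-unitarity enters — is unproved.
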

In particular, in the lattice case, since $\Rep(V_L)\cong\Cat(L'/L)$ is a pseudo-unitary pointed modular tensor category, we recover for it the ``mod 8'' condition obeyed by lattices.

To facilitate the proof of this proposition, we require a technical lemma. Let $S$ and $T$ be the genus-1 modular data of $V$ \cite{Zhu96}, i.e.\ the matrices satisfying 
\begin{equation}\label{eqn:modulartransformationcharacters}
\mathrm{ch}_{V^i}(-1/\tau) = \sum_j S_{ij}\mathrm{ch}_{V^j}(\tau),\quad\mathrm{ch}_{V^i}(\tau+1)=\sum_j T_{ij}\mathrm{ch}_{V^j}(\tau),
\end{equation}
where $\mathrm{ch}_{V^i}(\tau)$ is the $q$-character of the irreducible module $V^i$ of $V$, defined as 
\begin{equation*}
\mathrm{ch}_{V^i}(\tau) = \tr_{V^i}q^{L_0-c/24}.
\end{equation*}
Note that $(ST)^3=C$, where $C_{ij}=\delta_{ij^\ast}$ is the charge conjugation matrix and $j^\ast$ indexes the object dual to the one indexed by $j$ in $\mathcal{C}$. We recall from \cite{DLN15} that $S$ and $T$ are related to the twist matrix $\tilde{t}$ and the normalized $S$-matrix $s$ of $\Rep(V)$ defined in \autoref{sec:cat} as
\begin{equation*}
S=\pm s\quad\text{and}\quad T= e^{-2\pi\i c(V)/24}\tilde{t}.
\end{equation*}

\begin{lem}\label{lem:signSmatrix}
Let $V$ be a strongly rational vertex operator algebra, with $\Rep(V)$ not necessarily pseudo-unitary. Then $S=+s$ if and only if $c(V) = c(\Rep(V))\pmod{8}$, and $S=-s$ if and only if $c(V)+4=c(\Rep(V))\pmod{8}$.
\end{lem}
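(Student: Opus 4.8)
The plan is to compute the matrix $(ST)^3$ in two ways: on one hand it equals $C$ by the identity quoted after \eqref{eqn:modulartransformationcharacters}, and on the other hand I would express $S$ and $T$ through the categorical data $\tilde{s},\tilde{t}$ of $\Rep(V)$, simplify using the relations \eqref{eqn:modularrelations}, and read off the resulting scalar. Comparing the two yields a single scalar equation relating the sign in $S=\pm s$ to the two central charges, from which both biconditionals follow at once.

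Concretely, I would begin by writing $S=\eps s$ with $\eps\in\{+1,-1\}$ (the sign supplied by \cite{DLN15}), together with $s=\tilde{s}/D$ and $T=e^{-2\pi\i c(V)/24}\tilde{t}$. Since $\eps^3=\eps$, and the scalar $e^{-2\pi\i c(V)/24}$ and the factor $1/D$ pull out of the triple product, this gives
\begin{equation*}
(ST)^3 = \frac{\eps}{D^3}\,e^{-2\pi\i c(V)/8}\,(\tilde{s}\tilde{t})^3.
\end{equation*}
Next I would simplify $(\tilde{s}\tilde{t})^3$ using \eqref{eqn:modularrelations}: applying $(\tilde{s}\tilde{t})^3=p^+\tilde{s}^2$ and then $\tilde{s}^2=p^+p^-C$ yields $(\tilde{s}\tilde{t})^3=(p^+)^2p^-C$. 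To eliminate $p^-$, I would use the standard fact $s^2=C$, i.e.\ $\tilde{s}^2=D^2C$, which compared with $\tilde{s}^2=p^+p^-C$ forces $p^+p^-=D^2$; hence $(p^+)^2p^-=p^+D^2$. Substituting, and invoking the definition \eqref{eqn:chiralcentralcharge} in the form $p^+/D=e^{2\pi\i c(\Rep(V))/8}$, I obtain
\begin{equation*}
(ST)^3 = \eps\,e^{-2\pi\i c(V)/8}\,\frac{p^+}{D}\,C = \eps\,e^{2\pi\i(c(\Rep(V))-c(V))/8}\,C.
\end{equation*}

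Comparing with $(ST)^3=C$ and cancelling $C$ (which is invertible since $C^2=\id$) leaves the scalar identity
\begin{equation*}
\eps\,e^{2\pi\i(c(\Rep(V))-c(V))/8}=1.
\end{equation*}
Both biconditionals read off immediately: as $\eps=\pm1$, we have $\eps=+1$ precisely when $e^{2\pi\i(c(\Rep(V))-c(V))/8}=1$, i.e.\ $c(V)\equiv c(\Rep(V))\pmod 8$, and $\eps=-1$ precisely when $e^{2\pi\i(c(\Rep(V))-c(V))/8}=-1$, i.e.\ $c(V)+4\equiv c(\Rep(V))\pmod 8$. The exponential is well-defined independently of the chosen lift of $c(\Rep(V))\in\Q/8\Z$, so the congruences are unambiguous.

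I do not anticipate a serious obstacle. The computation is essentially forced once the two normalization relations $S=\pm s$ and $T=e^{-2\pi\i c(V)/24}\tilde{t}$ of \cite{DLN15} are in hand, and the only non-bookkeeping inputs are the modular relations \eqref{eqn:modularrelations}, the identity $s^2=C$ (equivalently $p^+p^-=D^2$), and the definition \eqref{eqn:chiralcentralcharge}. The single point deserving care is the logic of the biconditional: I must check that the derived equation determines the sign $\eps$ from the central charges and conversely, which is immediate because $\eps$ takes only the values $\pm1$ and the exponential correspondingly equals $\pm1$. Notably, pseudo-unitarity of $\Rep(V)$ is never used, consistent with the hypotheses of the lemma.
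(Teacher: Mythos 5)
Your proof is correct and is essentially the paper's own argument: the paper likewise writes $S=\epsilon s$, computes $C=(ST)^3=\epsilon(s\tilde{t}e^{-2\pi\i c/24})^3=\epsilon\,e^{\frac{2\pi\i}{8}(c(\Rep(V))-c(V))}C$ using \eqref{eqn:modularrelations} and \eqref{eqn:chiralcentralcharge}, and reads off the sign. Your intermediate step via $p^+p^-=D^2$ (equivalently $s^2=C$) is exactly what the paper's one-line computation implicitly uses, so there is no substantive difference.
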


\begin{proof}
Let $S=\epsilon s$, where $\epsilon=\pm 1$. Following the proof of Theorem 3.10 of \cite{DLN15}, we calculate that 
\begin{equation*}
C=(ST)^3=\epsilon (s\tilde{t}e^{-2\pi\i c/24})^3 = \epsilon \frac{p^+}{D} e^{-2\pi \i c/8} C=\epsilon e^{\frac{2\pi\i}{8}(c(\Rep(V))-c(V))}C,
\end{equation*}
where we have used \eqref{eqn:modularrelations} and \eqref{eqn:chiralcentralcharge}. The lemma immediately follows.
\end{proof}

We are now ready to give the proof of \autoref{prop:pseudounitarymod8}.

\begin{proof}[Proof of \autoref{prop:pseudounitarymod8}]
If $\Rep(V)$ is pseudo-unitary, then the entries $s_{0i}=d_i/D$ of the normalized $S$-matrix are all positive numbers, where the $d_i>0$ are the categorical dimensions of the simple objects. If $c(V)\neq c(\Rep(V))\pmod{8}$, it must be the case that $c(V)+4=c(\Rep(V))\pmod{8}$, in which case \autoref{lem:signSmatrix} says that $S=-s$. Then
\begin{equation*}
    \mathrm{ch}_V(-1/\tau) = -\sum_{j}s_{0j}\mathrm{ch}_{V^j}(\tau).
\end{equation*}
If we evaluate this equation at $\tau=i$, we find that the left-hand side is a positive number, while the right-hand side is negative, and thus reach a contradiction. It follows that the ``mod 4'' condition can be strengthened to a ``mod 8'' condition.
\end{proof}

\medskip

We pause to make the following  observation:
clearly, there are bulk genera that do not come from lattice genera, i.e.\ that are not in the image of the map $j$. For example, when a bulk genus $(\mathcal{C},c)$ involves a modular tensor category $\mathcal{C}$ that is not pointed, then it cannot support any lattice vertex operator algebras. The simplest example is $(\mathcal{C},c) = \bigl((G_2,1),\sfrac{14}5\bigr)$, which contains a unique \voa{}, the simple affine \voa{} $\mathsf{G}_{2,1}$. Perhaps more surprisingly, there are even pointed bulk genera, i.e.\ bulk genera  of the form $(\mathcal{C}(D),r)$, that nonetheless do not contain any lattice vertex operator algebras, and hence are also not in the image of $j$. We provide an example below.
\begin{ex}[Commutant B]\label{ex:commB}
Consider the isometry $\nu\in\Aut(\Lambda)$ of the Leech lattice $\Lambda$ of Frame shape $1^82^8$. The coinvariant lattice 
\begin{equation*}
\Lambda_\nu=(\Lambda^\nu)^\perp = \{ \lambda \in \Lambda \mid \nu(\lambda)=\lambda\}^\perp
\end{equation*}
is in the lattice genus $(2_{\II}^{+8},8)$, which is an edge case in the sense of \cite{Nik80}, meaning that the lattice rank is minimal for the given discriminant form. The isometry $\nu$ now acts fixed-point freely on $\Lambda_\nu$ and all lifts $\hat\nu$ to an automorphism of the corresponding lattice \voa{} $V_{\Lambda_\nu}$ have order~$2$ and are conjugate.
The \fpvosa{} $V_{\Lambda_\nu}^{\hat\nu}$ has a pointed representation category and is in the bulk genus $(\mathcal{C}(2_{\II}^{+10}),8)$ \cite{Moe16}. By \cite{Nik80}, there can be no lattice \voa{} in that genus; the rank would have to be at least $10$.
The \voa{} $V_{\Lambda_\nu}^{\hat\nu}$ appears prominently as the Heisenberg commutant of the hyperbolic genus B (see \autoref{ex:schellekens2} and \autoref{sec:holhyp}).
\end{ex}

Furthermore, a \voa{} genus that is in the image of $j$, and hence must contain at least one lattice \voa{}, may also contain non-lattice \voa{}s. For example, the moonshine module $V^\natural$ belongs to the bulk genus $(\mathcal{C},c) = (\Vect,24)$; in fact, most of the \voa{}s in this bulk genus are non-lattice, as we review in \autoref{sec:holo}. 


\subsection{Gluing Principle and Classification}\label{subsec:glue}

The following method, emphasized in \cite{SR23} (see also \cite{GHM16}), is useful for classifying \voa{}s in an arbitrary bulk genus $(\mathcal{C},c)$.

Recall that we denote the irreducible modules $\Irr(V)$ of an arbitrary strongly rational vertex operator algebra $V$ as $V^i$, with $V^0\coloneqq V$, and use $\rho(V^i)$ to denote the lowest eigenvalue of $L_0$ acting on $V^i$. Recall further that $\overline{\mathcal{C}}$ is the ribbon-reversed modular tensor category, as defined in \autoref{sec:cat}.

Given a bulk genus $(\mathcal{C},c)$, fix a strongly rational vertex operator algebra $\tilde{V}$ in a bulk genus of the form $(\overline{\mathcal{C}},\tilde{c})$ and a ribbon-reversing equivalence $\phi\colon\Rep(\tilde{V})\to\mathcal{C}$.
\begin{prop}[Gluing Principle]\label{prop:gluingprinciple}
Any strongly rational vertex operator algebra $V$ in the bulk genus $(\mathcal{C},c)$ which satisfies $\rho(V^{\phi(i)})>-\rho(\tilde{V}^i)$ for all $i\neq 0$ can be expressed as $V\cong \Com_{W}(\iota\tilde{V})$ for some \strat{}, holomorphic \voa{} $W$ of central charge $c+\tilde{c}$ and some primitive embedding $\iota\colon\tilde{V}\to W$. 
\end{prop}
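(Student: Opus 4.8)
The plan is to realize $V$ inside a holomorphic \voa{} by gluing it to the fixed algebra $\tilde V$ along the bulk/boundary data provided by $\phi$. The key structural input is the following: since $\Rep(\tilde V)\cong\overline{\mathcal{C}}\cong\overline{\Rep(V)}$ via $\phi$, the Deligne product $\Rep(\tilde V)\boxtimes\Rep(V)\cong\overline{\Rep(V)}\boxtimes\Rep(V)$ is equivalent to the Drinfeld center $Z(\Rep(V))$ of $\Rep(V)$, which carries a canonical Lagrangian (i.e.\ maximal isotropic, étale) algebra object $\mathcal{A}$ — the one whose underlying object is $\bigoplus_{i}\overline{V^i}\boxtimes V^i$, or in the present notation $\bigoplus_i \phi^{-1}(\overline{(\cdot)})$-indexed summands $\tilde V^i\boxtimes V^{\phi(i)}$ realizing the diagonal. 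First I would set $W\coloneqq (\tilde V\otimes V)\text{-extension along }\mathcal{A}$, i.e.\ the simple-current/condensation extension of $\tilde V\otimes V$ corresponding to this Lagrangian algebra. Because $\mathcal{A}$ is Lagrangian in $\Rep(\tilde V\otimes V)\cong Z(\Rep(V))$, the resulting extension $W$ has trivial representation category, $\Rep(W)\cong\Vect$, so $W$ is holomorphic of central charge $\tilde c + c$.

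The main point that needs genuine work — and the place I expect the real obstacle to lie — is showing that this categorical extension is actually realized by an honest \voa{} with the correct properties, and this is exactly where the hypothesis $\rho(V^{\phi(i)})>-\rho(\tilde V^i)$ enters. The abstract existence of an extension as a commutative algebra in the modular tensor category does not by itself guarantee a \strat{} \voa{} structure: one must check that the candidate $\Z$-grading by conformal weight is bounded below with the correct vacuum, i.e.\ that $W=\bigoplus_i \tilde V^i\otimes V^{\phi(i)}$ is a \voa{} of CFT-type, and in particular that the only weight-zero piece is $\C\vac$. The conformal weights of the summand $\tilde V^i\otimes V^{\phi(i)}$ have minimum $\rho(\tilde V^i)+\rho(V^{\phi(i)})$, and the inequality $\rho(V^{\phi(i)})>-\rho(\tilde V^i)$ for $i\neq 0$ precisely ensures that every non-vacuum summand contributes only strictly positive conformal weights, so that $W_0=\C\vac$ and $W$ is of CFT-type. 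I would invoke the existence theorems for simple-current (or more generally étale-algebra) extensions of \voa{}s — e.g.\ the framework of \cite{CKM22,Lin17,HKL15} under which a commutative, haploid, rigid algebra in $\Rep(\tilde V\otimes V)$ with appropriate integrality of conformal weights lifts to a \strat{} \voa{} extension — to upgrade the categorical statement to a \voa{} statement, using the positivity bound to verify the hypotheses.

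Next I would identify the commutant. By construction there is an embedding $\iota\colon \tilde V\to W$ (as the summand $\tilde V^0\otimes V^0\supset \tilde V\otimes\C\vac$), and the whole extension $W$ decomposes as a sum of $\tilde V\otimes V$-modules. The commutant $\Com_W(\iota\tilde V)$ consists of those vectors annihilated by all non-negative modes of $\iota\tilde V$; because the decomposition is multiplicity-free with the $\tilde V$-module $\tilde V^i$ paired with the $V$-module $V^{\phi(i)}$, and because $\tilde V$ and $V$ form a dual (commuting, mutually centralizing) pair inside $W$, a standard double-commutant argument identifies $\Com_W(\iota\tilde V)\cong V$. Here I would use that $\tilde V\otimes V$ is a conformal subalgebra of $W$ realizing a genuine dual pair — the Lagrangian condition on $\mathcal{A}$ is exactly what makes $V$ the full commutant rather than a proper subalgebra — together with the regularity of the pieces to ensure the commutant is computed summand by summand. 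Finally, primitivity of $\iota$, i.e.\ $\tilde V\cong\Com_W(\Com_W(\iota\tilde V))=\Com_W(V)$, follows from the symmetry of the dual-pair relationship, since the same Lagrangian algebra exhibits $\tilde V$ as the commutant of $V$; thus $\iota$ is a primitive embedding and $V\cong\Com_W(\iota\tilde V)$ as desired.
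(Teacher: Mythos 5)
Your proposal is correct and takes essentially the same route as the paper's proof: you build the same extension $W=\bigoplus_{i}\tilde{V}^i\otimes V^{\phi(i)}$ via the ribbon-reversing equivalence, invoke the same extension machinery (\cite{CKM22,Lin17,HKL15}), use the hypothesis $\rho(V^{\phi(i)})>-\rho(\tilde{V}^i)$ exactly as the paper does to guarantee that $W$ is of CFT-type, and identify $V\cong\Com_W(\iota\tilde{V})$ with primitivity from the dual-pair structure. Your Drinfeld-center/Lagrangian-algebra framing is simply a more explicit packaging of why $\Rep(W)\cong\Vect$, a point the paper leaves implicit in its citations.
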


\begin{rem}
In particular, if $\tilde{V}$ is positive, then any positive $V$ in the bulk genus $(\mathcal{C},c)$ can be obtained as $V\cong \Com_W(\iota\tilde{V})$ for some holomorphic $W$ of central charge $c+\tilde{c}$ and some primitive embedding $\iota\colon\tilde{V}\to W$.
\end{rem}
\begin{proof}
Any $V$ in the bulk genus $(\mathcal{C},c)$ satisfying $\rho(V^{\phi(i)})>-\rho(\tilde{V}^i)$ can be ``glued'' to $\tilde{V}$ using the ribbon-reversing equivalence $\phi$ to produce a \strat{}, holomorphic \voa{} $W$ of central charge $c+\tilde{c}$,
\begin{equation*}
W\coloneqq\bigoplus_{i\in\Irr(\tilde{V})} V^{\phi(i)}\otimes \tilde{V}^i
\end{equation*}
\cite{CKM22,Lin17}. The condition on the conformal weights is necessary to ensure that $W$ is of CFT-type. It follows from the construction that $\tilde{V}$ embeds into $W$ primitively via some embedding~$\iota$ and that $V\cong \Com_W(\iota\tilde{V})$. 
\end{proof}

Thus, to enumerate (say, positive) \voa{}s in a genus $(\mathcal{C},c)$, the gluing principle tells us that all one needs is
\begin{enumerate}
\item a single positive, \strat{} \voa{} $\tilde{V}$ in a genus of the form $(\overline{\mathcal{C}},\tilde{c})$ (preferably with $\tilde{c}$ as small as possible),
\item the list of all \strat{}, holomorphic \voa{}s of central charge $c+\tilde{c}$ that support $\tilde{V}$ as a primitive subalgebra and
\item all possible primitive embeddings of $\tilde{V}$ into those holomorphic \voa{}s. 
\end{enumerate}
It is usually not too difficult to find a ``seed'' \voa{} $\tilde{V}$; for example, if $\overline{\mathcal{C}}$ is a Chern--Simons type modular tensor category, then one can take $\tilde{V}$ to be a simple affine \voa{} (or current algebra), see \autoref{ex:affine}. As one can glean from the discussion in \autoref{sec:holo}, item~(2) is also tractable when $c+\tilde{c}\leq 24$, because \strat{}, holomorphic vertex operator algebras are classified through central charge $24$, up to \autoref{conj:moonshineuniqueness} (see \autoref{sec:holo}). Finally, in fortuitous circumstances, item (3) is also achievable, e.g., when $\tilde{V}$ is an affine \voa{}, because in this case, determining embeddings of $\tilde{V}$ reduces to a problem of determining embeddings of ordinary Lie algebras, for which an abundance of tools exist.

\medskip

The gluing principle is most effective at enumerating positive vertex operator algebras. This is sometimes good enough, because one can often exclude the existence of vertex operator algebras violating positivity. The intuition that a physicist might want to appeal to is that if $\mathcal{C}$ is a \emph{unitary} modular tensor category, then any (\strat{}) vertex operator algebras it supports must also be unitary, and hence be positive. However, to the best of our knowledge, such a result has not yet been obtained; in fact, it is not even clear that it should be true, though no counter-examples are known. Instead, the following is an easily-checked sufficient condition on $\mathcal{C}$ for the absence of strongly rational vertex operator algebras $V$ with $\Rep(V)\cong \mathcal{C}$ that violate positivity. 

Consider a modular tensor category $\mathcal{C}$ with simple objects indexed by $I=\{0,\dots,\rk(\mathcal{C})-1\}$ and with normalized $S$-matrix $s$. Let $J\subset I$ be a set different
from $\{0\}$ satisfying (1) $\theta_j=\theta_{j'}$ for $j,j'\in J$ and (2) $j^\ast\in J$ if $j\in J$, where $j^\ast$ is the unique index for which $(s^2)_{jj^\ast}=1$ (i.e.\ the index of the dual object). We call a set $J$ that satisfies these conditions a \emph{valid indexing set}, and say that a modular tensor category is \emph{essentially positive} if its normalized $S$-matrix $s$ has the property that, for any valid indexing set $J$, there exists an $i\in I$ such that $s_{ij}+s_{ij^\ast}<0$ for all $j\in J$. (Note that $s_{ij}+s_{ij^\ast}$ is a real number because $s_{ij^\ast}=\overline{s_{ij}}$.)

\begin{prop}\label{prop:essentiallypositive}
If $V$ is a strongly rational vertex operator algebra in a bulk genus $(\mathcal{C},c)$ with $\mathcal{C}$ pseudo-unitary and  essentially positive, then $V$ is positive.
\end{prop}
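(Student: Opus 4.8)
The plan is to argue by contradiction: I would show that a failure of positivity produces a valid indexing set to which essential positivity applies, and that the resulting inequality on the normalized $S$-matrix contradicts the manifest positivity of a character on the imaginary axis.

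First I would suppose that $V$ is not positive and set $\rho_{\min}\coloneqq\min_{k}\rho(V^k)$, noting $\rho_{\min}\le\rho(V)=0$, together with $J\coloneqq\{j\in\Irr(V)\mid\rho(V^j)=\rho_{\min}\}$. The key preliminary claim is that $J$ is a valid indexing set in the sense defined above. It differs from $\{0\}$ precisely because $V$ fails to be positive (if $\rho_{\min}<0$ then $0\notin J$, while if $\rho_{\min}=0$ then $J$ must contain some nonzero index). Condition~(1) holds since the ribbon twist on the module $V^j$ is $\theta_j=e^{2\pi\i\rho(V^j)}$, which is constant on $J$; condition~(2) holds because the contragredient (dual) module $V^{j^\ast}\cong(V^j)'$ has the same lowest conformal weight, so $j\in J$ forces $j^\ast\in J$. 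Essential positivity then supplies an index $i\in I$ with $s_{ij}+s_{ij^\ast}<0$ for all $j\in J$.

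Next I would pass to the analytic side. Since $\Rep(V)$ is pseudo-unitary, \autoref{prop:pseudounitarymod8} gives $c(V)=c(\Rep(V))\pmod 8$, so \autoref{lem:signSmatrix} identifies the genus-$1$ modular matrix with the normalized $S$-matrix, $S=+s$. For the index $i$ above this yields the character identity $\mathrm{ch}_{V^i}(-1/\tau)=\sum_j s_{ij}\,\mathrm{ch}_{V^j}(\tau)$. I would evaluate it along $\tau=\i t$ with $t>0$, so that $-1/\tau=\i/t$ lies on the positive imaginary axis. Since any character $\mathrm{ch}_{V^i}(\i s)=\sum_n\dim(V^i_n)\,e^{-2\pi s(n-c/24)}$ is a convergent series with non-negative coefficients and a strictly positive leading term, the left-hand side $\mathrm{ch}_{V^i}(\i/t)$ is strictly positive for every $t>0$.

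Finally I would extract the contradiction from the large-$t$ asymptotics of the right-hand side. Using that there are finitely many modules, the dominant exponential as $t\to\infty$ is governed by the smallest exponent $\rho_{\min}-c/24$, and
\[
\lim_{t\to\infty}e^{2\pi t(\rho_{\min}-c/24)}\sum_j s_{ij}\,\mathrm{ch}_{V^j}(\i t)=\sum_{j\in J}s_{ij}\,d_j,\qquad d_j\coloneqq\dim\!\big(V^j_{\rho_{\min}}\big)\in\Ns .
\]
Because the contragredient preserves graded dimensions, $d_{j^\ast}=d_j$; substituting $j\mapsto j^\ast$ and averaging turns this into $\tfrac12\sum_{j\in J}(s_{ij}+s_{ij^\ast})\,d_j$, which is a strictly negative real number (note $s_{ij}+s_{ij^\ast}=2\,\mathrm{Re}(s_{ij})$). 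Hence the right-hand side is negative for large $t$, contradicting the positivity of the left-hand side, and therefore $V$ is positive. The main obstacle is not any single calculation but rather the two bridges between the categorical hypothesis and the analytic identity: verifying that the minimal-weight set $J$ satisfies the axioms of a valid indexing set, and the $j\leftrightarrow j^\ast$ pairing (together with $S=+s$) that converts the a priori complex quantity $\sum_{j\in J}s_{ij}d_j$ into a manifestly negative real number.
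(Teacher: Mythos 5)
Your proof is correct and follows the paper's strategy almost step for step: the same contradiction setup, the same valid indexing set $J$ of minimal-weight modules (with the same case analysis showing $J\neq\{0\}$), the same combination of \autoref{lem:signSmatrix} and \autoref{prop:pseudounitarymod8} to fix $S=+s$, and the same $j\leftrightarrow j^\ast$ averaging using $d_{j^\ast}=d_j$ to turn essential positivity into a strictly negative real quantity. The one place you genuinely diverge is the closing analytic step. The paper invokes ``a version of the Cardy formula'' to conclude that the asymptotic density of states $\dim(V^i_h)$ would be negative as $h\to\infty$, contradicting non-negativity of dimensions; you instead evaluate the $S$-transformation identity directly at $\tau=\i t$ and send $t\to\infty$, so that after multiplying by $e^{2\pi t(\rho_{\min}-c/24)}$ the right-hand side converges to $\tfrac12\sum_{j\in J}(s_{ij}+s_{ij^\ast})d_j<0$ while the left-hand side $\mathrm{ch}_{V^i}(\i/t)$ remains manifestly positive. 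Your variant is more elementary and self-contained: it needs only convergence of the finitely many characters on the upper half-plane and a dominated-convergence bound on the tails, whereas the paper's Cardy step implicitly requires a Tauberian-type extraction of coefficient asymptotics and is the least rigorous-looking move in its proof. What the Cardy route buys in exchange is the physically meaningful statement about the exponential growth of state densities; both routes produce the same contradiction from the same inequality, so the arguments are interchangeable.
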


\begin{proof}
Let $V$ be a strongly rational vertex operator algebra in the essentially positive bulk genus $(\mathcal{C},c)$ with irreducible modules $V^i$, $i\in I$, that have conformal weights $h_i\coloneqq\rho(V^i)\in\Q$. We study the character vector
\begin{equation*}
\mathrm{ch}_i(\tau) = \tr_{V^i}q^{L_0-c/24} = \sum_{h\in h_i+\N} \dim(V^i_h) q^{h-c/24},
\end{equation*}
$i\in I$. By combining \autoref{lem:signSmatrix} and \autoref{prop:pseudounitarymod8}, we learn that the genus-1 $S$-matrix $S$ of $V$ defined in \eqref{eqn:modulartransformationcharacters} is precisely equal to the normalized $S$-matrix $s$ of $\mathcal{C}\cong\Rep(V)$ (as opposed to $S=-s$).

Recall that the effective central charge (see, e.g., \cite{GN03}) of $V$ is defined as $\tilde{c}=c-24h_\text{min}\geq c$ where $h_\text{min}=\min\{h_i\mid i\in J\}\leq 0$. Suppose, for the sake of contradiction, that $V$ is not positive. Then $J=\{j\in I\mid c-24h_j=\tilde{c}\}=\{j\in I\mid h_j=h_\text{min}\}\neq \{0\}$ is a valid indexing set. By the assumption of essential positivity, there exists an $i\in I$ such that $S_{ij}+S_{ij^\ast}<0$ for all $j\in J$. A version of the Cardy formula \cite{Car86} then asserts, as a consequence of the modular transformation properties of the $q$-characters of $V$, that the asymptotic density of states in $V^i$ grows exponentially for large $h$ as 
\begin{equation*}
\dim(V^i_h)\sim \sum_{j\in J} \overline{S_{ij}}\dim(V^j_{h_j})\exp\left(2\pi \sqrt{\frac{\tilde{c}h}{6}}\right).
\end{equation*}
Using the fact that $j^\ast\in J$ if $j\in J$, we can rewrite this as
\begin{align*}
\dim(V^i_h)&\sim \sum_{j\in J} \frac{\overline{S_{ij}}\dim(V^j_{h_j})+\overline{S_{ij^\ast}}\dim(V^{j^\ast}_{h_{j^\ast}})}{2}\exp\left(2\pi \sqrt{\frac{\tilde{c}h}{6}}\right) \\
&=\sum_{j\in J}\frac{S_{ij}+S_{ij^\ast}}{2}\dim(V^j_{h_j})\exp\left(2\pi \sqrt{\frac{\tilde{c}h}{6}}\right)<0
\end{align*}
where we use that $\dim(V^j_{h_j})=\dim(V^{j^*}_{h_{j^*}})$ to go from the first line to the second line, and subsequently that $S_{ij}+S_{ij^\ast}<0$ by the assumption of essential positivity to deduce that the sum is negative. This contradicts the fact that $\dim(V^i_h)$ is the dimension of a vector space. Hence, $V$ is positive.
\end{proof}


\subsection{Reconstruction}\label{subsec:reconstruction}

Before moving on, we record two important conjectures in the theory of \strat{} vertex operator algebras, in this subsection and the next. An important question is to decide if a given pair $(\mathcal{C},c)$ of a modular tensor category $\mathcal{C}$ and $c\in\Q$, satisfying the obvious constraints discussed in \autoref{subsec:bulkdefprop}, actually corresponds to a \voa{} genus. This would be the analog of the result in \cite{Nik80} for metric groups and lattices, which we reviewed in \autoref{sec:lat}.

\begin{conj}[Reconstruction]\label{conj:reconstruction}
   For every modular tensor category $\mathcal{C}$, there is a sufficiently large $c\in\Q$ for which the bulk genus $(\mathcal{C},c)$ exists. That is, every modular tensor category is of the form $\mathrm{Rep}(V)$ for some \strat{} vertex operator algebra $V$. 
\end{conj}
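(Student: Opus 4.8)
The plan is to mirror Nikulin's existence theorem for lattices (Theorem~1.10.1 of \cite{Nik80}, recalled in \autoref{sec:lat}): just as every metric group is realized as the discriminant form of some even lattice once the rank is large enough, one hopes every modular tensor category is realized as $\Rep(V)$ once the central charge is large enough. I would try to build $V$ out of a small stock of categories \emph{known} to be realizable, together with categorical operations that lift to \voa{} constructions. Concretely, I would take as base cases the pointed categories $\Cat(D)$ (realized by lattice \voa{}s, \autoref{ex:lattice}) and the affine categories $(X_n,k)$ (realized by the simple affine \voa{}s $\mathsf{X}_{n,k}$, \autoref{ex:affine}).

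First I would establish closure of the realizable categories under the relevant operations. The Deligne product $\Cat\boxtimes\mathcal{D}$ is realized by the tensor product $U\otimes U'$; simple-current (conformal or mirror) extensions are realized by the extensions of \cite{CKM22,Lin17}; and passing to a Müger centralizer corresponds to taking a commutant (coset). Crucially, each of these preserves strong rationality—cosets by \cite{CKLR19}, extensions by \cite{CKM22,Lin17}—so the class of realizable modular tensor categories is genuinely closed under the corresponding categorical operations, and $\rk(\Cat)$ can be reduced step by step toward the base cases.

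Second, I would exploit the freedom in $c$ via the gluing principle, \autoref{prop:gluingprinciple}. Since realizing $\Cat$ is equivalent to realizing $\overline{\Cat}$ (invert the braiding and twist), it suffices to find a \strat{}, holomorphic \voa{} $W$ of central charge $c+\tilde c\in 8\N$ containing a dual pair $U\otimes V$ with $\Rep(U)\cong\overline{\Cat}$; then $V\cong\Com_W(U)$ realizes $\Cat$ through the ribbon-reversing equivalence of \autoref{prop:asslatdecomp}. Categorically this is the statement that $Z(\Cat)\cong\Cat\boxtimes\overline{\Cat}$ is Witt-trivial and thus admits a canonical Lagrangian algebra, i.e.\ a topological boundary; the latitude in choosing $c$ as large as needed is exactly what supplies the room to embed $U$ into a known holomorphic $W$.

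The hard part will be \emph{completeness}: showing that an arbitrary modular tensor category can actually be assembled from the base cases by these operations, equivalently that the seed $U$ realizing $\overline{\Cat}$ truly exists. This is where the analogy with \cite{Nik80} breaks down, since there is no structural classification of modular tensor categories comparable to the $p$-adic local theory of quadratic forms. The Drinfeld-center reformulation does not resolve the difficulty: while $Z(\Cat)$ always carries a Lagrangian algebra, producing an \emph{actual} holomorphic \voa{} $W$ whose representation category condenses to $\Cat$ still requires realizing $\overline{\Cat}$ in the first place, so the argument is circular unless $\Cat$ already lies in the subcategory generated by the known examples (affine, pointed, and their extensions and cosets). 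For this reason I expect the statement to be provable only in such cases, and to remain a conjecture in full generality.
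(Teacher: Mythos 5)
The statement you were asked to prove is, in the paper itself, only a \emph{conjecture} (\autoref{conj:reconstruction}): the authors offer no proof, and your final verdict---that the statement is provable with current technology only when $\Cat$ lies in the class generated by the known examples and remains open in general---is exactly the state of affairs the paper records. The special cases you assemble (affine categories $(X_n,k)$ via $\mathsf{X}_{n,k}$, pointed categories via lattice \voa{}s, and the operations of Deligne product, simple-current extension and coset) are precisely the partial results listed in the paragraph following the conjecture, including the coset realization of $\overline{(X_n,k)}$ inside a holomorphic \voa{} and the twisted Drinfeld doubles of \cite{EG22} modulo the orbifold-regularity conjecture. One point needs correcting, though you partially catch it yourself: your parenthetical claim that ``realizing $\Cat$ is equivalent to realizing $\overline{\Cat}$ (invert the braiding and twist)'' is not a free step. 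There is no known operation on \strat{} \voa{}s that ribbon-reverses the representation category; this equivalence is exactly the paper's separate \autoref{conj:weakreconstruction} (weak reconstruction), stated there precisely because it is open and because it is what guarantees the existence of the ``seed'' $\tilde V$ for the gluing principle. You do identify this circularity in your final paragraph, so your conclusion is unaffected, but the step should be flagged as conjectural at the point where you first invoke it rather than presented as a trivial categorical operation. With that caveat, your diagnosis---that the analogy with Nikulin's local-global theory breaks down for lack of any structural classification of modular tensor categories---is the same obstruction the authors implicitly acknowledge by leaving the statement as a conjecture.
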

\begin{rem}
The claim that every symbol $(\mathcal{C},c)$ with $c=c(\mathcal{C})\pmod{4}$ corresponds to a bulk vertex operator algebra genus is certainly false. For example, consider the modular tensor category $\Cat\coloneqq(A_1,5)_{\sfrac12}$ generated by objects in $(A_1,5)$ labeled by an integer $\mathrm{SU}(2)$ spin.
In \cite{Ray23}, it was found that there is no positive, \strat{} \voa{} of central charge $\sfrac{8}{7}$ and representation category $\overline{\Cat}$. Furthermore, one can check that $\overline\Cat$ is an essentially positive modular tensor category, so that there are no vertex operator algebras with representation category $\overline{\Cat}$ that violate positivity by \autoref{prop:essentiallypositive}. Thus, the symbol $(\overline{(A_1,5)_{\sfrac12}},\sfrac{8}{7})$ does not correspond to a bulk vertex operator algebra genus. This, however, does not contradict \autoref{conj:reconstruction} as the genus $(\overline{(A_1,5)_{\sfrac12}},\sfrac{8}7+8n)$ exists whenever $n\in\Ns$. For example, it turns out that the unique simple-current extension of $\mathsf{A}_{1,5}\mathsf{E}_{7,1}$ has central charge $64/7$ and representation category $\overline{(A_1,5)_{\sfrac12}}$.
\end{rem}
As a potentially easier first target, it would even be interesting to prove the following variation of \autoref{conj:reconstruction}. 

\begin{conj}\label{conj:weakreconstruction}
    If there is a positive strongly rational vertex operator algebra $V$ with $\Rep(V)\cong \mathcal{C}$, then there is a positive strongly rational vertex operator algebra $\tilde{V}$ with $\Rep(\tilde{V})\cong \overline{\mathcal{C}}$.
\end{conj}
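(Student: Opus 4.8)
The plan is to realize $\overline{\mathcal C}$ as the representation category of a coset. Concretely, suppose we can produce a \strat{}, holomorphic \voa{} $W$ together with a primitive embedding $\iota\colon V\hookrightarrow W$, so that $V$ and its commutant $C\coloneqq\Com_W(\iota V)$ form a dual pair in $W$ (that is, $V\cong\Com_W(C)$ and $V\otimes C\hookrightarrow W$ conformally). I would then take $\tilde V\coloneqq C$ and claim it has the required properties. Note that the gluing principle \autoref{prop:gluingprinciple} runs the other way: it manufactures such a $W$ only once a \voa{} with representation category $\overline{\mathcal C}$ is already in hand, so it cannot be invoked directly here without circularity. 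The whole difficulty is therefore to construct $W$ \emph{a priori}.

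Granting such a $W$, the remaining checks are comparatively routine. Because $V\otimes C$ is a dual pair inside the holomorphic $W$, the decomposition is multiplicity-free, $W\cong\bigoplus_{i\in\Irr(V)}V^i\otimes C^{\sigma(i)}$ for a bijection $\sigma\colon\Irr(V)\to\Irr(C)$ (this is exactly the structure used in \autoref{prop:asslatdecomp}). This exhibits $W$ as the condensation of a Lagrangian algebra in $\Rep(V)\boxtimes\Rep(C)\cong\mathcal C\boxtimes\Rep(C)$ whose category of local modules is $\Rep(W)\cong\Vect$; since $\sigma$ matches up \emph{all} simple objects, this Lagrangian is the graph of a braided equivalence and forces $\Rep(C)\cong\overline{\mathcal C}$ as modular tensor categories (the twists match via $\theta_{C^{\sigma(i)}}=\theta_{V^i}^{-1}$, consistent with $\rho(V^i)+\rho(C^{\sigma(i)})\in\Z$). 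Strong rationality of $C$ follows from the regularity results for commutants \cite{CKLR19,CGN21} already invoked in \autoref{sec:asslat}, together with the extension results of \cite{CKM22,Lin17}. Positivity of $\tilde V=C$ is the one delicate point among these: for $i\neq0$ the summand $V^i\otimes C^{\sigma(i)}$ has lowest weight $\rho(V^i)+\rho(C^{\sigma(i)})\in\Ns$ (as $W$ is of CFT-type), but since $\rho(V^i)>0$ by positivity of $V$ (\autoref{defi:positivity}), this does not by itself yield $\rho(C^{\sigma(i)})>0$; one would either argue it via pseudo-unitarity of $\overline{\mathcal C}$ (and, when available, essential positivity of $\overline{\mathcal C}$ together with \autoref{prop:essentiallypositive}) or by choosing $W$ so that the integers $\rho(V^i)+\rho(C^{\sigma(i)})$ are large enough.

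As a consistency check and partial result, the pointed case can be settled unconditionally without any external construction. If $\mathcal C\cong\mathcal C(D)$ is pointed, then ribbon-reversal negates the quadratic form, so $\overline{\mathcal C}\cong\mathcal C(D^-)$, where $D^-$ denotes $D$ with its quadratic form negated. By \cite{Nik80} there is a positive-definite, even lattice $M$ of sufficiently large rank with $M'/M\cong D^-$, and the associated lattice \voa{} $\tilde V\coloneqq V_M$ is positive and \strat{} with $\Rep(V_M)\cong\mathcal C(D^-)\cong\overline{\mathcal C}$. Hence the genuine content of the conjecture lies in the non-pointed (and more generally non-lattice) case, and I expect the main obstacle to be precisely the existence of the holomorphic extension $W\supseteq V$. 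This is tantamount to exhibiting a Lagrangian algebra in $\mathcal C\boxtimes\mathcal D$ for some realizable modular tensor category $\mathcal D$, a problem closely entangled with \autoref{conj:reconstruction} itself; abstractly $\mathcal C\boxtimes\overline{\mathcal C}$ always carries the canonical (diagonal) Lagrangian algebra, so the crux is upgrading this categorical fact to a statement about \voa{}s, i.e.\ producing an actual holomorphic \voa{} realizing that condensation.
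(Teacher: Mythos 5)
First, note that the statement you were asked to prove is \autoref{conj:weakreconstruction}, which the paper states as an open \emph{conjecture} and does not prove; there is no proof in the paper to compare against, so the only question is whether your attempt actually closes it. It does not, and you say so yourself: the entire construction hinges on producing, \emph{a priori}, a \strat{}, holomorphic \voa{} $W$ into which $V$ embeds primitively, and this is precisely the content you cannot supply. Worse, this assumed input is essentially equivalent to the conjecture itself: the paper derives the holomorphic-embedding statement \emph{from} \autoref{conj:weakreconstruction} (that is \autoref{cor:holemb}, obtained by gluing $V$ to a hypothetical $\tilde V$ via \autoref{prop:gluingprinciple}), whereas you propose to run the implication in the opposite direction. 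So your reduction trades the conjecture for an equivalent unproven statement — you correctly flag the circularity with the gluing principle, but the ``Lagrangian algebra in $\mathcal C\boxtimes\mathcal D$'' reformulation at the end is a restatement of the problem, not progress on it. The canonical Lagrangian in $\mathcal C\boxtimes\overline{\mathcal C}$ gives a commutative algebra object categorically, but upgrading it to an actual holomorphic \voa{} is exactly the reconstruction-type question at stake.

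Two further points deserve flagging even granting $W$. First, your positivity analysis is honest but unresolved: $\rho(V^i)>0$ together with $\rho(V^i)+\rho(C^{\sigma(i)})\in\Ns$ indeed permits $\rho(C^{\sigma(i)})\leq 0$ whenever $\rho(V^i)\geq 1$, and neither pseudo-unitarity of $\overline{\mathcal C}$ nor \autoref{prop:essentiallypositive} (which is a sufficient condition holding only for special $\mathcal C$) closes this in general; ``choosing $W$ so the weights are large enough'' is not something you can arrange, since you have no construction of $W$ at all. Second, strong rationality of the commutant $C$ is not covered by the citations you give: \cite{CKLR19} and \cite{CGN21} concern Heisenberg cosets specifically (which is how the paper uses them in \autoref{sec:asslat}), and $C_2$-cofiniteness of a general commutant of a \strat{} subalgebra in a holomorphic \voa{} is itself open. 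What \emph{is} correct in your write-up: granting a dual pair $V\otimes C\subset W$ with $C$ strongly rational, the multiplicity-free decomposition, the graph-of-a-braided-equivalence argument via \cite{HKL15}, the twist matching $\theta_{C^{\sigma(i)}}=\theta_{V^i}^{-1}$, and in particular the pointed case, which you settle cleanly via \cite{Nik80} — though note the paper already records this instance of reconstruction in \autoref{subsec:reconstruction} (``any pointed modular tensor category is the representation category of some lattice vertex operator algebra''), so it is a known special case rather than new ground.
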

In particular, establishing this variation would show that the choice of a ``seed'' \voa{} $\tilde{V}$ to use as input to the gluing principle always exists. One then obtains the following corollary of \autoref{conj:weakreconstruction}.
\begin{cor}\label{cor:holemb}
    Assuming \autoref{conj:weakreconstruction}, it follows that every positive \strat{} \voa{} can be embedded primitively into a \strat{}, holomorphic \voa{}, from which it can be obtained as a commutant (or coset). 
\end{cor}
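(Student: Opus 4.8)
The plan is to combine \autoref{conj:weakreconstruction} with the Gluing Principle (\autoref{prop:gluingprinciple} and the remark following it); essentially no new ideas are needed, only a careful bookkeeping of the hypotheses. Let $V$ be a positive \strat{} \voa{} and set $\mathcal{C}\coloneqq\Rep(V)$, so that $V$ lies in the bulk genus $(\mathcal{C},c(V))$. First I would invoke \autoref{conj:weakreconstruction} to produce a positive \strat{} \voa{} $\tilde{V}$ with $\Rep(\tilde{V})\cong\overline{\mathcal{C}}$, i.e.\ a positive ``seed'' in a bulk genus of the form $(\overline{\mathcal{C}},\tilde{c})$, together with a ribbon-reversing equivalence $\phi\colon\Rep(\tilde{V})\to\mathcal{C}$.

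Next I would verify the conformal-weight hypothesis of the Gluing Principle. Since $\phi$ is an equivalence, $\phi(i)=0$ if and only if $i=0$; hence for $i\neq0$ positivity of $V$ gives $\rho(V^{\phi(i)})>0$, while positivity of $\tilde{V}$ gives $\rho(\tilde{V}^i)>0$, so that $\rho(V^{\phi(i)})>0>-\rho(\tilde{V}^i)$. This is exactly the content of the remark after \autoref{prop:gluingprinciple}, and applying it yields a \strat{}, holomorphic \voa{} $W$ of central charge $c(V)+\tilde{c}$ together with a primitive embedding $\iota\colon\tilde{V}\to W$ such that
\begin{equation*}
V\cong\Com_W(\iota\tilde{V}).
\end{equation*}
This already exhibits $V$ as a commutant (coset) inside the holomorphic \voa{} $W$.

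It remains to check that the copy of $V$ inside $W$ is itself embedded primitively. Here I would use the general fact that the commutant $\Com_W(\cdot)$ is an order-reversing Galois connection on vertex operator subalgebras, so that the triple commutant equals the single commutant: for any subalgebra $U\subseteq W$ one has $\Com_W(\Com_W(\Com_W(U)))=\Com_W(U)$. Applying this with $U=\iota\tilde{V}$ gives
\begin{equation*}
\Com_W(\Com_W(V))=\Com_W(\Com_W(\Com_W(\iota\tilde{V})))=\Com_W(\iota\tilde{V})=V,
\end{equation*}
so $V$ is its own double commutant, i.e.\ primitive. Thus $V$ embeds primitively into $W$ and is recovered as the commutant of $\Com_W(V)$.

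I do not anticipate a genuine obstacle, since every substantive ingredient is already available: the only nontrivial input is the existence of the positive seed $\tilde{V}$, which is precisely what \autoref{conj:weakreconstruction} supplies, and the positivity needed to meet the conformal-weight bound of the Gluing Principle is automatic. The one point worth stating carefully is the triple-commutant identity, which yields primitivity of $V$ ``for free'' from the fact that it arises as a commutant, so that no separate dual-pair argument is required.
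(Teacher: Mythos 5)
Your proof is correct and follows essentially the route the paper intends: the corollary is stated as an immediate consequence of \autoref{conj:weakreconstruction} combined with the remark after \autoref{prop:gluingprinciple}, exactly as you argue, with positivity of both $V$ and the seed $\tilde{V}$ making the conformal-weight bound $\rho(V^{\phi(i)})>0>-\rho(\tilde{V}^i)$ automatic. Your triple-commutant (Galois connection) argument for the primitivity of $V$ is a valid and cleanly stated way to supply the one step the paper leaves implicit, namely that in the glued extension $W=\bigoplus_i V^{\phi(i)}\otimes\tilde{V}^i$ the algebras $V$ and $\iota\tilde{V}$ form a dual pair, so that $V=\Com_W(\Com_W(V))$.
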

\autoref{conj:reconstruction} can be proved for certain classes of modular tensor categories $\mathcal{C}$. For example, the quantum group categories $(X_n,k)$ are realized by the simple affine \voa{}s $\mathsf{X}_{n,k}$. (Physically, a Chern--Simons theory always admits a chiral Wess--Zumino--Witten boundary condition.) Moreover, when $(X_n,k)$ admits no non-trivial condensable algebras (see, e.g., \cite{Kon14} for the definition of condensable algebra), $\overline{(X_n,k)}$ can also be shown to be saturated by a \strat{} \voa{} (see, e.g., the paragraph before Section~3 of \cite{SY90}).\footnote{The proof is essentially to show that the current algebra $\mathsf{X}_{n,k}$ can always be embedded into a holomorphic vertex operator algebra $V$. The assumption that $(X_n,k)$ admits no non-trivial condensable algebras implies that $\mathsf{X}_{n,k}$ admits no non-trivial conformal extensions, and hence that any embedding of $\mathsf{X}_{n,k}$ is primitive. Thus, the coset $V/\mathsf{X}_{n,k}$ is guaranteed to have representation category $\overline{(X_n,k)}$.} Further, any pointed modular tensor category is the representation category of some lattice vertex operator algebra. (Physically, an abelian Chern--Simons theory can support free chiral bosons on its boundary.) Finally, the twisted Drinfeld double of any finite group is known to satisfy reconstruction \cite{EG22} if one assumes the widely-believed conjecture that the fixed-point subalgebra of a \strat{} \voa{} with respect to a finite subgroup $G$ of its automorphism group is again \strat{} (see \cite{CM16} for the proof in the case that $G$ is solvable).


\subsection{Finiteness}\label{subsec:finiteness}

All the data collected to date, some of which we review in the next subsection in examples, also suggests the following conjecture.
\begin{conj}[Bulk Genera Finiteness]\label{conj:bulkfiniteness}
Every bulk genus contains only finitely many isomorphism classes of \voa{}s.
\end{conj}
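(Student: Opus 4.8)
The plan is to reduce the finiteness of a bulk genus $(\mathcal{C},c)$ to the (still open, but more classical) finiteness of \strat{}, holomorphic \voa{}s of a fixed central charge, by using the gluing principle of \autoref{prop:gluingprinciple} to realize every member of the genus as a commutant inside a holomorphic cover. Concretely, I would split the problem into two stages: first bound the \emph{positive} \voa{}s in $(\mathcal{C},c)$, and then bound the remaining \emph{non-positive} ones separately. The second stage is what stands between a conditional and an unconditional statement, and is where I expect most of the difficulty to lie.

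For the positive sector, assuming the genus contains at least one positive member, I would invoke \autoref{conj:weakreconstruction} to produce a positive seed \voa{} $\tilde V$ with $\Rep(\tilde V)\cong\overline{\mathcal{C}}$, and fix a ribbon-reversing equivalence $\phi\colon\Rep(\tilde V)\to\mathcal{C}$. By the remark following \autoref{prop:gluingprinciple}, since both $V$ and the seed $\tilde V$ are positive the gluing hypothesis $\rho(V^{\phi(i)})>-\rho(\tilde V^i)$ holds automatically for all $i\neq0$ (as then $\rho(V^{\phi(i)})>0>-\rho(\tilde V^i)$), so \emph{every} positive $V$ in $(\mathcal{C},c)$ is realized (via \autoref{cor:holemb}) as a commutant $V\cong\Com_W(\iota\tilde V)$ for a \strat{}, holomorphic \voa{} $W$ of central charge $c+\tilde c$ and a primitive embedding $\iota\colon\tilde V\to W$. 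Granting the finiteness of holomorphic \voa{}s of central charge $c+\tilde c$, together with the finiteness of primitive embeddings of the fixed seed $\tilde V$ into each such $W$ up to the action of $\Aut(W)$, one then obtains finitely many commutants, and hence finitely many positive $V$.

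The two finiteness inputs in the previous step are where the real work lies. The finiteness of embeddings should follow from a weight count: an embedding $\iota$ is determined by where the (finitely many) strong generators of $\tilde V$ land among the graded pieces $W_n$ up to a bound set by $\tilde c$, and since embeddings differing by an element of $\Aut(W)$ produce isomorphic commutants, it suffices to bound $\Hom(\tilde V,W)/\Aut(W)$; when $\tilde V$ is an affine \voa{} this collapses to the classical finiteness of conformal embeddings of the reductive Lie algebra $\tilde V_1$ into $W_1$, exactly as in the partial result \autoref{prop:partialfiniteness}. The genuinely hard input is the finiteness of \strat{}, holomorphic \voa{}s of a given central charge: this is known only through $c=24$ (modulo \autoref{conj:moonshineuniqueness}), and for $c+\tilde c>24$ it is open and at least as hard as the conjecture under consideration. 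One would therefore either have to establish it independently---perhaps through a vector-valued modular-form argument bounding the admissible character vectors with multiplier fixed by the modular data of $\mathcal{C}$ and $c$, combined with a bound on $\dim W_1$---or adopt it as a hypothesis, in which case the statement becomes the conditional finiteness of \autoref{prop:partialfiniteness}, extended from Chern--Simons seeds to arbitrary $\overline{\mathcal{C}}$.

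Finally, the non-positive sector must be controlled to reach an unconditional statement, and I expect this to be the principal obstacle alongside holomorphic finiteness. The danger is that $h_{\min}\coloneqq\min_i\rho(V^i)$ could a priori be arbitrarily negative, so that the effective central charge $c_{\mathrm{eff}}=c-24\,h_{\min}$, and hence the central charge $c+c_{\mathrm{eff}}$ of any holomorphic cover, would be unbounded over the genus. Note that the genus does fix the residues $\rho(V^i)\bmod1$, since the twists of $\mathcal{C}$ together with the relation $T=e^{-2\pi\i c(V)/24}\tilde t$ and \autoref{lem:signSmatrix} pin down the $T$-matrix; thus the obstruction is purely the integer part of $h_{\min}$. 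The plan would be to prove a lower bound $h_{\min}\geq-\kappa(\mathcal{C},c)$ depending only on the genus---for instance by exploiting the Cardy-type asymptotics and the positivity of $\dim(V^i_h)$ used in the proof of \autoref{prop:essentiallypositive} to exclude pathologically negative conformal weights. With such a bound in hand, the non-positive case would involve only finitely many values of $c_{\mathrm{eff}}$ and could be absorbed into the positive analysis; the gluing reduction itself is then essentially formal, given the machinery already developed in \autoref{subsec:glue}.
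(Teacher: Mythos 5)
You should first be clear about the status of the statement: the paper does not prove \autoref{conj:bulkfiniteness} at all. It is posed as a conjecture, and the only thing the paper establishes in its direction is the \emph{conditional, partial} result \autoref{prop:partialfiniteness}, which covers positive \voa{}s in bulk genera whose category is the ribbon reverse of a product of Chern--Simons categories, assuming finiteness of \strat{}, holomorphic \voa{}s in each central charge. Your positive-sector argument is essentially a restatement and attempted generalization of that proposition (same gluing-principle reduction, same two finiteness inputs), and your own text concedes that every input beyond it is open. So what you have written is a roadmap, not a proof, and it is conditional on at least three unproven statements: the seed existence (\autoref{conj:weakreconstruction}), holomorphic finiteness beyond $c=24$ (open, and at least as hard as \autoref{conj:moonshineuniqueness}), and your proposed genus-uniform lower bound $h_{\min}\geq-\kappa(\mathcal{C},c)$, for which you offer no actual mechanism --- the Cardy/positivity argument in the proof of \autoref{prop:essentiallypositive} constrains \emph{which} modules can sit at the minimal weight, but it does not bound how negative that weight can be; controlling the integer part of $h_{\min}$ would need genuinely new input (e.g., a modular bootstrap bound on admissible character vectors), and without it the reduction to finitely many effective central charges collapses.

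There is also a concrete step in your second stage that fails in the generality you need. The paper's proof of \autoref{prop:partialfiniteness} gets finiteness of primitive embeddings because for an affine seed $\tilde V$, the weight-one space $\tilde V_1$ strongly generates $\tilde V$, and semisimple subalgebras of the reductive Lie algebra $W_1$ are finite in number up to inner automorphism --- a Lie-theoretic fact with no known analog for general seeds. For an arbitrary $\tilde V$ produced by \autoref{conj:weakreconstruction} (possibly with $\tilde V_1=\{0\}$), your ``weight count'' only shows the strong generators land in finitely many graded pieces $W_n$; each $W_n$ is a positive-dimensional vector space, so this does not bound $\Hom(\tilde V,W)$, and since $\Aut(W)$ may well be finite, quotienting by it does not rescue finiteness either. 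So even granting all three conjectural inputs, your argument as written does not close: the finiteness of $\Hom(\tilde V,W)/\Aut(W)$ for non-affine seeds is itself an open problem that your proposal assumes implicitly. In short, your proposal correctly identifies the paper's strategy and honestly flags the obstacles, but it proves nothing beyond what \autoref{prop:partialfiniteness} already gives, and the embedding-finiteness step is a gap even within your conditional framework.
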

Using the gluing principle,  it is actually possible to show that an infinite class of bulk genera admit only finitely many positive vertex operator algebras, if one assumes that there are only finitely many holomorphic vertex operator algebras of a fixed central charge.
\begin{prop}\label{prop:partialfiniteness}
    Let $\mathcal{C}$ be the ribbon reverse of a modular tensor category of the form $(X_{r_1}^{(1)},k_1)\boxtimes\dots\boxtimes(X_{r_n}^{(n)},k_n)$. Assuming that there are only finitely many \strat{}, holomorphic \voa{}s in each central charge, it follows that any bulk genus of the form $(\mathcal{C},c)$ contains finitely many positive strongly rational vertex operator algebras.
\end{prop}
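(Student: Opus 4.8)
The plan is to combine the gluing principle (\autoref{prop:gluingprinciple}) with the assumed finiteness of holomorphic \voa{}s and a classical finiteness statement about subalgebras of reductive Lie algebras. First I would realize the ribbon reverse $\overline{\mathcal{C}}\cong(X_{r_1}^{(1)},k_1)\boxtimes\dots\boxtimes(X_{r_n}^{(n)},k_n)$ by a concrete ``seed'' \voa{}: by \autoref{ex:affine} each factor $(X_{r_i}^{(i)},k_i)$ is the representation category of the simple affine \voa{} $\mathsf{X}^{(i)}_{r_i,k_i}$, so I set
\begin{equation*}
\tilde{V}\coloneqq\mathsf{X}^{(1)}_{r_1,k_1}\otimes\dots\otimes\mathsf{X}^{(n)}_{r_n,k_n},
\end{equation*}
which is a positive, \strat{} \voa{} with $\Rep(\tilde{V})\cong\overline{\mathcal{C}}$ (the Deligne product of the factor categories) and central charge $\tilde{c}=\sum_i c(\mathsf{X}^{(i)}_{r_i,k_i})$ equal to the sum of the Sugawara central charges. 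Since $\tilde{V}$ is positive, the remark following \autoref{prop:gluingprinciple} applies verbatim: every positive \voa{} $V$ in the bulk genus $(\mathcal{C},c)$ is realized as a commutant $V\cong\Com_W(\iota\tilde{V})$ for some \strat{}, holomorphic \voa{} $W$ of the fixed central charge $c+\tilde{c}$ and some primitive embedding $\iota\colon\tilde{V}\to W$.

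Next I would bound the number of such data. Since $c+\tilde{c}$ is fixed, the hypothesis guarantees that only finitely many isomorphism classes of holomorphic $W$ occur (and none occur unless $c+\tilde{c}$ is an admissible central charge, which only helps). For each such $W$ it remains to see that there are only finitely many embeddings $\iota\colon\tilde{V}\to W$ up to the equivalence producing isomorphic commutants. The key reduction is that an embedding of $\tilde{V}$ is controlled entirely by finite-dimensional Lie-theoretic data: each simple affine \voa{} $\mathsf{X}^{(i)}_{r_i,k_i}$ is generated by its weight-one space $\mathfrak{g}_i\cong X_{r_i}$, so a \voa{} homomorphism $\tilde{V}\to W$ is determined by its restriction to $(\tilde{V})_1\cong\bigoplus_i\mathfrak{g}_i$, i.e.\ by a Lie algebra homomorphism $\bigoplus_i\mathfrak{g}_i\to W_1$ realizing each $\mathfrak{g}_i$ at level $k_i$. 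Conversely, any such level-preserving embedding into the (finite-dimensional, reductive by \cite{DM04b}) Lie algebra $W_1$ extends, by the universal property of the affine vertex algebra together with the simplicity of $\mathsf{X}^{(i)}_{r_i,k_i}$, to an embedding of $\tilde{V}$ whose image sub-\voa{}, and hence whose commutant, depends only on the image subalgebra $\bigoplus_i\mathfrak{g}_i\subseteq W_1$.

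It then suffices to observe that a fixed semisimple Lie algebra admits only finitely many $\Inn(W_1)$-conjugacy classes of embeddings into the fixed reductive Lie algebra $W_1$; this is a classical finiteness result for semisimple subalgebras (the image of a semisimple subalgebra necessarily lies in the derived subalgebra $[W_1,W_1]$, reducing to the semisimple case). Two embeddings conjugate by an inner automorphism of $W_1$ lift to embeddings of $\tilde{V}$ conjugate by an inner automorphism of $W$ (\autoref{sec:innerauts}), and such an automorphism carries the commutant of one image onto the commutant of the other, yielding isomorphic coset \voa{}s. Summing the finitely many contributions over the finitely many $W$ shows that $(\mathcal{C},c)$ contains only finitely many positive \voa{}s up to isomorphism.

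The main obstacle — or at least the step requiring the most care — is the passage between \voa{} embeddings and Lie-algebra embeddings in the second paragraph: one must check that level-preservation is the only constraint beyond an abstract Lie algebra embedding, that the simple quotient is automatically hit (so that one genuinely embeds $\mathsf{X}^{(i)}_{r_i,k_i}$ rather than a larger affine vertex algebra), and that passing to the image subalgebra loses no information relevant to the isomorphism type of the commutant. Everything else is bookkeeping: the level constraint and the requirement of primitivity only shrink the relevant set of embeddings, so they are harmless for an upper bound, and the remaining finiteness inputs (finitely many $W$, finitely many semisimple subalgebras up to conjugacy) are invoked as black boxes.
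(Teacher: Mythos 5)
Your proposal is correct and follows essentially the same route as the paper's proof: the same affine seed $\tilde{V}=\mathsf{X}^{(1)}_{r_1,k_1}\otimes\dots\otimes\mathsf{X}^{(n)}_{r_n,k_n}$ fed into \autoref{prop:gluingprinciple}, finiteness of the holomorphic $W$ by hypothesis, and the reduction of embeddings $\iota\colon\tilde{V}\to W$ to their restrictions to $\tilde{V}_1$, with the classical finiteness of semisimple subalgebras of the reductive Lie algebra $W_1$ up to inner automorphisms closing the argument. The extra care you flag about extending Lie-algebra embeddings back to \voa{} embeddings is not actually needed, as you yourself note: for an upper bound it suffices that $\iota$ is determined by $\iota|_{\tilde{V}_1}$ (since $\tilde{V}_1$ strongly generates $\tilde{V}$), which is exactly how the paper argues.
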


\begin{proof}
Because $\overline{\mathcal{C}}$ is a product of Chern--Simons modular tensor categories, we can obtain a  positive seed \voa{} $\tilde{V}$ with $\Rep(\tilde{V})\cong \overline{\mathcal{C}}$ to use as input into the gluing principle by taking $\tilde{V} = \mathsf{X}_{r_1,k_1}^{(1)}\dots\mathsf{X}_{r_n,k_n}^{(n)}.$ Call $\tilde{c}$ the central charge of $\tilde{V}$.

By \autoref{prop:gluingprinciple}, every positive \voa{} $V$ in the genus $(\mathcal{C},c)$ is of the form $\Com_{W}(\iota\tilde{V})$ for some \strat{}, holomorphic \voa{} $W$ of central charge $c+\tilde{c}$, and some primitive embedding $\iota\colon\tilde{V}\hookrightarrow W$. By assumption, the number of such $W$ is finite. On the other hand, we claim the number of primitive embeddings $\iota$ is finite as well. To see this, note that $\iota$ is completely determined by where it sends the weight-1 subspace $\tilde{V}_1$, because $\tilde{V}_1$ strongly generates $\tilde{V}$. Now, $\iota(\tilde{V}_1)\subset W_1$ defines a semi-simple subalgebra of a reductive Lie algebra, of which it is known there are only finitely many up to inner automorphisms in $W$.

Thus, we learn that there are only finitely many ways to form the commutants which enter the gluing principle, and therefore that the number of positive \strat{} \voa{}s in the genus $(\mathcal{C},c)$ is finite.
\end{proof}

\autoref{prop:partialfiniteness} is conditioned on the finiteness of holomorphic \voa{}s, so it is natural to ask how reasonable this assumption is. As we shall see below and in \autoref{sec:holo}, it is known that $(\Vect,c)$ is finite when $c\leq 16$, but not known when $c\geq 24$, as there could be infinitely many ``fake'' moonshine modules (cf.\ \autoref{ex:fake}).


\subsection{Examples}\label{subsec:bulkex}

In general, it is a very difficult problem to classify all isomorphism classes of \strat{} \voa{}s $V$ in a given bulk genus. However, there are some situations where the problem becomes tractable.

The following example is a natural starting point (in particular, in view of \autoref{cor:holemb}) and shall be described in greater detail in \autoref{subsec:holobulk}.
\begin{ex}[Holomorphic \VOA{}s]\label{ex:schellekens}
Recall that a strongly rational vertex operator algebra $V$ is holomorphic if $\Rep(V)\cong\Vect$, i.e.\ if the unique irreducible $V$-module is $V$ itself (so that $V$ is in particular positive). \Strat{}, holomorphic \voa{}s only exist for central charge $c\in8\N$ \cite{Zhu96}. The corresponding bulk genera have been determined through central charge $c\leq 24$, up to an open problem about the uniqueness of the moonshine module,
{\allowdisplaybreaks
\begin{align*}
(\Vect,0)&=\{\C\vac\},\\
(\Vect,8)&=\{V_{E_8}\},\\
(\Vect,16)&=\{V_{E_8^2},V_{D_{16}^+}\},\\
(\Vect,24)&=\{71\text{ Schellekens VOAs}\}.
\end{align*}
}%
At central charge $24$, known as the Schellekens \voa{}s \cite{Sch93}, there are $70$ \voa{}s $V$ with $V_1\neq\{0\}$
and additionally the moonshine module (or monster \voa{}) $V^\natural$ with $V^\natural_1=\{0\}$. While the classification for $V_1\neq\{0\}$ has recently been rigorously established in the mathematical literature, the classification for $V_1=\{0\}$, the moonshine uniqueness problem, is still open (see \autoref{conj:moonshineuniqueness}).

Of these $71$ holomorphic \voa{}s, only $24$ correspond to lattices, the so-called Niemeier lattices. Hence, the genus $(\Vect,24)$, and in fact all genera $(\Vect,c)$ with $c\geq24$ contain \voa{}s that are not lattice \voa{}s. By contrast, the bulk genera for $c=0,8,16$ only contain lattice \voa{}s.

The landscape of holomorphic \voa{}s becomes quite unwieldy at central charges beyond $24$.
\end{ex}

\begin{ex}[Vertex Operator Algebras with Two Irreducible Modules]
The previous example dealt with bulk genera which support \voa{}s with only a single irreducible module. Here, we summarize analogous known results about vertex operator algebras with exactly two irreducible modules, i.e.\ about bulk genera $(\mathcal{C},c)$ with  $\rk(\mathcal{C})=2$. 

For simplicity, we assume that $\mathcal{C}$ is a \emph{unitary} (and in particular pseudo-unitary) modular tensor category. Then, by the classification of modular tensor categories with low rank \cite{RSW09}, it follows that
\begin{equation*}
    \mathcal{C}\in\left\{ (A_1,1), \ (E_7,1), \ (G_2,1),\ (F_4,1)\right\}.
\end{equation*}
By studying the $S$-matrices of these categories, one can easily verify that they are all essentially positive. For example, for $(A_1,1)$, the normalized $S$-matrix is 
\begin{equation*}
    s=\frac{1}{\sqrt{2}}\left(\begin{array}{cc} 1 & 1 \\ 1 & -1\end{array}\right).
\end{equation*}
The only valid indexing set is $J=\{1\}$, and $s_{1,1}=-1/\sqrt{2}<0$. Hence, by \autoref{prop:essentiallypositive}, all strongly rational vertex operator algebras $V$ with $\Rep(V)$ equivalent to one of these categories must necessarily be positive.

Using \autoref{prop:pseudounitarymod8}, we learn that vertex operator algebras $V$ with $\Rep(V)\cong \mathcal{C}$ taken from the above list must have central charge $c$ congruent to $1$, $7$, $14/5$ or $26/5$ modulo $8$, respectively. On the other hand, the absence of vector-valued cusp forms\footnote{A vector-valued modular form for $\mathrm{SL}_2(\Z)$ is a cusp form if the $q$-expansions of its components involve only positive powers of $q$.} transforming with respect to the modular data of these categories rules out the existence of vertex operator algebras with negative central charge. So let $(\mathcal{C},c)$ be the bulk genus $\left((A_1,1),1\right)$ for concreteness. Identical arguments work for the bulk genera $\left((E_7,1),7\right)$, $\left((G_2,1),14/5\right)$ and $\left((F_4,1),26/5\right)$.

Considerations involving vector-valued modular forms \cite{SR23,Ray23} reveal that any two positive, \strat{} vertex operator algebras in the bulk genus $\left((A_1,1),1\right)$ have the same character. In particular, any such \voa{} has character vector equal to that of $\mathsf{A}_{1,1}$. On the other hand, \cite{MNS21} shows that $\mathsf{A}_{1,1}$ is the unique strongly rational vertex operator algebra with character vector equal to $\mathsf{A}_{1,1}$. Thus, $\mathsf{A}_{1,1}$ is the unique positive \voa{} in the bulk genus $\left((A_1,1),1\right)$, and hence, by the essential positivity of the $(A_1,1)$ category, it is the only \voa{} in this bulk genus. Similar comments apply to the theories $\mathsf{E}_{7,1}$, $\mathsf{G}_{2,1}$, and $\mathsf{F}_{4,1}$. These chiral algebras are often called Mathur--Mukhi--Sen theories after the authors of \cite{MMS88}, who showed that their character vectors are distinguished by considerations in the theory of modular  differential equations. 

By deploying the gluing principle, using the Mathur--Mukhi--Sen theories as the ``seed'' vertex operator algebras $\tilde{V}$ and noting that $\overline{(A_1,1)}\cong (E_7,1)$ and $\overline{(G_2,1)}\cong (F_4,1)$, one can obtain classification results for bulk genera of vertex operator algebras with two irreducible modules and higher central charges. In particular, \cite{SR23} classified all positive vertex operator algebras $V$ in bulk genera $(\mathcal{C},c)$ with $0\leq c<25$ and $\mathcal{C}$ unitary with $\rk(\mathcal{C})=2$. We emphasize that, by using \autoref{prop:essentiallypositive}, the result of \cite{SR23} remains valid even after relaxing the positivity assumption on $V$.
\end{ex}

\begin{ex}[Bulk Genera for $2_{\II}^{+2}$]\label{ex:bulk4}
The positive, \strat{} \voa{}s $V$ in the bulk genera $(\Cat(2_{\II}^{+2}),c)$ with $c\leq24$ were determined in \cite{HM23}, assuming \autoref{conj:moonshineuniqueness}. Here, $2_{\II}^{+2}$ denotes the metric group given by the group $\Z_2\times\Z_2$ with the quadratic form taking values $0$, $0$, $0$ and $1/2$. Physically, $\mathcal{C}(2_{\II}^{+2})$ corresponds to the toric code topological order, and it is also the (untwisted) Drinfeld double $\mathcal{D}(\Z_2)$. These bulk genera only exist for $c\in8\Ns$, corresponding to the signature of $2_{\II}^{+2}$. Such \voa{}s are interesting because they appear as even parts of \strat{}, holomorphic \svoa{}s. For example, there are exactly $969$ positive, \strat{} \voa{}s in the bulk genus $(\Cat(2_{\II}^{+2}),24)$ assuming \autoref{conj:moonshineuniqueness}.

One can show that the results of \cite{HM23} continue to hold even after relaxing the assumption that $V$ is positive. The category $\mathcal{C}(2_{\II}^{+2})$ is unfortunately not essentially positive, because there is no $i\in I\coloneqq\{0,\dots,3\}$ such that $s_{ij}^\ast+s_{ij^\ast}^\ast<0$ for $j$ in the valid indexing sets $J=\{0,2\}$ and $J=\{0,3\}$. However, using logic identical to that in the proof of \autoref{prop:essentiallypositive}, one can argue that the only way that $V$ could violate positivity in this bulk genus is if $h_0=h_2=0$ with $h_1,h_3>0$, \emph{or} if $h_0=h_3=0$ with $h_1,h_2>0$. On the other hand, if one uses the data in \cite{Ray23} to compute the possible vector-valued modular forms $\{f_i(\tau)\}_{i\in I}$ that could serve as the character vector of such a \voa{}, then one finds that each $\{f_i(\tau)\}_{i\in I}$ either has negative coefficients or a component which identically vanishes, which is a contradiction. 
Thus, all \strat{} \voa{}s in the bulk genus $(\mathcal{C}(2_{\II}^{+2}),24)$ are necessarily positive.

By splitting off free fermions, one can deduce from this result also a classification of the \strat{} \voa{}s in the bulk genera $(\sfrac{47}{2},(B_7,1))$, $(23,\Cat(4_7^{+1}))$, $(\sfrac{45}{2},(B_5,1))$, etc. For $c\leq 23$ these results are unconditional.
\end{ex}

Similar classification results to the previous few examples have been derived using the gluing principle for most bulk genera $(\mathcal{C},c)$ with $c\leq 24$ and $\mathcal{C}$ of rank (i.e.\ number of simple objects) at most $4$ and pseudo-unitary (i.e.\ with positive categorical dimensions) \cite{HM23,SR23,Ray23}. See especially Table~1 of \cite{Ray23} for a glossary of these results. Again, we emphasize that all such $\mathcal{C}$ (except for $\mathcal{C}(2_{\II}^{+2})$, $(G_2,1)\boxtimes (F_4,1)$, and $(A_1,1)\boxtimes (E_7,1)$, which can be treated separately as in the previous example) are essentially positive, and hence the positivity assumption on \voa{}s can be relaxed in all of these results.

As in the holomorphic case, the number of \strat{} \voa{}s in a genus $(\mathcal{C},c)$ with fixed $\mathcal{C}$ is generally expected to grow quite rapidly with the central charge $c$, such that explicit enumeration of \voa{}s becomes quickly impractical if the central charge is taken to be too large. We will make this expectation sharper when we come to the mass formula in \autoref{sec:hypgenus}.


\section{Hyperbolic Genus}\label{sec:hypgenus}

In this section, we study a second definition  of \voa{} genus due to \cite{Mor21} which naturally generalizes part~\ref{item:latgen3} of \autoref{defi:latticegenus}, and which we call \emph{hyperbolic genus}. Two \voa{}s belong to the same hyperbolic genus if they become isomorphic after tensoring with $V_{\II_{1,1}}$, the lattice vertex algebra associated with the unique, even unimodular lattice $\II_{1,1}$ of signature $(1,1)$. We derive an alternative (and arguably more tractable) characterization of the hyperbolic genus of a \voa{} (see \autoref{thm:hypcomm}) which appeals to its Heisenberg commutant and associated lattice (or, to a physicist, its free and interacting sectors). We use this alternative characterization to show in \autoref{cor:hypbulk} that the hyperbolic genus is a finer equivalence relation than the bulk genus, and to study a mass formula for \voa{}s.


\subsection{Definition and Properties}

Recall that $\II_{1,1}$ denotes the unique even, unimodular lattice of signature $(1,1)$. The corresponding lattice vertex algebra $V_{\II_{1,1}}$ \cite{Bor86,FLM88} (see \autoref{ex:lattice}) is not a \voa{} but rather a conformal vertex algebra, as its $L_0$-weight spaces are infinite-dimensional and the weights are not bounded from below.

The following definition makes sense not just for \voa{}s but for any conformal vertex algebra.
\begin{defi}[Hyperbolic Genus, Preliminary]\label{defi:hypgenprel}
Let $V$ and $V'$ be two conformal vertex algebras. Then $V$ and $V'$ are in the same \emph{hyperbolic genus} if $V\otimes V_{\II_{1,1}}\cong V'\otimes V_{\II_{1,1}}$ as conformal vertex algebras.
\end{defi}
This clearly defines an equivalence relation, and we denote the hyperbolic genus (or equivalence class) of a conformal vertex algebra $V$ by $\hgen(V)$.

Again, we verify that this definition is a reasonable generalization of the notion of lattice genus (cf.\ \autoref{prop:bulkcomp}).
\begin{prop}\label{prop:hypercomp}
The map $j$ from lattice genera of even lattices to hyperbolic genera of conformal vertex algebras defined by
\begin{equation*}
j\colon \gen(L)\mapsto \hgen(V_L)=\hgen(i(L))
\end{equation*}
is well-defined and injective.
\end{prop}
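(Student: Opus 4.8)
The plan is to mirror the proof of \autoref{prop:bulkcomp}, replacing part~\ref{item:latgen2} of \autoref{defi:latticegenus} by part~\ref{item:latgen3}. The one algebraic input I will use throughout is that the lattice vertex algebra construction is monoidal with respect to orthogonal direct sums, i.e.\ $V_{A\oplus B}\cong V_A\otimes V_B$ as conformal vertex algebras for even lattices $A,B$ (with conformal vectors matching as $\omega_{A\oplus B}=\omega_A\otimes\vac+\vac\otimes\omega_B$); this is standard \cite{Bor86,FLM88} and holds for indefinite lattices as well. Granting this, \emph{well-definedness} is immediate: if $\gen(L)=\gen(M)$, then part~\ref{item:latgen3} of \autoref{defi:latticegenus} gives $L\oplus\II_{1,1}\cong M\oplus\II_{1,1}$, whence $V_L\otimes V_{\II_{1,1}}\cong V_{L\oplus\II_{1,1}}\cong V_{M\oplus\II_{1,1}}\cong V_M\otimes V_{\II_{1,1}}$, so that $\hgen(V_L)=\hgen(V_M)$ by \autoref{defi:hypgenprel}.

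For \emph{injectivity}, suppose $\hgen(V_L)=\hgen(V_M)$, i.e.\ $V_L\otimes V_{\II_{1,1}}\cong V_M\otimes V_{\II_{1,1}}$ as conformal vertex algebras. Using monoidality again, this reads $V_{L\oplus\II_{1,1}}\cong V_{M\oplus\II_{1,1}}$, so everything reduces to the reconstruction statement that an isomorphism of lattice conformal vertex algebras $V_N\cong V_{N'}$ forces an isometry $N\cong N'$ of even lattices; applying this to $N=L\oplus\II_{1,1}$ and $N'=M\oplus\II_{1,1}$ and invoking part~\ref{item:latgen3} of \autoref{defi:latticegenus} once more then yields $\gen(L)=\gen(M)$. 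To prove the reconstruction statement I would argue as follows. An isomorphism $\psi\colon V_N\to V_{N'}$ preserves the vacuum, the conformal vector, and all products, hence sends the standard Cartan subalgebra $\hh_N$ of $V_N$ (see \autoref{ex:latticecartan}) to a Cartan subalgebra of $V_{N'}$ in the sense of \autoref{defi:va_cartan}. If $\psi$ can be normalized so that $\psi(\hh_N)=\hh_{N'}$, then it intertwines the eigenspace decompositions under the Cartan zero-modes, i.e.\ the lattice gradings $V_N=\bigoplus_{\alpha\in N}M_{\hat\hh}(1,\alpha)$ and $V_{N'}=\bigoplus_{\beta\in N'}M_{\hat{\hh'}}(1,\beta)$, inducing a bijection $f\colon N\to N'$; that $\psi$ is a homomorphism forces $f$ to be a group isomorphism, and preservation of conformal weights (the ground state of $M_{\hat\hh}(1,\alpha)$ has weight $\langle\alpha,\alpha\rangle/2$) together with the one-mode pairing of \autoref{rem:cartan} forces $f$ to preserve the quadratic form. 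Thus $f$ is an isometry $N\cong N'$, and in particular the signature is recovered automatically.

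The hard part is precisely the normalization $\psi(\hh_N)=\hh_{N'}$, i.e.\ the uniqueness of Cartan subalgebras up to (inner) automorphism. For positive-definite $N$ this is classical, but here $N=L\oplus\II_{1,1}$ is indefinite, so $V_N$ is genuinely only a conformal vertex algebra: its weight spaces are infinite-dimensional and unbounded below, and $(V_N)_1$ is an infinite-dimensional Borcherds-type Lie algebra rather than a finite-dimensional reductive one. Consequently \autoref{cor:cartan_unique} does not apply, and indeed the conjugacy of Cartan subalgebras in this setting is exactly the open \autoref{prob:cartanconj}; a priori the abstract isomorphism $\psi$ need not carry the standard Cartan to the standard Cartan. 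I would resolve this either by establishing Cartan conjugacy directly for lattice conformal vertex algebras (where the explicit Fock-space model makes the toral subalgebras of $(V_N)_1$ tractable), or, following \cite{Mor21}, by working with the refinement of \autoref{defi:hypgenprel} that records the choice of Cartan subalgebra, so that the isomorphism respects it by hypothesis. As a consistency check on the discriminant-form part of the conclusion, note that $\Rep(V_N)\cong\mathcal{C}(N'/N)$ (see \autoref{ex:lattice}) is an invariant of the conformal vertex algebra, which recovers $L'/L\cong M'/M$ directly via the classification of pointed braided fusion categories; what this categorical shortcut cannot see, and what the reconstruction above is really needed for, is the full signature rather than just the rank and the signature modulo $8$.
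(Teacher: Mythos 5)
Your proposal is correct and is essentially the paper's own proof: well-definedness via the monoidality $V_{L\oplus\II_{1,1}}\cong V_L\otimes V_{\II_{1,1}}$, and injectivity by passing back to $V_{L\oplus\II_{1,1}}\cong V_{M\oplus\II_{1,1}}$ and recovering the isometry $L\oplus\II_{1,1}\cong M\oplus\II_{1,1}$ from the lattice vertex algebra. The only difference is one of detail: the paper asserts that last implication without comment, whereas you correctly isolate it as the nontrivial point --- recovering the lattice grading (and hence the full signature, which the discriminant form and rank alone do not determine) hinges on conjugacy of Cartan subalgebras, which for indefinite lattice conformal vertex algebras is exactly the open \autoref{prob:cartanconj} --- and your fallback of recording the Cartan subalgebra as in \autoref{defi:hypgen} is precisely how the paper itself handles the refined version of the hyperbolic genus immediately after this proposition.
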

Compared to \autoref{prop:bulkcomp} for the bulk genus, we dropped the assumption that the lattices are positive-definite and correspondingly that the conformal vertex algebras are (\strat{}) \voa{}s, i.e.\ we extended the domain and codomain of the map $j$.
\begin{proof}
Suppose $L$ and $M$ are even lattices in the same genus, i.e.\ $L\oplus\II_{1,1}\cong M\oplus\II_{1,1}$. Then $V_L\otimes V_{\II_{1,1}}\cong V_{L\oplus\II_{1,1}}\cong V_{M\oplus\II_{1,1}}\cong V_M\otimes V_{\II_{1,1}}$ so that $V_L$ and $V_M$ are in the same hyperbolic genus. Hence, $j$ is well-defined.

Now, suppose that the lattice vertex algebras $V_L$ and $V_M$ are in the same hyperbolic genus, i.e.\ $V_L\otimes V_{\II_{1,1}}\cong V_M\otimes V_{\II_{1,1}}$. Then $V_{L\oplus\II_{1,1}}\cong V_{M\oplus\II_{1,1}}$ and therefore $L\oplus\II_{1,1}\cong M\oplus\II_{1,1}$ so that the lattices $L$ and $M$ are in the same genus. Hence, $j$ is injective.
\end{proof}

As for the bulk genus in \autoref{sec:bulkgenus}, we obtain the following commutative diagram:
\begin{equation}\label{eq:voahgen}
\begin{tikzcd}[ampersand replacement=\&]
\text{lattices}\arrow[twoheadrightarrow]{rr}{\gen}\arrow[hookrightarrow]{dd}[swap]{i} \&\& \text{lattice genera}\arrow[hookrightarrow]{dd}{j}\\\\
\text{conf.\ VAs}\arrow[twoheadrightarrow]{rr}{\hgen} \&\& \text{hyp.\ genera}
\end{tikzcd}
\end{equation}

The above proposition shows that two lattice vertex algebras in the same hyperbolic genus correspond to lattices in the same lattice genus. The same question can be asked for the associated lattices (see \autoref{sec:asslat}) of non-lattice vertex algebras. Indeed, it is already known from \cite{Mor21} (see, e.g., the beginning of Section 4.2 of op.\ cit.) that the associated lattices of \voa{}s in the same hyperbolic genus are in the same lattice genus. We shall make this relationship more precise in \autoref{thm:hypcomm} below.

\medskip

We point out that the (preliminary) version of hyperbolic genus presented in \autoref{defi:hypgenprel} differs slightly from the one given in \cite{Mor21}. While we believe that the above definition is the more natural one, it is easier to work with the slightly more restrictive version given in \cite{Mor21}, as we shall explain in the following.

For simplicity, assume from now on that $V$ and $V'$ are \strat{} \voa{}s. Then, in particular, all Cartan subalgebras of $V$ are conjugate, as are those of $V'$.

Let $\hh$ and $\hh'$ be any choices of Cartan subalgebras of $V$ and $V'$, respectively. Moreover, we take the standard choice of Cartan subalgebra $\hh_{\II_{1,1}}$ of the lattice vertex algebra $V_{\II_{1,1}}$. Then $\hh\oplus\hh_{\II_{1,1}}$ and $\hh'\oplus\hh_{\II_{1,1}}$ are Cartan subalgebras of $V\otimes V_{\II_{1,1}}$ and $V'\otimes V_{\II_{1,1}}$, respectively. Now, in \cite{Mor21} it is additionally demanded that the isomorphism $V\otimes V_{\II_{1,1}}\cong V'\otimes V_{\II_{1,1}}$ map these Cartan subalgebras into each other, leading to the following modified definition.
\begin{defi}[Hyperbolic Genus]\label{defi:hypgen}
Let $V$ and $V'$ be \strat{} \voa{}s. Then $V$ and $V'$ are in the same \emph{hyperbolic genus} if for some (and hence for any) choices of Cartan subalgebras $\hh$ and $\hh'$ of $V$ and $V'$, respectively, there is an isomorphism $V\otimes V_{\II_{1,1}}\cong V'\otimes V_{\II_{1,1}}$ of conformal vertex algebras mapping $\hh\oplus\hh_{\II_{1,1}}$ to $\hh'\oplus\hh_{\II_{1,1}}$.
\end{defi}
Clearly, this definition (which still defines an equivalence relation) is more restrictive than the above one, i.e.\ two \strat{} \voa{}s that are in the same genus in the above sense are also in the same genus in the weaker sense of \autoref{defi:hypgenprel}, simply by forgetting about the Cartan subalgebras. The converse would be true if one could show that all Cartan subalgebras of $V\otimes V_{\II_{1,1}}\cong V'\otimes V_{\II_{1,1}}$ were conjugate.

\begin{conj}
Let $V$ be a \strat{} \voa{} and $V_L$ the conformal vertex algebra associated with an even lattice $L$. Then all Cartan subalgebras of the conformal vertex algebra $V\otimes V_L$ are conjugate.
\end{conj}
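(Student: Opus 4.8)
The plan is to imitate the proof of \autoref{cor:cartan_unique}, reducing the conjugacy of Cartan subalgebras of $W\coloneqq V\otimes V_L$ to a conjugacy statement for its weight-one ``current Lie algebra,'' while confronting the fact that, for indefinite $L$, this Lie algebra is no longer finite-dimensional but rather the (Borcherds--)Kac--Moody algebra attached to the lattice. First I would set up an intrinsic analog of \autoref{prop:asslatdecomp}: given any Cartan subalgebra $\mathfrak{k}$ of $W$, the observations in \autoref{rem:cartan} show that $\mathfrak{k}$ generates a Heisenberg vertex subalgebra $H_{\mathfrak{k}}$, once one verifies (as in \cite{Mas14}) that the bilinear form on $\mathfrak{k}$ is non-degenerate. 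I would then form $C_{\mathfrak{k}}\coloneqq\Com_W(H_{\mathfrak{k}})$ and the double commutant $\Com_W(C_{\mathfrak{k}})\cong V_{L_{\mathfrak{k}}}$ for some even, now possibly indefinite, lattice $L_{\mathfrak{k}}$. Extending the dual-pair and simple-current analysis of \cite{Mas14,CKLR19} to this non-rational, indefinite setting — the first technical hurdle — should produce a decomposition $W\cong\bigoplus C_{\mathfrak{k}}^{\tau_{\mathfrak{k}}(\cdot)}\otimes V_{\cdot+L_{\mathfrak{k}}}$ paralleling \autoref{prop:asslatdecomp}.

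The next step is to argue that the pieces of this decomposition are independent of the choice of $\mathfrak{k}$. The interacting part $C_{\mathfrak{k}}$ should be characterized intrinsically, as (up to isomorphism) the commutant of a maximal Heisenberg subalgebra of $W$, a vertex operator subalgebra whose weight-one space vanishes; it should therefore be isomorphic to the Heisenberg commutant $C$ of $V$ itself, independently of $\mathfrak{k}$. Likewise $L_{\mathfrak{k}}$ should be constrained, at least up to genus, to agree with $L_V\oplus L$, where $L_V$ is the associated lattice of $V$, with the matching of $C_{\mathfrak{k}}$ and $C$ pinning it down to isometry as in \autoref{thm:hypcomm}. Granting this, any two Cartan subalgebras $\mathfrak{k},\mathfrak{k}_0$ yield two presentations of $W$ with isomorphic interacting sectors and isometric associated lattices, so the task reduces to realizing the abstract isomorphism between these presentations by an \emph{inner} automorphism of $W$.

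To make sense of the requisite inner automorphisms despite the convergence issues flagged in the discussion of \autoref{prob:cartanconj}, I would work with an explicit and well-defined supply of generators: the semisimple exponentials $e^{h_0}$ with $h\in\mathfrak{k}$, which are always defined since $h_0$ acts semisimply; the image of $\Inn(V)\otimes\id$; and, crucially, the exponentials $e^{(\ee^\beta)_0}$ attached to the vertex operators $\ee^\beta$ of the norm-two vectors $\beta$ of $L_{\mathfrak{k}}$. I expect each $(\ee^\beta)_0$ to be locally nilpotent on $W$: acting on a vector of lattice grade $\alpha$ it shifts the grade to $\alpha+\beta$ and raises the pairing $\langle\alpha,\beta\rangle$ by $\langle\beta,\beta\rangle=2$ at every step, so for conformal-weight reasons the iterated image must vanish after finitely many steps, and since every vector has finite support in the grading this gives honest local nilpotence. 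Thus each $e^{(\ee^\beta)_0}$ is a genuine automorphism, and these root exponentials together with the torus $e^{h_0}$ can play the role of the unipotent and semisimple generators used in the finite-dimensional conjugacy argument.

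The main obstacle is the final transitivity statement: that the group generated above acts transitively on the Cartan subalgebras of $W$. In the finite-dimensional setting this rests on the fact that regular semisimple elements have toral centralizers, combined with a connectedness or density argument, and transporting it to $W$ amounts to a Peterson--Kac-type conjugacy theorem for the infinite-dimensional Borcherds--Kac--Moody algebra of the indefinite lattice $L_{\mathfrak{k}}$. It is precisely the indefiniteness of $L$ — which makes $W_1$ infinite-dimensional and the ambient root system infinite — that blocks a direct reduction to finite-dimensional Lie theory, and this is the reason the statement is offered only as a conjecture.
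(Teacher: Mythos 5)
You should first be clear that the paper contains no proof of this statement: it is stated verbatim as an open conjecture (immediately after \autoref{defi:hypgen}), recorded precisely because its truth would make the two versions of the hyperbolic genus, \autoref{defi:hypgenprel} and \autoref{defi:hypgen}, coincide. So there is no argument of the authors to compare yours against; the only question is whether your proposal closes the gap, and, as you yourself concede in the final paragraph, it does not. Your sketch is a reasonable research program, but each of its three stages contains a genuine unproven step, not just the last one.

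Concretely: (i) your very first step assumes an extension of \autoref{prop:asslatdecomp} to an \emph{arbitrary} Cartan subalgebra $\mathfrak{k}$ of $W=V\otimes V_L$, but that proposition rests on \cite{Mas14,CKLR19} and the mirror-extension machinery of \cite{CKM22,Lin17}, all of which require strong rationality — finite-dimensional weight spaces, weights bounded below, $\Rep$ a modular tensor category. For indefinite $L$ the algebra $W$ violates all of these, and even the non-degeneracy of the bilinear form on $\mathfrak{k}$, which in the rational case is Mason's theorem, is unknown here; this is the same underdevelopment of the representation theory of conformal vertex algebras that the paper flags in \autoref{sec:altproof}. (ii) Your identification of $C_{\mathfrak{k}}$ independently of $\mathfrak{k}$ is circular: in the rational setting that independence (the remark after \autoref{defi:asslat}) is \emph{deduced} from conjugacy of Cartan subalgebras of the finite-dimensional Lie algebra $V_1$, i.e.\ from the very statement you are generalizing; and \autoref{thm:hypcomm} only compares Cartans of product form $\hh\oplus\hh_{\II_{1,1}}$, whereas the entire content of the conjecture is that an arbitrary $\mathfrak{k}\subset W$ is conjugate to such a product. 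Note also that $W_1$ is infinite-dimensional and need not even be a Lie algebra under the zero mode (the skew-symmetry argument of \autoref{sec:innerauts} uses CFT-type), so the reduction to a weight-one Lie algebra that powers \autoref{cor:cartan_unique} is unavailable. (iii) Your local-nilpotence claim for $(\ee^\beta)_0$ with $\langle\beta,\beta\rangle=2$ is plausible — each vector of $W$ has finite support in the lattice grading, the minimal weight of the Fock module at grade $\alpha+n\beta$ grows like $n^2$, and zero modes preserve conformal weight — but the decisive transitivity step, a Peterson--Kac-type conjugacy theorem for the resulting infinite root system, is open even for \BKMa{}s with imaginary simple roots, and $W_1$ is not literally such an algebra. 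In short, your proposal correctly maps the terrain and correctly locates the obstructions, but it proves nothing beyond what the paper already knows, which is why the statement is, and remains, a conjecture.
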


In this text, by hyperbolic genus, we shall mean the one according to \autoref{defi:hypgen}, unless otherwise noted, and shall denote it by $\hgen$ from now on.

It is not difficult to see that \autoref{prop:hypercomp} and the diagram~\eqref{eq:voahgen}, in both cases restricted to positive-definite lattices and \strat{} \voa{}s, still hold for this version. The only non-trivial assertion is that if $L$ and $M$ are positive-definite, even lattices in the same genus, then we can find an isomorphism $V_L\otimes V_{\II_{1,1}}\cong V_M\otimes V_{\II_{1,1}}$ that moreover maps $\hh_L\oplus\hh_{\II_{1,1}}$ to $\hh_M\oplus\hh_{\II_{1,1}}$ for the standard choices of Cartan subalgebras $\hh_L$ and $\hh_M$ of $V_L$ and $V_M$, respectively. But the isometry $L\oplus\II_{1,1}\cong M\oplus\II_{1,1}$ extends to the complexifications on both sides, which are naturally isometric to the Cartan subalgebras $\hh_L\oplus\hh_{\II_{1,1}}$ and $\hh_M\oplus\hh_{\II_{1,1}}$, proving the claim.

\begin{remph}\label{remph:currentcurrent}
We briefly comment on an equivalent formulation of hyperbolic genus which was described in Theorem~6.2 of \cite{Mor23}, and which is more natural to physicists than \autoref{defi:hypgen}. We assume again for simplicity that $V$ and $V'$ are strongly rational chiral algebras and write $\mathcal{T}_R^{\mathrm{circ}}$ for the $(c_L,c_R)=(1,1)$  conformal field theory of a compact free boson with radius $R$. In op.\ cit.\ it was shown that $V$ and $V'$ belong to the same hyperbolic genus if and only if one can pass from $V\otimes \mathcal{T}_R^{\mathrm{circ}}$ to $V'\otimes \mathcal{T}_R^{\mathrm{circ}}$ by performing a current-current deformation, i.e.\ by deforming by exactly marginal operators formed out of currents.

In the case that $V$ is part of the worldsheet of a string theory (and similarly for~$V')$, the hyperbolic equivalence of $V$ and $V'$ implies that, once one dimensionally reduces on a circle, the two string theories live on the same connected component of the resulting vacuum moduli space.
\end{remph}


\subsection{Alternative Characterization}\label{subsec:equivchar}

Next, we give a novel characterization of the hyperbolic genus in the context of \strat{} \voa{}s, one which is often easier to work with than \autoref{defi:hypgen}.  This definition will be based on the associated lattice and Heisenberg commutant introduced in \autoref{sec:asslat}. One virtue of this alternative characterization is that it will allow us to demonstrate, in \autoref{cor:hypbulk}, that two vertex operator algebras belonging to the same hyperbolic genus also must belong to the same bulk genus. 

Let $V$ and $W$ be \strat{} \voa{}s. We recall from \autoref{prop:asslatdecomp} that, depending on choices $\hh$ and $\hh'$, respectively, of Cartan subalgebras, they decompose as
\begin{align*}
V&=\bigoplus_{\alpha+L\in A}C^{\tau(\alpha+L)}\otimes V_{\alpha+L},\\
W&=\bigoplus_{\alpha+K\in B}D^{\sigma(\alpha+K)}\otimes V_{\alpha+K}.
\end{align*}
Here, $L$ and $K$ are the (positive-definite, even) associated lattices of $V$ and $W$, and $C$ and $D$ are the (\strat{}) Heisenberg commutants of $V$ and $W$, respectively. Furthermore, $A<L'/L$ and $B<K'/K$ encode the modules of $V_L$ and $V_K$ arising in $V$ and $W$, respectively, and we denote the corresponding full subcategories by $\Rep(V_L\vert V)\coloneqq\mathcal{C}(A)\subset \Rep(V_L)\cong\mathcal{C}(L'/L)$ and $\Rep(V_K\vert W)\coloneqq\mathcal{C}(B)\subset \Rep(V_K)\cong\mathcal{C}(K'/K)$. Similarly, there are pointed full subcategories $\Rep(C\vert V)$ of $\Rep(C)$ and $\Rep(D\vert W)$ of $\Rep(D)$, and finally ribbon-reversing equivalences $\tau\colon\Rep(V_L\vert V)\to\Rep(C\vert V)$ and $\sigma\colon\Rep(V_K\vert W)\to\Rep(D\vert W)$, respectively. We fix this notation for the remainder of the section. 

We recall that an isometry of lattices $L\oplus\II_{1,1}\to K\oplus\II_{1,1}$ naturally induces
an isometry of metric groups $(L\oplus\II_{1,1})'/(L\oplus\II_{1,1})\to (K\oplus\II_{1,1})'/(K\oplus\II_{1,1})$ and hence also an isometry $L'/L\to K'/K$. 
On the level of vertex algebras, this says that an isomorphism of conformal vertex algebras $\phi\colon V_L\otimes V_{\II_{1,1}}\to V_K\otimes V_{\II_{1,1}}$ induces a ribbon equivalence $\bar\phi\colon\Rep(V_L)\to\Rep(V_K)$ (cf.\ \autoref{cor:hypbulk} and \autoref{sec:altproof}).

Similarly, any \voa{} automorphism $\theta\colon C\to D$ naturally induces a ribbon equivalence $\bar\theta\colon\Rep(C)\to\Rep(D)$ on the corresponding modular tensor categories.

We are now in a position to give an equivalent definition of hyperbolic genus.
\begin{defi}[Hyperbolic Genus, Alternative]\label{defi:commgenus}
Let $V$ and $W$ be \strat{} \voa{}s, with choices $\hh$ and $\hh'$, respectively, of Cartan subalgebras and the corresponding decomposition in \autoref{prop:asslatdecomp}. Then $V$ and $W$ are in the same \emph{hyperbolic genus} if:
\begin{enumerate}[label=(\alph*)]
\item\label{item:gen1} There exists a \voa{} isomorphism $\theta\colon C\xrightarrow{\sim} D$ of their Heisenberg commutants.
\item\label{item:gen2} There exists an isomorphism $\phi\colon V_L\otimes V_{\II_{1,1}}\xrightarrow{\sim} V_K\otimes V_{\II_{1,1}}$ of conformal vertex algebras (which we can and will assume to map $\hh\oplus\hh_{\II_{1,1}}$ to $\hh'\oplus\hh_{\II_{1,1}}$), so that in particular the associated lattices $L$ and $K$ are in the same genus.
\item\label{item:gen3} The induced ribbon equivalences $\bar\theta\colon\Rep(C)\to\Rep(D)$ and $\bar\phi\colon\Rep(V_L)\to \Rep(V_K)$ are such that:
\begin{enumerate}[label=(\roman*)]
\item $\bar\theta$ restricts to an equivalence $\vartheta\colon\Rep(C\vert V)\to\Rep(D\vert W)$.
\item $\bar\phi$ restricts to an equivalence $\varphi\colon\Rep(V_L\vert V)\to\Rep(V_K\vert W)$.
\item $\sigma\circ\varphi=\vartheta\circ\tau$.
\end{enumerate}
\end{enumerate}
\end{defi}

The last point in the definition means that the diagram
\begin{equation*}
\begin{tikzcd}[ampersand replacement=\&]
\Rep(V_L\vert V)\arrow[rightarrow]{r}{\tau}\arrow[rightarrow]{d}[swap]{\varphi}\&\Rep(C\vert V)\arrow[rightarrow]{d}{\vartheta}\\
\Rep(V_K\vert W)\arrow[rightarrow]{r}{\sigma}\&\Rep(D\vert W)
\end{tikzcd}
\end{equation*}
commutes, where the vertical arrows are ribbon equivalences and the horizontal ones ribbon-reversing equivalences. We also see that $A\cong B$ as subgroups of $L'/L\cong K'/K$.

The first main theorem of this section is the following result that shows that this definition is equivalent to previous one.
\begin{thm}\label{thm:hypcomm}
Two \strat{} \voa{}s $V$ and $W$ belong to the same hyperbolic genus as in \autoref{defi:commgenus} if and only if they are in the same hyperbolic genus according to \autoref{defi:hypgen}.
\end{thm}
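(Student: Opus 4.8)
The plan is to prove both implications by first setting up a dictionary that records how the Heisenberg commutant, associated lattice and gluing data of $V$ behave under tensoring with $V_{\II_{1,1}}$, and then transporting an isomorphism $\Phi\colon V\otimes V_{\II_{1,1}}\xrightarrow{\sim} W\otimes V_{\II_{1,1}}$ (in one direction) or the pair $(\theta,\phi)$ (in the other) across this dictionary. For the preliminary step, fix Cartan subalgebras $\hh,\hh'$ of $V,W$ and the standard Cartan $\hh_{\II_{1,1}}$ of $V_{\II_{1,1}}$, so that $\hh\oplus\hh_{\II_{1,1}}$ is a Cartan subalgebra of $V\otimes V_{\II_{1,1}}$. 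Since commutants factor over tensor products,
\[
\Com_{V\otimes V_{\II_{1,1}}}(\langle\hh\oplus\hh_{\II_{1,1}}\rangle)=\Com_V(\langle\hh\rangle)\otimes\Com_{V_{\II_{1,1}}}(\langle\hh_{\II_{1,1}}\rangle)=C\otimes\C\vac=C,
\]
using that the full-rank Heisenberg $\langle\hh_{\II_{1,1}}\rangle$ has trivial commutant in $V_{\II_{1,1}}$ (a charge-zero argument, valid since the form on $\II_{1,1}$ is non-degenerate). Taking the double commutant gives $\Com_{V\otimes V_{\II_{1,1}}}(C)=V_L\otimes V_{\II_{1,1}}=V_{L\oplus\II_{1,1}}$, so the associated lattice of $V\otimes V_{\II_{1,1}}$ is $L\oplus\II_{1,1}$. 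Tensoring the decomposition of \autoref{prop:asslatdecomp} with $V_{\II_{1,1}}$ and using the canonical ribbon identification $\Rep(V_{L\oplus\II_{1,1}})\cong\Rep(V_L)$ coming from the isometry $(L\oplus\II_{1,1})'/(L\oplus\II_{1,1})\cong L'/L$ (here $\II_{1,1}$ is unimodular), one obtains the decomposition of $V\otimes V_{\II_{1,1}}$ into $C^{\tau(\alpha+L)}\otimes V_{\alpha+(L\oplus\II_{1,1})}$ with the same subgroup $A$, the same $C$-module content, and the same gluing $\tau$, and analogously for $W\otimes V_{\II_{1,1}}$ with data $(D,K,B,\sigma)$.

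For the implication \autoref{defi:hypgen}$\Rightarrow$\autoref{defi:commgenus}, I would start from an isomorphism $\Phi$ sending $\hh\oplus\hh_{\II_{1,1}}$ to $\hh'\oplus\hh_{\II_{1,1}}$. Then $\Phi$ maps the generated Heisenberg subalgebra to its counterpart, and since isomorphisms intertwine commutants it restricts to isomorphisms $\theta\coloneqq\Phi|_C\colon C\to D$ and $\phi\coloneqq\Phi|_{V_{L\oplus\II_{1,1}}}\colon V_{L\oplus\II_{1,1}}\to V_{K\oplus\II_{1,1}}$, yielding \ref{item:gen1} and \ref{item:gen2} (and $L\oplus\II_{1,1}\cong K\oplus\II_{1,1}$, so $L,K$ lie in the same lattice genus). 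Because $\Phi$ carries the dual pair $C\otimes V_{L\oplus\II_{1,1}}$ isomorphically onto $D\otimes V_{K\oplus\II_{1,1}}$ via $\theta\otimes\phi$, it is an isomorphism of the corresponding module decompositions and hence sends each (multiplicity-one) simple summand $C^{\tau(\alpha+L)}\otimes V_{\alpha+(L\oplus\II_{1,1})}$ to a simple summand of $W\otimes V_{\II_{1,1}}$; the module category of a tensor product being a Deligne product forces this image to be $\bar\theta(C^{\tau(\alpha+L)})\otimes\bar\phi(V_{\alpha+(L\oplus\II_{1,1})})$. Matching against the summands $D^{\sigma(\beta+K)}\otimes V_{\beta+(K\oplus\II_{1,1})}$ then reads off $\bar\phi(A)=B$, the restriction properties of $\bar\theta$ and $\bar\phi$, and the identity $\vartheta\circ\tau=\sigma\circ\varphi$, i.e.\ exactly condition \ref{item:gen3}.

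For the converse \autoref{defi:commgenus}$\Rightarrow$\autoref{defi:hypgen}, I would take $(\theta,\phi)$ satisfying \ref{item:gen1}–\ref{item:gen3} and form the isomorphism $\theta\otimes\phi\colon C\otimes V_{L\oplus\II_{1,1}}\to D\otimes V_{K\oplus\II_{1,1}}$, which induces the equivalence $\bar\theta\boxtimes\bar\phi$ of the Deligne-product module categories. The commuting square in \ref{item:gen3} says precisely that this equivalence carries the commutative (simple-current) algebra object whose module is $V\otimes V_{\II_{1,1}}$ to the one whose module is $W\otimes V_{\II_{1,1}}$. By the theory of simple-current and mirror extensions \cite{CKM22,Lin17}, an equivalence of the ambient category preserving the extension algebra object lifts to an isomorphism of the extensions; concretely one builds $\Phi$ summand by summand using $\theta$ on the $C$-factors and $\phi$ on the lattice factors, the gluing compatibility guaranteeing that each target summand occurs in $W\otimes V_{\II_{1,1}}$. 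By construction $\Phi$ sends $\hh\oplus\hh_{\II_{1,1}}$ to $\hh'\oplus\hh_{\II_{1,1}}$, establishing \autoref{defi:hypgen}.

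The hard part will be this last reconstruction step: passing from the categorical gluing data back to an honest isomorphism of conformal vertex algebras. This rests on the rigidity of simple-current extensions, namely that the extension of the dual pair $C\otimes V_{L\oplus\II_{1,1}}$ is determined up to isomorphism by the subgroup $A$ together with the ribbon-reversing gluing $\tau$, so that matched data produce isomorphic extensions. In particular the scalar ambiguities in choosing the summand-wise module isomorphisms must be shown to be absorbable into a genuine vertex-algebra isomorphism, for which the framework of \cite{CKM22,Lin17} (already underlying \autoref{prop:asslatdecomp}) is the essential input; the remaining directions are largely a matter of carefully reading off structure.
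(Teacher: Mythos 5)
Your proposal is correct and follows essentially the same route as the paper's proof: in one direction you restrict the isomorphism (which preserves $\hh\oplus\hh_{\II_{1,1}}$) to the Heisenberg commutant and its double commutant to obtain $\theta$ and $\phi$ and then match the decompositions over the dual pair, and in the other you assemble $\theta\otimes\phi$, transport the simple-current decomposition through the induced ribbon equivalences, and invoke uniqueness of simple-current extensions (the paper cites the elementary argument of Proposition~5.3 in \cite{DM04b} where you cite \cite{CKM22,Lin17}) to recover an isomorphism of conformal vertex algebras respecting the Cartan subalgebras. The only substantive difference is presentational: where you read off condition \ref{item:gen3} directly from the matching of the pairwise non-isomorphic summands $C^{\tau(\alpha+L)}\otimes V_{\alpha+(L\oplus\II_{1,1})}$ (legitimate, since distinct cosets give non-isomorphic lattice modules), the paper more cautiously allows the residual isomorphism between $V\otimes V_{\II_{1,1}}$ and the pulled-back $W\otimes V_{\II_{1,1}}$ to restrict to automorphisms $f\in\Aut(C)$ and $g\in\Aut(V_L\otimes V_{\II_{1,1}})$ and then corrects $\theta$ and $\phi$ by these before concluding.
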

\begin{proof}
Suppose that $V$ and $W$, with choices of Cartan subalgebras $\hh$ and $\hh'$, respectively, are in the same genus according to \autoref{defi:commgenus}. Then, applying the automorphism $(\theta\otimes\phi)^{-1}\colon D\otimes V_{\alpha+K}\otimes V_{\II_{1,1}}\to C\otimes V_{\alpha+L}\otimes V_{\II_{1,1}}$, the corresponding ribbon equivalence allows us to rewrite $W\otimes V_{\II_{1,1}}$ as
{\allowdisplaybreaks
\begin{align*}
W\otimes V_{\II_{1,1}}&=\bigoplus_{\alpha+K\in B}D^{\sigma(\alpha+K)}\otimes V_{\alpha+K}\otimes V_{\II_{1,1}}\\
&\cong\bigoplus_{\alpha+K\in B}C^{\bar\theta^{-1}(\sigma(\alpha+K))}\otimes V_{\bar\phi^{-1}(\alpha+K)}\otimes V_{\II_{1,1}}\\
&=\bigoplus_{\alpha+L\in \bar\phi^{-1}(B)}C^{\bar\theta^{-1}(\sigma(\bar\phi(\alpha+L)))}\otimes V_{\alpha+L}\otimes V_{\II_{1,1}}\\
&=\bigoplus_{\alpha+L\in A}C^{\tau(\alpha+L)}\otimes V_{\alpha+L}\otimes V_{\II_{1,1}}\\
&=V\otimes V_{\II_{1,1}}.
\end{align*}
}%
This shows that $V\otimes V_{\II_{1,1}}$ and $W\otimes V_{\II_{1,1}}$ are isomorphic as simple-current extensions of $C\otimes V_{\alpha+L}\otimes V_{\II_{1,1}}\cong D\otimes V_{\alpha+K}\otimes V_{\II_{1,1}}$, and hence, by an elementary argument (cf.\ Proposition~5.3 in \cite{DM04b}), they are isomorphic as conformal vertex algebras. Moreover, recall that the decompositions of $V$ and $W$ depended on the choice of Cartan subalgebras $\hh$ and $\hh'$ of $V$ and $W$, respectively. By assumption, the above isomorphism maps $\hh\oplus\hh_{\II_{1,1}}$ to $\hh'\oplus\hh_{\II_{1,1}}$. Therefore, $V$ and $W$ are in the same hyperbolic genus according to \autoref{defi:hypgen}.

\smallskip

We now prove the converse statement. Suppose that $V$ and $W$, with the same choices of Cartan subalgebras as before, are in the same hyperbolic genus as in  \autoref{defi:hypgen}, i.e.\ that $V\otimes V_{\II_{1,1}}\cong W\otimes V_{\II_{1,1}}$ and that this isomorphism maps $\hh\oplus\hh_{\II_{1,1}}$ to $\hh'\oplus\hh_{\II_{1,1}}$. Then
\begin{equation*}
C=\Com_{V\otimes V_{\II_{1,1}}}(\hh\oplus\hh_{\II_{1,1}})\cong \Com_{W\otimes V_{\II_{1,1}}}(\hh'\oplus\hh_{\II_{1,1}})=D,
\end{equation*}
say via the isomorphism $\theta$, proving \ref{item:gen1}. Similarly, by taking the double commutants, we obtain
\begin{equation*}
V_L\otimes V_{\II_{1,1}}\cong V_K\otimes V_{\II_{1,1}}
\end{equation*}
as conformal vertex algebras, say via the isomorphism $\phi$, which moreover maps $\hh\oplus\hh_{\II_{1,1}}$ to $\hh'\oplus\hh_{\II_{1,1}}$ by construction. This proves proves \ref{item:gen2}.
In particular, $L\oplus\II_{1,1}\cong K\oplus\II_{1,1}$, i.e.\ $L$ and $K$ are in the same lattice genus. Now, consider as before
\begin{align*}
W\otimes V_{\II_{1,1}}=\bigoplus_{\alpha+K\in B}D^{\sigma(\alpha+K)}\otimes V_{\alpha+K}\otimes V_{\II_{1,1}}
\end{align*}
and via the ribbon equivalence induced by $(\theta\otimes\phi)^{-1}$,
\begin{align*}
(\bar\theta\otimes\bar\phi\otimes\id)^{-1}(W\otimes V_{\II_{1,1}})&=\bigoplus_{\alpha+K\in B}C^{\bar\theta^{-1}(\sigma(\alpha+K))}\otimes V_{\bar\phi^{-1}(\alpha+K)}\otimes V_{\II_{1,1}}\\
&=\bigoplus_{\alpha+L\in \bar\phi^{-1}(B)}C^{\bar\theta^{-1}(\sigma(\bar\phi(\alpha+L)))}\otimes V_{\alpha+L}\otimes V_{\II_{1,1}},
\end{align*}
which is assumed to be isomorphic to
\begin{equation*}
V\otimes V_{\II_{1,1}}=\bigoplus_{\alpha+L\in A}C^{\tau(\alpha+L)}\otimes V_{\alpha+L}\otimes V_{\II_{1,1}}
\end{equation*}
under an isomorphism of conformal vertex algebras preserving $\hh\oplus\hh_{\II_{1,1}}$ and hence $C$ and $V_L\otimes V_{\II_{1,1}}$ setwise. In other words, $V\otimes V_{\II_{1,1}}$ and $(\bar\theta\otimes\bar\phi\otimes\id)^{-1}(W\otimes V_{\II_{1,1}})$ are isomorphic as simple-current extensions of $C\otimes(V_L\otimes V_{\II_{1,1}})$,
and this isomorphism maps $\hh\oplus\hh_{\II_{1,1}}$ into itself.

Restricting this isomorphism to $C$ and to $V_L\otimes V_{\II_{1,1}}$ implies that there are automorphisms $f\in\Aut(C)$ and $g\in\Aut(V_L\otimes V_{\II_{1,1}})$, the latter fixing $\hh\oplus\hh_{\II_{1,1}}$ setwise, inducing ribbon auto-equivalences 
$\bar{f}$ of $\Rep(C)$ and $\bar{g}$ of $\Rep(V_L)$ such that $(\bar{f}\otimes\bar{g}\otimes\id)(V_L\otimes V_{\II_{1,1}})=(\bar\theta\otimes\bar\phi\otimes\id)^{-1}(W\otimes V_{\II_{1,1}})$. This finally tells us that we may modify the above $\theta$ and $\phi$, namely to $\theta\otimes f$ and $\phi\otimes g$, such that \ref{item:gen3} holds.
\end{proof}

One immediate application of this alternative characterization is that it allows us to show that hyperbolic equivalence is finer than bulk equivalence, as conjectured in \cite{Mor21} (and proved in the special case when one of the \voa{}s is assumed to be holomorphic). This is the second main result of this section.
\begin{thm}\label{cor:hypbulk}
If $V$ and $W$ are \strat{} \voa{}s belonging to the same hyperbolic genus, then they belong to the same bulk genus.
\end{thm}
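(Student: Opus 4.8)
The plan is to prove \autoref{cor:hypbulk} through the alternative characterization \autoref{defi:commgenus}, which by \autoref{thm:hypcomm} is equivalent to \autoref{defi:hypgen}. So I assume $V$ and $W$ satisfy \ref{item:gen1}--\ref{item:gen3}, with the notation fixed there: a \voa{} isomorphism $\theta\colon C\xrightarrow{\sim}D$ of Heisenberg commutants, associated lattices $L,K$ together with $\phi\colon V_L\otimes V_{\II_{1,1}}\xrightarrow{\sim}V_K\otimes V_{\II_{1,1}}$, and gluing maps $\tau,\sigma$ fitting into the commuting square. The central charges are immediate: $\theta$ gives $c(C)=c(D)$, while $L$ and $K$ lying in the same genus forces $\sign(L)=\sign(K)$, hence $\rk(L)=\rk(K)$ since both are positive-definite; as $V$ and $W$ are conformal extensions of $C\otimes V_L$ and $D\otimes V_K$, additivity yields $c(V)=c(C)+\rk(L)=c(D)+\rk(K)=c(W)$.

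The content is the ribbon equivalence $\Rep(V)\cong\Rep(W)$. Here I would invoke the étale-algebra theory underlying \autoref{prop:asslatdecomp} (\cite{CKM22,Lin17}): the decomposition presents $V$ as a connected étale algebra $\mathcal{A}_V=\bigoplus_{\alpha+L\in A}C^{\tau(\alpha+L)}\boxtimes V_{\alpha+L}$ in the modular tensor category $\mathcal{E}_V\coloneqq\Rep(C)\boxtimes\Rep(V_L)\cong\Rep(C\otimes V_L)$, and identifies $\Rep(V)$ with the modular tensor category $\Rep^0_{\mathcal{E}_V}(\mathcal{A}_V)$ of local $\mathcal{A}_V$-modules; likewise $\Rep(W)\cong\Rep^0_{\mathcal{E}_W}(\mathcal{A}_W)$ for $\mathcal{A}_W=\bigoplus_{\alpha+K\in B}D^{\sigma(\alpha+K)}\boxtimes V_{\alpha+K}$ in $\mathcal{E}_W\coloneqq\Rep(D)\boxtimes\Rep(V_K)$.

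Next I would assemble the data of \autoref{defi:commgenus} into a single ribbon equivalence $\Psi\coloneqq\bar\theta\boxtimes\bar\phi\colon\mathcal{E}_V\to\mathcal{E}_W$, recalling that $\bar\theta$ is induced by $\theta$ and that $\bar\phi\colon\Rep(V_L)\to\Rep(V_K)$ is the ribbon equivalence coming from the discriminant-form isometry $L'/L\cong K'/K$ that $\phi$ induces (here $\II_{1,1}$ being unimodular is what lets the $V_{\II_{1,1}}$ factor drop out). The key point is then that $\Psi$ carries $\mathcal{A}_V$ to $\mathcal{A}_W$ as \emph{algebras}. On underlying objects this is exactly the content of condition \ref{item:gen3}: its second clause gives $\bar\phi(V_{\alpha+L})=V_{\varphi(\alpha+L)}$ with $\varphi\colon A\to B$ a bijection, and its first and third clauses give $\bar\theta(C^{\tau(\alpha+L)})=D^{\vartheta(\tau(\alpha+L))}=D^{\sigma(\varphi(\alpha+L))}$, so that summing over $\alpha+L\in A$ reindexes $\Psi(\mathcal{A}_V)$ to $\mathcal{A}_W$. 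The commutativity of the square ($\sigma\circ\varphi=\vartheta\circ\tau$) is precisely the assertion that $\Psi$ respects the gluing, i.e.\ the algebra structure. Granting $\Psi(\mathcal{A}_V)\cong\mathcal{A}_W$ as étale algebras, functoriality of the local-module construction under ribbon equivalences matching the algebras descends $\Psi$ to a ribbon equivalence $\Rep^0_{\mathcal{E}_V}(\mathcal{A}_V)\cong\Rep^0_{\mathcal{E}_W}(\mathcal{A}_W)$, i.e.\ $\Rep(V)\cong\Rep(W)$. Combined with $c(V)=c(W)$, this places $V$ and $W$ in the same bulk genus.

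The hard part is promoting the object-level matching to an isomorphism of algebra objects: condition \ref{item:gen3} literally matches only the labelling bijections and the ribbon data, whereas one must also match the multiplication morphisms up to coherence. I expect this to follow from the rigidity of simple-current extensions already used in the proof of \autoref{thm:hypcomm} (cf.\ Proposition~5.3 of \cite{DM04b}), which pins down such an extension up to isomorphism by its graded pieces and gluing. An alternative, more direct route avoiding étale algebras altogether would start from \autoref{defi:hypgen}, take $\Rep$ of the isomorphism $V\otimes V_{\II_{1,1}}\cong W\otimes V_{\II_{1,1}}$, and use $\Rep(V_{\II_{1,1}})\cong\Vect$ together with a Künneth theorem to strip off the hyperbolic factor; the obstacle there is justifying the Künneth step for the non-\voa{} factor $V_{\II_{1,1}}$ (see \autoref{sec:altproof}).
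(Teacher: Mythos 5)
Your proposal is correct and follows essentially the same route as the paper: the paper's proof likewise invokes the alternative characterization of \autoref{defi:commgenus} (via \autoref{thm:hypcomm}), notes that conditions \ref{item:gen1} and \ref{item:gen2} give the ribbon equivalence $\Rep(C\otimes V_L)\cong\Rep(C)\boxtimes\mathcal{C}(L'/L)\cong\Rep(D)\boxtimes\mathcal{C}(K'/K)\cong\Rep(D\otimes V_K)$, reads condition \ref{item:gen3} as saying that the condensable (\'etale) algebra objects corresponding to $V$ and $W$ match under this equivalence, and concludes that the categories of local modules $\Rep(V)\cong\Rep(W)$ \cite{HKL15} are ribbon equivalent. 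The ``hard part'' you flag---promoting the object-level matching to an isomorphism of algebra objects---is exactly what the paper absorbs into condition \ref{item:gen3}, and it is unproblematic here because these are pointed (simple-current) condensable algebras, whose commutative algebra structure on a fixed underlying object is unique up to isomorphism, as in the simple-current-extension uniqueness you cite.
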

\begin{proof}
Assumptions \ref{item:gen1} and \ref{item:gen2} in \autoref{defi:commgenus} assert the ribbon equivalence
\begin{equation*}
\Rep(C\otimes V_L)\cong\Rep(C)\boxtimes\mathcal{C}(L'/L)\cong\Rep(D)\boxtimes\mathcal{C}(K'/K)\cong\Rep(D\otimes V_K),
\end{equation*}
and \ref{item:gen3} states that the \voa{}s $V$ and $W$ correspond to condensable algebra objects (see, e.g., \cite{Kon14} for the definition of condensable algebra) in $\Rep(C\otimes V_L)$ and $\Rep(D\otimes V_K)$, respectively, which are related under this ribbon equivalence. But then, the categories $\Rep(V)$ and $\Rep(W)$ of local modules of these algebras \cite{HKL15} must be ribbon equivalent as well.
\end{proof}

\begin{rem}
We also sketch a more direct proof of \autoref{cor:hypbulk} in \autoref{sec:altproof}, i.e.\ without using the alternative characterization of the hyperbolic genus established in \autoref{thm:hypcomm}. This proof is considerably harder on a technical level, because we need to deduce from the isomorphism $V\otimes V_{\II_{1,1}}\cong W\otimes V_{\II_{1,1}}$ a ribbon equivalence $\Rep(V)\cong\Rep(W)$, the problem being that $V\otimes V_{\II_{1,1}}$ and $W\otimes V_{\II_{1,1}}$ are only conformal vertex algebras. For the latter, a lot of the general theory is not thoroughly developed.

By contrast, in order to prove the alternative characterization in \autoref{defi:commgenus}, which then almost immediately implies \autoref{cor:hypbulk}, we only need that an isomorphism of \emph{lattice} conformal vertex algebras $V_L\otimes V_{\II_{1,1}}\to V_K\otimes V_{\II_{1,1}}$ induces a ribbon equivalence $\Rep(V_L)\to\Rep(V_K)$.

Moreover, we use that the representation category of a lattice vertex algebra $V_L$ is the pointed modular tensor category $\mathcal{C}(L'/L)$ associated with the discriminant form $L'/L$, regardless of whether $L$ is positive-definite or not (see the remark after \autoref{ex:lattice}).
\end{rem}

The converse of \autoref{cor:hypbulk} is false. A counterexample is provided by the Schellekens \voa{}s discussed in \autoref{sec:holo}. Concretely, while the Leech lattice \voa{} $V_\Lambda$ and the moonshine module $V^\natural$ are in the same bulk genus, they are not hyperbolically equivalent. Indeed, the Heisenberg commutant of $V^\natural$ is $V^\natural$, while that of $V_\Lambda$ is trivial.


\subsection{Unpointed Case}

The statement of \autoref{thm:hypcomm} simplifies considerably if we assume from the outset that the representation category of $V$ is \emph{unpointed} (or \emph{perfect}), i.e.\ that it has no non-trivial simple currents.
Because, by \autoref{cor:hypbulk}, $V$ and $W$ can only be in the same hyperbolic genus if they are in the same bulk genus, we further assume that both $V$ and $W$ have ribbon equivalent representation categories. For simplicity, we also assume pseudo-unitarity, but we do not believe that this is essential. 
\begin{cor}\label{cor:unpointed}
Let $V$ and $W$ be two \strat{} \voa{}s with $\Rep(V)\cong\Rep(W)$ unpointed and pseudo-unitary. Then $V$ and $W$ belong to the same hyperbolic genus if and only if they have the same central charge and isomorphic Heisenberg commutants.
\end{cor}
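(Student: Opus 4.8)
The plan is to deduce the statement from the alternative characterization of the hyperbolic genus in \autoref{defi:commgenus}, which is equivalent to \autoref{defi:hypgen} by \autoref{thm:hypcomm}. The forward implication is immediate and uses nothing about pointedness: if $V$ and $W$ lie in the same hyperbolic genus, then by \autoref{cor:hypbulk} they lie in the same bulk genus, so $c(V)=c(W)$, and condition~\ref{item:gen1} of \autoref{defi:commgenus} already supplies a \voa{} isomorphism $C\cong D$ of their Heisenberg commutants. The real content is the converse, where I would assume $c(V)=c(W)$ together with an isomorphism $\theta\colon C\xrightarrow{\sim}D$ and verify \ref{item:gen1}--\ref{item:gen3}; condition~\ref{item:gen1} is exactly $\theta$.

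The key structural step is to show that unpointedness forces the gluing in \autoref{prop:asslatdecomp} to be as large as possible. Writing $V=\bigoplus_{\alpha+L\in A}C^{\tau(\alpha+L)}\otimes V_{\alpha+L}$, the subalgebra $C\otimes V_L$ is a full conformal subalgebra, so $V$ is a simple-current extension corresponding to a connected étale algebra $\mathcal{A}_V$ in $\Rep(C)\boxtimes\mathcal{C}(L'/L)$. Since $\tau$ takes values in simple currents of $C$, the object $\mathcal{A}_V$ is supported on the pointed part, i.e.\ it is the pointed algebra attached to the isotropic subgroup $H_V=\{(\tau(g),g)\mid g\in A\}$ of the metric group $G_V$ of invertibles of $\Rep(C)\boxtimes\mathcal{C}(L'/L)$, isotropy being precisely the ribbon-reversing property of $\tau$. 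I would then argue that $\Rep(V)$ unpointed forces $H_V$ to be Lagrangian: otherwise any $(x,d)\in H_V^{\perp}\setminus H_V$ yields an invertible object $C^{x}\boxtimes V_{d}$ centralizing $\mathcal{A}_V$, whose induced local module is a nontrivial simple current of $V$, a contradiction. Because $H_V$ is literally the graph of the bijection $\tau\colon A\to\tau(A)$, both projections $H_V\to L'/L$ and $H_V\to\Rep(C)_{\mathrm{pt}}$ are injective; together with $|H_V|^2=|G_V|=|\Rep(C)_{\mathrm{pt}}|\,|L'/L|$ this forces both to be bijective. Hence $A=L'/L$, the gluing category $\Rep(C\vert V)$ is the full pointed subcategory $\Rep(C)_{\mathrm{pt}}$, and $\tau$ is a ribbon-reversing equivalence $\mathcal{C}(L'/L)\xrightarrow{\sim}\Rep(C)_{\mathrm{pt}}$, so that $L'/L\cong\overline{\Rep(C)_{\mathrm{pt}}}$ as metric groups and is determined by $C$.

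With this in hand, condition~\ref{item:gen2} follows quickly. The same analysis for $W$ gives $B=K'/K$ and $K'/K\cong\overline{\Rep(D)_{\mathrm{pt}}}$, while $\theta$ induces a ribbon equivalence restricting to an isometry $\Rep(C)_{\mathrm{pt}}\cong\Rep(D)_{\mathrm{pt}}$; hence $L'/L\cong K'/K$. Moreover $\rk(L)=c(V)-c(C)=c(W)-c(D)=\rk(K)$, so $L$ and $K$ are positive-definite, even lattices with isometric discriminant forms and equal rank, i.e.\ in the same genus, yielding an isomorphism $\phi\colon V_L\otimes V_{\II_{1,1}}\xrightarrow{\sim}V_K\otimes V_{\II_{1,1}}$ and so~\ref{item:gen2}. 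For~\ref{item:gen3}, since $A=L'/L$, $B=K'/K$, and the images of $\tau,\sigma$ are the full pointed subcategories, the restrictions $\vartheta,\varphi$ live on the entire categories and \ref{item:gen3}(iii) reduces to matching a single discriminant isometry: I would set $\varphi\coloneqq\sigma^{-1}\circ\vartheta\circ\tau$, an isometry $L'/L\to K'/K$ (the two ribbon-reversing factors compose to a ribbon-preserving map), and realize it by invoking Nikulin's surjectivity of $\O(M)\to\O(M'/M)$ for the indefinite lattice $M=L\oplus\II_{1,1}$ (whose rank exceeds $l(M'/M)+1$) to lift $\varphi$ to a lattice isometry $L\oplus\II_{1,1}\cong K\oplus\II_{1,1}$, giving a $\phi$ with $\bar\phi\vert=\varphi$ and making the square in \autoref{defi:commgenus}\ref{item:gen3} commute.

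The main obstacle is the rigidification in the second paragraph: proving that unpointedness forces $A=L'/L$ and $\Rep(C\vert V)=\Rep(C)_{\mathrm{pt}}$. The delicate points are the clean identification of $\mathcal{A}_V$ with an isotropic subgroup of the pointed part and the claim that invertibles of $\Rep(C)\boxtimes\mathcal{C}(L'/L)$ centralizing $\mathcal{A}_V$ induce simple currents of $V$. Here pseudo-unitarity is convenient—it guarantees positive categorical dimensions (so that invertibility is detected by dimension one) and endows $\Rep(C)_{\mathrm{pt}}$ with the structure of an honest metric group—though, as the authors remark, it should not be essential.
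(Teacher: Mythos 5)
Your proposal is correct and follows the same overall skeleton as the paper's proof (reduce to the alternative characterization of \autoref{defi:commgenus} via \autoref{thm:hypcomm}, then realize the discriminant isometry by a Nikulin lift $L\oplus\II_{1,1}\cong K\oplus\II_{1,1}$), but the crucial middle step --- that unpointedness of $\Rep(V)$ forces $A=L'/L$ and $\Rep(C\vert V)=\Rep(C)_{\mathrm{pt}}$ --- is proved by a genuinely different mechanism. The paper gets this from the dual-pair relations $\Rep(V)\cong(\Rep(C)\boxtimes\mathcal{C}(L'/L))^{\mathrm{loc}}_{\mathcal{A}}$ and $\Rep(C)\cong(\Rep(V)\boxtimes\mathcal{C}(\overline{L'/L}))^{\mathrm{loc}}_{\mathcal{B}}$ of \cite{FFRS06}, a categorical-dimension count, and the simple-current counting formula $|\Irr(V)_{\mathrm{sc}}|=\frac{|L'/L|}{|A|}\cdot\frac{|\Irr(C)_{\mathrm{sc}}|}{|A|}$ of \cite{YY21}, with pseudo-unitarity invoked to rule out $\dim(\mathcal{B})=-1$; as a byproduct it obtains the full factorization $\Rep(C)\cong\Rep(V)\boxtimes\mathcal{C}(\overline{L'/L})$, from which it identifies $\mathcal{C}(L'/L)\cong\mathcal{C}(K'/K)$ via the maximal pointed subcategory (and it simplifies further by arranging $\theta=\id$, so the map to be lifted is $\sigma^{-1}\circ\tau$ rather than your $\sigma^{-1}\circ\vartheta\circ\tau$ --- an inessential difference). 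Your route instead argues directly on the condensable algebra: unpointedness forces the graph subgroup $H_V$ to be Lagrangian, since any invertible in $H_V^{\perp}\setminus H_V$ induces a nontrivial dimension-one local module, i.e.\ a simple current of $V$; the projection count then yields maximality of the gluing. This buys self-containedness (no appeal to \cite{YY21} or the \cite{FFRS06} identities) and sidesteps the cancellation issue the paper must comment on after \eqref{eq:cancel}, at the cost of losing the factorization of $\Rep(C)$ --- which, however, you never need, since $L'/L\cong\overline{\Rep(C)_{\mathrm{pt}}}$ already follows from $\tau$ being a ribbon-reversing equivalence onto the full pointed part. One small point to patch: your step $|H_V|^2=|G_V|$ tacitly assumes the monodromy form on the invertibles of $\Rep(C)\boxtimes\mathcal{C}(L'/L)$ is non-degenerate, which is not automatic (the pointed part of a modular category can carry a degenerate form, as for the fermion in the Ising category). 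But this closes in one line: $H_V=H_V^{\perp}$ contains the radical, which equals $\mathrm{rad}(\Rep(C)_{\mathrm{pt}})\times\{0\}$ by non-degeneracy of $L'/L$, and the graph structure of $H_V$ forces this to be trivial; equivalently, squeeze $|H_V|^2=|G_V|\cdot|\mathrm{rad}|\geq|G_V|$ against the bound $|H_V|^2\leq|G_V|$ coming from the two injective projections.
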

Before we state the proof, we recall from \cite{Nik80} that for any isometry of even lattices $L'/L\to K'/K$, there is an isometry $L\oplus\II_{1,1}\to K\oplus\II_{1,1}$ that induces the former, as described in \autoref{subsec:equivchar}.
Consequently, for any ribbon equivalence $\bar\phi\colon\Rep(V_L)\to\Rep(V_K)$, which descends to a lattice isometry $L'/L\to K'/K$, there is an isomorphism of conformal vertex algebras $\phi\colon V_L\otimes V_{\II_{1,1}}\to V_K\otimes V_{\II_{1,1}}$ that induces it.
\begin{proof}
The forward direction is immediate. We treat the reverse direction. To this end, suppose that $V$ and $W$ have the same central charge, let $C$ be the Heisenberg commutant of both $V$ and $W$, and call their associated lattices $L$ and $K$, respectively. We aim to show that the conditions in \autoref{defi:commgenus} are met. 

First, we prove that since $V$ has no non-trivial simple currents, it follows that $\Rep(C)\cong\Rep(V)\boxtimes\mathcal{C}(\overline{L'/L})$ and that
\begin{equation*}
V=\bigoplus_{\alpha+L\in L'/L}C^{(0,\tau(\alpha+L))}\otimes V_{\alpha+L}
\end{equation*}
where $\tau\colon\mathcal{C}(L'/L)\to\mathcal{C}(\overline{L'/L})$ is a ribbon-reversing equivalence. To see this, first note that, by virtue of $C\otimes V_L$ being a dual pair in $V$,
\begin{align*}
\Rep(V)&\cong(\Rep(C)\boxtimes\mathcal{C}(L'/L))^{\mathrm{loc}}_{\mathcal{A}},\\
\Rep(C)&\cong(\Rep(V)\boxtimes\mathcal{C}(\overline{L'/L}))^{\mathrm{loc}}_{\mathcal{B}}
\end{align*}
for the condensable algebra $\mathcal{A}$ in $\Rep(C)\boxtimes\mathcal{C}(L'/L)$ determined by $A<L'/L$ and the ribbon-reversing equivalence $\tau$ and for some condensable algebra $\mathcal{B}$ in $\Rep(V)\boxtimes\mathcal{C}(\overline{L'/L})$ \cite{FFRS06}.
This implies that
\begin{align*}
\dim(\Rep(V))&=\frac{\dim(\Rep(C))|L'/L|}{|A|^2},\\
\dim(\Rep(C))&=\frac{\dim(\Rep(V))|L'/L|}{\dim(\mathcal{B})^2}
\end{align*}
so that
\begin{equation*}
|L'/L|^2=|A|^2\dim(\mathcal{B})^2.
\end{equation*}
Now, Corollary~3.14 in \cite{YY21} (see also \cite{CKM17}, Section 4.3) states that
\begin{equation*}
|\Irr(V)_\text{sc}|=\frac{|L'/L|}{|A|}\frac{|\Irr(C)_\text{sc}|}{|A|},
\end{equation*}
where, e.g., $\Irr(V)_{\mathrm{sc}}$ refers to the set of irreducible simple current modules of $V$. We note that both factors are greater than $1$ because $\mathcal{C}(A)\cong\Rep(V_L\vert V)\subset\Rep(V_L)\cong\mathcal{C}(L'/L)$ and $\mathcal{C}(\overline{A})\cong\Rep(C\vert V)\subset\Rep(C)$ are full subcategories. The assumption $|\Irr(V)_\text{sc}|=1$ then implies $|\Irr(C)_\text{sc}|=|A|=|L'/L|$ and in particular $A=L'/L$. Combining this with the previous equation, we derive that $\dim(\mathcal{B})=\pm1$. Pseudo-unitarity rules out the possibility of the minus sign, and we can conclude that $\mathcal{B}$ must be the trivial algebra object. Then $\Rep(C)\cong \Rep(V)\boxtimes \mathcal{C}(\overline{L'/L})$. Moreover, $\Rep(V_L\vert V)\cong \mathcal{C}(L'/L)$ and $\Rep(C\vert V)\cong\mathcal{C}(\overline{L'/L})$, proving the claim. 

Analogously, it follows that
\begin{equation*}
V'=\bigoplus_{\alpha+K\in K'/K}C^{(0,\sigma(\alpha+K))}\otimes V_{\alpha+K}
\end{equation*}
where $\Rep(C)\cong\Rep(V')\boxtimes\mathcal{C}(\overline{K'/K})$ and $\sigma\colon\mathcal{C}(K'/K)\to\mathcal{C}(\overline{K'/K})$ is a ribbon-reversing equivalence. Here, $\Rep(V_K\vert V')\cong \mathcal{C}(K'/K)$ and $\Rep(C\vert V')\cong\mathcal{C}(\overline{K'/K})$. In particular, the same irreducible modules of $C$ arise in both $V$ and $V'$, so that we may take $\theta$ to be the identity isomorphism.

Furthermore, noting that
\begin{equation}\label{eq:cancel}
\begin{split}
\Rep(V)\boxtimes \mathcal{C}(\overline{L'/L})&\cong\Rep(C)\\
&\cong\Rep(V')\boxtimes\mathcal{C}(\overline{K'/K)}\\
&\cong\Rep(V)\boxtimes\mathcal{C}(\overline{K'/K}),
\end{split}
\end{equation}
and recalling that $\Rep(V)\cong\Rep(V')$ has no non-trivial simple currents, it follows that both $\mathcal{C}(\overline{L'/L})$ and $\mathcal{C}(\overline{K'/K})$ are equivalent to the maximal pointed subcategory of $\Rep(C)$, and thus $\mathcal{C}(L'/L)\cong\mathcal{C}(K'/K)$. Since $V$ and $V'$ have the same central charge, this also implies that the lattices $L$ and $K$ are in the same genus.

Finally, using the argument based on \cite{Nik80} stated above, we may choose an isomorphism $\phi\colon V_L\otimes V_{\II_{1,1}}\to V_K\otimes V_{\II_{1,1}}$
that induces $\sigma^{-1}\circ\tau\colon\mathcal{C}(L'/L)\to\mathcal{C}(K'/K)$. This proves that the conditions in \autoref{defi:commgenus} are satisfied.
\end{proof}
In view of \eqref{eq:cancel}, we remark that in general (pseudo-unitary) modular tensor categories do \emph{not} satisfy a cancellation property. Indeed, this already fails for metric groups where, in terms of their Jordan decomposition \cite{CS99}, $2_{\II}^{+2}.2_t^a\cong 2_0^{+2}.2_t^a$ for any odd $2$-adic Jordan component $2_t^a$, but $2_{\II}^{+2}\ncong 2_0^{+2}$. The statement then carries over to the corresponding pointed modular tensor categories.

We also state the following immediate consequence, which we shall discuss in more detail in \autoref{sec:holo}.
\begin{cor}\label{cor:holhyp}
Two \strat{}, holomorphic \voa{}s belong to the same hyperbolic genus if and only if they have the same central charge and isomorphic Heisenberg commutants.
\end{cor}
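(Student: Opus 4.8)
The plan is to obtain \autoref{cor:holhyp} as the special case of \autoref{cor:unpointed} in which the common representation category is trivial. Recall that a \strat{}, holomorphic \voa{} $V$ satisfies $\Rep(V)\cong\Vect$ by definition, so any two \strat{}, holomorphic \voa{}s $V$ and $W$ automatically have $\Rep(V)\cong\Rep(W)\cong\Vect$. Thus the standing hypothesis $\Rep(V)\cong\Rep(W)$ of \autoref{cor:unpointed} is met for free, and by \autoref{cor:hypbulk} there is no loss in assuming it. It then remains only to check that $\Vect$ satisfies the two categorical hypotheses of \autoref{cor:unpointed}, namely pseudo-unitarity and unpointedness.

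The single point requiring care is precisely this verification. Pseudo-unitarity is immediate, since $\Vect$ has a unique simple object (the unit) of categorical dimension $1$. For unpointedness I would read the hypothesis in the operative form used just before \autoref{cor:unpointed} — the absence of \emph{non-trivial} simple currents — rather than the literally incompatible ``not pointed''. As the only simple object of $\Vect$ is the unit, which is the trivial simple current, $\Vect$ has no non-trivial simple currents and is therefore perfect in the required sense. With this reading, \autoref{cor:unpointed} applies verbatim and delivers exactly the asserted equivalence: $V$ and $W$ lie in the same hyperbolic genus if and only if $c(V)=c(W)$ and their Heisenberg commutants are isomorphic.

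For transparency I would also record how the mechanism of \autoref{cor:unpointed} degenerates here. In the holomorphic case the argument there forces $A=L'/L$, so that $V=\bigoplus_{\alpha+L\in L'/L}C^{\tau(\alpha+L)}\otimes V_{\alpha+L}$ with $\Rep(C)\cong\mathcal{C}(\overline{L'/L})$, and likewise for $W$ with its associated lattice $K$ and commutant $D$. An isomorphism $C\cong D$ then forces $\mathcal{C}(\overline{L'/L})\cong\mathcal{C}(\overline{K'/K})$, hence $L'/L\cong K'/K$ as metric groups; combined with $c(V)=c(W)$ — which gives $\rk(L)=\rk(K)$ via $c(V)=c(C)+\rk(L)$ — this places $L$ and $K$ in the same lattice genus, after which the gluing maps are matched by choosing $\phi$ to induce $\sigma^{-1}\circ\tau$, exactly as in the proof of \autoref{cor:unpointed}. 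I do not anticipate any genuine obstacle: the only delicate issue is the degenerate dual status of $\Vect$ as simultaneously pointed and unpointed, which is precisely why the unpointedness hypothesis must be interpreted as the absence of non-trivial simple currents rather than as non-pointedness.
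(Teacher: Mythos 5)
Your proposal is correct and follows exactly the paper's own route: \autoref{cor:holhyp} is presented there as an ``immediate consequence'' of \autoref{cor:unpointed}, precisely the specialization to $\Rep(V)\cong\Rep(W)\cong\Vect$ that you carry out. Your careful reading of the unpointedness hypothesis as the absence of non-trivial simple currents is in fact the paper's own stated definition (``unpointed (or perfect), i.e.\ that it has no non-trivial simple currents''), so the degenerate status of $\Vect$ as simultaneously pointed and perfect creates no real tension.
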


\begin{ex}[Moonshine-Like \VOA{}s]\label{ex:fake}
Let $V$ and $V'$ be two \strat{} \voa{}s without spin-1 currents, i.e.\ satisfying $V_1=V'_1=\{0\}$. Then $V$ and $V'$ are equal to their respective Heisenberg commutants and thus belong to the same hyperbolic genus if and only if $V$ is isomorphic to $V'$.

In particular, the moonshine module $V^\natural$, and any potential ``fake'' copies of the moonshine module, i.e.\ \strat{}, holomorphic \voa{}s $V$ of central charge~$24$ with $V_1=\{0\}$ and $V\ncong V^\natural$, each live in their own hyperbolic genus (see \autoref{sec:holo} and in particular \autoref{conj:moonshineuniqueness}).
\end{ex}


\subsection{Mass Formula}\label{subsec:mass}

It is interesting to ask whether one can obtain statistics on the \voa{}s which live in a given hyperbolic genus. In the setting of lattices, such statistics come in the form of the Smith--Minkowski--Siegel mass formula and its variations (see, e.g., \cite{CS99} and also \autoref{sec:lat}). Indeed, a remarkable result of classical lattice theory is that for a positive-definite and, say, even lattice $K$, the quantity
\begin{equation*}
    \mass(L) = \sum_{M\in\gen(L)} \frac{1}{|\Aut(M)|}
\end{equation*}
is often computable, even when the list of lattices in the genus is unknown. Here, the sum runs through the finitely many isomorphism classes of lattices $M$ in the genus of $L$. Such mass formulae have played an important role in the classification of integral lattices,  especially for unimodular lattices of rank~$24$. For example, one can demonstrate that a list of known lattices exhausts all of the lattices in a particular genus by showing that the mass of the known lattices is equal to the ``true'' mass.

\medskip

In \cite{Mor21}, the author studies an analogous quantity in the setting of (\strat{}) \voa{}s,
\begin{equation*}
\mass(V) \coloneqq \sum_{W\in\hgen(V)}\frac{1}{|G_W|}
\end{equation*}
where $\hgen(V)$ is the hyperbolic genus of $V$, and $G_W$ is a subgroup of the automorphism group of the associated lattice $M$ of $W$ defined as follows. Fix a Cartan subalgebra $\hh$ of $W$, and consider the subgroup $\Aut(W,\hh)\subset \Aut(W)$ of automorphisms of $W$ which map $\hh$ into itself. Such automorphisms induce automorphisms of the associated lattice, i.e.\ there is a map $\Aut(W,\hh)\to\Aut(M)$, and we define $G_W$ to be the image of this map (which is finite since $\Aut(M)$ is). 

It is straightforward to see that the mass furnishes a lower bound on the number of \voa{}s in the corresponding hyperbolic genus, 
\begin{equation*}
    |\hgen(V)|\geq\mass(V).
\end{equation*}
It is clearly of interest to compute it in concrete examples. This is facilitated by the following result.

We write $\tilde{L}\coloneqq L\oplus\II_{1,1}$ and $\tilde{V}\coloneqq V\otimes V_{\II_{1,1}}$ for brevity. Given a \strat{} \voa{} $V$ with a choice of Cartan subalgebra $\hh$, we consider the Cartan subalgebra $\hh\oplus\hh_{\II_{1,1}}$ of $\tilde{V}$ and, like above, define $G_{\tilde{V}}$ as the subgroup of $\Aut(\tilde{L})$ induced by $\Aut(\tilde{V},\hh\oplus\hh_{\II_{1,1}})$, the difference being that neither $\Aut(\tilde{L})$ nor $G_{\tilde{V}}$ are necessarily finite.
\begin{prop}[Theorem 4.16 of \cite{Mor21}]\label{thm:massformula}
Let $V$ be a \strat{} \voa{} with associated lattice $L$, and assume that $[\mathrm{Aut}(\tilde{L}):G_{\tilde{V}}]$ is finite. The mass of the hyperbolic genus of $V$ can be expressed in terms of the mass of the genus of $L$ as 
\begin{equation*}
    \mass(V)=\mass(L)[\Aut(\tilde{L}):G_{\tilde{V}}].
\end{equation*}
\end{prop}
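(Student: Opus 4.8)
The plan is to express both masses as weighted orbit counts for two groups acting on a single combinatorial set, and then to transfer between them using the finite index $[\Aut(\tilde L):G_{\tilde V}]$. Write $\tilde L=L\oplus\II_{1,1}$ and fix the Cartan subalgebra $\tilde\hh=\hh\oplus\hh_{\II_{1,1}}$ of $\tilde V=V\otimes V_{\II_{1,1}}$, so that $\tilde L\subset\tilde\hh$ is the fixed associated lattice and $C=\Com_{\tilde V}(\tilde\hh)$ is the fixed Heisenberg commutant of $\tilde V$, which coincides with that of $V$ (the Cartan of $V_{\II_{1,1}}$ has trivial commutant). Let $\mathcal E$ be the set of orthogonal hyperbolic-plane direct summands $H\subset\tilde L$, i.e.\ sublattices $H\cong\II_{1,1}$ with $\tilde L=H^\perp\oplus H$. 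Both $\Gamma\coloneqq\Aut(\tilde L)$ and its finite-index subgroup $G\coloneqq G_{\tilde V}$ act on $\mathcal E$, and the whole proof is a comparison of their orbit data.

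First I would record the lattice bookkeeping. For $H\in\mathcal E$ the complement $M\coloneqq H^\perp$ has signature $(\rk(L),0)$, hence is positive-definite, and $M'/M\cong\tilde L'/\tilde L\cong L'/L$ because $\II_{1,1}$ is unimodular; thus $M\in\gen(L)$, and conversely every class in $\gen(L)$ occurs since $M\oplus\II_{1,1}\cong\tilde L$. A routine argument then shows that $H\mapsto H^\perp$ identifies the $\Gamma$-orbits on $\mathcal E$ with $\gen(L)$, with $\Gamma$-stabiliser $\Aut(M)\times\Aut(\II_{1,1})$ (the factor $\Aut(\II_{1,1})$ being finite of order $4$, since $\II_{1,1}$ is the even unimodular hyperbolic plane).

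Next comes the vertex-algebraic core, which I expect to be the main obstacle. The key structural point is that, since $\II_{1,1}$ is unimodular, every $H\in\mathcal E$ is \emph{unglued}: the glue group of $\tilde V$ lies in the $M'/M$-part of $\tilde L'/\tilde L$, so in the decomposition of \autoref{prop:asslatdecomp} only the vacuum module of $V_H$ appears and $\tilde V$ factors as a tensor product $\tilde V\cong W_H\otimes U_H$, where $U_H\subset\tilde V$ is the hyperbolic-plane lattice subalgebra generated from $\hh_H=H\otimes_\Z\C$ and $W_H\coloneqq\Com_{\tilde V}(U_H)$ is a \strat{} \voa{} with associated lattice $M=H^\perp$. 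Since $U_H\cong V_{\II_{1,1}}$ compatibly with Cartan subalgebras, $W_H\otimes V_{\II_{1,1}}\cong\tilde V$ respecting $\hh_{W_H}\oplus\hh_{\II_{1,1}}\mapsto\tilde\hh$, so $W_H\in\hgen(V)$ by \autoref{defi:hypgen}. I would then prove that $H\mapsto W_H$ descends to a bijection between $G$-orbits on $\mathcal E$ and $\hgen(V)$: surjectivity is precisely \autoref{defi:hypgen}, since any Cartan-respecting isomorphism $W\otimes V_{\II_{1,1}}\xrightarrow{\sim}\tilde V$ carries the lattice of $V_{\II_{1,1}}$ to some $H\in\mathcal E$ with $W\cong W_H$; and $W_H\cong W_{H'}$ forces $H,H'$ into a single $G$-orbit, because an isomorphism of the commutants, adjusted via \autoref{cor:cartan_unique} to respect Cartans and tensored with an isometry $U_H\cong U_{H'}$, extends to an automorphism of $\tilde V$ preserving $\tilde\hh$ and sending $H$ to $H'$. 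Computing stabilisers in the same spirit gives $\operatorname{Stab}_G(H)=G_{W_H}\times\Aut(\II_{1,1})$: an element of $\Aut(M)\times\Aut(\II_{1,1})$ lifts to $\Aut(\tilde V,\tilde\hh)$ iff its $\Aut(M)$-part lifts to $\Aut(W_H,\hh_{W_H})$, the $\Aut(\II_{1,1})$-part always lifting because $\Aut(V_{\II_{1,1}},\hh_{\II_{1,1}})\twoheadrightarrow\Aut(\II_{1,1})$; and ``lifts to $\Aut(W_H,\hh_{W_H})$'' means exactly ``lies in $G_{W_H}$'' by the definition of the latter.

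Finally I would invoke an elementary relative mass lemma: if $G\le\Gamma$ has finite index $n$ and acts on a set with finitely many $\Gamma$-orbits, all of finite stabiliser, then $\sum_{G\text{-orbits}}1/|\operatorname{Stab}_G|=n\sum_{\Gamma\text{-orbits}}1/|\operatorname{Stab}_\Gamma|$. This reduces to a single transitive $\Gamma$-orbit $\Gamma/H_0$ with $H_0$ finite, where one counts the right $H_0$-action on the $n$-element set $G\backslash\Gamma$: its orbits are the double cosets $G\backslash\Gamma/H_0$ and the stabiliser of $Gg$ is $H_0\cap g^{-1}Gg$, so $\sum 1/|H_0\cap g^{-1}Gg|=n/|H_0|$, which is the claim after conjugating to $\operatorname{Stab}_G(gH_0)=G\cap gH_0g^{-1}$. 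Applying this to $\mathcal E$, the common finite factor $|\Aut(\II_{1,1})|$ cancels between the two sides, and the left- and right-hand sums become $\sum_{W\in\hgen(V)}1/|G_W|=\mass(V)$ and $\sum_{M\in\gen(L)}1/|\Aut(M)|=\mass(L)$, yielding $\mass(V)=[\Aut(\tilde L):G_{\tilde V}]\,\mass(L)$. The finiteness hypothesis on the index is exactly what licenses the lemma and guarantees $\mass(V)<\infty$.
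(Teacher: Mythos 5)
The paper gives no proof of this proposition: it is imported as Theorem~4.16 of \cite{Mor21}, the only addition being the remark (together with \autoref{cor:massconstant}) that the finiteness of $[\Aut(\tilde L):G_{\tilde V}]$ is automatic in the strongly rational setting. So your proposal must be measured against Moriwaki's argument, and its architecture is in fact a faithful reconstruction of that route: parametrize everything by the set $\mathcal E$ of orthogonal $\II_{1,1}$-summands of $\tilde L$, identify $\Aut(\tilde L)$-orbits with $\gen(L)$ (correct: for $H\in\mathcal E$ one has $H^\perp\oplus\II_{1,1}\cong\tilde L$, and conversely any $M\in\gen(L)$ occurs by \autoref{defi:latticegenus}\ref{item:latgen3}, with stabilizer $\Aut(H^\perp)\times\Aut(\II_{1,1})$), identify $G_{\tilde V}$-orbits with $\hgen(V)$, and finish with the elementary double-coset lemma, the common factor $|\Aut(\II_{1,1})|=4$ cancelling. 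The steps I can verify all check out: the glue is trivial along any unimodular summand (since $H'/H=0$), giving $\tilde V\cong W_H\otimes U_H$ with $W_H=\Com_{\tilde V}(U_H)$; surjectivity of $H\mapsto W_H$ is exactly \autoref{defi:hypgen}; injectivity correctly uses \autoref{cor:cartan_unique}; and the stabilizer computation $\operatorname{Stab}_{G_{\tilde V}}(H)=G_{W_H}\times\Aut(\II_{1,1})$ is sound, using surjectivity of $\Aut(V_{\II_{1,1}},\hh_{\II_{1,1}})\to\Aut(\II_{1,1})$ in one direction and restriction of a Cartan-preserving automorphism (which preserves $U_H$, hence $W_H$) in the other.

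There is, however, one genuine gap, and it sits exactly at the point you yourself flagged as ``the main obstacle'' but then passed over: you assert without argument that $W_H$ is a \strat{} \voa{} for \emph{every} $H\in\mathcal E$. Rationality, $C_2$-cofiniteness and self-contragredience of the simple-current extension of $C\otimes V_{H^\perp}$ need citations (cf.\ the references invoked in the proof of \autoref{prop:hypgenhol}), but the truly delicate property is CFT-type: the lowest weight of the summand $C^{\tau(\alpha)}\otimes V_{\alpha_M+M}$, with $M=H^\perp$, is $\rho(C^{\tau(\alpha)})+\min_{v\in\alpha_M+M}\langle v,v\rangle/2$, and while CFT-type of $V$ bounds this for the coset minima of $L$, the minima of the corresponding cosets of a different $M\in\gen(L)$ can be strictly smaller. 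When $C$ is not positive (so that modules with $\rho<0$ are glued in), nothing in your sketch rules out a vector of weight $\le 0$ in some $W_H$; this is precisely the weight condition that \autoref{prop:hypgenhol} must impose as a genuine hypothesis in item~(4), and it is trivial only when $C$ is positive. If some $W_H$ fails CFT-type, your map from $G_{\tilde V}$-orbits to $\hgen(V)$ is merely surjective rather than bijective, and the left-hand sum overcounts $\mass(V)$. This is where Moriwaki's more general framework (a genus of vertex algebras with Cartan data, rather than of \strat{} \voa{}s, as the paper's remark after the proposition notes) does real work that your argument elides; to close the gap you must either prove the weight bound $\rho(C^{\tau(\alpha)})+\min_{v\in\alpha_M+M}\langle v,v\rangle/2>0$ uniformly over the genus of $L$, or restrict the statement to the positive case.
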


\begin{rem}
In \cite{Mor21}, the author worked with vertex algebras, as opposed to strongly rational vertex operator algebras. In this more general setting, the finiteness of the group-theoretical index $[\mathrm{Aut}(\tilde{L}):G_{\tilde{V}}]$ needs to be added as an assumption (in addition to the positive-definiteness of the associated lattice $L$, which is automatic for strongly rational vertex operator algebras). However, we will show in \autoref{cor:massconstant} below that this group-theoretical index is always finite when $V$ is a strongly rational vertex operator algebra, and so this assumption can be relaxed in \autoref{thm:massformula}.
\end{rem}

Thus, the mass of a hyperbolic genus of \voa{}s is as computable as the mass of a genus of lattices, assuming that one has control over the group index serving as constant of proportionality. To this end, we turn to characterizing the group $G_V$ for a conformal vertex algebra $V$, which we shall later take to be either a \strat{} \voa{} $V$ or that \strat{} \voa{} tensored with $V_{\II_{1,1}}$, i.e.\ $\tilde{V}$ (cf.\ \cite{Hoe17,HM23}).

\medskip

For now, let $V$ be an arbitrary conformal vertex algebra of the form
\begin{equation*}
    V = \bigoplus_{\alpha +L \in A}C^{\tau(\alpha+L)}\otimes V_{\alpha +L},
\end{equation*}
where $C$ is a \strat{} \voa{} with $C_1=\{0\}$, $L$ is a (possibly indefinite) even lattice, $A<L'/L$ and $\tau\colon\mathcal{C}(A)\to\Rep(C\vert V)$ is a ribbon-reversing equivalence. We choose the Cartan subalgebra $\hh\coloneqq\hh_L$ of $V$, where $\hh_L$ is the standard Cartan subalgebra of $V_L$.
Let $\overline{\Aut}(L)$ be the image of the natural map 
\begin{equation}\label{eqn:muL}
\mu_L\colon\Aut(L)\to\Aut(\mathcal{C}(L'/L)),
\end{equation}
where $\Aut(\mathcal{C}(L'/L))$ is the group of ribbon auto-equivalences of $\mathcal{C}(L'/L)$, and let $\overline{\Aut}(L, A)$ be the subgroup of auto-equivalences in $\overline{\Aut}(L)$ that restrict to auto-equivalences of $\mathcal{C}(A)=\Rep(V_L\vert V)$. Similarly, let $\overline{\Aut}(C)$ be the image of the natural map
\begin{equation*}
\Aut(C)\to\Aut(\Rep(C)),
\end{equation*}
where $\Aut(\Rep(C))$ is the group of ribbon auto-equivalences of $\Rep(C)$, and let $\overline{\Aut}(C,V)$ be the subgroup of ribbon auto-equivalences in $\overline{\Aut}(C)$ that restrict to ribbon auto-equivalences of $\Rep(C\vert V)$. Finally, call $\tau^\ast\colon\Aut(\Rep(C\vert V))\to \Aut(\mathcal{C}(A))$ the map defined by $\bar{\theta}\mapsto \tau^{-1}\circ \bar{\theta}\circ \tau$.

For a conformal vertex algebra $V$ given as simple-current extension like above (and with the choice of Cartan subalgebra $\hh$), our result characterizes the isometries $\phi$ of $L$ in the subgroup $G_V<\Aut(L)$. We argue below that they are precisely those isometries satisfying:
\begin{enumerate}
    \item the induced map $\bar{\phi}\in\Aut(\mathcal{C}(L'/L))$ restricts to a ribbon auto-equivalence of $\mathcal{C}(A)$,
    \item there exists an automorphism $\theta\in\Aut(C)$ such that the induced map $\bar{\theta}\in\Aut(\Rep(C))$ restricts to a ribbon auto-equivalence of $\Rep(C\vert V)$,
    \item $\tau^{-1}\circ \bar{\theta}\circ \tau = \bar{\phi}$. 
\end{enumerate}
More succinctly: 
\begin{prop}\label{prop:GVcharacterization}
Let $V$ be a conformal vertex algebra as above. Then the group $G_V<\Aut(L)$ is given by 
    \begin{equation*}
        G_V=\mu_L^{-1}\left( \overline{\Aut}(L,A)\cap \tau^\ast\overline{\Aut}(C,V) \right).
    \end{equation*}
\end{prop}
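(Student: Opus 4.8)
The plan is to track how an automorphism $g\in\Aut(V,\hh)$ interacts with the dual pair $C\otimes V_L\subset V$ and then to reverse the argument using the structure theory of simple-current extensions. I would prove the two inclusions separately, using throughout that $\hh=\hh_L$ and that the summands $C^{\tau(\alpha+L)}\otimes V_{\alpha+L}$ are (invertible) simple currents for $C\otimes V_L$.

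For the inclusion $G_V\subseteq\mu_L^{-1}(\overline{\Aut}(L,A)\cap\tau^\ast\overline{\Aut}(C,V))$, I would start from $\phi\in G_V$ lifted by some $g\in\Aut(V,\hh)$. Since $g$ fixes $\hh$ setwise it preserves the Heisenberg subalgebra $\langle\hh\rangle$, hence both its commutant $C=\Com_V(\hh)$ and its double commutant $V_L$. Thus $g$ restricts to $\theta:=g|_C\in\Aut(C)$ and to $g_L:=g|_{V_L}\in\Aut(V_L)$, the latter inducing $\phi$ on $L$ and therefore $\mu_L(\phi)=\bar\phi$ on $\Rep(V_L)\cong\mathcal{C}(L'/L)$. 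Because $g$ permutes the homogeneous summands, it preserves the indexing set $A$, giving $\bar\phi\in\overline{\Aut}(L,A)$; and matching the $C$-factor of the $(\alpha+L)$-summand with that of its image yields $\bar\theta(\tau(\alpha+L))=\tau(\bar\phi(\alpha+L))$, where $\bar\theta$ is the ribbon auto-equivalence of $\Rep(C)$ induced by $\theta$. This simultaneously shows $\bar\theta\in\overline{\Aut}(C,V)$ and $\tau^\ast(\bar\theta)=\bar\phi|_{\mathcal{C}(A)}$, establishing all three conditions.

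For the reverse inclusion I would take $\phi\in\Aut(L)$ satisfying conditions (1)--(3), choose $\bar\theta\in\overline{\Aut}(C,V)$ with $\tau^\ast(\bar\theta)=\bar\phi|_{\mathcal{C}(A)}$, lift $\bar\theta$ to $\theta\in\Aut(C)$, and lift $\phi$ to a lattice-vertex-algebra automorphism $g_L\in\Aut(V_L)$ preserving $\hh_L$. The tensor product $h:=\theta\otimes g_L\in\Aut(C\otimes V_L)$ induces $\bar\theta\boxtimes\bar\phi$ on $\Rep(C)\boxtimes\mathcal{C}(L'/L)$, and the compatibility $\tau^\ast(\bar\theta)=\bar\phi$ together with $\bar\phi(A)=A$ guarantees that $h$ permutes the simple-current summands of $V$, hence fixes the isomorphism class of the condensable algebra object defining $V$. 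I would then extend $h$ to a vertex-operator-algebra automorphism $g$ of $V$; as $g$ restricts to $h$ on the vacuum summand, it fixes $\hh_L=\hh$ setwise and induces $\phi$ on $L$, so $\phi\in G_V$.

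The main obstacle is exactly this extension step: promoting the summand-permuting map $h$ on the base $C\otimes V_L$ to a genuine automorphism $g$ of the extension $V$. The point is that a simple-current extension is rigidly determined by its gluing data $(A,\tau)$, so any auto-equivalence fixing the isomorphism class of the corresponding algebra object lifts to the extension; the only freedom is a rescaling of the one-dimensional intertwining spaces attached to each summand, a coboundary ambiguity that can be absorbed while leaving the vacuum summand (and hence $\hh$) undisturbed. This is the same mechanism that powers the proof of \autoref{thm:hypcomm}, and I would make it precise by invoking the theory of (mirror/simple-current) extensions of \cite{CKM22,Lin17}.
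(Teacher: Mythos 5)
Your argument has the same two-inclusion skeleton as the paper's proof, and your forward direction is essentially identical: an automorphism in $\Aut(V,\hh)$ preserves the Heisenberg subalgebra, hence restricts to $\theta\otimes\Phi$ on the dual pair $C\otimes V_L$, and the requirement that it permute the summands of $V$ is exactly equivalent to conditions (1)--(3). The genuine divergence is in how you justify the lifting step of the reverse inclusion. The paper disposes of it in one line by invoking a suitable generalization, to conformal vertex algebras, of Theorem~2.1 of \cite{Shi07}, which extends $\theta\otimes\Phi$ directly to an element of $\Aut(V,\hh)$ inducing $\phi$. You instead argue categorically: you view $V$ as a condensable algebra object over $C\otimes V_L$, observe that $\bar\theta\boxtimes\bar\phi$ fixes its isomorphism class, and appeal to the rigidity of simple-current extensions from \cite{CKM22,Lin17}, absorbing the coboundary ambiguity in the one-dimensional gluing spaces. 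This mechanism is indeed the one powering the paper's proof of \autoref{thm:hypcomm} (where isomorphy as simple-current extensions is upgraded to isomorphy of conformal vertex algebras via Proposition~5.3 of \cite{DM04b}), so your route is sound in spirit; note also that the residual freedom in the lift consists of characters of $A$ acting by scalars on the summands, which fix $C\otimes V_L$ pointwise and hence leave the induced isometry $\phi$ (and the setwise-fixed $\hh$) unchanged, so the lift really does witness $\phi\in G_V$.

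One caveat you should be aware of: in this proposition the lattice $L$ is allowed to be \emph{indefinite}, so $V_L$ and $V$ are merely conformal vertex algebras, not \voa{}s, and the braided tensor category machinery of \cite{CKM22,Lin17} is not available off the shelf in that setting --- the paper flags precisely this obstacle in \autoref{sec:altproof}, where making such categorical arguments rigorous requires the $C_1$-cofinite direct-limit-completion detour. This is presumably why the paper appeals to (a generalization of) Shimakura's more hands-on result rather than to the condensation formalism: the intertwining-operator argument transfers to conformal vertex algebras more directly. So your proof is correct in outline but, as written, the citation of \cite{CKM22,Lin17} for the extension step carries an unacknowledged burden in the indefinite case that the paper's chosen reference avoids.
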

\begin{proof}
We start by showing that $G_V$ is contained in the group on the right-hand side. Consider an automorphism $\varphi$ of $V$ in $\Aut(V,\hh)$. Because $\varphi$ maps the Cartan subalgebra $\hh\subset V$ to itself, it restricts to an automorphism $\theta\otimes\Phi\in\Aut(C\otimes V_L)$, where $\theta\in\Aut(C)$ and $\Phi\in\Aut(V_L)$. In turn, $\Phi$ (which still fixes $\hh$ setwise) induces an automorphism $\phi\in\Aut(L)$, and the assignment $\varphi \mapsto \phi$ is precisely the map $\Aut(V,\hh)\to G_V< \Aut(L)$. Because $\theta\otimes \Phi$ lifts to an automorphism of $V$, it follows that $\theta\otimes \Phi$ must permute those $C\otimes V_L$-modules that appear in the decomposition of $V$ among themselves. It is straightforward to see that this happens if and only if $\bar{\Phi}=\bar{\phi}\in\Aut(\mathcal{C}(L'/L))$ and $\bar{\theta}\in\Aut(\Rep(C))$ satisfy the conditions (1) to (3) listed right before the statement of the proposition. This in turn means that $\phi$ belongs to the group on the right-hand side of the assertion.

Next, we show the reverse inclusion. Consider an automorphism $\phi\in\Aut(L)$ belonging to the group on the right-hand side. That is, there exists an automorphism $\theta\in\Aut(C)$ such that $\bar\theta$ satisfies the conditions (1) to (3). Let $\Phi$ be a lift of $\phi$ to an automorphism of $V_L$. Then, by a suitable generalization (to conformal vertex algebras) of Theorem~2.1 in \cite{Shi07}, the automorphism $\theta\otimes\Phi$ lifts to an automorphism of $V$ in $\Aut(V,\hh)$ that induces the automorphism $\phi$ of $L$. Thus, $\phi$ belongs to $G_V$.
\end{proof}

We return to the situation where $V$ is a \strat{} \voa{} with a choice of Cartan subalgebra $\hh$ and corresponding associated lattice $L$, and moreover consider $\tilde{V}=V\otimes V_{\II_{1,1}}$ with the Cartan subalgebra $\hh\oplus\hh_{\II_{1,1}}$ and the corresponding lattice $\tilde{L}\coloneqq L\oplus\II_{1,1}$.
We assume furthermore that $V$ is holomorphic, which simplifies the following result.
\begin{cor}\label{cor:massconstant}
If $V$ is a \strat{}, holomorphic \voa{}, then 
\begin{align*}
[\Aut(L):G_V]&=[\overline{\Aut}(L):\overline{\Aut}(L)\cap \tau^\ast \overline{\Aut}(C)], \\
[\Aut(\tilde{L}):G_{\tilde{V}}]&=[\Aut(\Rep(C)):\overline{\Aut}(C)].
\end{align*}
In particular, $[\mathrm{Aut}(L):G_V]$ and  $[\mathrm{Aut}(\tilde{L}):G_{\tilde{V}}]$ are finite when $V$ is a strongly rational vertex operator algebra.
\end{cor}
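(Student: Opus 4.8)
The plan is to specialize Proposition~\ref{prop:GVcharacterization} to the holomorphic case and then convert the resulting group indices into categorical ones. First I would record what holomorphicity forces on the decomposition in \autoref{prop:asslatdecomp}. Since $\Rep(V)\cong\Vect$, the condensable algebra object in $\Rep(C)\boxtimes\mathcal{C}(L'/L)$ corresponding to $V$ must be Lagrangian; comparing global dimensions exactly as in the proof of \autoref{cor:unpointed} gives $A=L'/L$, forces $\Rep(C\vert V)=\Rep(C)$ to be the full (necessarily pointed) category $\mathcal{C}(\overline{L'/L})$, and makes $\tau\colon\mathcal{C}(L'/L)\to\Rep(C)$ a ribbon-reversing \emph{equivalence} rather than merely an equivalence onto a proper subcategory. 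The same reasoning applied to $\tilde V=V\otimes V_{\II_{1,1}}$ — whose Heisenberg commutant with respect to $\hh\oplus\hh_{\II_{1,1}}$ is again $C$ (the commutant of the full Heisenberg inside the unimodular $V_{\II_{1,1}}$ being $\C\vac$) and whose associated lattice is $\tilde L=L\oplus\II_{1,1}$ with $\tilde L'/\tilde L\cong L'/L$ — shows that $\tilde\tau\colon\mathcal{C}(\tilde L'/\tilde L)\to\Rep(C)$ is likewise a ribbon-reversing equivalence.

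With this in hand, the decorations in \autoref{prop:GVcharacterization} collapse: $\overline{\Aut}(L,A)=\overline{\Aut}(L)$ because $A=L'/L$, and $\overline{\Aut}(C,V)=\overline{\Aut}(C)$ because $\Rep(C\vert V)=\Rep(C)$, so that $G_V=\mu_L^{-1}\bigl(\overline{\Aut}(L)\cap\tau^\ast\overline{\Aut}(C)\bigr)$. The first identity then follows from the elementary fact that a surjective homomorphism $\mu_L\colon\Aut(L)\twoheadrightarrow\overline{\Aut}(L)$ satisfies $[\Aut(L):\mu_L^{-1}(H)]=[\overline{\Aut}(L):H]$ for every subgroup $H\leq\overline{\Aut}(L)$; taking $H=\overline{\Aut}(L)\cap\tau^\ast\overline{\Aut}(C)$ yields the claim. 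Note that here $\mu_L$ is \emph{not} assumed surjective onto all of $\Aut(\mathcal{C}(L'/L))$, which is why $\overline{\Aut}(L)$ must remain in the formula.

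For the second identity I would exploit the extra input supplied by the hyperbolic plane. By the result of \cite{Nik80} recalled just before \autoref{cor:unpointed}, now applied with $K=L$, every isometry of the discriminant form $\tilde L'/\tilde L$ lifts to an isometry of $\tilde L=L\oplus\II_{1,1}$; equivalently $\mu_{\tilde L}\colon\Aut(\tilde L)\to\Aut(\mathcal{C}(\tilde L'/\tilde L))$ is \emph{surjective}, so $\overline{\Aut}(\tilde L)=\Aut(\mathcal{C}(\tilde L'/\tilde L))$. Hence the intersection $\overline{\Aut}(\tilde L)\cap\tilde\tau^\ast\overline{\Aut}(C)$ is simply $\tilde\tau^\ast\overline{\Aut}(C)$, giving $G_{\tilde V}=\mu_{\tilde L}^{-1}(\tilde\tau^\ast\overline{\Aut}(C))$ and, by the same surjection-index fact, $[\Aut(\tilde L):G_{\tilde V}]=[\Aut(\mathcal{C}(\tilde L'/\tilde L)):\tilde\tau^\ast\overline{\Aut}(C)]$. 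Finally, since $\tilde\tau$ is an equivalence, conjugation $\tilde\tau^\ast\colon\Aut(\Rep(C))\to\Aut(\mathcal{C}(\tilde L'/\tilde L))$ is a group isomorphism carrying $\overline{\Aut}(C)$ onto $\tilde\tau^\ast\overline{\Aut}(C)$, so this index equals $[\Aut(\Rep(C)):\overline{\Aut}(C)]$, as desired.

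Finiteness is then immediate: $\Aut(\mathcal{C}(L'/L))\cong\O(L'/L)$ is finite because $L'/L$ is a finite metric group, whence $[\Aut(L):G_V]$ is finite; and $\Aut(\Rep(C))$ is finite because $\Rep(C)$ is a modular tensor category with finitely many simple objects, whence $[\Aut(\tilde L):G_{\tilde V}]$ is finite as well. The main obstacle I anticipate is \emph{not} the group theory but the careful bookkeeping in the first step: one must verify that holomorphicity genuinely forces $\tau$ and $\tilde\tau$ to be equivalences onto all of $\Rep(C)$, and that the Heisenberg commutant and gluing data of $\tilde V$ are literally those of $V$ tensored trivially with $V_{\II_{1,1}}$. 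Once the Lagrangian structure is pinned down, the decisive arithmetic ingredient — surjectivity of $\mu_{\tilde L}$ for a lattice containing a hyperbolic plane — is precisely what separates the two formulas and renders the right-hand side of the second purely categorical.
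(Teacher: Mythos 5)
Your proposal is correct and takes essentially the same route as the paper's own proof: holomorphicity collapses \autoref{prop:GVcharacterization} to $G_V=\mu_L^{-1}\bigl(\overline{\Aut}(L)\cap\tau^\ast\overline{\Aut}(C)\bigr)$, the first identity follows from the standard surjection--preimage coset bijection for $\mu_L$, and the second rests on exactly the Nikulin lifting result invoked in the paper, giving $\overline{\Aut}(\tilde L)=\Aut(\mathcal{C}(\tilde L'/\tilde L))\cong\Aut(\Rep(C))$ because $\tilde L$ contains $\II_{1,1}$. Your extra justifications (the Lagrangian/global-dimension argument forcing $A=L'/L$ with $\tau$ an equivalence onto all of $\Rep(C)$, and the explicit finiteness of $\O(L'/L)$ and of $\Aut(\Rep(C))$) simply spell out steps the paper's proof leaves implicit.
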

\begin{proof}
Let us consider the first identity. Because $V$ is holomorphic, $A=L'/L$ and hence $\overline{\Aut}(L,A)=\overline{\Aut}(L)$. Similarly, $\overline{\Aut}(C,V)=\overline{\Aut}(C)$. Then, we claim that the map $\mu_L$ in \eqref{eqn:muL} induces a map $\bar{\mu}_L$ of (left) cosets,
\begin{equation*}
\bar{\mu}_L\colon\Aut(L)/G_V\to \overline{\Aut}(L)/(\overline{\Aut}(L)\cap\tau^\ast\overline{\Aut}(C)).
\end{equation*}
Indeed, by \autoref{prop:GVcharacterization}, this map is well-defined and injective. It is also surjective because $\overline{\Aut}(L)$ is by definition the image of $\mu_L$. The identity follows.

To prove the second identity, we note that the same argument above shows that $[\Aut(\tilde{L}):G_{\tilde{V}}]=[\overline{\Aut}(\tilde{L}):\overline{\Aut}(\tilde{L})\cap \tau^\ast \overline{\Aut}(C)]$. On the other hand, it follows from the result of \cite{Nik80} stated before the proof of \autoref{cor:unpointed} that $\overline{\Aut}(\tilde{L})=\Aut(\mathcal{C}(\tilde{L}'/\tilde{L}))\cong \Aut(\Rep(C))$ and hence that $\overline{\Aut}(\tilde L)\cap\tau^\ast\overline{\Aut}(C)=\tau^\ast\overline{\Aut}(C)$. The second identity follows.
\end{proof}
This corollary in particular expresses the constant of proportionality in the mass formula entirely in terms of the Heisenberg commutant $C$ that, since $V$ is holomorphic, characterizes the hyperbolic genus of $V$ (see \autoref{cor:holhyp}). Moreover, as we shall see in \autoref{subsec:holomass}, this reformulation of the index reveals that it has already been computed in key cases of interest.

\begin{remph}
Physically, we can interpret the constant of proportionality as follows. Consider a hyperbolic genus of chiral CFTs. All of the theories in this hyperbolic genus possess the same interacting sector $C$. The bulk 3d TQFT to which $C$ is attached has a group $\Aut(\Rep(C))$ of invertible surface operators. Only a subgroup $\overline{\Aut}(C)$ of these invertible surface operators can actually terminate topologically on the gapless chiral boundary defined by $C$. The constant of proportionality is the index of this subgroup.
\end{remph}

Examples of mass formulae for hyperbolic genera shall be discussed in \autoref{subsec:holomass} below (see also \autoref{tab:Schellekensgenera}). We will also state in \autoref{subsec:SiegelWeil} a (conjectural) generalization of the mass formula to a Siegel--Weil identity in the context of holomorphic vertex operator algebras. That is, for holomorphic vertex operator algebras, we will not only be able to work out the weighted sum of the number $1$ within a hyperbolic genus, as in the mass formula, but further the weighted sum of the partition functions. We will describe a holographic interpretation of this Siegel--Weil identity.


\subsection{Examples}

We discuss some examples of hyperbolic genera. Often, the rather concrete characterization in \autoref{defi:commgenus} is more useful than the one given in \autoref{defi:hypgen}.

As a starting point, we briefly consider the holomorphic case discussed in \autoref{ex:schellekens}. Again, we refer to \autoref{sec:holhyp} for a more detailed treatment.
\begin{ex}[Holomorphic \VOA{}s]\label{ex:schellekens2}
The idea to organize \strat{}, holomorphic \voa{}s into families labeled by their Heisenberg commutants was put forward by Höhn in \cite{Hoe17}. By \autoref{cor:holhyp}, this exactly describes the decomposition of the bulk genera into hyperbolic genera.

We saw that the bulk genera $(\Vect,0)$, $(\Vect,8)$ and $(\Vect,16)$ only contain lattice \voa{}s. Hence, their Heisenberg commutants are trivial and each of these bulk genera defines just one hyperbolic genus.

For central charge $c=24$ one finds that the $71$ Schellekens \voa{}s decompose into $12$ hyperbolic genera, labeled by the letters A to L (see \autoref{table:12}). For example, genus~A are the $24$ Niemeier lattice \voa{}s and the Heisenberg commutant for genus~B was already described in \autoref{ex:commB}.

The hyperbolic genus~$L$ only contains the moonshine module $V^\natural$. There could be further \strat{}, holomorphic \voa{}s $V$ of central charge $24$ with $V_1=\{0\}$, but we showed in \autoref{ex:fake} that they each have to be in their own hyperbolic genus. The moonshine uniqueness conjecture (see \autoref{conj:moonshineuniqueness}) posits that such hyperbolic genera do not exist.
\end{ex}

\begin{ex}[Bulk Genus $((G_2,1),\sfrac{94}{5})$]
Consider the bulk genus $(\mathcal{C},c)$ with modular tensor category $\mathcal{C}=(G_2,1)$ and central charge $c=\sfrac{94}{5}$. From \cite{SR23}, coupled with \autoref{prop:essentiallypositive}, we know that there are eight \strat{} \voa{}s in this bulk genus. Since $\mathcal{C}$ does not have any non-trivial simple currents, \autoref{cor:unpointed} tells us that we only need to check that the Heisenberg commutants of two vertex operator algebras are isomorphic to decide if they are hyperbolically equivalent.

The six \voa{}s (see \autoref{subsec:holobulk} for the notation)
\begin{align*}
&\mathbf{S}(\mathsf{E}_{6,1}^4)\big/\mathsf{F}_{4,1}, \quad \mathbf{S}( \mathsf{A}_{11,1}\mathsf{D}_{7,1}\mathsf{E}_{6,1})\big/\mathsf{F}_{4,1}, \quad \mathbf{S}(\mathsf{D}_{10,1}\mathsf{E}_{7,1}^2)\big/\mathsf{F}_{4,1},\\
&\mathbf{S}(\mathsf{A}_{17,1}\mathsf{E}_{7,1})\big/\mathsf{F}_{4,1}, \quad \mathsf{E}_{8,1}^3\big/\mathsf{F}_{4,1}\cong\mathsf{E}_{8,1}^2\mathsf{G}_{2,1}, \quad \mathbf{S}(\mathsf{D}_{16,1}\mathsf{E}_{8,1})\big/\mathsf{F}_{4,1}
\end{align*}
that come from the coset of a Niemeier lattice \voa{} by $\mathsf{F}_{4,1}$ belong to a single hyperbolic genus. Indeed, the Heisenberg commutant of all these \voa{}s is $C=L(\sfrac45,0)\oplus L(\sfrac45,3)$, an extension of the (discrete series) Virasoro minimal model for $(p,q)=(5,6)$. $C$ has representation category $\Rep(C)\cong(G_2,1)\boxtimes(E_6,1)$, and in each case it is paired with a lattice \voa{} $V_L$ with $\Rep(V_L)\cong(A_2,1)=\overline{(E_6,1)}=\Cat(3^{-1})$ and $\rk(L)=18$. Indeed, there are exactly six lattices $L$ in the genus $\II_{18,0}(3^{-1})$, for example $E_8^2A_2$.

On the other hand, the two remaining \voa{}s
\begin{equation*}
\mathbf{S}(\mathsf{C}_{8,1}\mathsf{F}_{4,1}^2)\big/\mathsf{F}_{4,1}, \quad \mathbf{S}(\mathsf{E}_{7,2}\mathsf{B}_{5,1}\mathsf{F}_{4,1})\big/\mathsf{F}_{4,1}
\end{equation*}
likely together form their own hyperbolic genus. For example, their Heisenberg commutants will both be of the form $\mathrm{Ex}(C'/P)$, where $C'$ is the Heisenberg commutant of genus B (see \autoref{table:12} and \autoref{ex:commB}), and $P\coloneqq K(F_4,1)$ is the parafermion \voa{} (see \autoref{ex:para}) obtained by taking the commutant in $\mathsf{F}_{4,1}$ of its Cartan subalgebra \cite{DW11}. The only subtlety is that $P$ may be embedded into $C'$ differently in the two cases, potentially leading to different Heisenberg commutants. In principle, the hyperbolic equivalence of these two vertex operator algebras could be checked by an application of the mass formula, but we leave this to future work. 
\end{ex}


\section{Holomorphic Vertex Operator Algebras}\label{sec:holo}

In this section, we apply some of the machinery developed throughout the text to the special case of \strat{}, holomorphic vertex operator algebras, which were already treated briefly in \autoref{ex:schellekens} and \autoref{ex:schellekens2}.


\subsection{Bulk Genera}\label{subsec:holobulk}

We begin by reviewing what is known about the organization of \strat{}, holomorphic vertex operator algebras into bulk genera. Recall that, by definition, a holomorphic vertex operator algebra $V$ is a (simple and rational) \voa{} with trivial representation category, i.e.\ $\Rep(V)\cong\Vect\cong \mathcal{C}(1)$. This actually implies that the central charge $c$ lies in $8\N$ \cite{Zhu96}, analogously to positive-definite, even, unimodular lattices only existing if the rank $r$ is in $8\N$. Thus, \strat{}, holomorphic vertex operator algebras belong to bulk genera of the form $(\mathcal{C},c) = (\Vect,8n)$ for $n\geq 0$. 

For central charges $c=0$, $8$ and $16$, exactly the lattice genera are reproduced, i.e.\ all \voa{}s are lattice \voa{}s \cite{DM04}:
\begin{align*}
(\Vect,0)&=\{\C\vac\},\\
(\Vect,8)&=\{V_{E_8}\},\\
(\Vect,16)&=\{V_{(E_8)^2},V_{D_{16}^+}\}.
\end{align*}
In other words, each bulk genus decomposes into exactly one hyperbolic genus, which has trivial Heisenberg commutant.

At central charge $24$, there are exactly $70$ \strat{}, holomorphic \voa{}s $V$ with $V_1\neq\{0\}$ up to isomorphism, which we refer to as the Schellekens \voa{}s \cite{Sch93}. Each of these $70$ \voa{}s is uniquely determined by the simple affine \voa{} $\langle V_1\rangle\cong\mathsf{X}^{(1)}_{r_1,k_1}\otimes\dots\otimes\mathsf{X}^{(n)}_{r_n,k_n}$,
and we write $\mathbf{S}(\mathsf{X}^{(1)}_{r_1,k_1}\dots\mathsf{X}^{(n)}_{r_n,k_n})$ for the unique \strat{}, holomorphic \voa{} of central charge $c=24$ with this affine structure. The classification of this class of \voa{}s was only recently made mathematically rigorous (see, e.g., \cite{Hoe17,MS23,MS21,HM22,LM22}), even though the original work of Schellekens dates back several decades. Of these $70$ \voa{}s, only $24$ correspond to lattices (the so-called Niemeier lattices);
hence, the genus $(\Vect,24)$, and in fact all genera $(\Vect,c)$ with $c\geq24$, contain \voa{}s that are not lattice \voa{}s. That is, these bulk genera decompose into more than one hyperbolic genus (see \autoref{sec:holhyp} below).

The remaining gap to fill in the classification of the bulk genus $(\Vect,24)$ is to characterize the \voa{}s with $V_1=\{0\}$. These are substantially harder to study, because one does not have the crutch of continuous global symmetries (or affine \voa{}s), which were essential to the study of the Schellekens vertex operator algebras. Conjecturally, there is only one such \voa{} with $V_1=\{0\}$ \cite{FLM88}.

\begin{conj}[Moonshine Uniqueness]\label{conj:moonshineuniqueness}
The moonshine module $V^\natural$ is the unique \strat{}, holomorphic \voa{} $V$ of central charge $c=24$ (i.e.\ in the bulk genus $(\mathcal{C},c)=(\Vect,24)$) with $V_1=\{0\}$.
\end{conj}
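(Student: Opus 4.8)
The plan is to reduce the problem to a statement purely about the weight-two subspace $V_2$, since the hypothesis $V_1=\{0\}$ removes every continuous-symmetry handle that was decisive in the classification of the $70$ Schellekens \voa{}s. First I would pin down the graded character. By Zhu's modular invariance together with $C_2$-cofiniteness, the character $\tr_V q^{L_0-1}$ of a \strat{}, holomorphic \voa{} of central charge $24$ is a weight-zero modular function for $\SLZ$, holomorphic on the upper half-plane with a simple pole at the cusp; hence it equals $j+(\dim V_1-744)$. Setting $V_1=\{0\}$ forces the character to be $J(\tau)=j(\tau)-744=q^{-1}+196884\,q+\dots$, so in particular $\dim V_2=196884$. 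This matches $V^\natural$ and is the only numerical input one obtains for free.

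Next, $V_2$ carries the structure of a commutative, non-associative algebra under $a\cdot b\coloneqq a_1 b$, equipped with the invariant symmetric form defined by $a_3 b\eqqcolon\langle a,b\rangle\vac$ and with identity $\tfrac12\omega$, since $\omega_1$ acts as multiplication by $2$ on $V_2$. The \voa{} axioms impose strong associativity-type constraints (commutativity, invariance $\langle a\cdot b,c\rangle=\langle b,a\cdot c\rangle$, and Norton-type relations among the structure constants). The decisive step, and the main obstacle, is to prove that any $196884$-dimensional algebra satisfying these constraints must be isomorphic to the Griess algebra of $V^\natural$. Here the absence of $V_1$ is precisely what makes the problem hard: there is no affine sub-\voa{} and no group of inner automorphisms $e^{v_0}$, $v\in V_1$, with which to organize $V_2$, so one must instead manufacture the needed structure (a system of Ising/Virasoro idempotents, or equivalently a conformal frame) out of the algebra itself, with no symmetry to guide the construction. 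No unconditional argument for this identification is presently known.

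Granting the Griess-algebra identification, I would finish by a reconstruction argument: show that $V$ is generated by $V_2$ and that a \strat{}, holomorphic \voa{} of central charge $24$ with $V_1=\{0\}$ is determined up to isomorphism by its weight-two algebra. One promising route is to exhibit inside $V$ a conformal frame, i.e.\ a copy of $L(\tfrac12,0)^{\otimes 48}$, reducing $V$ to framed-\voa{} data (a pair of binary codes), and then to show that these codes are forced to be the ones producing $V^\natural$. Alternatively, one could attempt the reverse-orbifold strategy native to this paper's circle of ideas: produce an order-two automorphism $\sigma\in\Aut(V)$ whose cyclic orbifold is the Leech lattice \voa{} $V_\Lambda$, and then invoke uniqueness of the inverse $\Z_2$-orbifold $V_\Lambda\rightsquigarrow V^\natural$. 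Either way the bottleneck is the same, namely locating the requisite idempotents, or the automorphism $\sigma$, without any a priori control over $\Aut(V)$. I emphasize that the hyperbolic-genus machinery of this paper cannot by itself resolve the conjecture: as \autoref{ex:fake} shows, it only establishes that each such $V$ sits alone in its own hyperbolic genus, so it reframes rather than settles the question, and genuinely new input at the level of $V_2$ is what is required.
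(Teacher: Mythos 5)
There is nothing in the paper to compare your proposal against: the statement you were asked to prove is \autoref{conj:moonshineuniqueness}, which the paper explicitly treats as an \emph{open} conjecture (due to \cite{FLM88}) on which the classification results for the bulk genus $(\Vect,24)$ are conditioned. The paper offers no proof, and your proposal does not constitute one either --- as you yourself concede at the decisive step. What you have written is a correct and well-informed survey of the known attack routes, not an argument. The parts you do establish are sound: Zhu's modular invariance plus $C_2$-cofiniteness forces the character of a \strat{}, holomorphic \voa{} of central charge $24$ to be $j+(\dim V_1-744)$, so $V_1=\{0\}$ gives $\ch_V=J$ and $\dim V_2 = 196884$; and $V_2$ with $a\cdot b = a_1b$, the form $a_3b=\langle a,b\rangle\vac$ and identity $\tfrac12\omega$ is the standard Griess-algebra structure. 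But these are exactly the ``free'' inputs, and they are where every known approach stalls.

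The genuine gap --- which you name, but which must be stated as disqualifying --- is twofold. First, no argument is given (or known) that a $196884$-dimensional commutative algebra satisfying the \voa{}-induced constraints must be isomorphic to the Griess algebra of $V^\natural$; the Norton-type relations you invoke are consequences of living inside a \voa{}, and extracting Ising idempotents or a conformal frame from them without any continuous symmetry or a priori control of $\Aut(V)$ is precisely the open problem, not a step in its solution. Second, even granting the Griess-algebra identification, your reconstruction step is itself unproven: it is not known that such a $V$ is generated by $V_2$, nor that it is determined by its weight-two algebra; the framed-\voa{} route requires \emph{exhibiting} a copy of $L(\tfrac12,0)^{\otimes 48}$ inside $V$ (uniqueness is known only under that extra hypothesis), and the reverse-orbifold route requires \emph{producing} an order-two automorphism whose orbifold is $V_\Lambda$ --- each a conjecture of comparable depth. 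Your closing observation is the one point of genuine contact with this paper and is correct: by \autoref{ex:fake}, the hyperbolic-genus machinery only shows that any ``fake'' moonshine module would occupy its own hyperbolic genus (cf.\ the genus labeled L in \autoref{table:12}), so it reformulates the conjecture without deciding it; the paper's own suggestion in this direction is the speculative ``quantum mass formula'' mentioned among the future directions, which likewise remains conjectural.
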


The landscape of \strat{}, holomorphic \voa{}s becomes quite unwieldy at central charges beyond $24$. For example, when $c=32$, one can show, using refinements of the Smith--Minkowski--Siegel mass formula (see, e.g., \cite{CS99}), that there are more than one billion lattice \voa{}s alone \cite{Kin03}. Thus, enumeration of isomorphism classes of \voa{}s in bulk genera of the form $(\Vect,c)$ with $c>24$ is infeasible, though it is natural to ask whether they can be classified up to some coarser notion of equivalence, like hyperbolic equivalence. We explore this idea further in subsequent sections. 


\subsection{Hyperbolic Genera}\label{sec:holhyp}

We start by describing some generalities that will allow us to determine how bulk genera of \strat{}, holomorphic vertex operator algebras split into hyperbolic genera. 

When a \voa{} $V$ is \strat{} and holomorphic, the decomposition in \autoref{prop:asslatdecomp} in terms of the associated lattice $L$ and Heisenberg commutant $C$ simplifies to
\begin{equation*}
V=\bigoplus_{\alpha+L\in L'/L}C^{\tau(\alpha+L)}\otimes V_{\alpha+L}
\end{equation*}
for some ribbon-reversing equivalence $\tau\colon\Rep(V_L)\to\Rep(C)$. In particular, since $\Rep(V)\cong\Vect$ is (trivially) pointed and pseudo-unitary, the same is true for $\Rep(C)$ by Proposition~4.4 and 4.5 in \cite{HM23}. In other words, there is a metric group $R_C$ such that $\Rep(C)\cong\mathcal{C}(R_C)$. Moreover, there is an anti-isometry $\tau\colon L'/L\to R_C$ inducing the ribbon-reversing equivalence of the same name. Finally, note that $C$ has vanishing weight-$1$ space $C_1=\{0\}$ since otherwise $L$ could not be the associated lattice (see also \autoref{prop:hypgenhol} below).

\begin{rem}\label{rem:wrongext}
We issue a word of warning. Suppose for simplicity that $C$ is positive. We recall \cite{Moe16,EMS20a} that holomorphic extensions of $C\otimes V_L$, like $V$, correspond bijectively to the self-dual, isotropic subgroups $I$ of $R_C\times L'/L$. The isotropic subgroup for $V$ is $I=\{(\tau(x),x)\,|\,x\in A\}<K'/K\times R_C$. Indeed, any subgroup of this form with some (other) anti-isometry $\tau\colon L'/L\to R_C$ is also self-dual and isotropic, and the corresponding extension is a holomorphic \voa{} in which $C$ is a Heisenberg commutant and $C\otimes V_L$ a dual pair.

However, not all holomorphic simple-current extensions have the latter property. For example, consider the dual pair $C\otimes V_L$ appearing in the hyperbolic genus F (see \autoref{table:12} below). Both $C$ and $V_L$ have the representation category $\mathcal{C}(5^{+6})$. The underlying metric group $5^{+6}$ has signature $0\pmod{8}$ and has a self-dual, isotropic subgroup $J$. Correspondingly, both $C$ and $V_L$ can \emph{individually} be extended to holomorphic \voa{}s of central charge $16$ and $8$, respectively, which then both must be lattice \voa{}s, as we wrote in \autoref{subsec:holobulk}. In other words, the simple-current extension $\tilde{V}$ of $C\otimes V_L$ corresponding to the self-dual, isotropic subgroup $\tilde{I}\coloneqq J\times J<K'/K\times R_C$ is some Niemeier lattice \voa{}. Clearly, neither is $C$ the Heisenberg commutant of $\tilde{V}$, nor is $C\otimes V_L$ a dual pair in $\tilde{V}$.
\end{rem}

\medskip

Based on \autoref{cor:holhyp} and the preceding discussion, we can label any hyperbolic genus of holomorphic \voa{}s, say $\hgen(V)$ with representative $V$, by the pair $(C,c)$ consisting of the central charge $c\in 8\N$ of $V$ and the Heisenberg commutant $C$ of $V$ up to isomorphism. To emphasize that this hyperbolic genus sits inside the bulk genus $(\Vect,c)$, we shall write
\begin{equation*}
(C,c)_{\Vect}
\end{equation*}
for the hyperbolic genus. These two invariants must satisfy a number of properties, which we summarize in the following proposition. In fact, we shall see that any such pair $(C,c)_{\Vect}$ with these properties corresponds to a hyperbolic genus. Here, for a metric group $D$, $\overline{D}$ denotes the same abelian group with the quadratic form multiplied by $-1$.
\begin{prop}\label{prop:hypgenhol}
There is a bijection between hyperbolic genera of \strat{}, holomorphic \voa{}s and pairs $(C,c)_{\Vect}$ where $c\in 8\N$ and $C$ is an isomorphism class of \strat{} \voa{}s such that
\begin{enumerate}
\item\label{item:prop1} the weight-$1$ Lie algebra vanishes $C_1=\{0\}$,
\item the modular tensor category $\Rep(C)$ is pseudo-unitary and pointed, i.e.\ ribbon equivalent to $\Rep(C)\cong\mathcal{C}(R_C)$ for some metric group $R_C$,
\item there is a (non-empty) lattice genus for the symbol $\II_{c-c(C),0}(\overline{R_C})$, implying in particular that $c(C)\leq c$ and $c(C)\in\Z$,
\item\label{item:prop4} there is a lattice $L$ in $\II_{c-c(C),0}(\overline{R_C})$ and an anti-isometry $\tau\colon L'/L\to R_C$ such that $C^{\tau(\alpha+L)}$ has lowest $L_0$-weight
\begin{equation*}
\rho(C^{\tau(\alpha+L)})>-\min_{v\in\alpha+L}\langle v,v\rangle/2
\end{equation*}
for all non-zero $\alpha+L\in L'/L$.
\end{enumerate}
The bijection maps the hyperbolic genus $\hgen(V)$ to the central charge $c$ and the Heisenberg commutant $C$ of $V$ up to isomorphism.
\end{prop}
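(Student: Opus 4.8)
The plan is to realize the stated correspondence as the assignment $\hgen(V)\mapsto (C,c(V))$, where $C=\Com_V(\hh)$ is the Heisenberg commutant of a representative $V$ (\autoref{defi:asslat}), and then to check the three things a bijection requires: that this assignment is well-defined and injective, that its image really lands among the pairs satisfying conditions \ref{item:prop1}--\ref{item:prop4}, and that every such pair is hit. Well-definedness and injectivity come for free from \autoref{cor:holhyp}: two \strat{}, holomorphic \voa{}s lie in the same hyperbolic genus exactly when they have equal central charge and isomorphic Heisenberg commutants, so the pair $(C,c(V))$ depends only on $\hgen(V)$ and conversely pins it down. Thus the entire content of the proposition lies in the two remaining directions.

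For necessity (the image lands in the target set), I would feed a representative $V$ into the holomorphic form of \autoref{prop:asslatdecomp}, namely $V=\bigoplus_{\alpha+L\in L'/L}C^{\tau(\alpha+L)}\otimes V_{\alpha+L}$ with $\tau\colon\Rep(V_L)\to\Rep(C)$ a ribbon-reversing equivalence, and read off the four conditions. First, $C_1=\{0\}$ because any weight-$1$ vector of $C$ would centralize $\hh$ inside the reductive Lie algebra $V_1$, whereas $\hh$ is its own centralizer as a Cartan subalgebra; this gives \ref{item:prop1}. Next, since $\Rep(V)\cong\Vect$ is trivially pointed and pseudo-unitary, the same transfers to $\Rep(C)$ through $\tau$, giving $\Rep(C)\cong\mathcal{C}(R_C)$ for a metric group $R_C$ (I would cite Propositions~4.4 and~4.5 of \cite{HM23}). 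Finally, the associated lattice $L$ is positive-definite, even, of rank $\rk(L)=c(V)-c(C)$ with $L'/L\cong\overline{R_C}$ via $\tau$, so $L$ witnesses that the genus $\II_{c-c(C),0}(\overline{R_C})$ is non-empty (forcing $c(C)\le c$ and $c(C)\in\Z$), and CFT-type of $V$ forces each non-vacuum summand $C^{\tau(\alpha+L)}\otimes V_{\alpha+L}$ to have strictly positive lowest weight $\rho(C^{\tau(\alpha+L)})+\min_{v\in\alpha+L}\langle v,v\rangle/2$, which is exactly \ref{item:prop4}.

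For surjectivity I would run this construction in reverse. Given a pair $(C,c)$ satisfying all four conditions, condition \ref{item:prop4} hands me a positive-definite, even lattice $L\in\II_{c-c(C),0}(\overline{R_C})$ and an anti-isometry $\tau\colon L'/L\to R_C$, and I set $V\coloneqq\bigoplus_{\alpha+L\in L'/L}C^{\tau(\alpha+L)}\otimes V_{\alpha+L}$. The diagonal subgroup $I=\{(\tau(\alpha+L),\alpha+L)\}<R_C\times L'/L$ is isotropic (the two quadratic forms cancel because $\tau$ is an anti-isometry) and of order $|L'/L|=\sqrt{|R_C\times L'/L|}$, hence Lagrangian; by the extension theory of \cite{CKM22,Lin17} together with the classification of holomorphic extensions of $C\otimes V_L$ in \cite{Moe16,EMS20a}, this produces a simple, strongly rational, holomorphic \voa{} of central charge $c(C)+\rk(L)=c$, and condition \ref{item:prop4} guarantees it is of CFT-type (only the vacuum summand reaches weight $0$). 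This step is essentially a matter of invoking the cited machinery.

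The hard part will be verifying that the $V$ just built actually has $C$ as its Heisenberg commutant and $L$ as its associated lattice, i.e.\ that I have landed on a ``good'' extension and not on one of the degenerate extensions warned about in \autoref{rem:wrongext}. The argument I would give is a charge analysis using $C_1=\{0\}$: a weight-$1$ vector of $V$ in a non-vacuum summand carries non-zero $\hh_L$-charge $v\in\alpha+L\not\ni 0$, so it can only be a root vector and never a new Cartan direction, whence the centralizer of $\hh_L$ in $V_1$ is $(C\otimes V_L)_1^{\mathrm{charge}\,0}=\C\vac\otimes\hh_L=\hh_L$ and $\hh_L$ is a genuine Cartan subalgebra of $V_1$. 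Computing $\Com_V(\hh_L)$ then kills every non-vacuum summand (no charge-$0$ vectors) and collapses the vacuum summand to $C\otimes\C\vac=C$, so $C$ is the Heisenberg commutant and its double commutant is $V_L$; the diagonal (graph) form of $I$, as opposed to the product form $J\times J$ of \autoref{rem:wrongext}, is precisely what prevents $C$ from being independently extendable and forces the associated lattice to be exactly $L$. With this identification in hand the two maps are mutually inverse, completing the bijection.
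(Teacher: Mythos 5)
Your proposal is correct and follows essentially the same route as the paper: well-definedness and injectivity from \autoref{cor:holhyp}, necessity by reading the four conditions off the holomorphic form of \autoref{prop:asslatdecomp}, and surjectivity by building the diagonal (Lagrangian) simple-current extension of $C\otimes V_L$ and then using $C_1=\{0\}$ together with non-degeneracy of the form to show the standard Cartan subalgebra $\hh_L$ stays maximal toral in $V$, so that $C=\Com_V(\hh_L)$. The only cosmetic differences are that you verify the Heisenberg commutant by a summand-by-summand charge computation where the paper invokes the mirror-extension dual pair and the conformal-vector identity $C=\Com_V(V_L)=\ker(\omega^\hh_0)$ from \cite{FZ92}, and you are in fact slightly more explicit than the paper in noting that the diagonal subgroup is Lagrangian, which is what guarantees the extension is holomorphic.
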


We remark that condition~\eqref{item:prop4} is trivially satisfied if the Heisenberg commutant is positive, which would also imply the pseudo-unitarity of $\Rep(C)$.
\begin{proof}
Given a \strat{}, holomorphic \voa{} $V$, representing the hyperbolic genus $\hgen(V)$, we associate with it the pair $(C,c)_{\Vect}$ as specified above. This map is well-defined, i.e.\ only depends on the genus $\hgen(V)$, and injective by \autoref{cor:holhyp}. It is not difficult to see that $C$ must satisfy the stated properties.

\smallskip

Conversely, let $(C,c)_{\Vect}$ be a pair with these properties. We prove that there is a \strat{}, holomorphic \voa{} $V$ that is mapped to $(C,c)_{\Vect}$. Let $L$ be a lattice in the genus $\II_{c-c(C),0}(\overline{R_C})$ and $\tau\colon L'/L\to R_C$ an anti-isometry, which lifts to a ribbon-reversing equivalence $\tau\colon\Rep(V_L)\to\Rep(C)$, such that \eqref{item:prop4} holds. We define the simple-current extension
\begin{equation*}
V\coloneqq\bigoplus_{\alpha+L\in L'/L}C^{\tau(\alpha+L)}\otimes V_{\alpha+L},
\end{equation*}
which is a ($\Z$-graded) \voa{} because $\Rep(C)$ is pseudo-unitary, as is $\Rep(V_L)$ (cf.\ \cite{CKL20,CKM17,CKM22,HM23}). Item~\eqref{item:prop4} precisely guarantees that $V$ is ($\N$-graded and) of CFT-type, and then, since $V$ and $C$ are \strat{}, so is the simple-current extension $V$.
By construction, the central charge of $V$ is $c(C)+c(V_L)=c(C)+(c-c(C))=c$.

It remains to show that $C$ is the Heisenberg commutant of $V$ or equivalently, since $C$ and $V_L$ form a dual pair in $V$ by the mirror extension, that $L$ is the associated lattice of $V$. We consider $\hh\coloneqq\C\vac\otimes\{k(-1)\ee_0\mid k\in L\otimes_\Z\C\}$, which is clearly a toral subalgebra of $V$. We show that $\hh$ is in fact a Cartan subalgebra, i.e.\ a maximal toral subalgebra, of $V$. First, note that any element $v$ in a toral subalgebra of $V$ extending $\hh$ must satisfy $h_0v=0$ for all $h\in\hh$. Since the bilinear form on $L$ and its extension to $\hh\cong L\otimes_\Z\C$ are non-degenerate, this shows that $v\in C\otimes M_{\hat\hh}(1,0)\subset C\otimes V_L$. Moreover, as $C_1=\{0\}$ and because $v$ must have $L_0$-weight $1$, it follows that $v\in M_{\hat\hh}(1,0)_1=\hh$.

Finally, the conformal vector of $V$ is given by the sum $\omega^V=\omega^C+\omega^{\hh}$ of the conformal vectors of $C$ and of $V_L$. Note that $\omega^{\hh}$ is also the conformal vector of the Heisenberg \voa{} $\langle\hh\rangle=M_{\hat\hh}(1,0)\subset V_L$. Then
\begin{equation*}
C=\Com_V(V_L)=\ker(\omega^\hh_0)=\Com_V(\hh)
\end{equation*}
is the Heisenberg commutant of $V$ \cite{FZ92},
proving the assertion.
\end{proof}

In the following, we shall label hyperbolic genera of \strat{}, holomorphic \voa{}s by pairs $(C,c)_{\Vect}$ satisfying properties \eqref{item:prop1} to \eqref{item:prop4} above.

\begin{cor}\label{cor:increasingc}
If $(C,c)_{\Vect}$ is a hyperbolic genus of \strat{}, holomorphic \voa{}s, then so is $(C,c+8n)_{\Vect}$ for all $n\in\N$.
\end{cor}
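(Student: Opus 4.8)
The plan is to derive the corollary as an immediate consequence of the bijection established in \autoref{prop:hypgenhol}: it suffices to verify that whenever a pair $(C,c)_{\Vect}$ satisfies conditions \eqref{item:prop1}--\eqref{item:prop4} of that proposition, so does $(C,c+8n)_{\Vect}$ for every $n\in\N$. The guiding idea is that the invariants controlling a holomorphic hyperbolic genus are ``the same'' for $c$ and $c+8n$, once one enlarges the witnessing lattice by copies of $E_8$.

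First I would dispose of the easy conditions. Properties \eqref{item:prop1} and \eqref{item:prop1}, namely $C_1=\{0\}$ and $\Rep(C)\cong\mathcal{C}(R_C)$ pointed and pseudo-unitary, refer only to the \voa{} $C$ and are visibly independent of the central charge; they therefore persist unchanged. The heart of the matter is to produce a lattice witnessing \eqref{item:prop4} (and hence the non-emptiness required in the third condition) at central charge $c+8n$. I would take the lattice $L\in\II_{c-c(C),0}(\overline{R_C})$ and the anti-isometry $\tau\colon L'/L\to R_C$ supplied for $(C,c)_{\Vect}$, and set $\tilde{L}\coloneqq L\oplus E_8^n$. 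Since $E_8$ is positive-definite, even and unimodular of rank $8$, the lattice $\tilde{L}$ is positive-definite and even of rank $(c-c(C))+8n=(c+8n)-c(C)$, and the natural projection induces an isometry of discriminant forms $\tilde{L}'/\tilde{L}\cong L'/L$. Hence $\tilde{L}$ lies in $\II_{(c+8n)-c(C),0}(\overline{R_C})$, so this genus is non-empty, and the inequality $c(C)\le c\le c+8n$ together with $c(C)\in\Z$ is automatic; this settles the third condition.

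For condition \eqref{item:prop4}, transport $\tau$ across the isometry $\tilde{L}'/\tilde{L}\cong L'/L$ to obtain an anti-isometry $\tilde{\tau}\colon\tilde{L}'/\tilde{L}\to R_C$ with $\tilde{\tau}(\alpha+\tilde{L})=\tau(\alpha+L)$. The only step requiring genuine (if routine) care --- and the one I would flag as the crux --- is checking that the minimum-norm quantity in the positivity inequality is unaffected by the extension. For a nontrivial coset represented by $(\alpha,0)$ with $\alpha\in L'$, the norm on $\tilde{L}=L\oplus E_8^n$ splits as $\langle(\beta,\gamma),(\beta,\gamma)\rangle=\langle\beta,\beta\rangle+\langle\gamma,\gamma\rangle$ over $\beta\in\alpha+L$ and $\gamma\in E_8^n$; since $E_8^n$ is positive-definite and attains minimum norm $0$ at $\gamma=0$, this yields $\min_{v\in\alpha+\tilde{L}}\langle v,v\rangle=\min_{\beta\in\alpha+L}\langle\beta,\beta\rangle$. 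Consequently the inequality $\rho(C^{\tilde{\tau}(\alpha+\tilde{L})})>-\min_{v\in\alpha+\tilde{L}}\langle v,v\rangle/2$ is identical to the hypothesis for $(C,c)_{\Vect}$, so \eqref{item:prop4} carries over verbatim. With all four conditions verified, \autoref{prop:hypgenhol} identifies $(C,c+8n)_{\Vect}$ as a genuine hyperbolic genus, completing the argument. I expect no serious obstacle beyond this norm-invariance bookkeeping and the standard fact that adjoining a unimodular lattice preserves the discriminant form.
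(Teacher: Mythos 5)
Your proof is correct, and it reaches the conclusion by a slightly different interface to \autoref{prop:hypgenhol} than the paper does. The paper's proof is a two-line VOA-level argument: take a representative $V$ of $(C,c)_{\Vect}$ and tensor with $V_{E_8}^{\otimes n}$; this raises the central charge by $8n$ while leaving the Heisenberg commutant unchanged (the Cartan subalgebra of $V\otimes V_{E_8}^{\otimes n}$ is $\hh\oplus\hh_{E_8}^{\oplus n}$, and the commutant of $\hh_{E_8}$ in $V_{E_8}$ is trivial), so only the \emph{forward} (well-definedness) direction of the bijection is needed. You instead invoke the \emph{converse} (existence/reconstruction) direction, re-verifying conditions (1)--(4) for the pair $(C,c+8n)_{\Vect}$ with witnessing lattice $\tilde{L}=L\oplus E_8^n$. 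All your checks are right: $E_8$ unimodular gives $\tilde{L}'/\tilde{L}\cong L'/L$, so the genus $\II_{(c+8n)-c(C),0}(\overline{R_C})$ is non-empty, and your min-norm computation for condition \eqref{item:prop4} --- that $\min_{v\in\alpha+\tilde{L}}\langle v,v\rangle=\min_{\beta\in\alpha+L}\langle\beta,\beta\rangle$ because the positive-definite $E_8^n$ summand attains norm $0$ at $\gamma=0$ --- is exactly the point that makes the transported anti-isometry $\tilde{\tau}$ work. Indeed, the extension that \autoref{prop:hypgenhol} then builds from your data, $\bigoplus_{\alpha+\tilde{L}}C^{\tilde{\tau}(\alpha+\tilde{L})}\otimes V_{\alpha+\tilde{L}}$, is isomorphic to the paper's $V\otimes V_{E_8}^{\otimes n}$, since $V_{\alpha+\tilde{L}}\cong V_{\alpha+L}\otimes V_{E_8^n}$. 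What each route buys: the paper's is shorter and avoids the norm bookkeeping; yours is self-contained at the level of the classifying data and makes explicit that nothing beyond the discriminant form and the coset minima is at stake. One trivial slip: you cite ``\eqref{item:prop1} and \eqref{item:prop1}'' where you mean conditions (1) and (2); the content is unaffected.
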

\begin{proof}
If $V$ belongs to the hyperbolic genus $(C,c)_{\Vect}$, then tensoring with the lattice \voa{} $V_{E_8}=\mathsf{E}_{8,1}$ for $E_8$ increases the central charge by~$8$, but leaves the Heisenberg commutant unchanged. Then, \autoref{prop:hypgenhol} shows that $V\otimes V_{E_8}^{\otimes n}$ belongs to the hyperbolic genus $(C,c+8n)_{\Vect}$.
\end{proof}

\medskip

Next, we describe how the Schellekens \voa{}s (see \autoref{subsec:holobulk}) divide into hyperbolic genera. The idea to organize the \strat{} \voa{}s of central charge $c=24$ into families labeled by their Heisenberg commutants was put forward in \cite{Hoe17}. By \autoref{cor:holhyp}, this exactly describes the decomposition of the bulk genus $(\Vect,24)$ into hyperbolic genera.

Once again, we work under the assumption of \autoref{conj:moonshineuniqueness}. (If not, all that would change is that we would end up with ``fake'' copies of the hyperbolic genus labeled by the letter L below.) One of the main statements of \cite{Hoe17} is that all Heisenberg commutants $C$ of the Schellekens \voa{}s are either the moonshine module $V^\natural$ (or ``fake'' copies thereof) or come from the Leech lattice~$\Lambda$ in the sense that they are of the form
\begin{equation}\label{eqn:SchellekensHeisenbergCommutant}
C\cong V_{\Lambda_\nu}^{\hat\nu}
\end{equation}
for some automorphism $\nu\in\Aut(\Lambda)$. Here, $\Lambda_\nu=(\Lambda^\nu)^\bot$ denotes the coinvariant lattice, the orthogonal complement of the invariant lattice $\Lambda^\nu$, and $\hat\nu$ is a lift of $\nu$ (restricted to $\Lambda_\nu$) to $\Aut(V_{\Lambda_\nu})$ . Since $\nu$ acts fixed-point freely on $\Lambda_\nu$, all its lifts are standard and conjugate (see, e.g., \cite{EMS20b}). 

This statement was shown in \cite{Hoe17} case by case using Schellekens' classification result, but it can be seen conceptually based on the results in \cite{ELMS21,MS23} (see also Section~4.2 in \cite{HM22}). If one follows the latter approach, it is an easy task to determine all conjugacy classes $\nu\in\Aut(\Lambda)$ such that $C=V_{\Lambda_\nu}^{\hat\nu}$ (which is always positive) satisfies the conditions in \autoref{prop:hypgenhol}. In fact, the only non-trivial criterion is that there is a (non-empty) lattice genus
\begin{equation*}
\II_{c-c(C),0}(\overline{R_C})=\II_{\rk(\Lambda^\nu),0}(\overline{R_C}).
\end{equation*}
The shape of the metric group $R_C$ is described in \cite{Lam20} (with some partial results in \cite{Moe21}), and in particularly nice (and not so rare) cases it is of the form $R_C\cong\Lambda'_\nu/\Lambda_\nu\times H$ where $H$ is the metric group associated with a twisted Drinfeld double $\mathcal{D}_\omega(\Z_n)\cong\mathcal{C}(H)$ where $n=|\nu|$.

This way, one is left with exactly $11$ conjugacy classes $\nu\in\Aut(\Lambda)$, and correspondingly $11$ Heisenberg commutants, labeled by the letters A to K, which we list in \autoref{table:12}. Moreover, the moonshine module $V^\natural$, which is its own Heisenberg commutant (since its associated lattice is trivial) and does not come from the Leech lattice in the above sense, is labeled by the letter $L$.

\begin{table}[ht]\caption{The $12$ Heisenberg commutants (or hyperbolic genera) appearing in the Schellekens \voa{}s and the corresponding associated lattice genera.}
\begin{tabular}{c|l|r|l||l|r||r}
\multicolumn{2}{l|}{Heisenberg comm.} & $c$ & $\Aut(\Lambda)$ & Lattice genus & No.\ &  VOAs\ \\\hline\hline
A & $\mathcal{C}(1)$                              &  0 & $\sAA$ & $\gAA$ & 24 & 24\\
B & $\mathcal{C}(2_{\II}^{+10})$                  &  8 & $\sBB$ & $\gBB$ & 17 & 17\\
C & $\mathcal{C}(3^{-8})$                         & 12 & $\sCC$ & $\gCC$ &  6 &  6\\
D & $\mathcal{C}(2_{\II}^{-10}4_{\II}^{-2})$      & 12 & $\sDD$ & $\gDD$ &  2 &  9\\
E & $\mathcal{C}(2_6^{+2}4_{\II}^{-6})$           & 14 & $\sEE$ & $\gEE$ &  5 &  5\\
F & $\mathcal{C}(5^{+6})$                         & 16 & $\sFF$ & $\gFF$ &  2 &  2\\
G & $\mathcal{C}(2_{\II}^{+6}3^{-6})$             & 16 & $\sGG$ & $\gGG$ &  2 &  2\\
H & $\mathcal{C}(7^{+5})$                         & 18 & $\sHH$ & $\gHH$ &  1 &  1\\
I & $\mathcal{C}(2_1^{+1}4_5^{-1}8_{\II}^{-4})$   & 18 & $\sII$ & $\gII$ &  1 &  1\\
J & $\mathcal{C}(2_{\II}^{+4}4_{\II}^{-2}3^{-5})$ & 18 & $\sJJ$ & $\gJJ$ &  1 &  2\\
K & $\mathcal{C}(2_{\II}^{-2}4_{\II}^{-2}5^{+4})$ & 20 & $\sKK$ & $\gKK$ &  1 &  1\\\hline
L & $\mathcal{C}(1)$                              & 24 &        & $\gLL$ &  1 &  1\\
\end{tabular}
\label{table:12}
\end{table}

Finally, by \autoref{cor:holhyp} we know that:
\begin{prop}
The bulk genus $(\Vect,24)$ decomposes into exactly 12 hyperbolic genera (ignoring ``fake'' copies of the moonshine module), each labeled by its Heisenberg commutant, in turn labeled by a letter A to L.
\end{prop}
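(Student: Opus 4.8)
The plan is to obtain the statement as an essentially immediate consequence of \autoref{cor:holhyp} together with the enumeration of Heisenberg commutants carried out in the discussion preceding the proposition. By \autoref{cor:holhyp}, two \strat{}, holomorphic \voa{}s of the same central charge belong to the same hyperbolic genus precisely when their Heisenberg commutants are isomorphic. Since every \voa{} in the bulk genus $(\Vect,24)$ has central charge $24$ by definition, the hyperbolic genera into which $(\Vect,24)$ decomposes are in bijection with the isomorphism classes of \voa{}s $C$ arising as the Heisenberg commutant of some \voa{} in $(\Vect,24)$. The entire problem thus reduces to counting these isomorphism classes.

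For the input, I would invoke the (conditional) classification of $(\Vect,24)$: under \autoref{conj:moonshineuniqueness} it consists of exactly the $71$ Schellekens \voa{}s (the $70$ with $V_1\neq\{0\}$ together with $V^\natural$). The preceding discussion, following \cite{Hoe17} and the conceptual approach of \cite{ELMS21,MS23,HM22}, has already determined that the Heisenberg commutant of each of these is either the moonshine module $V^\natural$ itself (which is its own commutant, as its associated lattice is trivial) or of the form $V_{\Lambda_\nu}^{\hat\nu}$ for an automorphism $\nu\in\Aut(\Lambda)$ of the Leech lattice, and that exactly $11$ conjugacy classes of such $\nu$ pass the nonemptiness criterion of \autoref{prop:hypgenhol} on the lattice genus $\II_{\rk(\Lambda^\nu),0}(\overline{R_C})$. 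This yields the commutants labeled A to K, with $V^\natural$ giving the twelfth, labeled L.

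The substantive remaining step is to check that these twelve commutants are pairwise non-isomorphic, so that each genuinely determines a distinct hyperbolic genus. I would verify this by comparing the invariants $(c(C),\Rep(C))$: an isomorphism of \voa{}s preserves both the central charge and the ribbon equivalence class of the representation category, and \autoref{table:12} exhibits twelve distinct such pairs. Concretely, the pointed categories $\Rep(C)\cong\mathcal{C}(R_C)$ are determined up to ribbon equivalence by their metric groups $R_C$ (by the classification of pointed braided fusion categories recalled in \autoref{sec:cat}), and any two table entries sharing a central charge---C and D at $c(C)=12$, F and G at $16$, or H, I and J at $18$---are separated by distinct metric groups, while A and L, which both carry the trivial category $\mathcal{C}(1)$, are separated by their central charges $0$ and $24$.

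I expect the main obstacle to lie not in this final bookkeeping but in the inputs it rests upon: the full (conditional) classification of $(\Vect,24)$ and, above all, the case-by-case identification in \cite{Hoe17} that the commutants of the Schellekens \voa{}s are exhausted by the twelve listed classes. These are cited rather than reproved, so the role of \autoref{cor:holhyp} here is precisely to repackage that hard-won data into the clean statement that $(\Vect,24)$ splits into exactly twelve hyperbolic genera. A useful consistency check, which I would record, is that the final column of \autoref{table:12} sums to $71$, matching the total number of Schellekens \voa{}s.
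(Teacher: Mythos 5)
Your proposal is correct and follows essentially the same route as the paper: the proposition is stated there as an immediate consequence of \autoref{cor:holhyp} combined with the preceding enumeration (via \cite{Hoe17} and \autoref{prop:hypgenhol}) of the eleven Leech-lattice commutants $V_{\Lambda_\nu}^{\hat\nu}$ together with $V^\natural$. Your explicit check that the twelve commutants are pairwise non-isomorphic, by comparing the invariants $(c(C),\Rep(C))$ from \autoref{table:12}, is a sound piece of bookkeeping that the paper leaves implicit, as is your consistency check that the counts sum to $71$.
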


We explain the entries of \autoref{table:12}. The first column contains the name of the hyperbolic genus, i.e.\ the Heisenberg commutant. The next two columns describe the bulk genus of the Heisenberg commutant, followed by the element $\nu\in\Aut(\Lambda)$ defining it via \eqref{eqn:SchellekensHeisenbergCommutant}. The fifth column lists the corresponding lattice genus, followed by the number of lattices in the genus. In the last column, we list the number of holomorphic \voa{}s in the hyperbolic genus.

We remark that for most of the $12$ hyperbolic genera, the number of \voa{}s coincides with the number of lattices in the lattice genus $\II_{c-c(C),0}(\overline{R_C})$, i.e.\ for each lattice $L$ in that genus there is exactly one Schellekens \voa{} with that associated lattice, i.e.\ with the dual pair $C\otimes V_L$. In other words, the choice of anti-isometry $\tau\colon L'/L\to R_C$ described at the beginning of this section is often irrelevant in the holomorphic case. Only for genera D and J, do we need to take care of the anti-isometry.

As explained in \autoref{ex:fake}, each ``fake'' copy of the moonshine module, i.e.\ a \strat{}, holomorphic \voa{} $V$ of central charge~$24$ with $V_1=\{0\}$ and $V\ncong V^\natural$, must define its own hyperbolic genus (whose entry in \autoref{table:12} would look identical to the one for L).


\subsection{Examples with Central Charge 32}\label{subsec:c=32}

We commented in \autoref{subsec:holobulk} that there are over a billion \strat{}, holomorphic vertex operator algebras with central charge $c=32$. That estimate came from counting lattice vertex operator algebras \cite{Kin03}. However, all these holomorphic vertex operator algebras based on even, unimodular lattices belong to the same hyperbolic genus, and hence those order billion theories all collapse to a single hyperbolic equivalence class, namely $(\C\vac,32)_{\Vect}$ with trivial Heisenberg commutant $C\cong\C\vac$.

In this spirit, one might hope that the classification of $c=32$, \strat{}, holomorphic vertex operator algebras, i.e.\ the bulk genus $(\Vect,32)$, becomes tractable, provided one is content with classifying theories only up to hyperbolic equivalence. In that case, one is lead to ask what other choices of $C$ lead to a valid hyperbolic genus of the form $(C,32)_{\Vect}$. Although we do not attempt to be complete in the present paper, we offer a few examples.

\medskip

In general, for any central charge, \autoref{prop:hypgenhol} tells us that in order to obtain candidates for Heisenberg commutants inside holomorphic \voa{}s, we need to produce \strat{} \voa{}s $C$ with two main properties: $C$ must have a pointed representation category and $C_1=\{0\}$. In the following, inspired by the success of this approach for central charge $c=24$, we shall construct such \voa{}s $C$ as \fpvosa{}s $V^G$.

A large class of \voa{}s $V^G$ that is expected to be \strat{} with $\Rep(V^G)$ pointed is when $V$ itself is \strat{} and has a pointed representation category and when $G<\Aut(V)$ is a finite, cyclic (or, under some additional conditions, abelian) group whose corresponding permutation action on the irreducible $h$-twisted $V$-modules is trivial for all $h\in G$.
Moreover, when $G$ is cyclic, in order for $V^G_1$ to be trivial, the finite-dimensional Lie algebra $V_1$ must be abelian. Indeed, $V_1$ must be reductive and if $V_1$ had a non-trivial semisimple part, $V^G_1$ could not be zero \cite{Kac90}.

We consider two types of examples.
\begin{enumerate}[wide]
\item We let $V=V_L$ be a lattice \voa{}, which has the pointed representation category $\Rep(V_L)\cong\mathcal{C}(L'/L)$, and consider a cyclic group $G=\langle\hat\nu\rangle$ of automorphisms, where $\hat\nu$ is some lift of a lattice isometry $\nu\in\Aut(L)$. Then it is shown in \cite{Lam20} that $\Rep(V_L^{\hat\nu})$ is pointed if and only if the induced action of $\nu$ on $L'/L$ is trivial. Moreover, $(V_L^{\hat\nu})_1$ is trivial if and only if $L$ has no norm-2 vectors and $\nu$ acts fixed-point freely on $(V_L)_1\cong L\otimes_\Z\C$.

If $\nu\in\Aut(L)$ is fixed-point free and acts trivially on $L'/L$, then $L$ can be primitively embedded into a unimodular lattice $N$ (of possibly larger rank) and $\nu$ lifts to an isometry of $N$ such that $L=N_\nu=(N^\nu)^\bot$ is the coinvariant lattice \cite{Lam20}. Hence, we are looking at lattice orbifolds of the form $V_{N_\nu}^{\hat{\nu}}$ for some unimodular lattice $N$ such that the coinvariant lattice $N_\nu$ (but not necessarily $N$ itself) has no norm-$2$ vectors.
\item We let $V$ be a \strat{}, holomorphic \voa{} with $V_1$ abelian (or even zero) and consider a cyclic (or more generally abelian) group $G=\langle g\rangle$ of automorphisms. Then $\Rep(V^g)$ is a twisted Drinfeld double $\mathcal{D}_\omega(G)$ \cite{EMS20a,DNR21b}, which is often pointed (e.g., when $G$ is cyclic) and we demand that $g$ act fixed-point freely on $V_1$.
\end{enumerate}

\subsubsection*{From the Leech Lattice}
The Leech lattice $\Lambda$ is the unique positive-definite, even, unimodular lattice of rank~$24$ without norm-$2$ vectors. We consider potential Heisenberg commutants of the form $C\cong V_{\Lambda_\nu}^{\hat\nu}$ for some automorphism $\nu\in\Aut(\Lambda)$, as in \eqref{eqn:SchellekensHeisenbergCommutant}. As just explained, these \voa{}s are \strat{} (and also positive), have a pointed (and pseudo-unitary) representation category and satisfy $C_1=\{0\}$. It remains to verify that (3) of \autoref{prop:hypgenhol} holds.

By \autoref{cor:increasingc}, any of the $11$ Heisenberg commutants of this form already appearing in the Schellekens \voa{}s (see \autoref{table:12}) will still be the Heisenberg commutant of a hyperbolic genus $(C,32)_{\Vect}$ of a \strat{}, holomorphic \voa{}. But, as the corresponding lattice genera in \autoref{prop:hypgenhol} have a rank that is 8 larger, it is much easier for them to exist. We therefore expect that most (if not all) of the automorphisms $\nu\in\Aut(\Lambda)$ yield a hyperbolic genus $(V_{\Lambda_\nu}^{\hat\nu},32)_{\Vect}$ in the bulk genus $(\Vect,32)$.

For simplicity, let us assume that $\nu\in\Aut(\Lambda)$ does not exhibit order doubling when lifted to an automorphism of $V_\Lambda$. This holds for $152$ of the $160$ algebraic conjugacy classes (i.e.\ conjugacy classes of cyclic subgroups) in $\Co_0\cong\Aut(\Lambda)$. Then, as we described in \autoref{sec:holhyp}, the representation category of $V_{\Lambda_\nu}^{\hat\nu}$ has the simple form \cite{Lam20}
\begin{equation*}
\Rep(C)\cong\mathcal{C}(R_C)\quad\text{with}\quad R_C\cong\Lambda'_\nu/\Lambda_\nu\times H,
\end{equation*}
where $H$ is the metric group describing the twisted Drinfeld double $\mathcal{D}_\omega(\Z_n)\cong\mathcal{C}(H)$ for $n=|\nu|$ and $\omega$ determined by the conformal weight of the unique irreducible $\hat\nu$-twisted $V_\Lambda$-module (called type in \cite{EMS20a}).

It is not difficult to verify that, indeed, in all $152$ cases, there is a (non-empty) lattice genus
\begin{equation*}
\II_{32-c(C),0}(\overline{R_C})=\II_{8+\rk(\Lambda^\nu),0}(\overline{R_C}).
\end{equation*}
This follows from \cite{Nik80} (see \autoref{sec:lat} for details), as the rank of the lattice $32-c(C)=8+\rk(\Lambda^\nu)$ turns out to be always greater than the minimal number $l(R_C)$ of generators of the abelian group $R_C$.

Overall, this yields $152$ (likely $160$) \voa{}s that appear as Heisenberg commutants in the bulk genus $(\Vect,32)$. We expect many of these \voa{}s to be non-isomorphic.
We list some simple examples:
\begin{equation*}
\begin{tabular}{l|r|c||l|r||c}
$\Rep(C)$ & $c$ & $\nu\in\Aut(\Lambda)$ & Lattice genus & No.\ &  $|(C,32)_{\Vect}|$\ \\\hline\hline
$\mathcal{C}(3^{+7})$ & 18 & $1^{-3}3^9$ & $\II_{14,0}(3^{-7})$ & 29 & $\geq29$ \\
$\mathcal{C}(5^{+5})$ & 20 & $1^{-1}5^5$ & $\II_{12,0}(5^{+5})$ & 41 & $\geq41$ \\
$\mathcal{C}(7^{-2})$ & 24 & $1^{-4}7^4$ & $\II_{8,0}(7^{-2})$  &  3 & $\geq3$
\end{tabular}
\end{equation*}
We remark that in this construction the Heisenberg commutants $C$ with central charge $c=24$ correspond exactly to the fixed-point free automorphisms $\nu\in\Aut(\Lambda)$, in which case $\Lambda_\nu=\Lambda$ and the \voa{} $C$ is simply a cyclic orbifold of the holomorphic Leech lattice \voa{}. Some but not all of these will reappear below when considering cyclic orbifolds of the moonshine module (the precise criterion involves the vacuum anomaly $\rho_\nu$; see, e.g., \cite{MS23}).

\subsubsection*{From Extremal Lattices in Rank~32}

We can also apply the method we just described to even, unimodular lattices $N$ of rank~$32$. We restrict to unimodular lattices $N$ that already themselves have the property of possessing no norm-$2$ vectors.
It is shown in \cite{Kin03} that there are at least ten million such lattices.
Hence, we forego a more systematic description, and rather content ourselves with investigating an example.

The $32$-dimensional Barnes--Wall lattice $N=\operatorname{BW}_{32}$ is a positive-definite, even, unimodular lattice without norm-$2$ vectors \cite{BW59}. Its isometry group $\Aut(N)$ can be computed in \texttt{Magma} \cite{Magma} and has $399$ conjugacy classes.

We consider potential Heisenberg commutants $C=V_{N_\nu}^{\hat\nu}$ for $\nu\in\Aut(N)$, now with the Barnes--Wall lattice $N=\operatorname{BW}_{32}$ replacing the Leech lattice. We again restrict for simplicity to the cases where $\nu$ does not have order-doubling on $N$ so that $\Rep(C)\cong\mathcal{C}(R_C)$ with the metric group $R_C\cong N'_\nu/N_\nu\times H$ where $H$ is the metric group associated with some twisted Drinfeld double $\mathcal{D}_\omega(\Z_n)\cong\mathcal{C}(H)$ where $n=|\nu|$. It remains to check (3) of \autoref{prop:hypgenhol}, which is now rather restrictive in comparison. We need to verify that
\begin{equation*}
\II_{32-c(C),0}(\overline{R_C})=\II_{\rk(N^\nu),0}(\overline{R_C})
\end{equation*}
corresponds to a (non-empty) lattice genus.

For $18$ conjugacy classes in $\Aut(N)$, the rank of the lattice $32-c(C)=\rk(N^\nu)$ is greater than the minimal number $l(R_C)$ of generators of the abelian group $R_C$, so that the above lattice genus exists \cite{Nik80}. (There are a few edge cases with the rank being equal to $l(R_C)$, which we would have to examine more closely.) We also (for the moment) ignore those Heisenberg commutants $C$ obtained in this way that have the same central charge and representation category as a Heisenberg commutant obtained from the Leech lattice (as they could potentially be isomorphic). This leaves us with at least ten new and non-isomorphic Heisenberg commutants appearing in the bulk genus $(\Vect,32)$. We list some simple examples:
\begin{equation*}
\begin{tabular}{l|r|c||l|c||c}
$\Rep(C)$ & $c$ & $\nu\in\Aut(N)$ & Lattice genus & No.\ &  $|(C,32)_{\Vect}|$\ \\\hline\hline
$\mathcal{C}(7^{-6})$ & 24 & $1^47^4$ & $\II_{8,0}(7^{-6})$ & 3 & $\geq3$ \\
$\mathcal{C}(2_{\II}^{+4}7^{+5})$ & 26 & $1^22^17^214^1$ & $\II_{6,0}(2_{\II}^{+4}7^{-5})$  &  1 & $\geq1$
\end{tabular}
\end{equation*}

\subsubsection*{From the Moonshine Module}

Another source of hyperbolic genera $(C,32)_{\Vect}$ is the moonshine module. That is, we consider Heisenberg commutants of the form $(V^\natural)^G$ for some subgroup $G<\Aut(V^\natural)=\mathbb{M}$ of the monster group. Then $(V^\natural)_1^G=\{0\}$, and the representation category of $(V^\natural)^G$, a certain twisted Drinfeld double $\mathcal{D}_\omega(G)$ for some $3$-cocycle $\omega$ on $G$ \cite{DNR21b}, is pointed if and only if $G$ is abelian and certain auxiliary $2$-cocycles are all coboundaries \cite{CGR00}.
This condition is trivially satisfied if $G$ is cyclic. In that case, the twisted Drinfeld double $\mathcal{D}_\omega(G)$ is described, e.g., in \cite{Moe16,EMS20a} together with a formula for the class of the $3$-cocycle $\omega$ (called type).

Suppose for instance that $G$ is an abelian group with at most three generators and that the class $[\omega]\in H^3(G,\C^\times)$ is trivial. Then $\mathcal{D}_\omega(G)\cong\mathcal{C}(H)$ is pointed and the corresponding metric group $H$ has at most six generators. Then, the criterion from \cite{Nik80} (see also \autoref{sec:lat}) guarantees that a positive-definite, even lattice of rank $8$ with discriminant form $L'/L\cong\overline{H}$ always exists. Thus, for each such conjugacy class of subgroup $G<\mathbb{M}$, we obtain a hyperbolic genus $((V^\natural)^G,32)_{\Vect}$.

\medskip

We further specialize to the case of cyclic $G<\mathbb{M}$. Then, the representation category of $(V^\natural)^G$ is always pointed, regardless of the value of $\omega$. Hence, for each of the $194$ conjugacy classes $g\in\mathbb{M}$ we obtain a hyperbolic genus $((V^\natural)^g,32)_{\Vect}$.

However, not all of these Heisenberg commutants are new. Indeed, $53$ of the conjugacy classes are non-Fricke ($51$ up to algebraic conjugacy). For those, it was shown in \cite{Car18} (see also \cite{Tui92,Tui95}) that $(V^\natural)^g\cong V_\Lambda^{\hat\nu}$ for some lift of an element $\nu\in\Co_0=\Aut(\Lambda)$, which we treated above.

Hence, it suffices to consider Fricke elements $g\in \mathbb{M}$, i.e.\ the remaining $141$ conjugacy classes in $\mathbb{M}$. For these, it was shown in \cite{PPV16} that the only holomorphic extension of $(V^\natural)^g$ is again $V^\natural$. Therefore, these certainly produce new Heisenberg commutants, though some of them could be isomorphic to one another. Once again, we only list some simple examples (of Fricke type), labeling the elements of the monster group $\mathbb{M}$ as in \cite{CCNPW85}:
\begin{equation*}
\begin{tabular}{l|r|c||l|r||c}
$\Rep(C)$ & $c$ & $g\in\mathbb{M}$ & Lattice genus & No.\ &  $|(C,32)_{\Vect}|$ \\\hline\hline
$\mathcal{C}(1)$ & 24 & 1A & $\II_{8,0}(1)$ & 1 & 1\\
$\mathcal{C}(2_{\II}^{+2})$ & 24 & 2A & $\II_{8,0}(2_{\II}^{+2})$ & 1 & 1\\
$\mathcal{C}(3^{-2})$ & 24 & 3A & $\II_{8,0}(3^{-2})$ & 1 & $\geq1$\\
$\mathcal{C}(9^{+1})$ & 24 & 3C & $\II_{8,0}(9^{-1})$ & 2 & $\geq2$\\
\end{tabular}
\end{equation*}
Examples of lattices in the corresponding lattice genera are the root lattices $E_8$, $D_8$, $A_2E_6$ and $A_8$, respectively.

\begin{rem}
With the examples coming from the Leech lattice \voa{} $V_\Lambda$ and the Moonshine module $V^\natural$ (and assuming the uniqueness of the moonshine module), we have, in fact, classified all \strat{} \voa{}s $C$ of central charge $c=24$ with $C_1=\{0\}$ and whose representation categories are twisted Drinfeld doubles $\mathcal{D}_\omega(\Z_n)$ of cyclic groups.

Indeed, we know that they must have either $V_\Lambda$ or $V^\natural$ (or both) as $\Z_n$-extensions (or else $C_1$ could not be trivial), i.e.\ they are exactly of the form $C\cong(V^\natural)^g$ for $g\in\mathbb{M}$ or $C\cong V_\Lambda^{\hat\nu}$ for $\nu\in\Co_0\cong\Aut(\Lambda)$ fixed-point free. These two constructions overlap exactly for the automorphisms considered in \cite{Car18}, i.e.\ when $g$ is non-Fricke (and a corresponding condition on $\nu$).

We have further argued that all of these \voa{}s $C$ are Heisenberg commutants of the bulk genus $(\Vect,32)_{\Vect}$ and therefore of all bulk genera $(\Vect,32+8k)_{\Vect}$ with $k\in\N$.
\end{rem}

We discuss some examples in more detail. The \voa{}s $(V^\natural)^g$ are often related to sporadic groups. For instance, if we suppose $G=\Z_2$ and that $\omega$ is in the trivial class, by bosonizing the results of \cite{HM23}, it follows that there are exactly two \strat{} \voa{}s $C$ with $C_1=\{0\}$, central charge $c(C)=24$ and $\Rep(C)\cong\mathcal{D}(\Z_2)$ (assuming the moonshine uniqueness conjecture \autoref{conj:moonshineuniqueness}), corresponding to the elements $2\mathrm{A}$ (Fricke) and $2\mathrm{B}$ (non-Fricke) in $\mathbb{M}$. They are 
\begin{align*}
(V^\natural)^{2\mathrm{A}}&\cong \textsl{V}\mathbb{B}^\natural \otimes  L(\sfrac12) \oplus \textsl{V}\mathbb{B}^\natural(\sfrac32)\otimes L(\sfrac12,\sfrac12), \\
(V^\natural)^{2\mathrm{B}} &\cong V_{\Lambda}^+,
\end{align*}
where $\textsl{V}\mathbb{B}^\natural$ is the baby monster \voa{} \cite{Hoe95} and $V_\Lambda^+$ is the charge-conjugation (or $(-1)$-involution) orbifold of the Leech lattice \voa{} $V_\Lambda$.\footnote{More precisely, $\textsl{V}\mathbb{B}^\natural$ is the bosonic subalgebra of the baby monster vertex operator superalgebra studied in \cite{Hoe95}.}
On the other hand, the genus of positive-definite, even, rank-$8$ lattices $L$ with $\mathcal{C}(L'/L)=\mathcal{C}(2_{\II}^{+2}) = \overline{\mathcal{D}(\Z_2)} \cong \mathcal{D}(\Z_2)$ is non-empty, with the root lattice $D_8$ furnishing the only example. So, we find two hyperbolic genera of $c=32$, \strat{}, holomorphic \voa{}s, each containing only one \voa{}, namely $\mathrm{Ex}((V^\natural)^{2\mathrm{X}}\otimes \mathsf{D}_{8,1})$ for $\mathrm{X}=\mathrm{A},\mathrm{B}$.
(See \cite{Hoe17,HM23} for how to enumerate non-isomorphic simple-current extensions.)

\medskip

As another similar example, the genus of positive-definite, even, dimension-$8$ lattices $L$ with $\mathcal{C}(L'/L)=\mathcal{C}(3^{-2})=\overline{\mathcal{D}(\Z_3)}\cong \mathcal{D}(\Z_3)$ is nonempty, with the root lattice $A_2E_6$ giving the only example. There are also exactly two \strat{} \voa{}s $C$ with $C_1=0$, $c(C)=24$ and $\Rep(C) = \mathcal{D}(\Z_3)$, namely those corresponding to the elements $3\mathrm{A}$ (Fricke) and $3\mathrm{B}$ (non-Fricke) in $\mathbb{M}$:
\begin{align*}
(V^\natural)^{\mathrm{3A}}&\cong \textsl{VF}_{24}^\natural \otimes \mathcal{P}(3) \oplus \textsl{VF}_{24}^\natural(\sfrac85)\otimes \mathcal{P}(3,[2,0]),\\
(V^\natural)^{3\mathrm{B}}&\cong V_{\Lambda}^{\hat\nu},
\end{align*}
where $\mathcal{P}(3) = L(\sfrac45)\oplus L(\sfrac45,3)$ is the $\Z_3$-parafermion \voa{} (see \autoref{ex:para}), $\textsl{VF}_{24}^\natural$ is the \voa{} with $\textsl{Fi}_{24}$ symmetry \cite{HLY12} and $\nu$ denotes the automorphism of Frame shape $1^{-12}3^{12}$ in $\Aut(\Lambda)$. Thus, we obtain two hyperbolic genera $((V^\natural)^{3\mathrm{X}},32)_{\Vect}$ for $\mathrm{X}=\mathrm{A},\mathrm{B}$ with corresponding holomorphic \voa{}s $\mathrm{Ex}((V^\natural)^{3\mathrm{X}}\otimes \mathsf{A}_{2,1}\mathsf{E}_{6,1})$.
Probably, as for the order-$2$ examples above, each hyperbolic genus only contains one \strat{} \voa{}.

There is one more conjugacy class of order-$3$ elements in the monster $\mathbb{M}$, the 3C conjugacy class. The corresponding fixed-point subalgebra $(V^\natural)^{3\mathrm{C}}$ has the representation category $\mathcal{C}(9^{+1})$, which is a non-trivially twisted Drinfeld double for $\Z_3$. It is a conformal extension of the form $(V^\natural)^{3\mathrm{C}}=\mathrm{Ex}(\textsl{VT}^\natural \otimes \mathcal{P}(9))$, where $\mathcal{P}(9)$ is the $\Z_9$-parafermion \voa{}, and $\textsl{VT}^\natural$ is a \voa{} defined in \cite{BHLLR21} whose automorphism group contains (and is conjecturally precisely) the Thompson sporadic group. In more detail,  
\begin{align*}
    (V^\natural)^{3\mathrm{C}}&=\textsl{VT}^\natural\otimes\mathcal{P}(9)\oplus \textsl{VT}^\natural(\sfrac{20}{11})\otimes \mathcal{P}(9,[2,0])\oplus \textsl{VT}^\natural(\sfrac{16}{11})\otimes \mathcal{P}(9,[4,0])  \\
    & \quad  \oplus \textsl{VT}^\natural(\sfrac{21}{11})\otimes \mathcal{P}(9,[6,0]) \oplus \textsl{VT}^\natural(\sfrac{13}{11})\otimes \mathcal{P}(9,[8,0]).
\end{align*}
This \voa{} also defines a hyperbolic genus $((V^\natural)^{3\mathrm{C}},32)_{\Vect}$. The corresponding rank-$8$ lattices must have the discriminant form $9^{-1}=\overline{9^{+1}}$. There are exactly two lattices in this genus, among them the root lattice $A_8$. Hence, the hyperbolic genus $((V^\natural)^{3\mathrm{C}},32)_{\Vect}$ contains at least two holomorphic \voa{}s.

\medskip

We briefly mention one more example related to sporadic groups. Namely, there is the hyperbolic genus $((V^\natural)^{5\mathrm{A}},32)_{\Vect}$ for the conjugacy class 5A in $\mathbb{M}$ of Fricke type. The Heisenberg commutant $(V^\natural)^{5\mathrm{A}}\cong \mathrm{Ex}(\textsl{VHN}^\natural\otimes \mathcal{P}(5)^{\otimes2})$ is a certain conformal extension of two $\Z_5$-parafermion theories and a \voa{} $\textsl{VHN}^\natural$ whose automorphism group contains the Harada--Norton sporadic group. The precise extension can be deduced from the information given in \cite{BHLLR21}.


\subsection{Lower Bounds from Mass Formulae}\label{subsec:holomass}

It is interesting to see how the mass formula discussed in \autoref{subsec:mass} works in the context of \strat{}, holomorphic \voa{}s. Indeed, using \autoref{cor:massconstant}, we may compute the masses of all hyperbolic genera of the Schellekens \voa{}s.
Table~3 of \cite{BLS23} contains the quantities $[{\Aut}(\Rep(C)):\overline{\Aut}(C)]$ for all hyperbolic genera of the form $(C,24)_{\Vect}$. Similarly, Table~4 of \cite{Hoe17} contains the lattice masses $\mass(L)$ for each $L$ arising as the associated lattice of a $c=24$, \strat{}, holomorphic \voa{}. By \autoref{thm:massformula} \cite{Mor21}, the product of these two numbers yields the \voa{} masses of each of the hyperbolic genera $(C,24)_{\Vect}$.

We summarize these results in \autoref{tab:Schellekensgenera} (cf.\ \autoref{table:12}). The first column gives the name of the hyperbolic genus, the second the mass of the associated lattice genus. The last column gives the constant of proportionality between the lattice mass and the \voa{} mass (cf.\ \autoref{thm:massformula}). In other words, the mass of the hyperbolic genus is the product of the entries in the last two columns.

\begin{table}[ht]
\caption{Masses of hyperbolic genera of $c=24$, \strat{}, holomorphic \voa{}s (data from \cite{Hoe17,BLS23}).}
\begin{tabular}{cHHH|c|c}
Name & \# & $\rk(L)$ & $L'/L$ & $\mass(L)$ & $[\Aut(\tilde{L}):G_{\tilde{V}}]$ \\\hline \hline
A & 24 & 24 & $1$ & $\frac{131\cdot 283\cdot 593\cdot 617\cdot 691^2\cdot 3617\cdot 43867}{2^45\cdot 3^17\cdot 5^7\cdot 7^4\cdot 11^2\cdot 13^2\cdot 17\cdot 19\cdot 23}$ & $1$ \\
B & $17$ & $16$ & $2_{\II}^{+10}$ & $\frac{17\cdot 43\cdot 127\cdot 691}{2^{31}\cdot 3^7\cdot 5^3\cdot 7^2}$ & $1$ \\
C & $6$ & $12$ & $3^{-8}$ & $\frac{11\cdot 13 \cdot 61}{2^{16}\cdot 3^8\cdot 5^2 \cdot 7}$ & $2$ \\
D & $2$ & $12$ & $2_{\II}^{-10}4_{\II}^{-2}$ & $\frac{31}{2^{22}\cdot 3^7 \cdot 5^2 \cdot 7\cdot 11}$ & $2^{11}\cdot 3 \cdot 17$ \\
E & $5$ & $10$ & $2_2^{+2}4_{\II}^{+6}$ & $\frac{17}{2^{12}\cdot 3^5\cdot 7}$ & $2$ \\
F & $2$ & $8$ & $5^{+6}$ & $\frac{13}{2^{12}\cdot3^2\cdot 5^2}$ & $2$ \\
G & $2$ & $8$ & $2_{\II}^{+6}3^{-6}$ & $\frac{1}{2^{11}3^2}$ & $1$ \\
H & $1$ & $6$ & $7^{-5}$ & $\frac{1}{2^5\cdot 3^2\cdot 5 \cdot 7}$ & $2$ \\
I & $1$ & $6$ & $2_5^{-1}4_1^{+1}8_{\II}^{-4}$ &$\frac{1}{2^9\cdot 3\cdot 5}$ &$2$ \\
J & $1$ & $6$ & $2_{\II}^{+4}4_{\II}^{-2}3^{+5}$ & $\frac{1}{2^9\cdot 3^3}$ & $4$ \\
K & $1$ & $4$ & $2_{\II}^{-2}4_{\II}^{-2}5^{+4}$ & $\frac{1}{2^7 \cdot 3}$ & $3$ \\
L & $1$ & $0$ & $1$ & $1$ & $1$
\end{tabular}
\label{tab:Schellekensgenera}
\end{table}

\medskip

\autoref{cor:massconstant} also allows us to perform more explicit checks of \autoref{thm:massformula}. Indeed, we can explicitly perform the sum over \voa{}s in a hyperbolic genus by rewriting it as
\begin{equation}\label{eqn:massrewriting}
\sum_{V\in \hgen(W)} \frac{1}{|G_{V}|} = \sum_{V\in\hgen(W)} \frac{|\overline{\Aut}(L)|}{|\overline{\Aut}(L)\cap\tau^\ast \overline{\Aut}(C)|}\frac{1}{|\Aut(L)|}
\end{equation}
and confirm that the result is the correct integer multiple of the corresponding lattice mass. Fortunately, all the groups appearing on the right-hand side have been computed in \cite{BLS23}. We report this data for the hyperbolic genus D in \autoref{tab:genusD}. The interested reader can insert this data into equation~\eqref{eqn:massrewriting} and confirm that it reproduces the mass reported in \autoref{tab:Schellekensgenera} computed using \autoref{thm:massformula}.

\begin{table}[ht]
\caption{\Strat{}, holomorphic \voa{}s in the hyperbolic genus~D and quantities related to the mass formula in the form of \eqref{eqn:massrewriting}.}
\begin{tabular}{r|l|c|c|c|c}
No. & $V_1$ & $\overline{\Aut}(L)\cap \tau^\ast \overline{\Aut}(C)$ & $L$ & $\Aut(L)$ & $\overline{\Aut}(L)$ \\\hline\hline
2 & $\mathsf{A}_{1,4}^{12}$ & $2^{12}.M_{12}$ & $\sqrt{2}D_{12}$ & $2^{12}.\mathfrak{S}_{12}$ & $2^{12}.\mathfrak{S}_{12}$ \\ 
12 & $\mathsf{B}_{2,2}^6$ & $2^{12}.(\Z_2^6:\mathfrak{S}_5)$ & & \\ 
23 & $\mathsf{B}_{3,2}^4$ & $2^{12}.(\mathfrak{S}_3\wr \mathfrak{A}_4)$ & &  \\
29 & $\mathsf{B}_{4,2}^3$ & $2^{12}(\mathfrak{S}_4 \wr \mathfrak{S}_3)$ & & \\ 
41 & $\mathsf{B}_{6,2}^2$ & $2^{12}.(\mathfrak{S}_6\wr 2)$ & & \\ 
57 & $\mathsf{B}_{12,2}$ & $2^{12}.\mathfrak{S}_{12}$ & &  \\\hline 
13 & $\mathsf{D}_{4,4}\mathsf{A}_{2,2}^4$ & $2^{1+4}.(\mathfrak{S}_3\wr \mathfrak{S}_4).\mathfrak{S}_3$ & $\sqrt{2}E_8\sqrt{2}D_4$ & $G_1$ & $G_2$ \\
$22$ & $\mathsf{C}_{4,2}\mathsf{A}_{4,2}^2$ & $2^{1+4}.(\mathfrak{S}_2\times \mathfrak{S}_5\wr\mathfrak{S}_2).\mathfrak{S}_3$ & & \\
36 & $\mathsf{A}_{8,2}\mathsf{F}_{4,2}$ & $2^{1+4}.(\mathfrak{S}_3\times\mathfrak{S}_9).\mathfrak{S}_3$ & &
\end{tabular}
\label{tab:genusD}
\end{table}

We explain the notation of \autoref{tab:genusD}. The first column refers to the numbering given in \cite{Sch93}. The second column indicates the affine structure of the theory. The third column gives one of the groups necessary for using \autoref{cor:massconstant}. The fourth column gives the associated lattice, the fifth its automorphism group and the sixth its image under the natural map $\mu_L$. Finally, $G_1=((2^{1+4}.\mathfrak{S}_3):\mathfrak{S}_3)\times 2.\mathrm{GO}_8^+(2)$ and $G_2=((2^{1+4}.\mathfrak{S}_3):\mathfrak{S}_3)\times \mathrm{GO}_8^+(2)$.

\medskip

Each of the $12$ Heisenberg commutants $C$ for which $(C,24)_{\Vect}$ is a non-empty hyperbolic genus will lead to non-empty hyperbolic genera $(C,8n)_{\Vect}$ for $n\geq 3$, by \autoref{cor:increasingc}. Using the mass formula, we may obtain lower bounds for the number of \voa{}s in the hyperbolic genus $(C,8n)_{\Vect}$ as a function of $n$, and estimate how the bound grows at asymptotically large central charge. 

We illustrate this for the Heisenberg commutant of the hyperbolic genus H. The only \voa{} in this genus in the Schellekens \voa{} $\mathbf{S}(\mathsf{A}_{6,7})$. Let $C$ denote its Heisenberg commutant so that the hyperbolic genus is $\mathrm{H}=(C,24)_{\Vect}$. We may obtain many more holomorphic \voa{}s by considering the hyperbolic genera $\mathrm{H}_m\coloneqq(C,24+8m)_{\Vect}$ for $m\geq 0$.

The associated lattice $L$ of $\mathbf{S}(\mathsf{A}_{6,7})$ is the unique even lattice in the genus $\gHH$.
Using standard techniques \cite{CS88}, one may compute the mass of the lattice genus $\II_{6+8m,0}(7^{-5})=\gen(L\oplus (E_8)^m)$, which contains the associated lattices of the hyperbolic \voa{} genus $\mathrm{H}_m$, and which therefore by \autoref{thm:massformula} and \autoref{tab:Schellekensgenera} is half of the mass of $\mathrm{H}_m$. Setting $n\coloneqq 6+8m$ and $d\coloneqq 7^5$, one finds 
\begin{equation*}
    \mass(L\oplus (E_8)^m) = 2^{-n}7^{\frac{5(n-5)}{2}}M_7(n-5)M_7(5)\mathrm{std}(n,d) \mathrm{std}_7(n,d)^{-1}.
\end{equation*}
Here, $\mathrm{std}(n,d)$ is the so-called standard mass,
\begin{equation*}
    \mathrm{std}(n,d) = \begin{cases}
        {\displaystyle 2\pi^{-\frac{n(n+1)}{4}}\Biggl(\prod_{j=1}^{n}\Gamma(\tfrac{j}2)\Biggr)\Biggl(\prod_{j=1}^{(n-1)/2}\!\!\zeta(2j)\Biggr)},& n \text{ odd},\\
        \noalign{\vskip3pt}
        {\displaystyle 2\pi^{-\frac{n(n+1)}{4}}\Biggl(\prod_{j=1}^{n}\Gamma(\tfrac{j}2)\Biggr)\Biggl(\prod_{j=1}^{n/2-1}\zeta(2j)    \Biggr)\zeta_{(-1)^{n/2}d}(n/2)}, & n \text{ even},
    \end{cases}
\end{equation*}
which is expressed in terms of the gamma function $\Gamma(z)$, the Riemann zeta function $\zeta(s)$ and a Dirichlet series
\begin{equation*}
    \zeta_D(s) = \prod_{p\text{ prime}}\left(1-\left(\frac{D}{p}\right)\frac{1}{p^s}\right)^{-1},
\end{equation*}
where $\bigl(\frac{D}{p}\bigr)$ is a Kronecker symbol. We have also used the standard $p$-mass, which is defined as
\begin{equation*}
    \mathrm{std}_p(n,d) = \begin{cases}
        \frac12 \left( (1-p^{-2})(1-p^{-4})\dots(1-p^{2-n})(1-\epsilon p^{-n/2})  \right)^{-1} ,& n \text{ even},\\
        \frac12 \left((1-p^{-2})(1-p^{-4})\dots(1-p^{1-n})\right)^{-1},& n \text{ odd},
    \end{cases}
\end{equation*}
where $\epsilon$ is defined as the Kronecker symbol 
\begin{equation*}
    \epsilon = \left(\frac{(-1)^{n/2}d}{p}\right).
\end{equation*}
Finally, let $p$ be an odd prime. If $n$ is odd, then we define
\begin{equation*}
    M_p(n) =
    \frac12\bigl( (1-p^{-2})(1-p^{-4})\dots(1-p^{1-n})\bigr)^{-1},
\end{equation*}
and if $n$ is even, we define 
\begin{equation*}
    M^\pm_p(n) = \frac12 \bigl( (1-p^{-2})(1-p^{-4})\dots(1-p^{2-n})(1\mp p^{-n/2}) \bigr)^{-1}.
\end{equation*}
Using this formula for the lattice mass, we find for small values of $m$ that 
\begin{align*}
    \mass(\mathrm{H}_1) &= \frac{3990690204149401}{7023034368000}\approx 568.2,\\
    \mass(\mathrm{H}_2) &=\frac{117487377559108737529255642264988215913241487}{4325549838586557235200000}\approx 2.7\times 10^{19},\\
    \mass(\mathrm{H}_3)&\approx 1.1\times 10^{53}.
\end{align*}
Recalling that the mass furnishes a lower bound for the number of \voa{}s in the corresponding hyperbolic genus, one sees that the number of \voa{}s grows quite rapidly with central charge. Indeed, using standard estimates for the functions appearing in the mass formula, we find an asymptotic expansion of the form 
\begin{equation*}
    \log\left(\mass(\mathrm{H}_m)\right) \sim  \frac14 n^2 \left(\log n + a \right)+ \frac14 n \left( - \log n+b  \right)+\frac1{24}\left(-\log n +c\right)+\dots
\end{equation*}
where the parameters are 
\begin{align*}
    a=-\frac32 -\log(2\pi),\;\;
    b=1+\log\left(\frac{282475249\pi}{2}\right),\;\;
    c= 1+ \log \left(\frac{2^{18}\zeta_\infty^{24}   M_7(5)^{24}}{7^{300}A^{12}}\right)
\end{align*}
with $\zeta_\infty \coloneqq \prod_{j=1}^\infty \zeta(2j)\approx 1.82102$ and the Glaisher--Kinkelin constant $A\approx 1.28243$. 


\subsection{Siegel--Weil Identity and Ensemble-Averaged Holography}\label{subsec:SiegelWeil}

The mass formula explored in the previous section is a prescription for computing a particular weighted sum over vertex operator algebras in a fixed hyperbolic genus. It is interesting to ask to what extent this result can be generalized.

For example, in the lattice context, it is possible to compute the weighted sum of the theta functions of the lattices in a fixed genus. Recall that the theta function of a positive-definite lattice $L$ is defined as 
\begin{equation*}
    \Theta_L(\tau) = \sum_{\lambda \in L}q^{\langle\lambda,\lambda\rangle/2 }.
\end{equation*}
In the case of positive-definite, even, unimodular lattices of dimension $8n$, the weighted sum over theta functions is determined by the following Siegel--Weil identity (see, e.g., equation~(103) of \cite{Ser73}),
\begin{equation}\label{eqn:siegelweillattice}
    \sum_{L\in\mathrm{gen}(E_8^n)} \frac{\Theta_L(\tau)}{|\Aut(L)|} = M_{n} E_{4n}(\tau),
\end{equation}
where $E_k(\tau)$ is the weight $k$ Eisenstein series for $\mathrm{SL}_2(\Z)$ with constant term normalized to $1$ and $M_n\coloneqq\mathrm{mass}((E_8)^n)$ is the mass of the positive-definite, even, unimodular lattices of dimension $8n$ given in \eqref{eqn:unimodularmass}.

Equation~\eqref{eqn:siegelweillattice} can be rewritten in a more physically suggestive form. First, recall that the vacuum character $\mathrm{ch}_{V_L}(\tau)$ of a lattice vertex operator algebra $V_L$ takes the form 
\begin{equation*}
    \mathrm{ch}_{V_L}(\tau) = \tr_{V_L}q^{L_0-c/24} = \frac{\Theta_L(\tau)}{\eta(\tau)^{\rk(L)}},
\end{equation*}
where $\eta(\tau) = q^{1/24}\prod_{n=1}^\infty (1-q^n)$ is the Dedekind eta function. Moreover, we may re-express the Eisenstein series via the elementary identity (see, e.g., equation~(2.11) of \cite{CD14})
\begin{equation*}
    \sum_{\gamma\in \Gamma_\infty \backslash \mathrm{SL}_2(\Z)} \left(\frac{\epsilon(\gamma)}{\eta(\gamma\tau)}\right)^{k} = \frac{1}{\eta(\tau)^{k}}\sum_{\gamma\in \Gamma_\infty \backslash \mathrm{SL}_2(\Z)}\frac{1}{(c\tau+d)^{\frac k2}}=\frac{E_{k/2}(\tau)}{\eta(\tau)^k}
\end{equation*}
where $\Gamma_\infty = \{\pm \left(\begin{smallmatrix} 1 & n \\ 0 & 1 \end{smallmatrix}\right) \mid n\in\Z\}$ and $\epsilon(\gamma)$ is the ``multiplier system'' for the Dedekind eta function, i.e.
\begin{equation*}
    \eta(\gamma\tau) = \epsilon(\gamma)(c\tau+d)^{1/2}\eta(\tau),\quad\gamma = \left(\begin{smallmatrix} a & b \\ c & d\end{smallmatrix}\right)\in\mathrm{SL}_2(\Z).
\end{equation*}
Putting these two facts together, the Siegel--Weil identity becomes
\begin{equation*}
    \frac{1}{M_n}\sum_{V_L\in \mathrm{hgen}(V_{E_8}^n)}\frac{\mathrm{ch}_{V_L}(\tau)}{|\Aut(L)|}=\sum_{\gamma\in\Gamma_\infty\backslash \mathrm{SL}_2(\Z)}\left(\frac{\epsilon(\gamma)}{\eta(\gamma\tau)}\right)^{8n}.
\end{equation*}

This equation admits a kind of holographic interpretation in the spirit of \cite{MW20,ACHT21}. Indeed, one may think of each summand on the right-hand side as the partition function of a 3d TQFT evaluated on an $\mathrm{SL}_2(\Z)$ black hole, with, e.g., $\gamma=\mathrm{id}$ corresponding to thermal AdS. The  TQFT is roughly an abelian Chern--Simons theory with gauge group $\mathbb{R}^r$ and trivial K-matrix (though see Section~4.4 of \cite{MW20} for why this identification is problematic); the fact that we sum over bulk handle-body geometries with fixed asymptotic boundary is motivated by the standard holographic dictionary. However, instead of being dual to a single 2d CFT, the left-hand side of the equation suggests that what we find instead is an ``ensemble average'' of free chiral 2d CFTs with fixed central charge, where each theory $V_L$ occurs in the ensemble with probability $(M_n|\Aut(L)|)^{-1}$. Thus, this Siegel--Weil identity reflects a kind of chiral variation of the duality involving non-chiral free theories considered in \cite{MW20,ACHT21}.

\medskip

The discussion so far has illustrated how hyperbolic genera of \emph{free} chiral CFTs are implicated in a toy model of holography. There is a natural generalization to arbitrary (i.e., not necessarily free) hyperbolic genera of holomorphic vertex operator algebras.

\begin{conj}[Siegel--Weil Identity]\label{conj:generalizedSiegelWeil}
Let $V$ be an arbitrary \strat{} holomorphic vertex operator algebra with central charge $c$, interacting sector (Heisenberg commutant) $C$ and free sector (associated lattice vertex operator algebra) having central charge $r\geq 3$. Then there is a generalized Siegel--Weil identity
\begin{equation*}
   \frac{1}{\mathrm{mass}(V)} \sum_{W\in\mathrm{hgen}(V)} \frac{\mathrm{ch}_W(\tau)}{|G_W|} =\sum_{\gamma\in \Gamma_\infty\backslash \mathrm{SL}_2(\Z)}\epsilon(\gamma)^c \frac{\mathrm{ch}_C(\gamma\tau)}{\eta(\gamma\tau)^r}.
\end{equation*}
\end{conj}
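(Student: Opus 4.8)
The plan is to reduce the identity to a vector-valued Siegel--Weil formula for the Weil representation attached to the discriminant form of the associated lattice, using the explicit description of the hyperbolic genus of a holomorphic $V$ furnished by \autoref{prop:hypgenhol} and \autoref{cor:holhyp}. Write $R_C$ for the metric group with $\Rep(C)\cong\mathcal{C}(R_C)$, set $r=c-c(C)=\rk(L)$ and $D\coloneqq\overline{R_C}$. By \autoref{prop:hypgenhol}, every $W\in\hgen(V)$ has the form $W=\bigoplus_{\beta\in L_W'/L_W}C^{\tau_W(\beta)}\otimes V_{\beta}$ for some lattice $L_W$ in the genus $\mathcal{G}\coloneqq\II_{r,0}(D)$ and a gluing anti-isometry $\tau_W\colon L_W'/L_W\to R_C$, and two gluings give isomorphic $W$ exactly when they are related by the left–right action of $\overline{\Aut}(C)\times\overline{\Aut}(L_W)$.

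First I would package the modular data. Let $\chi_C=(\mathrm{ch}_{C^\mu})_{\mu\in R_C}\in\mathbb{C}[R_C]$ be the vector of characters of the fixed interacting sector, and for a coset $\beta\in L'/L$ let $\Theta_{L,\beta}(\tau)=\sum_{v\in\beta}q^{\langle v,v\rangle/2}$, assembled into $\Theta_L\in\mathbb{C}[L'/L]$. The character of each summand factors, giving
\[
\mathrm{ch}_W(\tau)=\frac{1}{\eta(\tau)^r}\,\bigl\langle (\tau_W)_*\Theta_{L_W}(\tau),\,\chi_C(\tau)\bigr\rangle,
\]
where $\langle\cdot,\cdot\rangle$ is the natural pairing on $\mathbb{C}[R_C]$ and $(\tau_W)_*$ transports $\Theta_{L_W}$ along the gluing. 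These ingredients transform compatibly: $\Theta_L$ is a weight-$r/2$ form for the Weil representation $\rho_{L'/L}$, the anti-isometry $\tau_W$ intertwines $\rho_{L'/L}$ with $\rho_{D}$, and $\chi_C$ is a weight-$0$ form for the dual representation $\rho_{R_C}=\overline{\rho_D}$ carrying the central-charge multiplier coming from $S^C=+s$ (valid by pseudo-unitarity and \autoref{prop:pseudounitarymod8}) and $T^C=e^{-2\pi\i c(C)/24}\tilde t$.

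The heart of the argument is to perform the weighted sum over $\hgen(V)$ at the level of the $\mathbb{C}[R_C]$-valued theta. Grouping by the associated lattice and using the characterization of $G_W$ in \autoref{prop:GVcharacterization} (in the holomorphic case $A=L'/L$, so $G_W=\mu_L^{-1}(\overline{\Aut}(L)\cap\tau_W^\ast\overline{\Aut}(C))$), an orbit-counting argument over the gluings yields
\[
\sum_{W\in\hgen(V)}\frac{(\tau_W)_*\Theta_{L_W}}{|G_W|}
=\frac{1}{|\overline{\Aut}(C)|}\sum_{L\in\mathcal{G}}\frac{1}{|\Aut(L)|}\sum_{\tau}\tau_*\Theta_{L},
\]
the inner sum running over all anti-isometries $L'/L\to R_C$. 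The right-hand side is manifestly $\Aut(R_C)$-invariant, and the vector-valued Siegel--Weil formula for the genus $\mathcal{G}$ identifies it with a multiple of the normalized weight-$r/2$ Eisenstein series $E_{r/2,D}$ for $\rho_D$ with constant term $\mathbf{e}_0$. Tracking the scalar and using $|\Aut(R_C)|=|\Aut(\Rep(C))|$ together with \autoref{thm:massformula} and \autoref{cor:massconstant} (which give $\mass(V)=\tfrac{|\Aut(R_C)|}{|\overline{\Aut}(C)|}\mass(\mathcal{G})$), this collapses to $\sum_W\frac{1}{|G_W|}(\tau_W)_*\Theta_{L_W}=\mass(V)\,E_{r/2,D}$, whence $\frac{1}{\mass(V)}\sum_W\frac{\mathrm{ch}_W}{|G_W|}=\frac{1}{\eta^r}\langle E_{r/2,D},\chi_C\rangle$. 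It is pleasant that summing over gluings supplies exactly the $\Aut(R_C)$-averaging that the Siegel--Weil formula requires.

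It then remains to unfold. Writing $E_{r/2,D}=\sum_{\gamma\in\Gamma_\infty\backslash\SLZ}(c\tau+d)^{-r/2}\rho_D(\gamma)^{-1}\mathbf{e}_0$, moving $\rho_D(\gamma)^{-1}$ across the pairing onto $\chi_C$ by duality, and applying the modular transformations of $\chi_C$ and of $\eta^{-r}$ turns each term into $\epsilon(\gamma)^{c(C)}\epsilon(\gamma)^{r}\,\mathrm{ch}_C(\gamma\tau)/\eta(\gamma\tau)^r$; since $c=r+c(C)$ this is precisely the $\gamma$-th summand on the right-hand side, so $\frac{1}{\eta^r}\langle E_{r/2,D},\chi_C\rangle$ equals the asserted Poincaré series. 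The main obstacle is analytic and concerns the two Eisenstein-series steps at small rank: for $r\geq 5$ the weight $r/2>2$ renders both the vector-valued Siegel--Weil formula and the unfolding absolutely convergent and the argument is essentially rigorous, whereas for $r=3,4$ (weights $3/2,2$) one must replace the naive Eisenstein series by its Hecke-regularized, a priori non-holomorphic completion and then prove that the genus average of the $\mathrm{ch}_W$ carries no non-holomorphic or mock remainder. Controlling this borderline regime is what keeps the statement conjectural in general; the case we can settle falls in the convergent range $r\geq 5$ and is verified along the lines above.
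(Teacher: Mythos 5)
First, a point of order: the paper does not prove this statement at all --- it is \autoref{conj:generalizedSiegelWeil}, stated as a conjecture, and the authors explicitly leave the general case to future work. What the paper supplies is a verification in a single case, the hyperbolic genus B of $c=24$ holomorphic \voa{}s: the left-hand side is computed from the classification data in \autoref{tab:genusB} (yielding $J(\tau)+\tfrac{1488}{17}$), while the right-hand side is reduced via \eqref{eqn:modularsum} to a finite sum of Eisenstein series for $\Gamma(6)$ paired against the explicitly known character of the Heisenberg commutant, and the two sides are matched by direct (computer) calculation. So your proposal is not an alternative to an existing proof; it is an attempt at the missing general argument, essentially the program the authors gesture at in their future directions.

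As a program it is well aimed and consistent with the paper's machinery: your orbit count over gluings is exactly the refinement of \autoref{thm:massformula}, via \autoref{prop:GVcharacterization} and \autoref{cor:massconstant}, obtained by replacing the summand $1$ with the vector-valued theta series, and your scalar bookkeeping $\mass(V)=[\Aut(\Rep(C)):\overline{\Aut}(C)]\,\mass(L)$ matches the paper. But three gaps keep it from being a proof of the statement as written. (i) Your parametrization of $\hgen(V)$ by \emph{all} pairs $(L,\tau)$, with isomorphism given by the $\overline{\Aut}(C)\times\overline{\Aut}(L)$ double action, silently assumes that every anti-isometry $\tau\colon L'/L\to R_C$ yields a strongly rational (in particular CFT-type) extension; this is condition (4) of \autoref{prop:hypgenhol} and is automatic only when $C$ is positive, which the conjecture does not grant --- holomorphy of $V$ gives pseudo-unitarity of $\Rep(C)$ (cf.\ \autoref{prop:pseudounitarymod8}), which is weaker. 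If some gluings violate the weight bound, your sum over all $\tau$ overcounts and the identification with the full Eisenstein series needs correction. (ii) The vector-valued Siegel--Weil input, the identification $\Aut(\Rep(C))\cong\Aut(R_C)$, and the multiplier bookkeeping in the unfolding are asserted rather than established; these are plausible but carry exactly the sign pitfalls the paper is careful about (\autoref{lem:signSmatrix}). (iii) Most importantly, you concede $r=3,4$: the conjecture is stated for $r\geq 3$, and weights $\tfrac32$ and $2$ are precisely where the naive Eisenstein series and the unfolding break down, so even granting (i) and (ii) your argument covers at most $r\geq 5$. In short: a credible outline of the intended proof, but with a genuine positivity/CFT-type gap in the gluing count and an unproved borderline range, and in any case not comparable to anything actually proved in the paper.
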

\begin{rem}
    The assumption that $r\geq 3$ is in order to ensure that the right-hand side of the equation converges. This is similar to the fact that the Narain ensemble average diverges for $(c_L,c_R)=(1,1)$ free boson theories \cite{MW20,ACHT21}.
\end{rem}

Our proposal is spiritually similar to (and can be thought of as a chiral variation of) the work of \cite{DHJ21}, where the authors show that averaging over moduli in non-chiral Wess--Zumino--Witten models can be achieved by computing a modular sum over the characters of parafermion theories.  Our formula is analogous, but with the parafermion characters replaced by those of the Heisenberg commutant~$C$. We may think of the right-hand side as arising by summing a particular TQFT over bulk geometries:  the TQFT is roughly $\mathbb{R}^r$ abelian Chern--Simons theory with trivial K-matrix coupled to ``topological matter'' represented by the modular tensor category $\Rep(C)$. The left-hand side is an ensemble average over (not-necessarily free) chiral CFTs that are hyperbolically equivalent to $W$.

\medskip
    
We shall verify \autoref{conj:generalizedSiegelWeil} in an example momentarily. But first, let us describe concretely how one may evaluate the right-hand side of this generalized Siegel--Weil identity, given an understanding of $\mathrm{ch}_C(\tau)$ and its behavior under modular transformations.

We assume for simplicity that $r$ is even. In this case, only even powers of the multiplier system $\epsilon$ appear in various formulae, and we can use that $\epsilon^p$ is a linear (as opposed to projective) representation of $\mathrm{SL}_2(\Z)$ when $p$ is an even integer. By the results of \cite{CG99,DLN15}, because $C$ is strongly rational, there exists an $N_1\in\Ns$ such that the representation with respect to which the character vector of $C$ transforms under modular transformations \cite{Zhu96} has kernel containing $\Gamma(N_1)$. In particular, $\mathrm{ch}_C(\tau)$ is then a weight-$0$ modular function for $\Gamma(N_1)$, i.e.\ 
\begin{equation*}
    \mathrm{ch}_C(\gamma\tau) = \mathrm{ch}_C(\tau)\quad \text{for all } \gamma\in\Gamma(N_1).
\end{equation*}
Similarly, there is an integer $N_2\in\Ns$ such that $\epsilon^{c-r}$ contains $\Gamma(N_2)$ in its kernel. Let $N\coloneqq\lcm(N_1,N_2)$, and set $\Gamma_\infty^+(N)\coloneqq\{\left(\begin{smallmatrix} 1 & Nn \\ 0 & 1 \end{smallmatrix}\right)\mid n\in\Z\}$. Then, letting $j(\gamma,\tau) \coloneqq (\tilde{c}\tau+\tilde{d})$ for $\gamma=\left(\begin{smallmatrix}\ast  & \ast \\ \tilde{c} & \tilde{d}\end{smallmatrix}\right)$,
we may compute 
\begin{align}\label{eqn:modularsum}
        \sum_{\gamma\in\Gamma_\infty\backslash\mathrm{SL}_2(\Z)} \epsilon(\gamma)^c \frac{\mathrm{ch}_C(\gamma\tau)}{\eta(\gamma\tau)^r}&= \frac{1}{2N}\sum_{\gamma\in\Gamma_\infty^+(N)\backslash\mathrm{SL}_2(\Z)} \!\!\epsilon(\gamma)^c \frac{\mathrm{ch}_C(\gamma\tau)}{\eta(\gamma\tau)^r} \nonumber\\
    &=\frac{1}{2N}\sum_{\lambda'\in \Gamma(N)\backslash\mathrm{SL}_2(\Z)}\sum_{\lambda \in \Gamma_\infty^+(N)\backslash\Gamma(N)}\!\!\epsilon(\lambda\lambda')^c \frac{\mathrm{ch}_C(\lambda\lambda'\tau)}{\eta(\lambda\lambda'\tau)^r}\nonumber\\
    &=\frac{1}{2N}\sum_{\lambda'}\frac{\epsilon(\lambda')^c\mathrm{ch}_C(\lambda'\tau)}{\eta(\lambda'\tau)^r}  \sum_{\lambda }\frac{1}{j(\lambda,\lambda'\tau)^{\frac r2}}\\ 
    &= \frac{1}{2N\eta(\tau)^r}\sum_{\lambda'\in \Gamma(N)\backslash \mathrm{SL}_2(\Z)}\!\!\!\!\epsilon(\lambda')^{c-r}\mathrm{ch}_C(\lambda' \tau) \tilde{E}_{N,r/2}^{(0,1)\lambda'}(\tau),\nonumber
\end{align}
where in the last line, we have expressed the sum over $\lambda$ in terms of (normalized) Eisenstein series for congruence subgroups using identities which can be found in \autoref{app:eisenstein}. Thus, the right-hand side of the Siegel--Weil identity may be reduced from an infinite sum to a finite sum involving Eisenstein series, whose $q$-expansions are known.

\medskip

We now verify \autoref{conj:generalizedSiegelWeil} for the hyperbolic genus B.
\begin{table}[ht]
 \caption{The $17$ \strat{}, holomorphic vertex operator algebras $V$ of central charge $c=24$ in the hyperbolic genus B (with associated lattice $L$).}\label{tab:genusB}
    \begin{tabular}{r|l|r|c}
    No. & $V_1$ & $\dim(V_1)$ & $\Aut(L)$ \cite{BLS23} \\ \hline\hline
    $5$ & $\mathsf{A}_{1,2}^{16}$ & $48$ & $W(A_1)\wr \mathrm{AGL}_4(2)$  \\
    $16$ & $\mathsf{A}_{3,2}^4\mathsf{A}_{1,1}^4$ & $72$ & $(W(A_3)^4\times W(A_1)^4).W(D_4)$\\
    $25$ & $\mathsf{D}_{4,2}^2 \mathsf{C}_{2,1}^4$ & $96$ & $(W(D_4)^2\times W(C_2)^4).(2\times \mathfrak{S}_4)$\\ 
    $26$ &$\mathsf{A}_{5,2}^2\mathsf{C}_{2,1}\mathsf{A}_{2,1}^2$ & $96$ & $(W(A_5)^2\times W(A_2)^2\times W(C_2)).\mathrm{Dih}_8$\\
    $31$ & $\mathsf{D}_{5,2}^2\mathsf{A}_{3,1}^2$ &$120$ & $(W(A_7)\times W(A_3)\times W(C_3)^2).\Z_2^2$\\
    $33$ & $\mathsf{A}_{7,2}\mathsf{C}_{3,1}^2\mathsf{A}_{3,1}$ & $120$ & $(W(D_5)^2\times W(A_3)^2).\mathrm{Dih}_8$\\
    $38$ & $\mathsf{C}_{4,1}^4$ & $144$ & $W(C_4)\wr \mathfrak{S}_4$\\
    $39$ & $\mathsf{D}_{6,2}\mathsf{C}_{4,1}\mathsf{B}_{3,1}^2$& $144$ & $(W(D_6)\times W(C_4)\times W(B_3)^2).\Z_2$\\
    $40$ &$\mathsf{A}_{9,2}\mathsf{A}_{4,1}\mathsf{B}_{3,1}$& $144$ & $(W(A_9)\times W(A_4)\times W(B_3)).\Z_2$\\
    $44$ & $\mathsf{E}_{6,2}\mathsf{C}_{5,1}\mathsf{A}_{5,1}$ & $168$ &$(W(E_6)\times W(A_5)\times W(C_5)).\Z_2$ \\
    $47$ & $\mathsf{D}_{8,2}\mathsf{B}_{4,1}^2$ & $192$ & $W(D_8)\times W(B_4)\wr \Z_2)$\\
    $48$ & $\mathsf{C}_{6,1}^2\mathsf{B}_{4,1}$ & $192$ &$W(C_6)\wr 2 \times W(B_4)$ \\
    $50$ & $\mathsf{D}_{9,2}\mathsf{A}_{7,1}$ & $216$ & $(W(D_9)\times W(A_7)).\Z_2$\\
    $52$ & $\mathsf{C}_{8,1}\mathsf{F}_{4,1}^2$ &$240$ & $W(C_8)\times W(F_4)\wr\Z_2$\\
    $53$ & $\mathsf{E}_{7,2}\mathsf{B}_{5,1}\mathsf{F}_{4,1}$&$240$ &$W(E_7)\times W(B_5)\times W(F_4)$\\
    $56$ & $\mathsf{C}_{10,1}\mathsf{B}_{6,1}$ & $288$ &$W(C_{10})\times W(B_6)$\\
    $62$ & $\mathsf{B}_{8,1}\mathsf{E}_{8,2}$ & $384$ &$W(B_8)\times W(E_8)$
    \end{tabular}
\end{table}
\begin{ex}[Siegel--Weil Identity for Genus B]
We consider the hyperbolic genus~B. This is the unique hyperbolic genus of $c=24$, \strat{}, holomorphic vertex operator algebras whose associated lattices have rank $r=16$. Some basic information about the 17 vertex operator algebras which reside in this genus is given in \autoref{tab:genusB}.

Consider an arbitrary vertex operator algebra $V$ in this hyperbolic genus with associated lattice $L$. We adopt the notation used in \autoref{subsec:mass}. To evaluate the left-hand side of the Siegel--Weil identity, we note that, by the last column in \autoref{tab:Schellekensgenera}, the group-theoretic factor for this hyperbolic genus is $[\Aut(\tilde{L}):G_{\tilde{V}}]=1$. By \autoref{cor:massconstant}, this implies that $\overline{\Aut}(C)$ is equal to all of $\Aut(\Rep(C))$ (that is, every ribbon auto-equivalence of $\Rep(C)$ can be induced by an automorphism of~$C$) and hence that $\tau^\ast \overline{\Aut}(C)$ is all of $\Aut(\mathcal{C}(L'/L))$, where $\tau^\ast$ was defined before \autoref{prop:GVcharacterization}. In particular, it follows that $\overline{\Aut}(L)\cap\tau^\ast \overline{\Aut}(C) = \overline{\Aut}(L)$, and by applying \autoref{cor:massconstant} again, that $G_V=\Aut(L)$. Thus, the groups $G_V$ for this genus all reduce to the automorphism groups of the associated lattices, which are reported in \autoref{tab:genusB}. 

Using the fact that $\mathrm{ch}_V(\tau) = J(\tau) + \dim(V_1)$ for every \strat{}, holomorphic vertex operator algebra of central charge $c=24$, it follows that  
\begin{equation}\label{eqn:lhssiegelweilgenusB}
\begin{split}
    \frac{1}{\mathrm{mass}(\mathrm{B})}\sum_{V\in\mathrm{hgen}(\mathrm{B})} \frac{\mathrm{ch}_V(\tau)}{|G_V|} &= J(\tau)+\frac{1}{\mathrm{mass}(\mathrm{B})}\sum_{V\in\mathrm{hgen}(\mathrm{B})}\frac{\dim(V_1)}{|\Aut(L)|} \\
    &= J(\tau) +\frac{1488}{17},
\end{split}
\end{equation}
where $J(\tau)=j(\tau)-744=q^{-1}+196884q+\dots$ On the other hand, the right-hand side of the generalized Siegel--Weil identity can be evaluated by taking $N=6$ in equation~\eqref{eqn:modularsum}, 
\begin{equation}\label{eqn:intermediatesiegelweil}
    \sum_{\gamma\in\Gamma_\infty\backslash\mathrm{SL}_2(\Z)} \frac{\mathrm{ch}_C(\gamma\tau)}{\eta(\gamma\tau)^{16}} = \frac{1}{12\eta(\tau)^{16}} \sum_{\lambda'\in\Gamma(6)\backslash \mathrm{SL}_2(\Z)} \frac{\mathrm{ch}_C(\lambda'\tau)\tilde{E}_{6,8}^{(0,1)\lambda'}(\tau)}{\epsilon(\lambda')^{16}},
\end{equation}
using the fact that the vacuum character of the Heisenberg commutant $C$ for genus~B is \cite{HS03}
\begin{equation*}
    \mathrm{ch}_C(\tau) = (h(\tau)+g_0(\tau))\eta(\tau)^{16}
\end{equation*}
where
\begin{align*}
    h(\tau) &= \eta(\tau)^{-8}\eta(2\tau)^{-8} \\
    g_0(\tau) &= \frac{1}{2}\bigl(h(\tau/2)+h((\tau+1)/2)\bigr).
\end{align*}
A straightforward, albeit tedious, computer calculation then shows that \eqref{eqn:intermediatesiegelweil} reduces again to $J(\tau)+\frac{1488}{17}$, in agreement with \eqref{eqn:lhssiegelweilgenusB}, consistent with \autoref{conj:generalizedSiegelWeil}.

\medskip

Genus~B corresponds to the symbol $(C,24)_{\Vect}$. As in the previous section, we may generalize genus B and obtain an infinite family of hyperbolic genera $\mathrm{B}_m$   corresponding to the symbols $(C,24+8m)_{\Vect}$ for $m\in\N$. \autoref{conj:generalizedSiegelWeil} then gives a prediction for the ``average'' partition function of a vertex operator algebra in these genera. For example, in the case of $m=1$, we conjecturally find 
\begin{align*}
    \frac{1}{\mathrm{mass}(\mathrm{B}_1)}\sum_{V\in\mathrm{hgen}(\mathrm{B}_1)}\frac{\mathrm{ch}_V(\tau)}{|G_V|} &\stackrel{?}{=} \sum_{\gamma\in\Gamma_\infty\backslash\mathrm{SL}_2(\Z)}\epsilon(\gamma)^8 \frac{\mathrm{ch}_C(\gamma\tau)}{\eta(\gamma\tau)^{24}} \\
    &=\frac{1}{12\eta(\tau)^{24}}\sum_{\lambda'\in\Gamma(6)\backslash\mathrm{SL}_2(\Z)} \frac{\mathrm{ch}_C(\lambda'\tau)\tilde{E}_{6,12}^{(0,1)\lambda'}(\tau)}{\epsilon(\lambda')^{16}}\\
    &= \left(j(\tau)-\frac{666360}{691}\right)j(\tau)^{1/3} \\
    &= q^{-4/3}+\frac{19112}{691}q^{-1/3}+\dots
\end{align*}
This shows in particular that the ``average'' $c=32$, \strat{}, holomorphic vertex operator algebra in genus $\mathrm{B}_1$ has a continuous global symmetry group of dimension $\langle \dim(V_1)\rangle =\frac{19112}{691}\approx 27.7$, which is compatible with the fact that all vertex operator algebras in genus $\mathrm{B}_1$ must have $\dim(V_1)\geq 24$. 
\end{ex}


\section{Future Directions}

There are many future directions which flow from our work.
\begin{enumerate}[wide]
\item A classification of $c=32$, \strat{}, holomorphic vertex operator algebras has long been thought to be impossible due to their vast number. However, it is conceivable that one may be able to achieve at least a partial classification of such vertex operator algebras if one is content with working up to hyperbolic equivalence. We have taken a few small steps in this direction in \autoref{subsec:c=32}.
\item Hyperbolic genera of strongly rational vertex operator algebras are provably finite by the mass formula of \autoref{thm:massformula}, coupled with \autoref{cor:massconstant}. One might fantasize that there is a ``quantum'' mass formula which counts vertex operator algebras in a bulk genus, and which could be used to prove the finiteness of bulk genera, \autoref{conj:bulkfiniteness}. This would presumably lead to a proof of the uniqueness of the moonshine module, \autoref{conj:moonshineuniqueness}.
\item There is third definition of what it means for two lattices to belong to the same genus, part~\ref{item:latgen1} of \autoref{defi:latticegenus}, which we have not explored in this work. It would be interesting to formulate a generalization of this definition to the setting of vertex operator algebras, as we have done with parts \ref{item:latgen2} and \ref{item:latgen3} of \autoref{defi:latticegenus}.
\item It is unclear to what extent the unitarity or pseudo-unitarity of a modular tensor category $\mathcal{C}$ forces the unitary or positivity of strongly rational vertex operator algebras $V$ with $\Rep(V)\cong\mathcal{C}$. Understanding this would certainly lead to a clearer picture of what bulk genera look like. 
\item Proving the Siegel--Weil identity of \autoref{conj:generalizedSiegelWeil} and fleshing out its holographic interpretation should be within reach of present methods.
\end{enumerate}


\appendix


\section{Hyperbolic and Bulk Genus}\label{sec:altproof}

For the reader's convenience, we sketch an alternative proof of \autoref{cor:hypbulk}, without using \autoref{thm:hypcomm}. We caution the reader that, as we ended up not using it, we did not work out the following ``proof'' in all detail.

\medskip

We recall that \autoref{cor:hypbulk} asserts that two \strat{} \voa{}s $V$ and $V'$ in the same hyperbolic genus are also in the same bulk genus.

It follows directly from $V\otimes V_{\II_{1,1}}\cong V'\otimes V_{\II_{1,1}}$ that $V$ and $V'$ have the same central charge. Moreover, it was already shown in \cite{Mor21} that $\Rep(V)$ and $\Rep(V')$ are equivalent as plain categories. What remains to be proved is that the equivalence given in \cite{Mor21} is actually a ribbon equivalence, i.e.\ that $\Rep(V)$ and $\Rep(V')$ are equivalent as modular tensor categories (and in particular as braided tensor categories). Moreover, the partial result from \cite{Mor21} is sufficient to prove the conjecture in the case where $\Rep(V)\cong\Vect$. Then $\Rep(V')\cong\Rep(V)$ as a modular tensor category because there is only one modular tensor category up to ribbon equivalence with just one simple object.

Before we prove the proposition, we point out the difficulty, namely that many results on representation categories like the HLZ-tensor product theory \cite{HLZ} are only stated for \voa{}s and not for (conformal) vertex algebras. While it follows from the results in \cite{Don93,DLM97,DL93} that the representation category of a lattice vertex algebra $V_L$ for an even, possibly indefinite lattice $L$ is the pointed modular tensor category $\mathcal{C}(L'/L)$ associated with the discriminant form $L'/L$ (note that it is in particular shown that all weak $V_L$-modules are direct sums of the finitely many irreducible modules, which are indexed by the cosets in $L'/L$), we cannot immediately say what the representation category of $V\otimes V_L$ is for a \strat{} \voa{} $V$.

We are grateful to Robert McRae for providing us with a crucial idea for the following proof, namely viewing the conformal vertex algebra $V\otimes V_{\II_{1,1}}$ as an infinite simple-current extension of a \vosa{}, and for pointing us to the relevant literature.
\begin{proof}[Sketch of alternative proof of \autoref{cor:hypbulk}]
Because $V_{\II_{1,1}}$ is not a \voa{}, we cannot directly apply the HLZ-tensor product theory \cite{HLZ} to it. To remedy this, we consider the (standard) Heisenberg \vosa{} $H\subset V_{\II_{1,1}}$, which is associated with the (standard) Cartan subalgebra $\hh$ of $V_{\II_{1,1}}$. The latter we view as abelian Lie algebra equipped with the non-degenerate bilinear form $\langle\cdot,\cdot\rangle$ inherited from the lattice $\II_{1,1}$. Now, $H$ is a \voa{}, but it is neither rational nor $C_2$-cofinite, i.e.\ it does not satisfy the usual regularity assumptions.

Following, e.g., \cite{CY21,CMY22,McR23} we consider the category $\Rep_{C_1}(H)$ of $C_1$-cofinite grading-restricted generalized $H$-modules (or equivalently the $C_1$-cofinite admissible
$H$-modules). We then apply Theorem~3.3.5 in \cite{CY21} to $H$, which states that the category $\Rep_{C_1}(H)$ is the same as the category of finite-length generalized $H$-modules and, in particular, admits the vertex algebraic braided tensor category structure of \cite{HLZ}, with a ribbon twist.

We then consider the tensor product \voa{} $V\otimes H$. Recall that $V$ is assumed to be \strat{}. Hence, the category of $C_1$-cofinite grading-restricted generalized $V$-modules is simply the usual representation category $\Rep(V)$, which is a (semisimple) modular tensor category by \cite{Hua08b}, and in particular has the braided tensor structure of \cite{HLZ}.

Now, by \cite{McR23}, the category $\Rep_{C_1}(V\otimes H)$ of $C_1$-cofinite grading-restricted generalized $V\otimes H$-modules admits the braided tensor category structure of \cite{HLZ} and is braided tensor equivalent to the Deligne product $\Rep(V)\boxtimes\Rep_{C_1}(H)$.

The lattice vertex algebra $V_{\II_{1,1}}$ is an infinite simple-current extension of the Heisenberg \vosa{} $H$. That is, we can view $V_{\II_{1,1}}$ as a commutative algebra object \cite{HKL15} (see Theorem~7.5 in \cite{CMY22} for the precise formulation, easily modified to account for the fact that $V_{\II_{1,1}}$ fails to satisfy the $L_0$-grading conditions that would make it a \voa{}) in the ind-completion or direct-limit completion $\Ind(\Rep_{C_1}(H))$ \cite{CMY22} (cf.\ \cite{AR18}), which is naturally endowed with a vertex algebraic braided tensor structure extending the one on $\Rep_{C_1}(H)$ by the main result of \cite{CMY22}.
In the same sense, $V\otimes V_{\II_{1,1}}$ is a commutative algebra object in the ind-completion $\Ind(\Rep_{C_1}(V\otimes H))$ of $\Rep_{C_1}(V\otimes H)$.

This gives us an induction functor $F\colon \Ind(\Rep_{C_1}(V\otimes H)) \to \mathcal{C}_{V\otimes V_{\II_{1,1}}}$, where objects of $\mathcal{C}_{V\otimes V_{\II_{1,1}}}$ are possibly ``non-local'' $V\otimes V_{\II_{1,1}}$-modules that are objects of $\Ind(\Rep_{C_1}(V\otimes H))$ as $H$-modules (e.g., twisted $V\otimes V_{\II_{1,1}}$-modules associated with automorphisms of $V_{\II_{1,1}}$ that fix $H$).
The subcategory $\Rep({V\otimes V_{\II_{1,1}}})\subset\mathcal{C}_{V\otimes V_{\II_{1,1}}}$ appearing in Theorem~7.7 of \cite{CMY22} (there decorated with a superscript $0$) denotes only the usual ``local'' $V\otimes V_{\II_{1,1}}$-modules that are objects of $\Ind(\Rep_{C_1}(V\otimes H))$ as $H$-modules, and it is also endowed with a braided tensor structure in the sense of \cite{HLZ}. 

It is not difficult to check that for a $V$-module $W$ and an irreducible (Fock) $H$-module $H_\lambda$ with $\lambda\in\hh$, $F(W\otimes H_\lambda)$ is an untwisted $V\otimes V_{\II_{1,1}}$-module in the usual sense if and only if $\lambda$ is in the dual lattice $\II_{1,1}'$, in which case $F(W\otimes H_\lambda)$ is isomorphic to $W\otimes V_{\lambda+\II_{1,1}}$.
Here, $\II_{1,1}'=\II_{1,1}$ because $\II_{1,1}$ is unimodular, and so $F(W\otimes H_\lambda)\cong W\otimes V_{\II_{1,1}}$.

We then consider the braided tensor functor
\begin{equation*}
\Rep(V)\to\Rep_{C_1}(V\otimes H)\to\Ind(\Rep_{C_1}(V\otimes H)) \stackrel{F}{\to} \Rep(V\otimes V_{\II_{1,1}})
\end{equation*}
that maps a $V$-module $W$ to
\begin{equation*}
W\mapsto W\otimes H\mapsto W\otimes H\mapsto F(W\otimes H)=W\otimes V_{\II_{1,1}}.
\end{equation*}
The first functor is braided tensor by the definition of the Deligne product (see \cite{McR23}), the second is by Theorem~1.2 in \cite{CMY22}. Finally, the induction functor~$F$ restricted to modules of the form $W\otimes H$ is a braided tensor functor because it is a tensor functor and because the image is contained in the subcategory of untwisted $V\otimes V_{\II_{1,1}}$-modules.

Moreover, one can show that the above functor is fully faithful (cf.\ \cite{CKM22}).
Hence, we can identify $\Rep(V)$ with a braided tensor subcategory of $V\otimes V_{\II_{1,1}}$-modules. Since $V_{\II_{1,1}}$ is holomorphic (the modules of lattice vertex algebras were determined in \cite{Don93,DLM97} regardless of whether the even lattice is positive-definite or not), this subcategory is the braided tensor subcategory of all $V\otimes V_{\II_{1,1}}$-modules that are finite direct sums of irreducible modules.

Overall, we have shown that the category of all $V\otimes V_{\II_{1,1}}$-modules that are finite direct sums of irreducible modules is braided tensor equivalent to $\Rep(V)$. The same is true for $\Rep(V')$ with $V\otimes V_{\II_{1,1}}\cong V'\otimes V_{\II_{1,1}}$. This establishes the assertion that $\Rep(V)\cong\Rep(V')$ are braided tensor equivalent. It is not difficult to see that this is, in fact, a ribbon equivalence.
\end{proof}


\section{Eisenstein Series for Congruence Subgroups}\label{app:eisenstein}

We telegraphically record some formulae involving Eisenstein series for congruence subgroups $\Gamma(N)$, following very closely Section 5 of \cite{Sch06}. 

Let $N\geq1$ and $k\geq 3$ be integers. Then for $c,d\in\Z$, define the Eisenstein series
\begin{equation*}
    E_{N,k}^{(c,d)}(\tau) = \sum_{\substack{(m,n)\in\Z^2\setminus(0,0) \\ (m,n)\equiv (c,d)\bmod{N}}} \frac{1}{(m\tau+n)^k}.
\end{equation*}
Each such function is a holomorphic modular form for $\Gamma(N)$ of weight $k$, and moreover satisfies, for each $\lambda = \bigl(\begin{smallmatrix} \alpha & \beta \\ \gamma & \delta\end{smallmatrix}\bigr)  \in \mathrm{SL}_2(\Z)$, 
\begin{equation*}
    E_{N,k}^{(c,d)}(\lambda\tau)=(\gamma \tau + \delta)^k E_{N,k}^{(c,d)\lambda }(\tau).
\end{equation*}
The $q$-expansions of these functions are given by 
\begin{align*}
    E_{N,k}^{(c,d)}(\tau) &= B_{N,k}^{(c,d)} + C_{N,k}\sum_{\substack{m\equiv c \bmod{N}\\ m \geq 1}} \sum_{n\geq 1} n^{k-1}e^{2\pi\i n d/N}q^{nm/N}  \\
     & \quad + (-1)^k C_{N,k} \sum_{\substack{m\equiv -c\bmod{N}\\ m \geq 1}}\sum_{n\geq 1} n^{k-1} e^{-2\pi\i n d/N}q^{nm/N},
\end{align*}
where 
\begin{align*}
    B_{N,k}^{(c,d)} &= \begin{cases} \zeta^d(k) + (-1)^k\zeta^{-d}(k), & \text{if } c\equiv 0\bmod{N},\\
    0, & \text{otherwise},
\end{cases} \\
    C_{N,k} &= \frac{(-1)^k (2\pi\i)^k}{N^k (k-1)!}.
\end{align*}
The zeta function appearing in the definition of $B$ is given by 
\begin{equation*}
    \zeta^d(k) = \sum_{\substack{n\equiv d\bmod{N}\\ n\geq 1 }}\frac{1}{n^k}.
\end{equation*}
An alternative normalization will be useful for our purposes. In particular, note that
\begin{align*}
    E_{N,k}^{(0,1)}(\tau) &= \sum_{\substack{(m,n)\in\Z^2\setminus(0,0) \\ (m,n)\equiv (0,1)\bmod{N}}} \frac{1}{(m\tau+n)^k} \\
    &= \sum_{\substack{(m,n)\in\Z^2\setminus(0,0) \\ (m,n)\equiv (0,1)\bmod{N}\\
    (m,n)=1}}\sum_{\substack{t \neq 0 \\ t\equiv 1 \bmod{N}}}\frac{1}{(mt\tau+nt)^k} \\
    &= B_{N,k}^{(0,1)}\sum_{\left(\begin{smallmatrix} \ast & \ast \\ m & n \end{smallmatrix}\right) \in \Gamma^+_\infty(N)\backslash\Gamma(N)}\frac{1}{(m\tau+n)^k} \eqqcolon B_{N,k}^{(0,1)} \tilde{E}_{N,k}^{(0,1)}(\tau).
\end{align*}
More generally, we define
\begin{equation*}
    \tilde{E}_{N,k}^{(c,d)}(\tau) = \frac{1}{B_{N,k}^{(0,1)}}E_{N,k}^{(c,d)}(\tau).
\end{equation*}


\bibliographystyle{alpha_noseriescomma}
\bibliography{references}{}

\end{document}